\definecolor{Darkblue}{rgb}{0,0,0.4}
\definecolor{Brown}{cmyk}{0,0.61,1.,0.60}
\definecolor{Purple}{cmyk}{0.45,0.86,0,0}
\definecolor{Darkgreen}{rgb}{0.133,0.543,0.133}
\newcommand{\atodoAfter}[1]{}
\newcommand{\aidea}[1]{}
\newif\ifdraft 
\newcommand{\namedref}[2]{\hyperref[#2]{#1~\ref*{#2}}}
\newcommand{\propref}[1]{\hyperref[#1]{property~(\ref*{#1})}}
\newcommand{\Propref}[1]{\hyperref[#1]{Property~(\ref*{#1})}}
\newcommand{\paragraphref}[1]{\namedref{Paragraph}{#1}}
\newcommand{\SPCS}{\textsf{SPCS}\xspace}
\newcommand{\Lip}{{\rm Lip}}
\newlength{\Oldarrayrulewidth}
\newtheorem{theorem}{Theorem}
\newtheorem{lemma}[theorem]{Lemma}
\newtheorem{definition}[theorem]{Definition}
\newtheorem{originaldefinition}[theorem]{Original Definition}
\newtheorem{claim}[theorem]{Claim}
\newtheorem{corollary}[theorem]{Corollary}
\newcommand{\polylog}{\mathrm{polylog}}
\DeclareMathOperator{\MST}{\mathrm{MST}}
\newcommand{\R}{\mathbb{R}}
\newcommand{\N}{\mathbb{N}}
\newcommand{\opt}{\mathrm{opt}}
\newcommand{\supp}{\mathrm{supp}}
\newcommand{\diam}{\mathrm{diam}}
\newcommand{\TSP}{{TSP}\xspace}
\newcommand{\SPC}{\textsf{SPC}}
\DeclareMathOperator{\cost}{cost}
\def\cC{\ensuremath{\mathcal{C}}}
\def\cF{\ensuremath{\mathcal{F}}}
\def\cP{\ensuremath{\mathcal{P}}}
\def\cS{\ensuremath{\mathcal{S}}}
\def\cT{\ensuremath{\mathcal{T}}}
\newcommand{\ProblemName}[1]{\textsf{#1}}
\newcommand{\ZEX}{{$0$-Extension}\xspace}
\newcommand{\Texp}{\mathsf{Texp}}
\definecolor{forestgreen}{rgb}{0.13, 0.55, 0.13}
\def\eps{\varepsilon}
\DeclareMathAlphabet{\mathpzc}{OT1}{pzc}{m}{it}
\newcommand{\etal}{{\em et al.\ }}
\newlength{\dhatheight}
\newcommand {\ignore} [1] {}
\newcommand{\initOneLiners}{%
	\setlength{\itemsep}{0.2pt}
	\setlength{\parsep }{0.2pt}
	\setlength{\topsep }{0.2pt}
}
\definecolor{BrickRed}{rgb}{.72,0,0}
\title{Highway Dimension: a Metric View}
\author[1]{Andreas Emil Feldmann}
\author[2]{Arnold Filtser\thanks{This research was supported by the Israel Science Foundation (grant No. 1042/22).}}
\affil[1]{University of Sheffield. Email: \texttt{feldmann.a.e@gmail.com}}
\affil[2]{Bar-Ilan University. Email: \texttt{arnold.filtser@biu.ac.il}}
\date{}
\begin{document}

	\maketitle
	\begin{abstract}
		Realistic metric spaces (such as road/transportation networks) tend to be much more algorithmically tractable than general metrics. In an attempt to formalize this intuition, Abraham et~al.\ (SODA 2010, JACM 2016) introduced the notion of highway dimension. A weighted graph $G$ has highway dimension $h$ if for every ball $B$ of radius $\approx4r$, there is a hitting set of size $h$ hitting all the shortest paths of length $>r$ in $B$. Unfortunately, this definition fails to incorporate some very natural metric spaces such as the grid graph, and the Euclidean plane. 

        We relax the definition of highway dimension by demanding to hit only approximate shortest paths. In addition to generalizing the original definition, this new definition also incorporates all doubling spaces (in particular the grid graph and the Euclidean plane). We then construct a PTAS for TSP under this new definition (improving a QPTAS w.r.t.\ the original more restrictive definition of Feldmann et~al.\ (SICOMP 2018)).  Finally, we develop a basic metric toolkit for spaces with small highway dimension by constructing padded decompositions, sparse covers/partitions, and tree covers. An abundance of applications follow.
	\end{abstract}

	
	\vfill
	\begin{multicols}{2}
		{
			\setcounter{secnumdepth}{5}
			\setcounter{tocdepth}{2} \tableofcontents
		}
	\end{multicols}
	
	\newpage
	\pagenumbering{arabic}
	
    \section{Introduction}

General metric spaces can be very complicated. Indeed, consider an optimization benchmark problem such as the Travelling Salesperson problem (\TSP). On general metric spaces, it is APX-hard.
On the other hand, on more ``realistic'' distance measures, such optimization problems tend to be more tractable. Indeed, in low dimensional Euclidean space, \TSP admits a polynomial time approximation scheme (PTAS) \cite{Arora98,Mitchell99}. More generally, metric spaces with constant doubling dimension\footnote{A metric space $(X,d_X)$ has doubling dimension $d$ if every ball of radius $2r$ can be covered by $2^{d}$ balls of radius~$r$. This is a generalization of Euclidean dimension, where it is known that $(\R^d,\|\cdot\|_2)$ has doubling dimension $\Theta(d)$.\label{foot:doubling}} admit a PTAS for \TSP \cite{BGK16}.
However, ``realistic'' metrics are not necessarily doubling. For instance, typical hub-and-spoke networks used in air traffic models are non-planar and have high doubling dimension, since they feature high-degree stars. 

The right way to define realistic distance measure turned out to be quite elusive. Bast \etal \cite{BFMSS07,BFM06} observed that road networks are highly structured. Specifically, when one travels long distances, one will necessarily travel through a small set of important traffic junctions (such as central airports and train stations). Bast \etal used this observation to present a heuristic shortest-path algorithm (called transit node routing) and demonstrated experimentally that it improves over previously best algorithms by several orders of magnitude. 
Abraham \etal \cite{AFGW10}, motivated by Bast {\em et al.}'s observation, introduced the notion of highway dimension to formally model transportation networks: the shortest path metric of a weighted graph $G$ has highway dimension $h$ if for every $r>0$ and for every ball  $B_G(v,4r)$ of radius $4r$, there is a small hitting set $H$ of size $|H|\le h$, such that every shortest path $P$ in $B_G(v,4r)$ of length more than $r$ intersects~$H$.

In the journal version of \cite{AFGW10}, Abraham \etal \cite{ADFGW16} introduced an alternative stronger definition of highway dimension. 
They used the alternative definition to improve the analysis of a shortest path algorithm based on \emph{transit nodes} that preforms well in practice \cite{BFMSS07,BFM06}. This alternative stronger definition implied constant doubling dimension. This was unfortunate, as we already observed that some ``realistic'' metrics (such as hub-and-spoke networks) do not have constant doubling dimension. 
Later, Feldmann \etal \cite{FFKP18} slightly generalized the original definition of highway dimension \cite{AFGW10} by requiring that every ball of radius $c\cdot r$ will have a small hitting set for all the shortest paths of length more than $r$ (see \Cref{def:orig-HD}). Here $c\ge 4$, and the difference $\lambda=c-4$ is called the violation. 
For any fixed violation~$\lambda>0$, Feldmann \etal \cite{FFKP18} obtain a quasi polynomial approximation scheme (QPTAS) for \TSP.
Later, Feldmann and Saulpic \cite{FS21} obtained a PTAS for $k$-Median, $k$-Means, and Facility Location for graphs with fixed highway dimension; see also a previous FPT-PTAS \cite{BJKW21} for $k$-Median (both under \Cref{def:orig-HD}).
For further discussion and comparisons between the different previous definitions of highway dimension we refer to Blum \cite{Blum19} (see also section~9 in \cite{FFKP18}).

All these previous definitions have some major drawbacks: mainly they do not catch many  metric spaces which are intuitively ``realistic''. One such example is the $\sqrt{n}\times \sqrt{n}$ unweighted grid. This is a planar graph of constant doubling dimension which roughly resembles the road network of some modern cities (e.g. Manhattan), thus it clearly should be considered ``realistic''. However, it has highway dimension $\Omega(\sqrt{n})$. Indeed, even if one slightly perturbs the weights so that every vertex pair will have a unique shortest path, still the columns constitute $\sqrt{n}$ disjoint shortest paths, and thus any hitting set will have size $\Omega(\sqrt{n})$.
A second problem is that the definition of highway dimension is inherently restricted to graphs, and cannot be applied to ``continuous''
metric spaces such as the Euclidean plane (which in our view should be considered ``realistic'' as well).
We would like a definition which is more expressive: one that includes road/transportation networks and hub-and-spoke networks, but also the grid, the continuous Euclidean plane, and desirably every metric space with constant doubling dimension. At the same time, a good definition should also not be too expressive, so that it is still useful from an algorithmic viewpoint.

\paragraph*{Our contribution (high level).}
We propose a new definition for the highway dimension by relaxing the requirement for exact shortest paths in \Cref{def:orig-HD}. That is for every ball of radius roughly $4r$, there is a set of hubs $|H|$ of size $|H|\le h$, such that for every pair of points $u,z$ at distance at least $r$ in the ball, there is an approximate $u-z$ shortest path going though $x\in H$ (see \Cref{def:HD}).
This new definition has it all: 
(1) It generalizes the \cref{def:orig-HD} \cite{AFGW10,FFKP18}; indeed hitting exact shortest paths is only harder than hitting approximate ones.
(2) It generalizes the doubling dimension (see \Cref{obs:doublingToHighway}); that is, given a metric with doubling dimension $d$, an $\eps\cdot r$-net (see \Cref{def:net}) in $B_X(v,O(r))$ is a set of $\eps^{-O(d)}$ points hitting an approximate $(1+\eps)$-shortest path for every pair.
(3) It naturally applies to continuous spaces; that is, there is nothing inherently graphic about it.

First, we re-construct the toolkit of \cite{FFKP18} developed for graphs of low highway dimension, and adapt it to the new definition (specifically ``global'' shortest path covers, towns, and sprawl; see \Cref{sec:BasicProperties}).
Next we construct a PTAS for \TSP for metric spaces with small highway dimension (see \Cref{thm:TSPHighway} in \Cref{sec:TSP}). Note that this answers the main open question left in \cite{FFKP18} (as \cite{FFKP18} had a QPTAS under a less expressive definition).
Finally, we continue our metric view and develop the basic metric toolkit for spaces with small highway dimension (see \Cref{sec:MetricToolkit}). Specifically, we construct padded decompositions, sparse covers, sparse partitions, and tree covers. An abundance of applications follows. 


\subsection{Definition of Highway Dimension}
  
    The definition of the highway dimension by Feldmann \etal \cite{FFKP18} builds on the definition by Abraham \etal \cite{AFGW10}. The latter definition is obtained by setting $c=4$ in the following, however, as was shown in \cite{FFKP18}, it is useful to pick a constant $c$ strictly greater than $4$.
    
    \begin{originaldefinition}[Highway dimension]\label{def:orig-HD}
       Let $c\geq 4$ be a universal constant. 
       A weighted graph $G=(V,E,w)$ has \emph{highway dimension} $h$ if for every $r>0$, and $v\in V$, there exists a subset $H\subseteq V$ with $|H|\leq h$ vertices such that the following property of the ball        $B_v=B_G(v,cr)$ holds: \\
       For every pair of vertices $u,z\in B_v$ such that $d_{G[B_v]}(u,z)=d_{G}(u,z)>r$, every shortest $u-z$ path~$P_{u,z}$ in $G[B_v]$ is intersecting $H$, i.e., $V(P_{u,z})\cap H\ne\emptyset$.
    \end{originaldefinition}
    

    Note that this definition is only concerned with shortest paths of length more than~$r$ that lie within the ball~$B_v$. In particular, if there is a vertex pair $u,z$  for which the distance is at most~$r$ in the graph $G$, but their shortest path $P_{u,z}$ within the ball has length more than $r$, then $P_{u,z}$ does not have to be hit by $H$ (note that this is exactly represented by failing the condition $d_{G[B_v]}(u,z)=d_{G}(u,z)>r$).
    Some examples are due (see also \cite{AFGW10,FFKP18}): the star graph has highway dimension $1$ but doubling dimension$^{\ref{foot:doubling}}$ $\Theta(\log n)$, as the center vertex hits all shortest paths, while any ball containing all $n$ leaves requires $\Theta(\log n)$ balls of half the radius to cover.
However, the spider graph, 
which can be obtained from the star graph by subdividing each edge into a path of length $2$,
has highway dimension $n$: at the right scale, the outer edge of each leg is a shortest path of length more than $r$ that does not pass through the center, and these $n$ paths are pairwise disjoint, so any hitting set requires $n$ vertices.
On the other hand, the $k\times k$ grid has highway dimension $\Theta(k)$ but doubling dimension $O(1)$: the $k$ columns form disjoint shortest paths requiring $\Omega(k)$ hubs, while the grid can be covered by $O(1)$ balls of half the radius.
Thus doubling and highway dimension are incomparable.


    \begin{center}
        \includegraphics[width=0.8\textwidth]{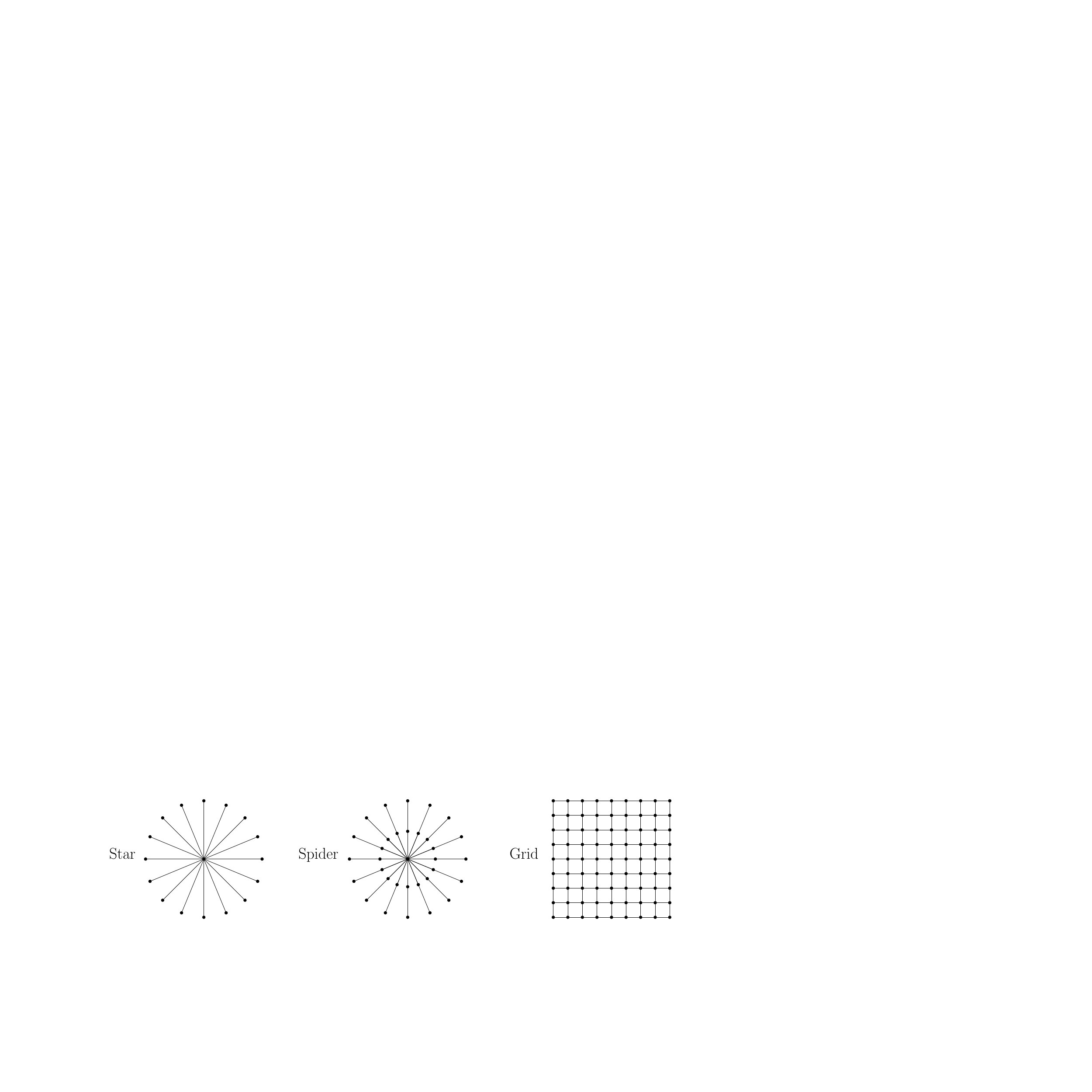}
    \end{center}

 
	In this paper we introduce a relaxed definition, where instead of hitting all the required shortest paths, it is enough to hit a single approximate shortest path.
	As we will see, this change also makes the universal constant $c$ from \Cref{def:orig-HD} redundant.
        Note that for the new definition the highway dimension is not a number but a function that depends on how well the hubs approximate the shortest path distance.
        This is because choosing more accurate hubs will give better precision, and in an approximation scheme we will want to control the accuracy.

    
     \begin{definition}[Highway dimension]\label{def:HD}
    A weighted graph $G=(V,E,w)$ has \emph{highway dimension} $h:\mathbb{R}_{\geq 0}\to\mathbb{N}\cup\{\infty\}$ if for every $\eps\geq 0$, $r>0$, and $v\in V$, there exists a subset $H_\eps\subseteq V$ with $|H_\eps|\leq h(\eps)$ vertices, such that the following property of the ball $B_v=B_G(v,(4+8\eps)r)$ holds:\\
    For every pair of vertices $u,z\in B_v$ such that $d_{G}(u,z)>r$ and $d_{G[B_v]}(u,z)\leq (1+\eps)d_{G}(u,z)$, there is a $u-z$ path $P_{u,z}$ in $G[B_v]$ of length $w(P_{u,z})\le (1+\eps)\cdot d_{G}(u,z)$ intersecting $H_\eps$, i.e., $V(P_{u,z})\cap H_\eps\ne\emptyset$.
    \end{definition}


    Some comments are in order.

    \paragraph{Computational aspects.}\label{rem:HittingSetFind}
    %
        In general, given a center $v\in V$, $r>0$, and $\eps\ge0$, we do not know of any algorithm that finds the minimum set $H_\eps$ satisfying the requirement in \Cref{def:HD} in polynomial time. In fact, we do not know of any approximation algorithm beyond $O(\log n)$ (that works for general instances of hitting set / set cover). This is in contrast to \Cref{def:orig-HD} where an $O(\log h)$ approximation is known \cite{ADFGW11}, assuming that all shortest paths are unique.
        Throughout this paper, when we say that a graph has highway dimension $h:\mathbb{R}_{\geq 0}\to\mathbb{N}\cup\{\infty\}$ we assume that it comes with a polynomial time algorithm that computes sets $H_\eps$ of size $h(\eps)$. Note that $H_\eps$ is not assumed to be of minimum possible size. In particular, if there always exists a hitting set of size $h(\eps)$, but efficiently we can only find a hitting set of size $h'(\eps)$, we will say that the highway dimension is $h'(\eps)$.

    \paragraph{Relation to old definition.}
        Note that any graph $G$ of highway dimension $h$ according to \Cref{def:orig-HD} for $c\geq 4$ has highway dimension $h(\eps)=h$ according to \Cref{def:HD} for any $\eps\in[0,\frac{c-4}{8}]$, since for these values of~$\eps$ the balls in \Cref{def:HD} have radius at most $cr$, which matches the radius of the balls in \Cref{def:orig-HD}.\footnote{This is correct up to a small subtlety: \Cref{def:orig-HD} guarantees to hit all the $u-z$ shortest paths lying inside $B_v$ (potentially an exponential number w.r.t. the number of vertices $|V|$), while \Cref{def:HD} guarantees only to hit a single $u-z$ shortest path. As it is usually possible to assume unique shortest paths (by slightly perturbing weights by infinitesimal amounts, or just breaking ties consistently), we will ignore this subtlety.}
        For larger values of $\eps$ the value of $h(\eps)$ might be unbounded for such a graph $G$. For our later results such as the PTAS for \TSP, we only need the value of $h(\eps)$ to be bounded for a fixed value $\eps>0$. 
        In particular, to get a PTAS it is enough that $h(\eps)$ is bounded for all $\eps\in(0,\gamma]$, for an arbitrary constant $\gamma$ (even if $h(0)=\infty$).
        This also implies a PTAS for \TSP for any given graph~$G$ with highway dimension $h$ w.r.t.\ \Cref{def:orig-HD} where $c$ is strictly greater than $4$, since then the graph has highway dimension $h(\eps)= h$ for $\eps\in[0,\frac{c-4}{8}]$ w.r.t. \Cref{def:HD}.
        This reflects the observation of Feldmann \etal \cite{FFKP18} that choosing $c$ strictly greater than~$4$ gives additional properties that can be exploited algorithmically compared to Abraham {\em et al.}'s \cite{AFGW10} definition where~$c=4$.

\paragraph{The role of the constants and the behaviour of $h(\eps)$.}
    The constant $4$ in the ball radius $(4+8\eps)r$ of \Cref{def:HD} is not arbitrary: it is tied to the other constants in the definition, and is the minimum value for which our structural results go through. With a smaller constant, the balls become too small to guarantee that the hub sets cover all relevant approximate shortest paths, and the sparse covering structure that our algorithms rely on cannot be established. This is closely related to the role of the violation $\lambda = c-4$ from \Cref{def:orig-HD}. The requirement $c > 4$ in~\cite{FFKP18} is essential for bounding the doubling dimension of certain hub sets, which is a key structural property underlying both the QPTAS in~\cite{FFKP18} and our PTAS. The argument relies on a recursive covering: at each scale, the sprawl is covered by balls around hubs of the shortest path cover, and these covering balls must fit strictly inside the balls at the next lower scale for the recursion to make progress. This requires the scales to be strictly growing, which is the case only when $c > 4$. At $c = 4$ (equivalently $\eps = 0$ in \Cref{def:HD}) the scales become constant, so no hierarchical covering is possible. In our definition, the violation parameter is subsumed by the function $h(\eps)$: the requirement that $h(\eps)$ be bounded for some $\eps > 0$ directly encodes the necessary scale growth.

    Finally, we remark that in general, boundedness of $h(\eps)$ for some $\eps > 0$ does not imply that $h(\eps_0)$ is bounded for smaller values $\eps_0 < \eps$. As $\eps$ decreases, both the ball radius and the approximation tolerance shrink, making the hitting set problem strictly harder. For doubling metrics, \Cref{obs:doublingToHighway} shows that $h(\eps)$ remains finite for all $\eps > 0$, but grows as $\eps^{-O(d)}$. Importantly, our algorithmic results only require $h(\eps)$ to be bounded for a single fixed $\eps > 0$.
        
     \paragraph{Highway dimension of a metric space.}
     A major advantage of \Cref{def:HD} is that is can be applied to metric spaces (even on infinite continuous spaces), and generalizes doubling dimension (see \Cref{obs:doublingToHighway}).
    		Given a metric space $(X,d_X)$, we can think of it as a complete graph  $G=(X,E={X\choose 2},w)$ where the weight of an edge $xy$ is equal to the metric distance between its endpoints, i.e., $w(xy)=d_X(x,y)$. We will use \Cref{def:HD} w.r.t.\ metric spaces. Note that here, the induced graph $G[B_v]$ is simply the metric space $X$ restricted to the points in the ball $B_v=B_X(v,(4+8\eps)r)$. In particular, for every pair of points $u,z\in B_v$, $d_{G[B_v]}(u,z)=w(uz)=d_X(u,z)$. Hence, due to the triangle inequality, the property  from \Cref{def:HD} in metric spaces is equivalent to the following: 
    		For every pair of points $u,z\in B_v$ such that $d_{X}(u,z)>r$, there is a point $x\in H_\eps$ such that $d_X(u,x)+d_X(x,z)\leq(1+\eps)d_X(u,z)$.
            In this sense this definition naturally extends to metric spaces and can even be used on infinite metric spaces (such as the Euclidean space $(\R^d,\|\cdot\|_2)$~).    	 
    	 
    	Note that it is important that our \Cref{def:HD} is w.r.t.\ graphs and not metric spaces in order to generalize \Cref{def:orig-HD}. This is as balls in graphs are not ``convex''. That is, it is possible that $u,z\in B_v$, while their shortest path does not lie inside $B_v$. In particular, it is possible that a graph $G=(V,E,w)$ will have small highway dimension, while the respective shortest path metric will not (see \Cref{fig:DuoStar}).
    	
     \begin{figure}[t]
        \centering{\includegraphics[scale=0.6]{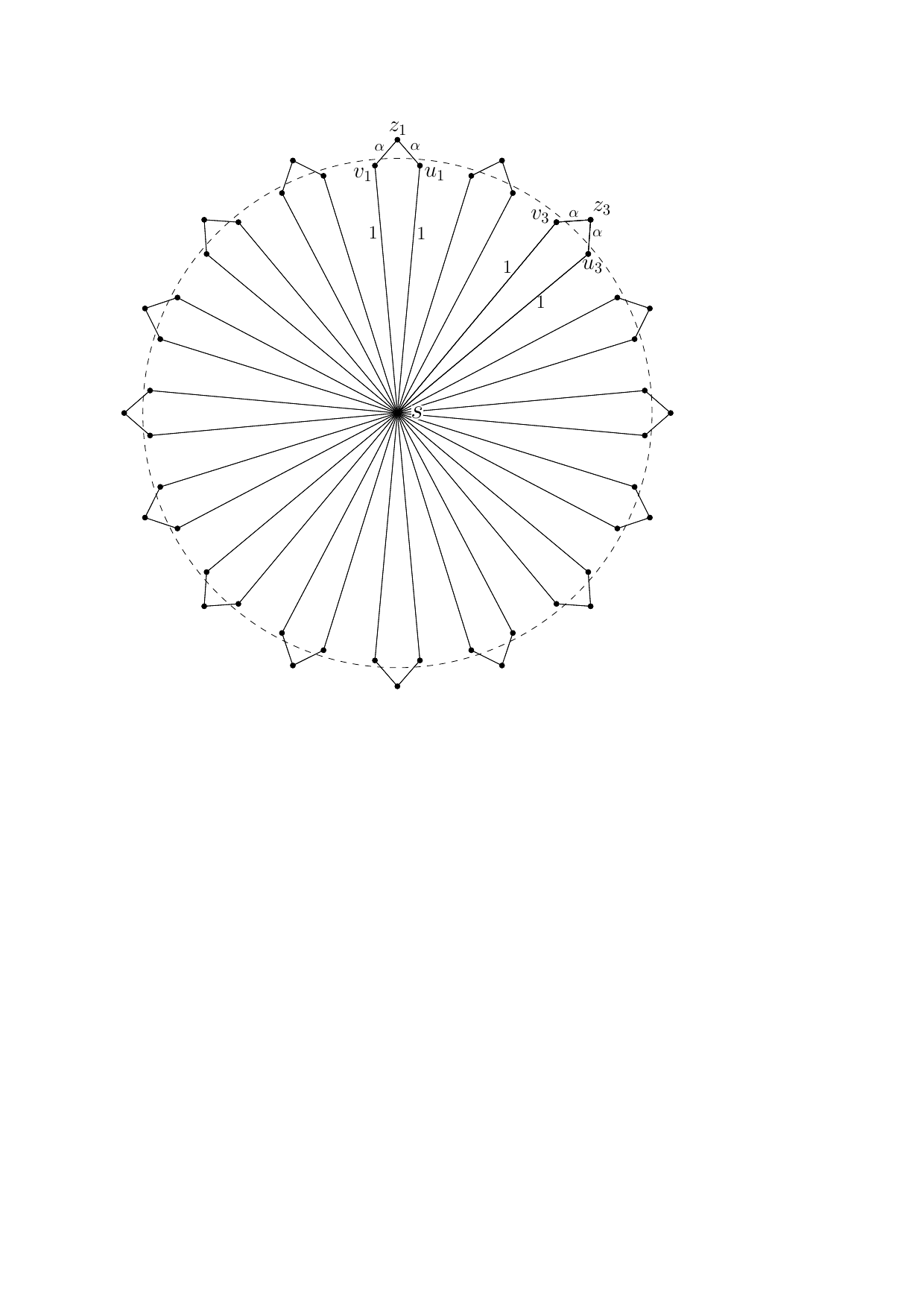}} 
        \caption{\label{fig:DuoStar}\small \it 
        An example of a graph $G=(V,E,w)$ with small highway dimension, while it's respective shortest path metric has large highway dimension. 
        Fix $\eps>0$. The graph $G$ in the figure consist of a center vertex $s$, which is connected to $n$ pairs of vertices $v_1,u_1,v_2,u_2,\dots,v_n,u_n$ by edges of weight $1$. In addition, each such pair $v_i,u_i$ is connected to a vertex $z_i$ with edges of weight $\alpha=\frac{1}{7+16\eps}$. The graph $G$, for $\eps$ has highway dimension $h(\eps)=1$. Indeed, fix $r=\frac{1}{4+8\eps}$, and consider the ball $B_v=B_G(s,(4+8\eps)\cdot r)=B_G(s,1)$. The induced graph $G[B_v]$ is simply a star graph with $2n$ leaves, all at pairwise distances $2$. In particular $H_\eps=\{s\}$ hits all the (exact) shortest paths. 
         Next fix $r'=\frac{1+\alpha}{4+8\eps}=\frac{1+\frac{1}{7+16\eps}}{4+8\eps}=\frac{2}{7+16\eps}=2\alpha$, and consider the ball $B'_v=B_G(s,(4+8\eps)\cdot r')=B_G(s,1+\alpha)$. Note that the distance from $v_i$ to $u_i$ is $2\alpha=r'$. It follows that $H_\eps=\{s\}$ is still a valid hitting set, as it hits all the shortest paths of length strictly greater than $r$. The other balls have small hitting sets as well.\\
        Next consider the shortest path metric $(X=V,d_G)$ of $G$, and the ball $B_v=B_G(s,(4+8\eps)\cdot r)=B_X(s,1)$. For every pair $v_i,u_i$, the distance is $d_X(v_i,u_i)=2\alpha=\frac{2}{7+16\eps}>\frac{1}{4+8\eps}=r$, and thus $H_\eps$ must contain a point from $\{u_i,v_i\}$. It follows that the highway dimension is at least $h(\eps)\ge n$. It is also interesting to note that the doubling dimension of the shortest path metric is $\Theta(\log n)$.
        }
    \end{figure}
    	

    While the \cref{def:orig-HD} of highway dimension is incomparable to doubling dimension$^{\ref{foot:doubling}}$,
    our \Cref{def:HD} also encompasses doubling metrics. This follows using the packing property (\Cref{lem:dd-vs-nets}). See \Cref{sec:prelims} for formal definitions and proof.

   \begin{restatable}[]{observation}{HwayImpliesDoubling}
\label{obs:doublingToHighway}
    	Consider a metric $(X,d_X)$ of doubling dimension $d$. 
    	Then $X$ has highway dimension at most $h(\eps)=\begin{cases}
    		(64+\frac{32}{\eps})^{d} & \eps>0\\
    		\infty & \eps=0
    	\end{cases}$~.
\end{restatable}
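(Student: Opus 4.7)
The plan is to exhibit an explicit $H_\eps$ as a sufficiently fine net inside $B_v = B_X(v,(4+8\eps)r)$, and to bound its cardinality via the packing property for doubling metrics. For $\eps=0$ the stated bound $h(0)=\infty$ is vacuous, so I focus on the case $\eps>0$.

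First I would set $\delta = \eps r/4$ and let $H_\eps \subseteq B_v$ be a maximal $\delta$-separated subset; maximality guarantees that every point of $B_v$ lies within distance $\delta$ of some point in $H_\eps$. The packing property of doubling metrics (\Cref{lem:dd-vs-nets}) implies that a $\delta$-separated set inside a ball of radius $R=(4+8\eps)r$ has at most $(2R/\delta)^d$ points, which after substitution yields
\[
|H_\eps| \le \left(\frac{2(4+8\eps)r}{\eps r/4}\right)^d = \left(\frac{8(4+8\eps)}{\eps}\right)^d = \left(64 + \frac{32}{\eps}\right)^d,
\]
matching the claimed bound exactly.

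Next I would verify the hitting property in its metric form (as stated in the remarks following \Cref{def:HD}): for any $u,z \in B_v$ with $d_X(u,z) > r$, it suffices to produce $x \in H_\eps$ with $d_X(u,x)+d_X(x,z) \le (1+\eps)d_X(u,z)$. I take $x \in H_\eps$ within distance $\delta$ of $u$; the triangle inequality together with $\delta = \eps r/4$ then gives
\[
d_X(u,x) + d_X(x,z) \le 2\, d_X(u,x) + d_X(u,z) \le \tfrac{\eps r}{2} + d_X(u,z) < \bigl(1+\tfrac{\eps}{2}\bigr) d_X(u,z),
\]
where the final strict inequality uses $r < d_X(u,z)$. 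This is comfortably within $(1+\eps) d_X(u,z)$, as required.

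I do not anticipate a serious obstacle. The only delicate choice is calibrating $\delta$: too fine and the net size would exceed $(64+32/\eps)^d$, too coarse and the resulting approximation would exceed $(1+\eps)$. The choice $\delta = \eps r/4$ is precisely the one that makes both inequalities tight up to the prescribed constants, so the argument ultimately reduces to a packing bound coupled with a single application of the triangle inequality.
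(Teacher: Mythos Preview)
Your approach is essentially identical to the paper's: take a fine net in $B_v$, verify the hitting property via the triangle inequality, and bound the net size via the packing lemma. One small calibration issue: \Cref{lem:dd-vs-nets} gives $|N|\le 2^{d\lceil\log_2\alpha(N)\rceil}$, which is at most $(2\alpha(N))^d$ rather than $\alpha(N)^d$, so your claimed bound $(2R/\delta)^d$ does not follow directly from it; the paper compensates by taking the coarser net $\delta=\eps r/2$ (aspect ratio $<32+16/\eps$) and absorbing the extra factor~$2^d$ from the ceiling to land on $(64+32/\eps)^d$, whereas your choice $\delta=\eps r/4$ with a correct application of the lemma would yield $(128+64/\eps)^d$.
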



\subsection{Our Results}
Our main theorem is a PTAS for \TSP in metrics/graphs with bounded highway dimension. 
In fact we prove our result for the Subset \TSP problem, which is a generalization of \TSP (see \Cref{def:subsetTSP}).
We remark that, in contrast to doubling metrics, the highway dimension is not bounded when taking subgraphs, and thus solving Subset \TSP is not trivial given an algorithm for \TSP in our case.

Note that due to \Cref{obs:doublingToHighway}, the following theorem generalizes results by Bartal \etal\cite{BGK16} and Banerjee \etal\cite{banerjee2024novel}, who gave a PTASes for \TSP in low doubling metrics.
Furthermore, our result also improves and generalizes over the QPTAS by Feldmann \etal \cite{FFKP18} for \TSP in graphs of low highway dimension according to \Cref{def:orig-HD}.

 \begin{restatable}[]{theorem}{TSPthm}
\label{thm:TSPHighway}
Consider a weighted graph $G=(V,E,w)$ with $n$ vertices and highway dimension $h:\mathbb{R}_{\geq 0}\to\mathbb{N}\cup\{\infty\}$, then for a given terminal set $K\subseteq V$, and $\eps\in(0,\frac16]$, there is an algorithm that computes a $(1+\eps)$-approximation to the Subset \TSP problem in time $2^{2^{O\left(\frac{1}{\eps}\log^2\frac{h(\eps)}{\eps}\right)}}n^{O(1)}$.
\end{restatable}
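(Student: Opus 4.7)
My approach would be to extend the QPTAS framework of Feldmann~et~al.~\cite{FFKP18} using the metric toolkit (shortest path covers, towns, and sprawl) reconstructed in \Cref{sec:BasicProperties}, and to push the bag sizes in the underlying decomposition from polylogarithmic in $n$ down to a function of $h(\eps)$ and $\eps$ only, so that the dependence on $n$ becomes polynomial.

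Concretely, I would first apply \Cref{def:HD} at geometrically growing scales $r_i=2^i$ to build a global shortest path cover $H=\bigcup_i H_i$, where every pair $u,z$ with $d_G(u,z)\in(r_i,2r_i]$ admits a $(1+\eps)$-approximate $u$-$z$ path intersecting some hub of $H_i$. The key local property is that the number of scale-$i$ hubs inside any ball of radius $O(r_i)$ is bounded by $h(\eps)$; this ``sparseness'' property is what drives the DP in \cite{FFKP18}, and it transfers to the new definition via the reconstructed tools.

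Second, I would combine these SPCs with the town structure into a hierarchical decomposition of $V$, refined top-down across scales, so that the boundary of any scale-$r_i$ piece is covered by a small set $P$ of \emph{portals} drawn from $H$. The key structural claim I would then prove is a \emph{portal-routing lemma}: there exists a $(1+\eps)$-approximate Subset \TSP tour $T^\star$ that, whenever it crosses the boundary of a scale-$r_i$ piece, does so through a portal in $P$. The argument is a standard ``snap to portal'' charging: each tour segment of length roughly $r_i$ that crosses such a boundary is rerouted through a nearby scale-$i$ hub at additive cost $O(\eps\,r_i)$, and summing geometrically across levels gives total blow-up $(1+O(\eps))$. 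This is where the new \Cref{def:HD} pays off: the approximate-hitting guarantee furnishes exactly the $(1+\eps)$-bounded witness path that the rerouting needs, something the old \Cref{def:orig-HD} could not provide once the doubling dimension was large.

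Third, given portal respect, I would run a Klein-style dynamic program over the hierarchical decomposition. A DP state at a piece encodes, for each active portal, its role in the tour (endpoint of an open subpath and which other portal it is matched with). The DP enforces that every terminal of $K$ lying inside the piece is served by the partial tour, which is how Subset \TSP (as opposed to plain \TSP) is handled without needing highway dimension on the induced subgraph. The number of portal patterns is doubly exponential in $|P|$; since $|P|$ will be of order $(h(\eps)/\eps)^{\poly(1/\eps,\,\log h(\eps))}$, this yields the $2^{2^{O((1/\eps)\log^2(h(\eps)/\eps))}}$ factor claimed in the theorem, while the number of pieces is $n^{O(1)}$ and combining children takes polynomial time in the state size.

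The main obstacle will be the portal-routing lemma. Unlike in the old definition, where a shortest path was guaranteed to be hit exactly, here we only have a single $(1+\eps)$-approximate witness per pair, and that witness might itself leave the enclosing ball or visit portals out of order. I would address this by applying \Cref{def:HD} iteratively within each rerouted segment, assigning each original tour edge to $O(\log(1/\eps))$ charging levels, and choosing the per-level slack as $\eps/\polylog(1/\eps)$ so that the multiplicative errors telescope to $(1+\eps)$ rather than exploding across the $\Theta(\log(\mathrm{spread}))$ scales. This slack adjustment is precisely what forces the extra $\log$ factors inside the exponent of the running time, and is the price paid for working with the more expressive metric definition.
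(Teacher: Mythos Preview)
Your proposal follows the QPTAS template of \cite{FFKP18}: build a hierarchical decomposition across all scales, snap the tour to portals, and run a DP whose state is a portal pattern. The paper's proof takes a completely different route. It does \emph{not} build a global hierarchical decomposition, and it does \emph{not} run a DP over portal patterns. Instead it adapts the divide-and-conquer of \cite{BGK16,banerjee2024novel} for doubling metrics: it searches for a ball of minimal level whose intersection with many towns witnesses that the terminal metric is \emph{not} doubling (\Cref{lem:towns-w-small-dd}); if no such ball exists, it calls the doubling PTAS of \cite{banerjee2024novel} as a black box. If such a ball exists, each of the offending towns has bounded doubling dimension (by minimality), so the doubling PTAS solves them; the remaining terminals are handled recursively, and the pieces are stitched through a small \emph{interface} of hubs (\Cref{lem:approx-sol,lem:cost-of-sol}). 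The approximation ratio does not accumulate over the recursion because the stitching cost $w(\mathrm{MST}(I))$ is charged against the $\Omega(q\cdot r_i)$ cost that the optimum must already pay to leave the $q$ towns; this is the role of the threshold $q=\Theta(\eps^{-5}\log^2(1/\eps)h(\eps)^2)$.

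The gap in your plan is precisely the step you gloss over: ``push the bag sizes \dots\ down to a function of $h(\eps)$ and $\eps$ only.'' In the \cite{FFKP18} framework, a piece at level $i$ sits inside pieces at all levels $j>i$, and its effective boundary carries portals from $\Theta(\log\Phi)$ scales. Even if each scale contributes only $h(\eps)$ hubs locally, the total portal count is $\Theta(h(\eps)\log\Phi)$, which is exactly why \cite{FFKP18} is a QPTAS. Your ``$O(\log(1/\eps))$ charging levels'' idea controls the \emph{stretch} of the rerouting, but it does nothing to reduce the number of scales whose portals a DP state must remember. Without a mechanism that collapses the cross-scale portal interaction (and you propose none), the DP state is still $2^{\mathrm{poly}(h(\eps))\cdot\log\Phi}$, not $2^{\mathrm{poly}(h(\eps),1/\eps)}$. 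The paper sidesteps this entirely by never tracking cross-scale portals: it reduces each non-doubling spot to a bounded-doubling subinstance plus one recursive call, and the interface at any single step has size $O((\eps^{-1}\log(1/\eps)h(\eps))^2\cdot\eps^{-1}\log(1/\eps))$ independent of $n$ (\Cref{lem:HubHierarchy}, proof of \Cref{lem:cost-of-MST}).
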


We then turn to constructing a metric toolkit for graphs/metrics with bounded highway dimension. Our (and previous) results are summarized in \Cref{fig:tableMetricToolkit}. We refer to \Cref{sec:MetricToolkit} for definitions and background.
First we construct strong padded decomposition with padding parameter $O(\log h(\eps))$ (\Cref{thm:padded}).
Roughly, this means that for any $\Delta>0$, we can sample a partition of the graph vertices into clusters of strong diameter $\le \Delta$ such that every ball of radius $\Omega(\frac{\Delta}{\log h(\eps)})$ is fully contained in a single cluster with probability $\frac12$. 
We then use this padded decomposition to construct an embedding into $\ell_2$ with distortion $O(\sqrt{\log h(\eps)\cdot\log n})$ (\Cref{cor:EmbeddingLp}), to construct $O(\log h(\eps))$-approximation algorithms for the \emph{\ZEX} problem (\Cref{cor:0Extension}), and the Lipschitz extension problem (\Cref{cor:LipschitzExtension}), and finally to construct a flow sparsifier with quality $O(\log h(\eps))$.

Next, we construct an $(8,O(h(\eps)^2))$-sparse cover (\Cref{thm:SparseCover}), which means that we can construct a collection of (overlapping) clusters of strong diameter $\Delta$ such that every vertex belongs to $O(h(\eps)^2)$ clusters, and every ball of radius $\frac\Delta8$ is fully contained in some cluster. 
We then use an alternative version of this sparse cover (called spare partition cover; \Cref{thm:SparsePartitionCover}) to construct an $O(\eps^{-2}\cdot h(\eps)^4\cdot\log n)$-approximation algorithm for the oblivious buy-at-bulk problem (\Cref{cor:BuyAtBulk}), and for metric embedding with distortion $O(\frac1\eps)$ into  $O(h(\eps)^2\cdot \log\frac n\eps \cdot\log\frac1\eps)$ dimensional~$\ell_\infty$ (\Cref{cor:EmbeddingLinfty}).

Finally, we construct a tree cover, that is a collection of $h(\frac\eps2)\cdot\tilde{O}(\eps^{-1})$ dominating trees such that every pair has distortion only $1+\eps$ in one of these trees (\Cref{thm:HighwayTreeCover}). We use this tree cover to construct a succinct and efficient distance oracle (\Cref{cor:DistanceOracle}), and a reliable spanner (\Cref{cor:ReliableSpanner}), which can withstand massive failures of vertices.

\begin{figure}
\begin{multicols}{2}

\scalebox{0.9}{\begin{tabular}{|l|l|l|}
\multicolumn{3}{c}{{\Large Padded Decomposition}}                     \\\hline
Space              & Padding Parm. & Reference           \\\hline
General            & $O(\log n)$       & \cite{Bar96}               \\\hline
$K_r$-minor   free & $O(\log r)$            & \cite{CF25} \\\hline
Doubling $d$       & $O(d)$            & \cite{GKL03,Fil19padded}   \\\hline
Highway            & $O(\ln h(\eps))$  & \Cref{thm:padded} \\\hline        
\end{tabular}}

\vspace{5pt}

\scalebox{0.9}{\begin{tabular}{|l|l|l|l|}
\multicolumn{4}{c}{{\Large Sparse Cover}}                                         \\\hline
Space            & Padding & Sparsity              & Reference           \\\hline
General          & $4k-2$  & $2k\cdot n^{\frac1k}$ & \cite{AP90}                \\\hline
$K_r$-minor free & $8+\eps$  & $O(r^4/\eps^2)$              & \cite{CF25} \\\hline
Doubling $d$     & $O(d)$  & $\tilde{O}(d)$        & \cite{Fil19padded}   \\\hline
Highway          & $8$     & $O(h(\eps)^2)$        & \Cref{thm:SparseCover}   \\\hline 
\end{tabular}}

\phantom{a}\\\phantom{a}\\

\scalebox{0.9}{\begin{tabular}{|l|l|l|l|}
\multicolumn{4}{c}{{\Large Tree Cover}}                                                                      \\\hline
Space            & Stretch  & \# Trees                                       & Reference            \\\hline
General          & $O(k)$   & $O(n^{\frac1k}\cdot k)$                        & \cite{MN07}                 \\\hline
Planar           & $1+\eps$ & $\eps^{-3}\cdot\log(\frac1\eps)$               & \cite{CCLMST23Planar}       \\\hline
$K_r$-minor free & $1+\eps$ & $2^{r^{O(r)}/\eps}$                            & \cite{CCLMST24}             \\\hline
Euclidean $\R^d$ & $1+\eps$ & $\tilde{O}(\eps^{-\frac{d+1}{2}})$ & \cite{CCLMST24Euclidean}    \\\hline
Doubling $d$     & $1+\eps$ & $\eps^{-O(d)}$                                 & \cite{BFN22}                \\\hline
Highway          & $1+\eps$ & $h(\frac\eps2)\cdot\tilde{O}(\eps^{-1})$                                 & \Cref{thm:HighwayTreeCover}\\\hline
\end{tabular}}

\end{multicols}
    \caption{Summary of new and previous results on the metric toolkit.}
    \label{fig:tableMetricToolkit}
\end{figure}
    \section{Related work}\label{sec:Related}
Context and related work for the metric toolkit are given in \Cref{sec:MetricToolkit} where they are introduced. Here we provide additional background on highway dimension and TSP.

The metric TSP problem in general graphs is APX-hard \cite{PY93}. Christofides \cite{Chr76} provided a celebrated $\frac32$ approximation, which recently was slightly improved \cite{KKG21}.
The main bulk of the research focused on constructing approximation schemes for more structured metric spaces. Specifically, there is an efficient PTAS (EPTAS) for planar graphs \cite{GKP95,AGKKW98,Klein05,Klein08}, EPTAS for low dimensional Euclidean space \cite{Arora98,Mitchell99,BG13}, EPTAS for metrics with low doubling dimension \cite{BGK16,Gottlieb15}, and EPTAS for fixed minor free graphs \cite{Grigni00,BLW17}.
The running time for metrics of doubling dimension $d$ in \cite{BGK16} is roughly $n^{(2^{d}/\eps)^{O(d)}}$.
However, in the original arXiv version~1 (from~2011) \cite{BGK16arxivVersion1}, in Theorem~4.1 it is stated that if there was a light spanner for doubling metrics, then a better running time could be achieved.
Indeed, since \cite{BGK16arxivVersion1} light spanners for doubling metrics where discovered \cite{Gottlieb15,FS20,BLW19}.
By plugging in the spanner from \cite{BLW19} with lightness $\eps^{-O(d)}$ into Theorem~4.1 in \cite{BGK16arxivVersion1}, a runtime of $n\cdot(\log n)^{\eps^{-O(d)}\cdot2^{O(d^{2})}\cdot\log\log n}=n\cdot(2^{\eps^{-O(d)}\cdot2^{O(d^{2})}}+n^{o(1)})$ can be achieved. However a recent result of Banerjee \etal \cite{banerjee2024novel} gives a faster approximation scheme with runtime $2^{(d/\eps)^{O(d)}}n+(1/\eps)^{O(d)}n\log n$, as summarized in \Cref{thm:TSP-dd}.

The subset TSP version, which also studied in this paper (see \Cref{def:subsetTSP}) is especially interesting in planar graphs and fixed minor free graphs (as a metric over a subset of the vertices is not planar/minor free). Here there is a PTAS for subset TSP in planar graphs \cite{Le20} and fixed minor free graphs \cite{CFKL20}.


The highway dimension was defined by Abraham \etal \cite{AFGW10} and several other definitions followed (see \cite{Blum19} for an overview and comparisons).
Abraham \etal showed that certain shortest-path heuristics are provably more efficient than Dijkstra's algorithm if the input has bounded highway dimension.
In \cite{ADFGW11}, they showed that set systems given by unique shortest paths have VC-dimension~2.
This means that slightly perturbing the edge weights and thus obtaining unique shortest paths, it is possible to approximate the highway dimension within an $O(\log h)$ factor.
Blum \etal \cite{DBLP:conf/iwpec/0001DF0Z22} showed that it is unlikely that there is an FPT algorithm to compute the highway dimension (although the question was not fully answered).

Feldmann \etal \cite{FFKP18} studied NP-hard optimisation problems such as \TSP, Steiner Tree, $k$-Median, and Facility Location in graphs of bounded highway dimension.
Specifically, they showed that every weighted graph with highway dimension $h$ (under \Cref{def:orig-HD}) and aspect ratio$^{\ref{foot:aspect}}$ $\Phi$ stochastically embeds into a distribution over graphs with treewidth $(\log\Phi)^{O(\log^{2}(\frac{h}{\eps\lambda})/\lambda)}$, and expected distortion $1+\eps$. A similar theorem for doubling metrics was previously known \cite{Talwar04}.
This implies QPTASs for the above mentioned problems.
For clustering problems such as $k$-Median, $k$-Means, and Facility Location, Feldmann and Saulpic \cite{FS21} gave a PTAS for graphs of bounded highway dimension.
For the related $k$-Center and Vehicle Routing problems, Becker \etal \cite{BKS18} gave a PTAS.
For the $k$-Center problem, Feldmann and Marx \cite{DBLP:journals/algorithmica/FeldmannM20} showed that the problem is W[1]-hard even on planar graphs of constant doubling dimension, if the highway dimension and $k$ are the parameters.





    \section{Preliminaries}\label{sec:prelims}
The $\tilde{O}$ notation hides poly-logarithmic factors, that is $\tilde{O}(g)=O(g)\cdot\polylog(g)$.

    \paragraph*{Doubling dimension.}
	A metric space $(X,d_X)$ has \emph{doubling dimension} $d$, if every ball $B_X(v,r)$ of radius $r$ can be covered by at most $2^d$ balls of radius $\frac r2$ \cite{GKL03}. That is, there is a set $A$ of at most $2^d$ points such that $B_X(v,r)\subseteq\bigcup_{u\in A}B_X(u,\frac r2)$.

    \begin{definition}\label{def:net}
        A \emph{$\delta$-net} of a metric $(X,d_X)$ is a subset $N\subseteq X$ such that
        \begin{enumerate}
            \item $N$ is a \emph{$\delta$-packing}, i.e., $d_X(p,q)>\delta$ for all distinct $p,q\in N$, and
            \item $N$ is a \emph{$\delta$-covering}, i.e., for every $u\in X$ there is a $p\in N$ with $d_X(u,p)\leq\delta$.
        \end{enumerate}
    \end{definition}
        A $\delta$-net can be computed greedily in polynomial time, e.g., as a prefix of a Gonzalezs order of the metric. \footnote{The Gonzalez order of $X$ \cite{Gon85} is defined as follows: $x_1$ is chosen arbitrary. $x_2\in X$ is the farthest point from $x_1$, and in general $x_{i+1}$ is the farthest point from $\{x_1,\dots,x_i\}$. Observe that if $d_X(x_{i+1},\{x_1,\dots,x_i\})\ge\delta$, then $\{x_1,\dots,x_i\}$ is a $\delta$-net. In particular, the prefixes constitute a hierarchy of nets.\label{foot:Gonzales}}
        Furthermore, a net hierarchy $N_0\supseteq N_1 \supseteq \dots \supseteq N_i\supseteq\dots$ where each $N_i$ is a $2^i$-net of the metric can also be computed in this way, i.e., each $N_i$ is a prefix of a Gonzalez order of $N_{i-1}$.
        The most crucial property of doubling metrics, that is formalized as the packing property bellow, is the fact that they contain sparse nets.
        The packing property can be proven by repeatedly applying the definition of doubling dimension \cite{GKL03} (see Proposition 1 in \cite{GKL03}).
        \begin{lemma}[Packing Property]\label{lem:dd-vs-nets}
        Let $X$ be a metric, and $\alpha(N)$ denote the aspect ratio\footnote{The aspect ratio of a metric space is $\frac{\max_{x,y\in X}d_X(x,y)}{\min_{x,y\in X}d_X(x,y)}$  the ratio between the maximal and minimal distances.\label{foot:aspect}} of $N\subseteq X$.
        If $X$ has doubling dimension $d$, then every $\delta$-packing $N$ of $X$ has size $|N|\leq 2^{d\lceil\log_2\alpha(N)\rceil}$.
        \end{lemma}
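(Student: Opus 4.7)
The plan is to apply the doubling condition iteratively to cover $N$ by balls each small enough to contain at most one point of $N$, and then count the balls. Fix any $p_0 \in N$ and set $D := \mathrm{diam}(N)$ and $m := \min_{p \ne q \in N} d_X(p, q)$, so that $N \subseteq B_X(p_0, D)$ and $\alpha(N) = D/m$.

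First I would prove by induction on $k \ge 0$ that $B_X(p_0, D)$ can be covered by at most $2^{dk}$ balls of radius $D/2^k$. The base case $k = 0$ is the single ball $B_X(p_0, D)$ itself, and the inductive step applies the doubling condition to each of the $2^{d(k-1)}$ balls of the previous cover, replacing it with $2^d$ balls of half the radius; the resulting cover then has $2^d \cdot 2^{d(k-1)} = 2^{dk}$ balls, each of radius $D/2^k$.

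Next, I would choose $k := \lceil \log_2 \alpha(N) \rceil$, so that $2^k \ge \alpha(N) = D/m$ and hence the covering balls have radius $D/2^k \le m$. For this choice, any two distinct points of $N$ lying in a common small ball would be at mutual distance at most $2 \cdot D/2^k$, which is less than $m$ once we invoke the strict inequality in the $\delta$-packing definition to handle the boundary case (any remaining constant slack can be folded into the rounding). This contradicts the definition of $m$ as the minimum pairwise distance, so each small ball contains at most one point of $N$, giving $|N| \le 2^{dk} = 2^{d \lceil \log_2 \alpha(N) \rceil}$.

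There is no real obstacle here: the whole argument is essentially unrolling the doubling condition $\lceil \log_2 \alpha(N) \rceil$ times and then applying the packing condition at the leaves. The only item demanding mild care is the boundary bookkeeping between $D/2^k \le m$ and ``diameter strictly less than $m$,'' which is resolved by the strict inequality built into the packing definition.
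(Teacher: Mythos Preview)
Your approach is exactly the one the paper has in mind (it only says ``repeatedly applying the definition of doubling dimension'' and cites \cite{GKL03}), so there is nothing to compare on the high level. However, the final counting step has a genuine factor-of-two gap that your handwaves do not close.

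With $k=\lceil\log_2\alpha(N)\rceil$ you get covering balls of radius $D/2^k\le m$, so two points of $N$ in a common ball are at distance at most $2D/2^k\le 2m$, not $<m$. Your appeal to the ``strict inequality in the $\delta$-packing definition'' does not help: that inequality says pairwise distances exceed $\delta$, which gives $m>\delta$, but tells you nothing that turns $2m$ into something below $m$. Likewise ``folding into the rounding'' fails when $\alpha(N)$ is a power of $2$, since then $\lceil\log_2\alpha(N)\rceil=\log_2\alpha(N)$ gives no slack at all. Concretely, take $N=\{p,q\}$ with $d_X(p,q)=1$ in any metric of doubling dimension $d\ge 1$: then $\alpha(N)=1$, the stated bound is $2^{d\cdot 0}=1$, yet $|N|=2$. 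So the bound as literally written cannot be achieved by your argument (or any argument).

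The fix is to run one more round of doubling, taking $k=\lceil\log_2\alpha(N)\rceil+1$ so that $D/2^k\le m/2$ and each small ball has diameter $\le m$, hence contains at most one point of $N$. This yields $|N|\le 2^{d(\lceil\log_2\alpha(N)\rceil+1)}$, which is exactly the bound the paper \emph{uses} in the proof of \Cref{obs:doublingToHighway} (where it immediately passes to $2^{d\log_2\alpha+d}$ via $\lceil x\rceil\le x+1$). So the gap is cosmetic for the paper's applications, but your write-up should not pretend the tighter constant goes through.
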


    We now prove that doubling dimension implies bounded highway dimension (\Cref{obs:doublingToHighway} restated for convenience).
\HwayImpliesDoubling*
	\begin{proof}
		Fix $v\in V$, $r>0$, and $\eps\geq 0$. If $\eps=0$ there is nothing to prove, so we will assume $\eps>0$.
		Fix $B_v=B_X(v,(4+8\eps)r)$, and let $H_\eps$ be a $\frac{\eps}{2}r$-net of $B_v$. That is, a set of points at pairwise distance strictly greater than $\frac{\eps}{2}r$, and such that every point in $B_v$ has a net point at distance at most $\frac{\eps}{2}r$. 
		Given vertices $u,z\in B_v$ with $d_X(u,z)>r$, there is a net point $x\in H_\eps$ with $d_X(u,x)\leq \frac{\eps}{2}r< \frac{\eps}{2} d_X(u,z)$, which by the triangle inequality also means that $d_{X}(u,x)+d_{X}(x,z)\leq d_{X}(u,z)+2\cdot d_{X}(u,x)\leq(1+\eps)\cdot d_{X}(u,z)$. Thus $H_\eps$ is a valid hitting set for \Cref{def:HD}.
		Note that $\alpha(H_\eps)<\frac{\diam(B_v)}{\frac{\eps}{2}\cdot r}\le 32+\frac{16}{\eps}$,
		and thus by \Cref{lem:dd-vs-nets} we have $|H_{\eps}|\leq2^{d\lceil\log_{2}\alpha(H_{\eps})\rceil}\leq2^{d\cdot\log_{2}\alpha(H_{\eps})+d}\le\left(32+\frac{16}{\eps}\right)^{d}\cdot2^{d}=\left(64+\frac{32}{\eps}\right)^{d}$.
	\end{proof}

	\paragraph*{Graphs.}
	We consider connected undirected graphs $G=(V,E)$ with edge weights
	$w: E \to \R_{\ge 0}$. We say that vertices $v,u$ are neighbors if $\{v,u\}\in E$. Let $d_{G}$ denote the shortest path metric in
	$G$.
	$B_G(v,r)=\{u\in V\mid d_G(v,u)\le r\}$ is the ball of radius $r$ around $v$. For a vertex $v\in
	V$ and a subset $A\subseteq V$, let $d_{G}(x,A):=\min_{a\in A}d_G(x,a)$,
	where $d_{G}(x,\emptyset)= \infty$. For a subset of vertices
	$A\subseteq V$, let $G[A]$ denote the induced graph on $A$,
	and let $G\setminus A := G[V\setminus A]$.\\
	The \emph{diameter} of a graph $G$ is $\diam(G)=\max_{v,u\in V}d_G(v,u)$, i.e. the maximal distance between a pair of vertices.
	Given a subset $A\subseteq V$, the \emph{weak}-diameter of $A$ is $\diam_G(A)=\max_{v,u\in A}d_G(v,u)$, i.e. the maximal distance between a pair of vertices in $A$, w.r.t. to $d_G$. The \emph{strong}-diameter of $A$ is $\diam(G[A])$, the diameter of the graph induced by $A$. 
	%
        \section{Basic Properties}\label{sec:BasicProperties}
    In this section we adapt the toolkit for highway dimension created in previous works to our new definition.
    As already shown by Abraham \etal \cite{AFGW10}, low highway dimension implies sparse shortest path covers, as defined next. Note that here we switch to (shortest-path) metrics instead of graphs (because the topology of the graph has no role here). 
 
	\begin{definition}[shortest path cover]\label{def:SPC}
        For a metric $(X,d_X)$, $r>0$, and $\eps\geq 0$, 
        an \emph{$(r,\eps)$-shortest path cover} \mbox{$\SPC\subseteq X$} is a set of \emph{hubs} 
        such that for any pair $u,z\in X$ with $d_X(u,z)\in (r,(2+\eps)r]$, there is a hub $x\in\SPC$ 
        for which $d_X(u,x)+d_X(x,z)\leq(1+\eps)d_X(u,z)$.
        A shortest path cover $\SPC$ is called {\em locally $s$-sparse} for scale~$r$, 
        if no ball of radius $(2+4\eps)r$ contains more than~$s$ vertices from~$\SPC$.
	\end{definition}

Note that the constant $4$ in the ball radius $(2+4\eps)r$ is not independent of the constant $4$ in \Cref{def:HD}: the SPC ball radius $(2+4\eps)r$ is exactly half of the highway dimension ball radius $(4+8\eps)r$. This factor of $2$ is necessary for the proof of \Cref{lem:sparse-SPC}, where a candidate approximate shortest path starting from within a ball of radius $(2+4\eps)r$ must lie within a ball of radius $(4+8\eps)r$ in order to apply \Cref{def:HD}.

        A proof by Abraham \etal \cite{AFGW10} generalized to \Cref{def:HD} shows that any graph of highway dimension $h:\mathbb{R}_{\geq 0}\to\mathbb{N}\cup\{\infty\}$ has locally $h(\eps)$-sparse $(r,\eps)$-shortest path covers for any $r>0$ and $\eps\in[0,1]$. 
        Indeed, if we can compute hitting sets for approximate shortest paths in polynomial time, as discussed in \paragraphref{rem:HittingSetFind}, then we may even compute shortest path covers with this sparseness guarantee in polynomial time.
        This follows as the proof is constructive.

    \begin{restatable}[]{lemma}{SPCLemma}
\label{lem:sparse-SPC}
Given a graph $G=(V,E)$, and parameters $r>0$, $\eps\in[0,1]$, there is an $(r,\eps)$-shortest path cover $\SPC$, such that if $G$ has highway dimension $h:\mathbb{R}_{\geq 0}\to\mathbb{N}\cup\{\infty\}$, then $\SPC$ is locally $h(\eps)$-sparse.
\end{restatable}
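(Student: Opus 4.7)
The plan is to adapt Abraham \etal's classical exchange argument from the original definition to the approximate setting of \Cref{def:HD}. I would take $\SPC$ to be a minimum-cardinality $(r,\eps)$-shortest path cover; such a set exists since $V$ itself is trivially an SPC (choose $x=u$ for any pair $(u,z)$, yielding $d(u,x)+d(x,z)=d(u,z)\le(1+\eps)d(u,z)$). Validity is then automatic by construction, so the remaining work is to establish local $h(\eps)$-sparsity via an exchange argument.

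For sparsity, fix any $u_0 \in V$ and the ball $B = B_G(u_0, (2+4\eps)r)$, and assume for contradiction that $|\SPC \cap B| > h(\eps)$. By \Cref{def:HD} applied to $B_G(u_0, (4+8\eps)r)$ there is a hitting set $H := H_\eps^{u_0}$ of size at most $h(\eps)$. I would form $\SPC' := (\SPC \setminus B) \cup H$, which strictly reduces the cardinality, and then show $\SPC'$ remains a valid $(r,\eps)$-SPC, contradicting minimality. Only pairs $(u,z)$ with $d(u,z)\in(r,(2+\eps)r]$ that were covered solely by hubs in $\SPC \cap B$ are at risk; for such an $x \in \SPC \cap B$ covering $(u,z)$, the inequality $(1+\eps)(2+\eps) = 2+3\eps+\eps^2 \le 2+4\eps$ (valid for $\eps \le 1$) gives $d(u,x),d(x,z)\le(2+4\eps)r$, placing $u,z\in B_G(u_0,(4+8\eps)r)$.

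The critical step is to show that the concatenated path $u \to x \to z$ of length $\le (1+\eps)d(u,z)$ lies inside $B_G(u_0, (4+8\eps)r)$, so that \Cref{def:HD} can actually be invoked for the pair $(u,z)$ from the center $u_0$. Using $x$ as the pivot, for any vertex $p$ on the shortest path from $u$ to $x$ we have $d(p,x)\le d(u,x)\le(2+4\eps)r$, and hence $d(p,u_0) \le d(p,x) + d(x,u_0) \le 2(2+4\eps)r = (4+8\eps)r$, and symmetrically for the $x \to z$ half. Thus $d_{G[B_{u_0}]}(u,z) \le (1+\eps) d_G(u,z)$ while $d_G(u,z)>r$, so \Cref{def:HD} delivers some $x' \in H$ on a $u$-$z$ path in $G[B_{u_0}]$ of length $\le (1+\eps) d_G(u,z)$; this $x'$ covers $(u,z)$ and lies in $\SPC'$, yielding the desired contradiction.

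The hardest point is precisely this "path-stays-in-ball" step: the naive triangle inequality through $u$ would only give $d(p,u_0) \le d(u,u_0)+d(u,p) \le (4+8\eps)r + (2+4\eps)r$, which is too loose, and \Cref{def:HD} would then simply not apply to the pair $(u,z)$. Pivoting through $x$ rescues the argument, since $x \in B$ sits centrally between $u$ and $z$, so both halves of the approximate path are anchored at a vertex within distance $(2+4\eps)r$ of $u_0$. For the polynomial-time constructive version, I would start from $\SPC_0 := \bigcup_{v \in V} H_\eps^v$ (computed via the algorithm assumed in \paragraphref{rem:HittingSetFind}) and iteratively apply the exchange above whenever some ball violates the sparsity bound; since each exchange strictly decreases $|\SPC|$, the procedure terminates in polynomial time.
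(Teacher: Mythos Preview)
Your proposal is correct and follows essentially the same exchange argument as the paper: assume a dense ball, replace $\SPC\cap B$ by the hitting set $H_\eps$ from \Cref{def:HD}, and show this covers every affected pair by verifying that some $(1+\eps)$-approximate $u$--$z$ path stays inside $B_G(u_0,(4+8\eps)r)$. Your path-in-ball argument (pivoting through the hub $x$ so that $d(p,u_0)\le d(p,x)+d(x,u_0)\le 2(2+4\eps)r$ on each half) is in fact a slight simplification of the paper's version, which bounds the farthest vertex of an arbitrary $(1+\eps)$-approximate path via the averaged inequality $d_G(x,b)\le\tfrac12(d_G(u,x)+d_G(x,z)+d_G(u,b)+d_G(b,z))\le(1+\eps)d_G(u,z)$; both yield the same $(4+8\eps)r$ bound, and the paper's local-search procedure is just an algorithmic realization of your global-minimum exchange.
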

        

        \begin{proof}
        	We present a local search procedure to compute the shortest path cover, which starts off with an arbitrary $(r,\eps)$-shortest path cover $\SPC$, e.g., the full vertex set $V$, and then repeatedly       	
        	finds the densest ball of radius $(2+4\eps)r$ and tries to sparsify it. The algorithm halts once the sparsification attempt fails.
        	
        	In more detail, initially $\SPC\leftarrow V$. In every iteration the algorithm finds the vertex $v\in V$ maximizing $|B_v\cap\SPC|$ where $B_v=B_G(v,(2+4\eps)r)$. The algorithm then constructs a set $H$ such that $\SPC'=(\SPC\setminus B_v)\cup H$ is still an $(r,\eps)$-shortest path cover. If $|\SPC'|\ge|\SPC|$, the algorithm halts and returns $\SPC$. Otherwise, it continues for another iteration after setting $\SPC\leftarrow \SPC'$.
        	Clearly, the returned set is an $(r,\eps)$-shortest path cover. Furthermore, the algorithm halts after at most $n$ iterations, since the size of the shortest path cover is decreased in each iteration.
        	
        	The algorithm obtains the set $H\subseteq B_v$ as a hitting set to a set system $\mathcal{H}\subseteq 2^V$ over the vertex set~$V$.
        	Specifically, a vertex pair $u,z\in V$ for which (1) $d_G(u,z)\in(r,(2+\eps)r]$, and (2) there exists a hub $x\in B_v\cap\SPC$ for which $d_G(u,x)+d_G(x,z)\leq(1+\eps)d_G(u,z)$, is called a $B_v$-affected pair. For each $B_v$-affected pair $u,z$, the instance~$\mathcal{H}$ contains a set $S_{u,z}=\{y\in V\mid d_G(u,y)+d_G(y,z)\leq(1+\eps)d_G(u,z)\}$, i.e., it consists of all the potential hubs for $u,z$. The set $H$ is then obtained as a minimum sized hitting set to $\mathcal{H}$.
        	Clearly, $\SPC'=(\SPC\setminus B_v)\cup H$ is still an $(r,\eps)$-shortest path cover.
        	
        	Suppose for the sake of contradiction that the returned set $\SPC$ is not locally $h(\eps)$-sparse. That is, there is a vertex $v'\in V$ such that $|B_{v'}\cap\SPC|> h(\eps)$.
        	In the last iteration of the algorithm, it chose a vertex $v\in V$ that maximizes $|B_{v}\cap\SPC|\ge |B_{v'}\cap\SPC|>h(\eps)$.
        	Consider a $B_v$-affected pair~$u,z$. 
            That is, there is a hub $x\in B_{v}\cap\SPC$ such that $d_G(u,x)+d_G(x,z)\leq(1+\eps)d_G(u,z)$.  
         Let $Q_{u,z}$ be any $u-z$ path of length at most $(1+\eps)d_G(u,z)$. Denote by $b$ the vertex of~$Q_{u,z}$ that is farthest from the center $v$ of the ball $B_v$. We can bound the distance between $b$ and the hub $x\in B_v$ as follows:
     	    \begin{align*}
        		d_G(x,b) &\leq \min\{d_G(x,u)+d_X(u,b),\; d_G(x,z)+d_G(z,b)\} \\
        		&\leq \textstyle\frac{1}{2}(d_G(u,x)+d_G(x,z)+d_G(u,b)+d_G(b,z))\quad\leq\quad (1+\eps)d_G(u,z)~.
        	\end{align*}
        	Hence the distance from $v$ to $b$ is $d_G(v,b)\leq d_G(v,x)+d_G(x,b)\leq (2+4\eps)r +(1+\eps)(2+\eps)r\leq (4+8\eps)r$, given that $x$ lies in the ball of radius $(2+4\eps)r$ centered at $v$, $d_G(u,z)\leq (2+\eps)r$, and $\eps^2\leq\eps$ as~$\eps\leq 1$. As~$b$ is the vertex of $Q_{u,z}$ farthest from $v$, the entire path $Q_{u,z}$ lies within the ball $B'_v=B_G(v,(4+8\eps)r)$. In particular, $u,z\in B'_v$, $d_G(u,z)>r$, and $d_{G[B'_v]}(u,z)\leq(1+\eps)d_G(u,z)$. 
        	As we assumed that $G$ has highway dimension $h:\mathbb{R}_{\geq 0}\to\mathbb{N}\cup\{\infty\}$, 
        	\Cref{def:HD} implies that there is a set $H_\eps$ of size at most $h(\eps)$ intersecting some $u-z$~path~$P_{u,z}$ in~$G[B'_v]$ of length at most $(1+\eps)d_G(u,z)$ for every $B_v$-affected pair $u,z$. Thus this set $H_\eps$ is a hitting set for the instance $\mathcal{H}$. Since our algorithm computed a solution $H$ of minimum size, we have $|H|\leq |H_\eps|\leq h(\eps)$. But this implies that
        	$|\SPC'|=\left|(\SPC\setminus B_v)\cup H\right|\leq |\SPC|-|B_{v'}\cap\SPC|+|H|<|\SPC|$, a contradiction to the fact that the algorithm halted.
        \end{proof}

        We will use \emph{towns} and \emph{sprawl}, for which our definition is adapted from  \cite{FFKP18}.

\begin{definition}[towns and sprawl]\label{def:towns}
	Given an $(r,\eps)$-shortest path cover $\SPC$ for scale $r>0$ and $\eps\geq 0$, 
	for any node~$v\in X$ such that $d_X(v,\SPC) > (2+\eps)\cdot r$, we call the set 
	$T=\{u \in X \mid d_X(u,v) \leq r\}$ a \emph{town} for scale $r$. 
	The \emph{sprawl} for scale $r$ (denoted $\cS$) is the set of all vertices that are not in towns.
\end{definition}

        Similarly to \cite{FFKP18}, the towns form clusters. We reprove this fact using our new definition of the highway dimension.

\begin{lemma}
	\label{lem:townproperties}
	Let $T$ be a town for an $(r,\eps)$-shortest path cover $\SPC$ for some $r>0$ and~$\eps\geq 0$. 
	The following properties hold:
	\begin{enumerate}
		\item Small diameter: For every town $T$, $\diam(T) \le r$.
		\item Separation: For every town $T$, $d_X(T,V\setminus T) > r$.
		\item Sprawl: For any node $v\in\cS$ of the sprawl of scale $r$, $d_X(v,\SPC)\leq  (2+\eps)\cdot r$.
	\end{enumerate} 	
\end{lemma}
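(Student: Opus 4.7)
The plan is to verify each of the three items by contradiction, leveraging the single crucial tension between the defining conditions of a town: the center $v$ sits at distance strictly more than $(2+\eps)r$ from every hub, while the SPC guarantees that every pair of points at distance in $(r,(2+\eps)r]$ has an approximate geodesic passing through some hub in $\SPC$. Whenever a ``large'' distance appears inside or near the town, invoking SPC will place a hub too close to $v$, contradicting the separation of $v$ from $\SPC$.

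For the small diameter property, I would assume, toward a contradiction, that two points $u,w\in T$ satisfy $d_X(u,w)>r$. Since $u,w$ both lie within distance $r$ of $v$, the triangle inequality gives $d_X(u,w)\le 2r\le(2+\eps)r$, so SPC applies to the pair $(u,w)$ and furnishes a hub $x\in\SPC$ with $d_X(u,x)+d_X(x,w)\le(1+\eps)\,d_X(u,w)\le 2(1+\eps)r$. Averaging the two triangle inequalities $d_X(v,x)\le d_X(v,u)+d_X(u,x)$ and $d_X(v,x)\le d_X(v,w)+d_X(x,w)$ then yields $d_X(v,x)\le\tfrac{1}{2}\bigl(2r+2(1+\eps)r\bigr)=(2+\eps)r$, contradicting $d_X(v,\SPC)>(2+\eps)r$.

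For separation, I would suppose $u\in T$, $y\notin T$, and $d_X(u,y)\le r$, and derive a contradiction. The triangle inequality gives $d_X(v,y)\le d_X(v,u)+d_X(u,y)\le 2r$, while $y\notin T$ forces $d_X(v,y)>r$, so the pair $(v,y)$ lies in the SPC window. Applying SPC produces a hub $x\in\SPC$ with $d_X(v,x)+d_X(x,y)\le(1+\eps)\,d_X(v,y)\le 2(1+\eps)r$. Since $d_X(v,x)>(2+\eps)r$ by the town's defining property, subtracting yields $d_X(x,y)<\eps r$, whence $d_X(v,x)\le d_X(v,y)+d_X(y,x)<2r+\eps r=(2+\eps)r$, the desired contradiction. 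The sprawl property is essentially immediate from the definition: if a node $v\in\cS$ had $d_X(v,\SPC)>(2+\eps)r$, then the set $\{u\in X\mid d_X(u,v)\le r\}$ would itself qualify as a town containing $v$, contradicting $v$ being in the sprawl.

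The only step where I anticipate genuine difficulty is the separation bound, because the symmetric ``midpoint'' estimate used for property~1 is too weak here: the naive bound $d_X(v,x)\le(1+\eps)\,d_X(v,y)\le 2r+2\eps r$ already exceeds $(2+\eps)r$. The remedy is to use SPC \emph{asymmetrically}: because $d_X(v,x)$ is forced to be strictly greater than $(2+\eps)r$, the inequality $d_X(v,x)+d_X(x,y)\le 2(1+\eps)r$ pushes $d_X(x,y)$ below $\eps r$, and only then does a second triangle inequality through $y$ close the loop. Apart from this bookkeeping, every step reduces to the triangle inequality.
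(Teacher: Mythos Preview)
Your proof is correct and follows essentially the same approach as the paper. The only cosmetic differences are that the paper handles property~1 via a ``WLOG $x$ is closer to $z$'' step instead of your averaging of two triangle inequalities, and for property~2 the paper rearranges $d_X(v,x)+d_X(x,y)\le(1+\eps)d_X(v,y)$ together with $d_X(x,y)\ge d_X(v,x)-d_X(v,y)$ directly into $d_X(v,x)\le\tfrac{2+\eps}{2}d_X(v,y)\le(2+\eps)r$, whereas you first bound $d_X(x,y)<\eps r$ using the town property and then close with a triangle inequality through~$y$; both routes use the same ingredients and are equally valid.
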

\begin{proof}
	The third property (sprawl) holds immediately from the definition of the towns. 
	Next we prove the first property regarding the bound on the diameter of a town. Consider a town $T=\{u \in X \mid d_X(u,v) \leq r\}$, and suppose for the sake of contradiction that there are vertices $u,z\in T$ such that 
	$d_X(u,z)>r$. 
	By the triangle inequality, $d_X(u,z)\le d_X(u,v)+d_X(v,z)\le 2r$. Thus $d_X(u,z)\in(r,(2+\eps)r]$, and hence there is a hub $x\in\SPC$ that lies approximately between $u$ and~$z$, 
	i.e., $d_X(u,x)+d_X(x,z)\leq(1+\eps)d_X(u,z)$. 
	Assume w.l.o.g.\ that $x$ is closer to $z$ than to~$u$, so that 
	$d_X(x,z)\leq\frac{1+\eps}{2}\cdot d_X(u,z)\leq (1+\eps)r$. 	
	But then, $d_X(x,v)\leq d_X(x,z)+d_X(z,v)\leq (2+\eps)\cdot r$, 
	which contradicts $d_X(v,\SPC)> (2+\eps)\cdot r$.

 Finally, to prove the separation property, consider a town $T=\{u \in X \mid d_X(u,v) \leq r\}$, and suppose for the sake of contradiction that there are vertices $u\in T$, $z\notin T$ such that $d_X(u,z)\le r$.
	By the triangle inequality, $d_X(v,z)\le d_X(v,u)+d_X(u,z)\le 2r$.
	On the other hand, as $z\notin T$, $d_X(v,z)>r$, and thus $d_X(v,z)\in(r,(2+\eps)r]$. In particular there is a hub $x\in\SPC$ such that $d_X(v,x)+d_X(x,z)\leq (1+\eps)d_X(v,z)$.
    It follows that 
    \[
d_{X}(v,x)\leq(1+\eps)d_{X}(v,z)-d_{X}(x,z)\overset{(*)}{\le}(1+\eps)d_{X}(v,z)-\left(d_{X}(v,x)-d_{X}(v,z)\right)~,
    \]
    where the inequality $^{(*)}$ follows by the triangle inequality. Rearranging the terms we obtain $d_{X}(v,x)\le\frac{2+\eps}{2}\cdot d_{X}(v,z)\le(1+\frac{\eps}{2})\cdot2r=(2+\eps)\cdot r$, a contradiction. 
\end{proof}
	
        Later we will want to cover the sprawl in a ball $B_v=B_X(v,(2+4\eps)r)$ of radius $(2+4\eps)r$ 
        by a bounded number of balls of smaller radius, similar as for the doubling dimension:
        the high-level idea is to handle the sprawl similar to low doubling metrics, 
        while handling the towns separately by exploiting the fact that they
        form clusters in the sense of \Cref{lem:townproperties}.
        Since any vertex of the sprawl is at distance at most $(2+\eps)r$ from a hub of an $(r,\eps)$-shortest path cover, 
        an immediate idea is to cover the sprawl in the ball $B_v$ using balls of radius $(2+\eps)r$ around the hubs.
        Note that the latter balls have smaller radius than~$B_v$ if $\eps>0$.
        To bound the number of these balls of smaller radius, it is tempting to rely on the local sparseness of the shortest path cover.
        However, this only bounds the number of hubs inside~$B_v$, while it may be necessary to use a ball of radius $(2+\eps)r$
        around a hub outside of $B_v$ to cover parts of the sprawl in $B_v$.
        We will therefore prove that in fact there is a bounded number of hubs at distance~$(2+\eps)r$ from any ball of radius $(2+4\eps)r$,
        if the shortest path cover is inclusion-wise minimal. This was shown for \Cref{def:orig-HD} in~\cite{FFKP18}, and
        the following lemma gives the corresponding bound for \Cref{def:HD}. We remark that it does not bound the number of hubs inside a ball of
        radius $(2+\eps)r+(2+4\eps)r=(4+5\eps)r$, and in fact it is possible to construct examples where the number of hubs in such balls is unbounded (cf.~\cite{FFKP18}).
        Also note that the graph $G$ in the following lemma coincides with $X$ if we are considering a finite metric as input (in which case $G$ is simply a complete graph).
        As before, we may set $s=h(\eps)$ in the following two lemmas, if we can compute hitting sets for approximate shortest paths in polynomial time (cf.~\paragraphref{rem:HittingSetFind}).

 \begin{restatable}[]{lemma}{HubsBallAroundBall}
\label{lem:hub_bound}
Let $\eps\in[0,1]$ and $r>0$, and let $\SPC$ be an $(r,\eps)$-shortest path cover in the shortest-path metric $X$ of a graph $G$ with highway dimension~$h:\mathbb{R}_{\geq 0}\to\mathbb{N}\cup\{\infty\}$. 
	Suppose that $\SPC$ is inclusion-wise minimal and locally $s$-sparse (that is, no ball of radius $(2+4\eps)r$ contains more than~$s$ vertices from~$\SPC$).
    Then for every $v\in X$, there are at most $3s\cdot h(\eps)$ hubs $x \in \SPC$ at distance at most $(2+\eps)r$ from the ball $B_v=B_X(v,(2+4\eps)r)$. 
    Formally: $\left|\left\{x\in\SPC\mid d_X\left(x,B_v\right)\le (2+\eps)r\right\}\right|\le3s\cdot h(\eps)$.
\end{restatable}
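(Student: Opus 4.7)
The plan is a covering-plus-counting argument. Let $Y:=\{x\in\SPC\mid d_X(x,B_v)\le(2+\eps)r\}\subseteq B_X(v,(4+5\eps)r)$. By inclusion-wise minimality of $\SPC$, every $x\in Y$ admits a \emph{witness pair} $(u_x,z_x)\in V\times V$ with $d_G(u_x,z_x)\in(r,(2+\eps)r]$ such that $x$ is the \emph{unique} hub of $\SPC$ satisfying $d_G(u_x,x)+d_G(x,z_x)\le(1+\eps)d_G(u_x,z_x)$. The first building block, proved just like the distance computation in the proof of \Cref{lem:sparse-SPC}, is a proximity lemma: for every $y\in V$ that is itself an approximate hub for $(u_x,z_x)$, i.e.\ $d_G(u_x,y)+d_G(y,z_x)\le(1+\eps)d_G(u_x,z_x)$, averaging the triangle inequalities $d_G(x,y)\le d_G(x,u_x)+d_G(u_x,y)$ and $d_G(x,y)\le d_G(x,z_x)+d_G(z_x,y)$ and using that $x$ is itself an approximate hub gives $d_G(x,y)\le (1+\eps)d_G(u_x,z_x)\le(1+\eps)(2+\eps)r\le(2+4\eps)r$, where the last inequality uses $\eps\le1$. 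Hence every such $y$ forces $x$ into $B_X(y,(2+4\eps)r)$, which by local $s$-sparseness contains at most $s$ hubs of $\SPC$; moreover, by the uniqueness from minimality, if $y\in\SPC$ then $y=x$.

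From here the plan is to produce at most $3h(\eps)$ representative vertices that together capture every $x\in Y$ via the proximity lemma, so that the pigeonhole argument yields $|Y|\le 3s\cdot h(\eps)$. I would split $Y$ according to the position of the witness pair relative to $B_v$: (i) both $u_x,z_x\in B_v$, (ii) exactly one of them in $B_v$, and (iii) both outside $B_v$. In cases (i) and (ii) the triangle inequality combined with $d_G(v,u_x)\le(2+4\eps)r$ and $d_G(u_x,z_x)\le(2+\eps)r$ forces every vertex on the shortest $u_x$-$z_x$ path in $G$ to lie within distance $(4+5\eps)r<(4+8\eps)r$ of $v$, so a single application of \Cref{def:HD} at $v$ with scale $r$ yields a hub set $H^{(1)}_\eps$ of size $\le h(\eps)$ intersecting an approximate $u_x$-$z_x$ path inside $B^+_v:=B_G(v,(4+8\eps)r)$, and the proximity lemma provides the required representative. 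Case (iii) would be handled by one or two further applications of \Cref{def:HD}, at an enlarged ball whose radius and scale are chosen so as to contain the witness pair and its approximate path in this case while still meeting the distance hypothesis of \Cref{def:HD}.

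The main obstacle is precisely case (iii). When both $u_x,z_x$ lie outside $B_v$, both the witness pair and its approximate path can extend to distance up to roughly $(6+9\eps)r$ from $v$, strictly exceeding the radius $(4+8\eps)r$ of $B^+_v$; on the other hand, enlarging the ball around $v$ to radius at least $(6+9\eps)r$ forces the scale $\tilde r$ in \Cref{def:HD} to exceed $r$, whereas minimality only guarantees $d_G(u_x,z_x)>r$. My plan to reconcile these two constraints is either to sub-split case (iii) according to whether $d_G(u_x,z_x)$ exceeds some threshold strictly above $r$ (so that \Cref{def:HD} can be applied at a matching larger scale), or to invoke \Cref{def:HD} at one or two shifted centers close to the witness pair (so that a ball of radius $(4+8\eps)r$ around the shifted center already contains $u_x,z_x$ and the approximate path through $x$). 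Either refinement contributes at most $2h(\eps)$ additional representatives, for a total of at most $3h(\eps)$, and combining with Step~1's proximity lemma and local $s$-sparseness yields the claimed bound $|Y|\le 3s\cdot h(\eps)$.
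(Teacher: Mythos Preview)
Your covering-plus-counting framework and the proximity bound $d_G(x,y)\le(2+4\eps)r$ are correct and match the paper's \Cref{clm:balls_aux}. The genuine gap is case~(iii): neither of your two proposed fixes actually delivers the remaining $2h(\eps)$ representatives.

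For the threshold sub-split: to apply \Cref{def:HD} at $v$ with scale $\ell_2>r$ you need the approximate path through $x$ to lie in $B_X(v,(4+8\eps)\ell_2)$; since $d_X(v,x)\le(4+5\eps)r$ and the path extends up to $(1+\eps)(2+\eps)r$ beyond $x$, this forces (at $\eps=0$) $\ell_2\ge 3r/2$. In the complementary sub-case $d_G(u_x,z_x)\in(r,\ell_2]$ you are stuck: you may only use a scale $\le r$, so the ball has radius $\le(4+8\eps)r$, yet the path still reaches $(4+5\eps)r+(1+\eps)\ell_2>5r$ from $v$. Iterating does not help: at $\eps=0$ the successive thresholds satisfy $\ell_{k+1}=r+\ell_k/4$, which converges to $4r/3>r$, so the range $(r,\ell_2]$ is never exhausted in finitely many steps. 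For the shifted-center idea, the witness pairs $(u_x,z_x)$ vary with $x$, so ``a center close to the witness pair'' is a different center for each $x$; you have not produced $O(1)$ centers.

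The paper's split is not by endpoint location but by whether the witness \emph{path} $Q_x$ fits in $B_X(v,(4+8\eps)r)$, and the key trick you are missing is a minimality choice: among all valid witness paths for $x$, pick the one whose farthest vertex from $v$ is nearest. If even this $Q_x$ escapes the ball, then the shortest path $R_x$ from $x$ to its nearest point $a_x\in B_v$ must have length $\le r$ --- otherwise $R_x$ itself, which does fit inside $B_X(v,(4+8\eps)r)$, would have been chosen instead. This single observation (a)~tightens the reach bound to $d_X(v,b_x)\le(5+8\eps)r$, allowing $\ell_2=\tfrac{5+8\eps}{4+8\eps}r$, and (b)~supplies an \emph{alternative} witness path $R_x$ for the third application: using that $b_x\notin B_X(v,(4+8\eps)r)$ one obtains $d_X(a_x,x)>\ell_3:=(2+4\eps)r-(1+\eps)\ell_2$, while $R_x$ itself lies in $B_X(v,(4+8\eps)\ell_3)$. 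The three applications of \Cref{def:HD}, all centered at $v$ with scales $r$, $\ell_2$, $\ell_3$, then give the factor~$3$.
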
       

\begin{figure}[t]
	\centering{\includegraphics[width=0.9\textwidth]{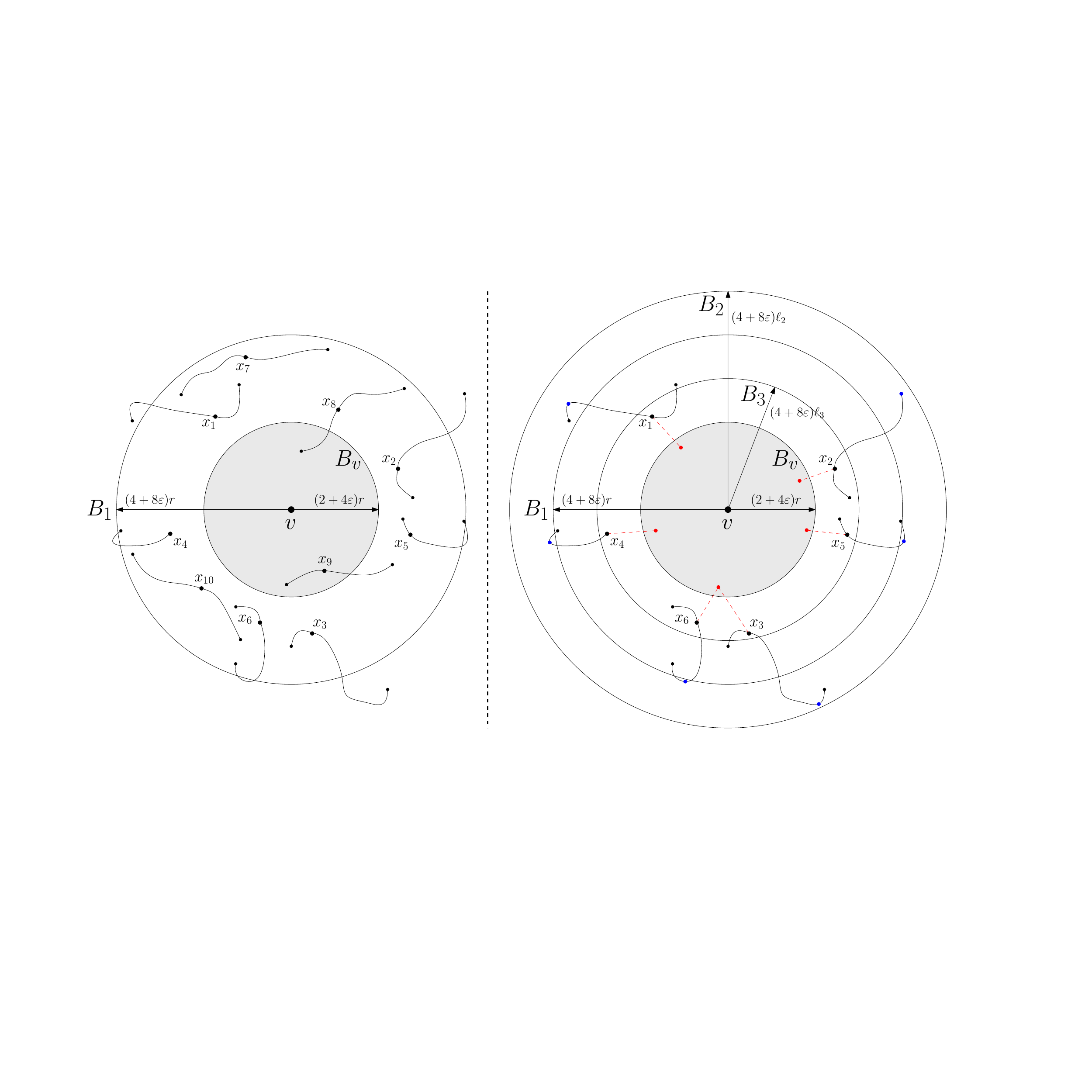}} 
	\caption{\label{fig:hub-bound}\small
		Illustration of the proof of \Cref{lem:hub_bound}. On the left is illustrated a set of
		hubs $W=\left\{ x_{1},\dots,x_{10}\right\} $ at distance $(2+\eps)r$
		from the ball $B_{v}=B_{X}(v,(2+4\varepsilon)r)$. Every hub $x\in W$
		has a corresponding approximate shortest path $Q_{x}$ of length in
		$(r,(2+\eps)r]$, which is chosen to minimize the distance $d_{X}(v,Q_{x})$ to the
		center. $L_{1}=\left\{ x_{7},x_{8},x_{9},x_{10}\right\} \subseteq W$
		is the set of hubs whose corresponding path $Q_{x}$ lies inside $B_{1}=B_{X}(v,(4+8\eps)r)$.
		By \Cref{clm:balls_aux}, $|L_{1}|\le s\cdot h(\eps)$.\\ 
		On the right is illustrated the set of remaining hubs $W\setminus L_{1}$.
		For a hub $x\in W\setminus L_{1}$, let $a_{x}$ (colored in red) be
		the closest point in $B_{v}=B_{X}(v,(2+4\eps)r)$ to $x$. It holds
		that $d_{X}(x,a_{x})<r$ (as otherwise, by minimality of $d_{X}(v,Q_{x})$, $Q_{x}$
		would have been chosen as the path from $a_{x}$ to $x$). Let $b_{x}$
		be the farthest point on $Q_{x}$ from~$x$ (colored in blue). As
		$d_{X}(x,b_{x})\le w(Q_{x})\le(2+\eps)r$, it follows that $d_{X}(v,b_{x})<(5+8\eps)r$.
		Set $\ell_{2}=(\frac{5+8\eps}{4+8\eps})r$ and $B_{2}=B_{X}(v,(4+8\eps)\ell_{2})$.
		Let $L_{2}=\left\{ x_{1},x_{2},x_{3}\right\} \subseteq W\setminus L_{1}$
		be all the hubs $x$ such that the distance between the corresponding
		endpoints of $Q_{x}$ is at least $\ell_{2}$. By \Cref{clm:balls_aux}, $|L_{2}|\le s\cdot h(\eps)$. \\
		Finally, set $\ell_{3}=(2+4\eps)r-(1+\eps)\ell_{2}$. Using the triangle
		inequality, for $x\in W\setminus(L_{1}\cup L_{2})$ it holds that
		$d_{X}(x,a_{x})\ge\ell_{3}$, while the entire path from $x$ to $a_{x}$
		lies in the ball $B_{3}=B_{X}(v,(4+8\eps)\ell_{3})$. By \Cref{clm:balls_aux}, $|W\setminus(L_{1}\cup L_{2})|\le s\cdot h(\eps)$.		
	}
\end{figure}

\begin{proof}
\Cref{fig:hub-bound} follows the proof of the lemma. Set $W = \{ x \in\SPC \mid d_X(x,B_v) \leq (2+\eps)r\}$
to be the set of hubs near the ball $B_v=B_X(v,(2+4\eps)r)$, whose size we want to bound.
As a starting point, observe that by inclusion-wise minimality of~$\SPC$ and \Cref{def:SPC}, 
for each $x \in W\subseteq \SPC$ there must be a pair of points $u_x,z_x\in X$ with $d_X(u_x,z_x)\in (r,(2+\eps)r]$ for which 
$d_X(u_x,x)+d_X(x,z_x)\leq(1+\eps)d_X(u_x,z_x)$, and accordingly, 
there is a $u_x-z_x$ path $Q_x$ through~$x$ of length at most~$(1+\eps)d_X(u,z)$. 
Note that for $x\in\SPC$ there might be many different such paths. Pick $Q_x$ to be a path with the properties above that minimizes the distance of its farthest vertex from $v$, i.e., $\max_{b\in V(Q_x)}d_X(b,v)$ is minimized. 
Our approach is to bound the points in $W$ based on such paths using local sparsity of $\SPC$ together with the bound on the highway dimension of \Cref{def:HD}. 
Indeed, using these properties, in \Cref{clm:balls_aux} below we prove that the number of hubs $x$ in \SPC~whose corresponding path $Q_x$ lies inside $B_X(v,(4+8\eps)r)$ is bounded by~$s\cdot h(\eps)$. 
However, there is an issue: there are hubs $x\in W$ whose corresponding path $Q_x$ does not lie inside $B_X(v,(4+8\eps)r)$. 
To address this issue, in \Cref{clm:balls_aux} below we will have a general parameter $\ell$ replacing $r$. Then a careful case analysis (where we will take smaller and larger balls around $v$)  will prove the lemma.


\begin{claim}\label{clm:balls_aux}
Consider a value $\ell \in [0,(2+\eps)r)$, and 
let $L\subseteq \SPC$ be a subset such that for each $x\in L$ there exists a path $P_x$ (with endpoints $u_x$ and $z_x$) passing through $x$ for which
\begin{enumerate}[(a)]
    \item $P_x$ lies in the ball $B'_v=B_X(v,(4+8\eps)\ell)$,
    \item $d_X(u_x,z_x)\in(\ell, (2+\eps)r]$, and 
    \item the length of $P_x$ is at most $(1+\eps)\cdot d_X(u_x,z_x)$.
\end{enumerate}
Then $|L|\leq s\cdot h(\eps)$.
\end{claim}

\begin{SCfigure}[][t]\caption{\it 
		Illustration of the proof of \Cref{clm:balls_aux}. The ball $B'_v$ of radius $(4+8\eps)\ell$ around the vertex $v$ is in the center. The set $L\subseteq \SPC$ is illustrated by blue points. Each point $x\in L$ lies on a corresponding approximate shortest path $P_x$ between the endpoints $u_x,z_x$ (illustrated in black).
		Using \Cref{def:HD} we find a hub set $H_\eps$ (illustrated in red) for the ball $B'_v$. For every pair $u_x,z_x$ there is an approximate shortest path $\Pi_{u_x,z_x}$ going through a hub $y\in H_\eps$ (illustrated in orange).  It holds that the distance between $x\in L$ to the corresponding hub $y\in H_\eps$ is at most $d_X(x,y)\le (2+4\eps)r$. In particular the balls of radius $(2+4\eps)r$ around $H_\eps$ (illustrated in dashed red) cover all the points in~$L$. Due to the local sparsity of $\SPC$, it follows that $|L|\le s\cdot |H_\eps|$.
		\label{fig:ClmGeneralEll}}
	\includegraphics[width=.6\textwidth]{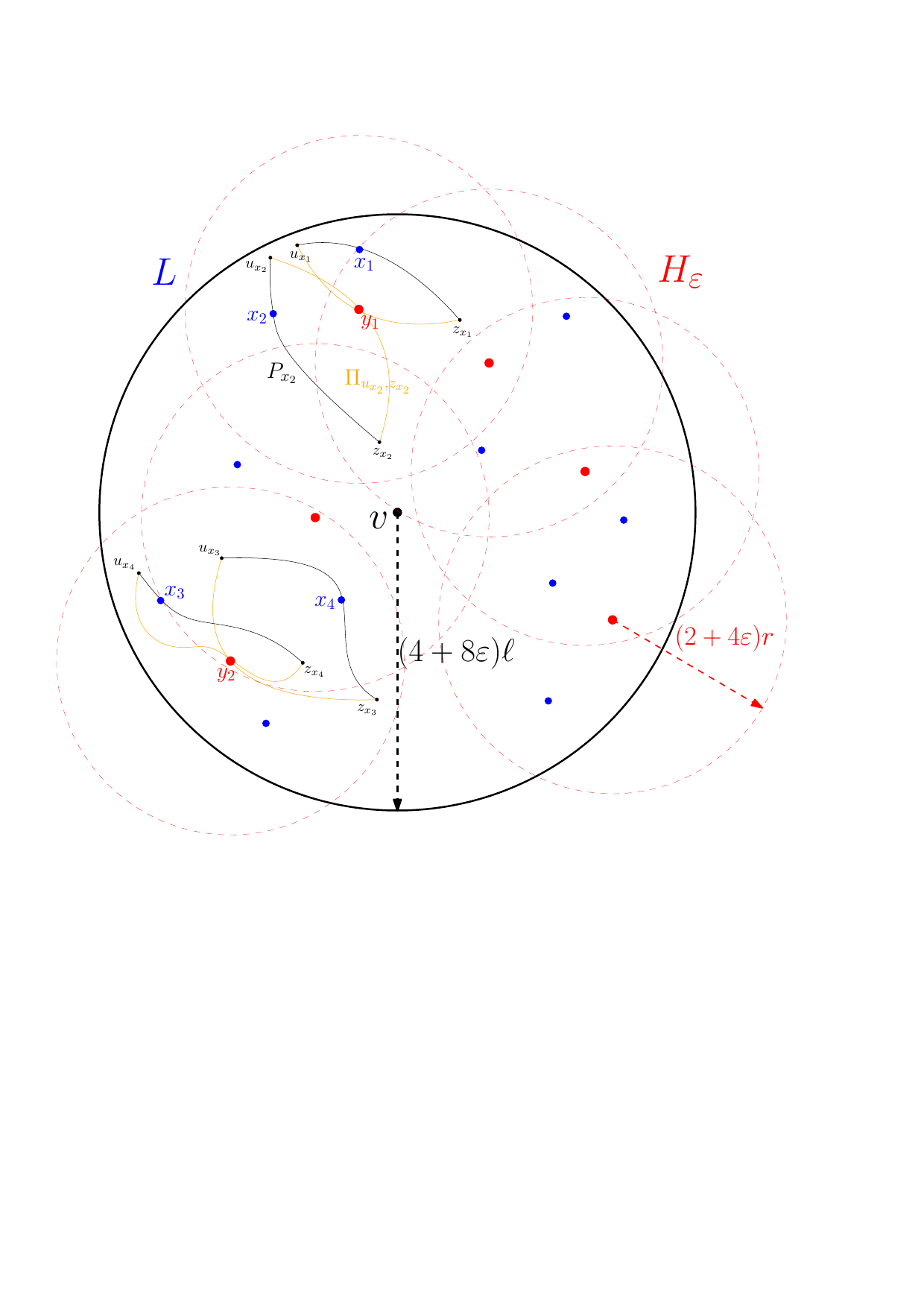}	
\end{SCfigure}
\begin{proof}
\Cref{fig:ClmGeneralEll} follows the proof of \Cref{clm:balls_aux}.
Consider any $x \in L$ and the endpoints $u_x$ and $z_x$ of the corresponding path $P_x$. 
By property~(a), $u_x$ and $z_x$ lie inside the ball $B'_v=B_X(v,(4+8\eps)\ell)$, and combining properties~(a) and~(c) we get
$d_{G[B'_v]}(u_x,z_x)\leq(1+\eps)d_X(u,z)$.
Also, by property~(b), $d_X(u_x,z_x)>\ell$. 
Thus \Cref{def:HD} (applied to variable~$\ell$) implies that there is a hub set
$H_\eps$ with $|H_\eps|\leq h(\eps)$ and such that for every $x\in L$, there is a $u_x-z_x$ path $P'_{u_x,z_x}$ in $G[B'_v]$ of length at most $(1+\eps)d_X(u_x,z_x)$ 
that is hit by some $y\in H_\eps$ (note that $P'_{u_x,z_x}$ may be different from~$P_x$).
By property~(c), the distance between $x$ and $y$ can be bounded as
\begin{align*}
d_X(x,y) &\leq \min\{d_X(x,u_x)+d_X(u_x,y),\; d_X(x,z_x)+d_X(z_x,y)\} \\
&\leq \textstyle\frac{1}{2}(d_X(u_x,x)+d_X(x,z_x)+d_X(u_x,y)+d_X(y,z_x))\\
&\leq (1+\eps)d_X(u_x,z_x)\\
&\leq (1+\eps)(2+\eps)r~,
\end{align*}
where the last inequality follows from property~(b).
Given that $\eps\in[0,1]$, we get $\eps^2\leq \eps$, and thus we obtain $d_X(x,y)\leq (2+4\eps)r$ from the above inequality.
This implies that $L$ can be covered by at most~$h(\eps)$ balls of 
  radius~$(2+4\eps)r$ centered at the vertices in~$H_\eps$. The set $\SPC$, and with
  that also~$L\subseteq \SPC$, is locally $s$-sparse, so each of these balls
  contains at most $s$ hubs, from which we obtain $|L| \leq s\cdot h(\eps)$, as required.
\end{proof}

Fix $\ell_1=r$ and set $B_1=B_X(v,(4+8\eps)r)$. Let $L_1\subseteq W\subseteq\SPC$ be all the hubs $x$ whose corresponding path $Q_x$ lies inside the ball $B_1$. Note that the set $L_1$ satisfies requirements (a), (b), and~(c) from \Cref{clm:balls_aux} (w.r.t.\ $\ell_1$), and hence $|L_1|\le s\cdot h(\eps)$.
However, there might be hubs $x\in L$ whose respective path~$Q_x$ does not lie inside the ball $B_1$, and correspondingly $x\notin L_1$. 

Our next goal is to bound the number of such hubs in $W\setminus L_1$.
Consider any $x\in W\setminus L_1$, and let~$b_x$ be a vertex on path $Q_x$ of maximum distance from the center point~$v$ (recall that $Q_x$ is chosen to minimize $d_X(v,b_x)$).
Note that $b_x\notin B_1$, given $x\notin L_1$.
In addition, let $R_x$ be the shortest path between $x$ and its closest vertex~$a_x$ in the ball $B_v=B_X(v,(2+4\eps)r)$.
Note that $R_x$ (trivially) is a path through $x$ of length at most $1+\eps$ times the distance between its endpoints.
Furthermore, as~$x\in W$, $R_x$ is of length at most~$(2+\eps)r$.
Therefore, if $R_x$ was of length more than $r$, then $R_x$ would be a candidate for the choice of $Q_x$.
However, $R_x$ lies inside the ball $B_1=B_X(v,(4+8\eps)r)$, as $a_x\in B_v$ and $w(R_x)\leq (2+\eps)r$, which means that its farthest vertex from $v$ is closer to $v$ than~$b_x\notin B_1$.
In conclusion, $R_x$ must be of length at most~$r$.

Given that the length of~$Q_x$ is at most $(1+\eps)d_X(u_x,z_x)$ for the endpoints $u_x,z_x$ of $Q_x$, while~$Q_x$ passes through $x$ and~$b_x$,
we can conclude that $d_{X}(x,b_{x})\leq(1+\eps)d_{X}(u_{x},z_{x})\leq(1+\eps)(2+\eps)r\leq(2+4\eps)r$
 (where we used  $\eps\in[0,1]$).
Hence, using that $a_x$ lies in the ball $B_v$ of radius $(2+4\eps)r$ and $d_X(a_x,x)\leq r$, the distance from $v$ to~$b_x$ is bounded by
\[
\textstyle
d_X(v,b_x)\leq
d_X(v,a_x)+d_X(a_x,x)+d_X(x,b_x)\leq (2+4\eps)r+r+(2+4\eps)r = (5+8\eps)r~.
\]

Set $\ell_2=(\frac{5+8\eps}{4+8\eps})r$ and $B_2=B_X(v,(4+8\eps)\ell_2)=B_X(v,(5+8\eps)r)$. As $b_x$ is the vertex of $Q_x$ farthest from $v$, the path $Q_x$ lies in $B_2$.
Let $L_2\subseteq W\setminus L_1$ contain all hubs $x$ for which $d_X(u_x,z_x)>\ell_2$. 
Note that all hubs in $L_2$ fulfill the requirements of \Cref{clm:balls_aux} (w.r.t.\ $\ell_2$). It follows that $|L_2|\le s\cdot h(\eps)$.

%

Finally, let us consider the hubs $x \in W\setminus (L_1 \cup L_2)$, for which 
the shortest path $R_x$ between $x$ and the closest vertex $a_x$ in $B_v$ has
length at most $r$, and the distance between the endpoints $u_x,z_x$ of $Q_x$ lies in the interval $(r,\ell_2]$. 
The distance from $x$ to $b_x$ (using a sub-path of $Q_x$) is at most $d_X(x,b_x)\leq(1+\eps)d_X(u_x,z_x)\leq (1+\eps)\ell_2$.
At the same time, the distance between $a_x$ and $b_x$ is more than $(2+4\eps)r$, since $a_x$ lies in the ball $B_v$ of radius $(2+4\eps)r$,
while $b_x$ is not contained in the ball~$B_1$ of radius $(4+8\eps)r$ (as $x\notin L_1$). 
By the triangle inequality, it follows that 
\[\textstyle
d_X(a_x,x) \geq d_X(a_x,b_x)-d_X(b_x,x) > (2+4\eps)r - (1+\eps)\ell_2.
\]
Fix $\ell_3= (2+4\eps)r - (1+\eps)\ell_2$ and $B_3=B_X(v,(4+8\eps)\ell_3)$. Then for $x \in W\setminus (L_1 \cup L_2)$ it holds that $d_X(a_x,x)\in(\ell_3,r]\subseteq (\ell_3,(2+\eps)r]$.
Note that the radius of the ball $B_3$ is $(4+8\eps)\cdot\ell_{3}=(4+8\eps)\cdot\left((2+4\eps)r-(1+\eps)\frac{5+8\eps}{4+8\eps}\right)=\left(3+19\eps+24\eps^{2}\right) r$.
We argue that the shortest path $R_x$ from $a_x$ to $x$ fulfills the requirements of \Cref{clm:balls_aux} (w.r.t.\ $\ell_3$). Properties (b) and (c) are clear, and it remains to prove that property (a) holds, that is: $B_3$ contains $R_x$. Recall that $R_x$ is a shortest path from $x$ to $a_x$. In particular, for every $y\in R_x$ it holds that\footnote{The astute reader will notice that there is some slack in this calculation, and thus slightly smaller constants can be chosen in \Cref{def:HD} to make this proof work. However, the proof of \Cref{lem:sparse-SPC} requires the chosen constants.}
%
\[
d_X(v,y) \leq d_X(v,a_x)+d_X(a_x,y)\leq (2+4\eps)r+ r=(3+4\eps)r < \left(3+19\eps+24\eps^{2}\right) r= (4+8\eps)\ell_3~,
\]
implying that $R_x$ lies in $B_3$. 
By \Cref{clm:balls_aux}, for $L_3=W\setminus (L_1 \cup L_2)$, it holds that $|L_3|\leq s\cdot h(\eps)$. 

In conclusion, we have that $W \subseteq L_1 \cup
L_2 \cup L_3$, and hence $W$ contains at most $3s\cdot h(\eps)$ hubs.
\end{proof}

The following lemma is similar to \Cref{lem:hub_bound}, however it bounds the number of hubs in a real ball (and not a ball around ball%
\footnote{In \Cref{lem:hub_bound} we bound the number of points at distance $(2+\eps)r$ from the ball $B_v=B_X(v,(2+4\eps)r)$. Note that this is basically a ``ball around a ball'', and might be different from simply a ball of radius $(4+5\eps)r$. Indeed, consider for example a point $z$ at distance $4r$ from all the other points. Then $z$ it not in the ball $B_X(B_v,3r)$, but is in $B_X(v,4r)$.
Here, in \Cref{lem:hub_bound_2.8Ball}, we bound the number of points in an actual ball of radius $\approx 2.8 r$.}%
), alas with smaller radius. Previous literature on highway dimension did not contain an analog of this lemma. \Cref{lem:hub_bound_2.8Ball} will be useful in constructing our metric toolkit (in \Cref{sec:MetricToolkit}). 

\begin{restatable}[]{lemma}{HubsAlphaBall}
\label{lem:hub_bound_2.8Ball}
	Let $\eps\in[0,1]$ and $r>0$, and let $\SPC$ be an $(r,\eps)$-shortest path cover in the shortest-path metric $X$ of a graph $G$ with highway dimension~$h:\mathbb{R}_{\geq 0}\to\mathbb{N}\cup\{\infty\}$. 
	Suppose that $\SPC$ is inclusion-wise minimal and locally $s$-sparse (that is, no ball of radius $(2+4\eps)r$ contains more than~$s$ vertices from~$\SPC$).
    Then for $\alpha=2.8+6\eps$, every $v\in X$, $\left|B_{X}(v,\alpha\cdot r)\cap\SPC~\right|\le 2s\cdot h(\eps)$.
\end{restatable}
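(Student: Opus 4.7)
The plan is to mirror the proof of \Cref{lem:hub_bound} but collapse its three-case analysis into two cases. This is possible because the ball $B_X(v,\alpha r)$ is smaller than the ball-around-ball considered in \Cref{lem:hub_bound}: any hub $x\in B_X(v,\alpha r)\cap\SPC$ admits a certifying path that either stays inside $B_X(v,(4+8\eps)r)$, or else is forced to have a certifying endpoint pair at substantially more than distance $r$ apart. Concretely, for each $x\in L:=B_X(v,\alpha r)\cap\SPC$, inclusion-wise minimality of $\SPC$ yields a pair $u_x,z_x$ with $d_X(u_x,z_x)\in(r,(2+\eps)r]$ together with a path $Q_x$ through $x$ of length at most $(1+\eps)d_X(u_x,z_x)$. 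I partition $L=L_1\cup L_2$ according to whether or not $Q_x\subseteq B_X(v,(4+8\eps)r)$.

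The first case is immediate: for $x\in L_1$, the three hypotheses of \Cref{clm:balls_aux} hold verbatim with $\ell=r$, so $|L_1|\le s\cdot h(\eps)$. For the second case, I use the exit point $b_x\in Q_x$ with $d_X(v,b_x)>(4+8\eps)r$ to derive a quantitative lower bound on $d_X(u_x,z_x)$. The same triangle-inequality calculation as inside the proof of \Cref{clm:balls_aux} gives $d_X(x,b_x)\le(1+\eps)d_X(u_x,z_x)$, hence $\alpha r+(1+\eps)d_X(u_x,z_x)\ge d_X(v,b_x)>(4+8\eps)r$, which yields $d_X(u_x,z_x)>\beta r/(1+\eps)$ with $\beta:=(4+8\eps)-\alpha=1.2+2\eps$. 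Setting $\ell_2:=\beta r/(1+\eps)$, hypothesis~(b) of \Cref{clm:balls_aux} becomes $d_X(u_x,z_x)>\ell_2$, which is exactly what was just shown, while hypothesis~(a) reduces to $(4+8\eps)\ell_2\ge d_X(v,b_x)$; using $d_X(u_x,z_x)\le(2+\eps)r$ this in turn becomes the scalar inequality $(4+8\eps)\beta\ge(1+\eps)[\alpha+(1+\eps)(2+\eps)]$, whose two sides differ by $\eps(3.8+6\eps-\eps^2)\ge 0$ for $\eps\in[0,1]$. Hence \Cref{clm:balls_aux} also gives $|L_2|\le s\cdot h(\eps)$, and summing yields $|L|\le 2s\cdot h(\eps)$.

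The main technical obstacle is the precise calibration of the constants in $\alpha=2.8+6\eps$. They have to be small enough that any $Q_x$ escaping $B_X(v,(4+8\eps)r)$ in Case~2 pushes the scale $\ell_2$ above $r$ by a definite amount, yet large enough that the enlarged ball $B_X(v,(4+8\eps)\ell_2)$ still contains all of $Q_x$. The scalar inequality above witnesses that $\alpha=2.8+6\eps$ is exactly on this threshold, with equality at $\eps=0$, and this is what enables the two-case analysis and the sharper bound $2s\cdot h(\eps)$ instead of the $3s\cdot h(\eps)$ from \Cref{lem:hub_bound}.
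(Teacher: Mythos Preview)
Your proof is correct and follows essentially the same two-case strategy as the paper's: split the hubs according to whether their witnessing path $Q_x$ stays inside $B_X(v,(4+8\eps)r)$, and apply \Cref{clm:balls_aux} once to each part. The only cosmetic difference is in the choice of the auxiliary scale: the paper sets $\ell_2=\frac{\alpha+2+4\eps}{4+8\eps}\,r$ so that condition~(a) becomes tight, whereas you set $\ell_2=\frac{(4+8\eps)-\alpha}{1+\eps}\,r$ so that condition~(b) becomes tight; both choices reduce to the same scalar constraint that pins down $\alpha=2.8+6\eps$.
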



\begin{proof}
	The proof follows similar lines to the proof of \Cref{lem:hub_bound}, and in particular makes use of \Cref{clm:balls_aux}.
	Set $W =B_X(v,\alpha\cdot r)\cap\SPC$. 
	By the inclusion-wise minimality of~$\SPC$ and \Cref{def:SPC}, 
	for each $x \in W\subseteq \SPC$ there must be a pair of points $u_x,z_x\in X$ with $d_X(u_x,z_x)\in (r,(2+\eps)r]$ for which 
	$d_X(u_x,x)+d_X(x,z_x)\leq(1+\eps)d_X(u_x,z_x)$, and accordingly, 
	there is a $u_x-z_x$ path $Q_x$ through~$x$ of length at most~$(1+\eps)d_X(u_x,z_x)$.
	
	Fix $\ell_1=r$ and set $B_1=B_X(v,(4+8\eps)r)$. Let $L_1\subseteq W\subseteq\SPC$ be all the hubs $x$ whose corresponding path $Q_x$ lies inside the ball $B_1$. The set $L_1$ satisfies requirements (a), (b), and~(c) from \Cref{clm:balls_aux} (w.r.t.\ $\ell_1$), and hence $|L_1|\le s\cdot h(\eps)$.
	However, there might be hubs $x\in L$ whose respective path~$Q_x$ does not lie inside the ball $B_1$, and correspondingly $x\notin L_1$. 
	Consider such $x\in W\setminus L_1$, and let~$b_x$ be a vertex on path $Q_x$ of maximum distance from the center point~$v$.
	Note that $b_x\notin B_1$, given $x\notin L_1$.
	As the length of~$Q_x$ is at most $(1+\eps)d_X(u_x,z_x)$ for the endpoints $u_x,z_x$ of $Q_x$, while~$Q_x$ passes through $x$ and~$b_x$,
	we can conclude that $d_{X}(x,b_{x})\leq(1+\eps)d_{X}(u_{x},z_{x})\leq(1+\eps)(2+\eps)r\leq(2+4\eps)r$
	(where we used  $\eps\in[0,1]$).
	Hence, as $x\in W\subseteq B_X(v,\alpha\cdot r)$, the distance from $v$ to~$b_x$ is bounded by
	\[
	d_{X}(v,b_{x})\leq d_{X}(v,x)+d_{X}(x,b_{x})\leq(\alpha+2+4\eps)r~.
	\]
	
	Set $\ell_2=\frac{\alpha+2+4\eps}{4+8\eps}\cdot r$ and $B_2=B_X(v,(4+8\eps)\ell_2)=B_X(v,(\alpha+2+4\eps)r)$. As $b_x$ is the vertex of $Q_x$ farthest from $v$, the path $Q_x$ lies in $B_2$.
	On the other hand, as $x\notin L_1$, $Q_x$ is not contained in $B_1$, i.e., $d_X(v,b_x)>(4+8\eps)r$. It follows that 
	\[
	d_{X}(u_{x},z_{x})\ge\frac{1}{1+\eps}\cdot d_{X}(x,b_{x})\ge\frac{1}{1+\eps}\cdot\left(d_{X}(v,b_{x})-d_{X}(x,v)\right)\ge\frac{1}{1+\eps}\cdot\left(4+8\eps-\alpha\right)\cdot r\overset{(*)}{>}\ell_{2}~,
	\]
	where the inequality $^{(*)}$ holds iff
 $\alpha<\frac{\left(4+8\eps\right)^{2}-\left(2+4\eps\right)\left(1+\eps\right)}{5+9\eps}=\frac{14+58\eps+60\epsilon^{2}}{5+9\eps}=\frac{14}{5}+6\eps+\frac{2.8\eps+6\epsilon^{2}}{5+9\eps}$,
 and indeed we've set $\alpha=2.8+6\eps$.  
	It follows that the hubs in $W\setminus L_1$ satisfy  requirements (a), (b), and~(c) from \Cref{clm:balls_aux} (w.r.t.\ $\ell_2$), and hence $|W\setminus L_1|\le s\cdot h(\eps)$. The lemma now follows.
\end{proof}

    \section{PTAS for the Subset Traveling Salesman problem}\label{sec:TSP}

\paragraph*{High level overview.}
Our basic approach is somewhat similar to the algorithm of \cite{BGK16} for doubling dimension (this idea was later generalized in \cite{banerjee2024novel}).
Roughly speaking, \cite{BGK16} look for a ``dense" ball $B_X(v,r)$, meaning that the cost of the solution inside the ball is several orders of magnitude more expensive than the radius~$r$. The weight of the MST in the ball is used as a rough estimate for the cost of the solution in the ball, and a ball is called dense if this estimate is very high (w.r.t. $r$).
If there is no such ball, then a dynamic program can find a near-optimal solution. Otherwise, let $B_X(v,r)$ be a dense ball of minimum radius. 
The idea in \cite{BGK16} is to then solve \TSP on $X\setminus B_X(v,r)$ recursively, and \TSP on $B_X(v,r)$ by exploiting the structure implied by the minimality of the ball. The crux is that stitching the two solutions together is at cost proportional to $r$ (and the doubling dimension), while the cost of the solution in the ball $B_X(v,r)$ is so high, that it can easily absorb the stitching cost.


Our algorithm for bounded highway dimension uses a similar approach, but the details differ significantly. We believe that this approach could also be used for designing approximation algorithms for other problems for metric spaces with low highway dimension. Very roughly speaking: we are looking for the ball of minimum radius with high doubling dimension. If there is no such ball, a solution for doubling metrics can be applied. Otherwise, we show that this ball intersects a huge number of towns. In each town we can compute a near-optimal solution (by minimality of the ball each town is doubling). 
We compute the solution for the metric without the towns recursively, and then we stitch the two solutions together via the hubs of a shortest path cover. As the number of towns is very large, while the number of hubs is small, the solutions for the towns can easily absorb the cost of the stitching.
In order to compare the computed solution to the optimum, we show that we can ``massage'' the optimum solution so that all the outgoing edges from the towns will be into hubs from the shortest path cover.

We continue with a more detailed description.
For every $i\in\mathbb{N}$, fix $r_i=(1+\sigma)^i$ for a carefully chosen value~$\sigma>0$. For each level $i$ we construct a corresponding shortest path cover $H_i$. We ``massage'' the covers so that they will be hierarchical $H_0\supseteq H_1\supseteq H_2\supseteq\dots$, while slightly losing in the sparsity (see Lemmas \ref{lem:H'-is-SPC}, \ref{lem:HubPacking}, and \ref{lem:HubHierarchy}).
Let $i$ be the minimum level such that there is a ball $B_X(v,(2+4\eps)\cdot r_i)$ that intersect many $r_i$-level towns (many as a function of $h(\eps)$). Note that the town centers, and the hubs constitute a $(2+\eps)\cdot r_i$-net due to \Cref{lem:townproperties}.
The covering radius $(2+\eps)\cdot r_i$ of the net is only slightly smaller than the ball radius of $(2+4\eps)\cdot r_i$, but still enough to prove that if there is no ball intersecting many towns (for any~$i$), then the entire space has small doubling dimension, and thus an approximation scheme for doubling dimension, as found in~\cite{banerjee2024novel}, can be applied.
Otherwise, consider such a ball $B_v=B_X(v,(2+4\eps)\cdot r_i)$ of minimal level $i$ that intersects at least $q$ \, $r_i$-towns for some given threshold~$q$. Note that each such town is contained in a smaller ball, and thus the induced metric on the town is doubling.

Next, we will consider a family of more structured solutions called \emph{hub-net-respecting} (see \Cref{def:hub-net-resp}). 
This is similar to net-respecting solutions as used in \cite{BGK16} (but also in other previous work on \TSP and related problems, e.g.~\cite{Talwar04}).
In addition to the hierarchy of hubs $H_0\supseteq H_1\supseteq H_2\supseteq\dots$, we construct a hierarchy of nets $N_0\supseteq N_1\supseteq N_2\supseteq\dots$, where $N_j$ is an $\eps\cdot r_j$-net.
Roughly, a solution is called hub-net-respecting if every edge of weight about $r_j$ is between a hub in $H_j$ to a net point in $N_j$. 
Using the properties of hubs, and the density of the nets, one can argue that there is a hub-net-respecting solution of weight at most $(1+O(\eps))$ times the optimum (see \Cref{lem:hub-net-resp}). 
Let $O^{HN}$ be an optimal hub-net-respecting tour, and consider its behaviour around the $r_i$-towns intersecting the ball $B_v$. 
Every edge of $O^{HN}$ leaving a town must be of weight at least~$r_i$ (as each town is separated from the rest of the metric by \Cref{lem:townproperties}). However, the towns do not contain $r_i$-level hubs. By the structure of the hub-net-respecting tour $O^{HN}$ every edge leaving the town is going to a hub, while the source inside the town is a net point.

By the definition of nets (\Cref{def:net}) and the diameter of the ball $B_v$, there is at most one town (intersecting~$B_v$) which contains a $\Theta(\frac{r_i}{\eps})$-net point, i.e., of some level $j$ where $\eps r_j=\Theta(\frac{r_i}{\eps})$. We will simply ignore this town (that is, it will be considered as part of the recursively solved part of the metric). As every edge in $O^{HN}$ leaving a town is incident to a net point, it follows that no edges of weight above $\Theta(\frac{r_i}{\eps})$ leaves our towns. 
Next, we construct an \emph{interface} $I$ as follows: for every index $i\le j\le i+\Theta(\log_{1+\sigma}\frac1\eps)$, add all the $r_j$-level hubs around $B_v$ to $I$. By the sparsity of our hierarchical hub sets (in particular \Cref{lem:hub_bound}), the size of the interface is rather small. 
Now, consider the solution restricted to these towns and their outgoing edges. All the outgoing edges are into the small interface set~$I$, and each town $T$ has small doubling dimension bounded by the threshold~$q$. 
We can thus compute a near-optimal solution to each of these towns and connect it to the interface.
Interestingly, we do not rely on dynamic programming in this step, in contrast to many algorithms of this type, e.g., \cite{banerjee2024novel,BGK16,Talwar04}.
Instead we treat the algorithm given by \cite{banerjee2024novel} as a black box to obtain a near-optimal solution for low doubling metrics.
Then we can stitch the tours of the different towns together by paying something proportional to the cost of the minimum spanning tree $\MST(I)$ on the interface. In particular, as we computed a near-optimal tour for each town, the optimal hub-net-respecting tour will differ from our tour again by something proportional to $\MST(I)$. However, $|I|$ is small, while the number of towns is much larger. 
Thus by the separation property of the towns (\Cref{lem:townproperties}), the cost of $\MST(I)$ can be absorbed by the cost of connecting the towns to the remaining metric many times over.
Finally, we recursively compute a solution for the rest of the metric, and stitch it together with our solution for the towns (again paying just order $\MST(I)$).

\paragraph*{Formal proof.} We are given a metric $(X,d_{X})$ of highway dimension~$h:\mathbb{R}_{\geq 0}\to\mathbb{N}\cup\{\infty\}$ and we want to compute a near-optimal solution to the \TSP problem.
We will actually solve a more general problem called Subset \TSP.

\begin{definition}[Subset \TSP]\label{def:subsetTSP}
Given a metric space $(X,d_{X})$ and a subset of
terminals $K\subseteq X$ a \emph{Subset \TSP tour} is a closed walk $P$, which visits
each point in $K$ at least once. Note that $P$ is not necessarily
simple (i.e., it can visit points multiple times), and it is allowed to
visit or skip non-terminal (Steiner) points.
The goal of the Subset \TSP problem is to find a Subset \TSP tour $P$ that minimizes the cost $w(P)$.
\end{definition}

In order to compute solutions in towns for which the doubling dimension is bounded, we will use the following result for the special case of \TSP where $K=V$.


\begin{theorem}[{\cite{banerjee2024novel}}]\label{thm:TSP-dd}
For a metric with $n$ points and doubling dimension $d$, a $(1+\eps)$-approximate \TSP tour can be computed in $2^{(d/\eps)^{O(d)}}n+(1/\eps)^{O(d)}n\log n$ time.
\end{theorem}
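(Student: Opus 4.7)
The plan is to adapt Arora's framework for Euclidean \TSP to general doubling metrics, combining a light spanner with a randomly shifted hierarchical net decomposition and portal-based dynamic programming, in the spirit of Bartal--Gottlieb--Krauthgamer and subsequent refinements.

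First I would build a light $(1+\eps/3)$-spanner $H$ of $(X,d_X)$ of size $n\cdot(1/\eps)^{O(d)}$ and total weight $\eps^{-O(d)}\cdot w(\MST(X))$ in $(1/\eps)^{O(d)}\, n \log n$ time, which already matches the second additive term of the claimed runtime. Since $w(\MST(X))\le w(\OPT)$, the spanner's lightness can be charged against $w(\OPT)$ in subsequent arguments, and it suffices to approximate the optimal tour inside~$H$. In parallel I would build a net hierarchy $N_0\supseteq N_1\supseteq\cdots$ where each $N_i$ is a $\delta\cdot 2^i$-net with $\delta=\Theta(\eps)$, and apply a uniformly random shift to obtain a laminar family of clusters such that a level-$i$ cluster has diameter $O(2^i)$ while any edge of length $\ell$ is cut by level-$i$ boundaries with probability $O(d\ell/2^i)$.

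For each cluster I would designate a set of $(d/\eps)^{O(d)}$ portals, obtained from a finer net placed in a thin boundary shell, and show by a standard net-respecting rounding argument (at a $1+\eps/3$ multiplicative loss) that there is a near-optimal tour of $H$ that uses only spanner edges and crosses every cluster boundary exclusively at portal points. The dynamic program then processes the hierarchy bottom-up: at each cluster $C$ it enumerates all possible connection patterns of the tour at $C$'s portals (matchings of portal endpoints together with a connectivity specification across the matched pairs) and combines the previously computed child solutions subject to consistency at shared portals.

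The main obstacle, and the only place where the doubling dimension $d$ really enters quantitatively, is bounding both the number of DP states per cluster and the total DP work well enough to achieve $2^{(d/\eps)^{O(d)}}n$ rather than the more classical $n^{(d/\eps)^{O(d)}}$. The two ingredients I would rely on are: (a)~using the spanner's lightness together with the expected crossing bound $O(d\ell/2^i)$ to show that some near-optimal net-respecting tour crosses each cluster through at most $(d/\eps)^{O(d)}$ portals, so only $2^{(d/\eps)^{O(d)}}$ relevant connection patterns per cluster need to be enumerated; and (b)~amortizing the combination step against the total size of the net hierarchy, which, after truncating scales outside a $\poly(n/\eps)$ aspect-ratio window around the MST scale, is $O(n)$ with branching bounded by $2^{O(d)}$. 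A standard derandomization of the random shift then removes the randomness. The technically most delicate step is (a): one must carefully interleave the shifted partition, the lightness of the spanner, and the net-respecting property to bound portal-crossings simultaneously and uniformly across all clusters, because a union-bound over all levels would blow up the state count.
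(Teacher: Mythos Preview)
The paper does not prove this theorem at all: it is quoted as a black-box result from \cite{banerjee2024novel} and invoked only through its statement (inside \Cref{lem:towns-w-small-dd} and in the final runtime analysis of \Cref{thm:TSPHighway}). There is therefore no ``paper's own proof'' to compare your proposal against.

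That said, your sketch is a reasonable outline of the Bartal--Gottlieb--Krauthgamer approach and its descendants, and the ingredients you list (light spanner, randomly shifted net hierarchy, portals, net-respecting tours, DP over the hierarchy) are indeed the ones underlying the cited result. The point you flag as delicate --- bounding the number of portal crossings per cluster uniformly so that the DP state count is $2^{(d/\eps)^{O(d)}}$ rather than $n^{(d/\eps)^{O(d)}}$ --- is exactly the contribution of \cite{banerjee2024novel} over the earlier \cite{BGK16}, and your description of how to get there (lightness plus expected crossing probability plus net-respecting structure) is the right shape, though turning it into an actual proof requires the careful ``sparsity-sensitive patching'' argument of that paper rather than a direct union bound. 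For the purposes of the present paper, however, none of this is needed: the theorem is simply imported.
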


Throughout this section we fix a value~$\eps\in(0,\frac{1}{6}]$, which will determine the approximation ratio, meaning that our algorithm will compute a $(1+O(\eps))$-approximation.
For every~$i\in\mathbb{N}_0$,\footnote{here $\mathbb{N}_0$ denotes all natural numbers including 0.} we define $r_i=(1+\sigma)^i$ where $\sigma=\frac{\eps}{4+3\eps}$ and let $\SPC_i$ be an $(r_i,\eps)$-shortest path cover.
Note that the length intervals $(r_i,(2+\eps)r_i]$ for the distances covered by $\SPC_i$ overlap. 
Assume that the minimum distance in the input metric $X$ is slightly larger than $r_0=1$, so that every distance in the metric is covered by some~$\SPC_i$, as $i\geq 0$ (otherwise we can scale accordingly).
For any closed walk of a metric we call a pair of consecutive nodes $u$ and $x$ a \emph{connection} (similar to an edge, but it might correspond to a shortest path between the endpoints in some underlying graph).

\subsection{Hierarchical hubs and nets}

In order to define towns and sprawl that give the structure of sub-instances we want to solve, we need the hub sets of shortest path covers.
However, we will need a hub set that forms a hierarchy, i.e., any hub on a level $i$ is also a hub on any lower level.
Unfortunately, the shortest path covers $\SPC_i$ given by \Cref{def:SPC} do not provide this property.
We therefore define a new hub set $H_i$ for each $i\in\mathbb{N}_0$ based on $\SPC_i$, such that $H_0\supseteq H_1\supseteq\ldots\supseteq H_L$, where $L=\lceil\log_{1+\sigma}(\diam(X))\rceil$ is the highest level of the given metric~$X$.
To obtain each~$H_i$, we begin by constructing an auxiliary hub set $H'_i$ and then taking $H_i=\bigcup_{j \geq i}H'_j$ to enforce a hierarchy.
Each~$H'_i$ is constructed in a top-down manner as follows.
On the highest level $L$, the inclusion-wise minimal shortest path cover~$\SPC_L$ is empty, since there are no vertices at distance more than~$r_L$.
Accordingly we let~$H'_{L}=\emptyset$.
When constructing $H'_{i}$ for $i<L$ (after we already constructed $H'_{j}$ for~$j\geq i+1$), we examine each point in $\SPC_i$ iteratively: when we consider a hub~$x\in\SPC_i$,
\begin{enumerate}
    \item if there is already a hub $y\in H'_i$ at distance at most $\frac{\eps}{4} r_i$ from $x$ we will ignore $x$,
    \item otherwise, if there is a hub $y\in H'_{j}$ for $j>i$ such that $d_{X}(x,y)\leq\frac{\eps}{4}r_i$, then we will add $y$ to~$H'_i$, 
    \item otherwise (there is no such $y$), we will add $x$ to $H'_i$.
\end{enumerate}

Before finalizing the construction of the hub sets, let us prove the following simple fact.

\begin{lemma}\label{lem:H'-is-SPC}
For every $i$, $H'_i$ is an $(r_i,\frac{3}{2}\eps)$-shortest path cover.
\end{lemma}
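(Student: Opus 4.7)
My plan is to lift the covering property of $\SPC_i$ to $H'_i$, paying at most an additive $\tfrac{\eps}{2}r_i$ in the stretch for replacing each hub of $\SPC_i$ by a close representative in $H'_i$.

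The proof begins with the following observation, immediate from inspecting the three cases in the construction of $H'_i$: for every $x \in \SPC_i$ there exists a representative $y \in H'_i$ with $d_X(x,y) \leq \tfrac{\eps}{4}\, r_i$. In case~1 the pre-existing close hub in $H'_i$ witnesses $y$; in case~2 the hub imported from some $H'_j$ with $j>i$ witnesses $y$; in case~3 the hub $x$ itself is added to $H'_i$, and we take $y=x$.

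Armed with this observation, I fix a pair $u,z \in X$ with $d_X(u,z) \in (r_i,(2+\tfrac{3}{2}\eps)r_i]$ and treat two subranges separately. For $d_X(u,z) \in (r_i,(2+\eps)r_i]$, the $(r_i,\eps)$-SPC property of $\SPC_i$ supplies $x \in \SPC_i$ with $d_X(u,x)+d_X(x,z) \leq (1+\eps)\,d_X(u,z)$, and taking the representative $y \in H'_i$ from the observation, the triangle inequality gives
\begin{align*}
d_X(u,y)+d_X(y,z) \;&\leq\; d_X(u,x)+d_X(x,z) + 2\,d_X(x,y) \\
\;&\leq\; (1+\eps)\, d_X(u,z) + \tfrac{\eps}{2}\, r_i \\
\;&\leq\; \big(1+\tfrac{3}{2}\eps\big)\, d_X(u,z),
\end{align*}
where the last step uses $r_i < d_X(u,z)$.

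For the remaining subrange $d_X(u,z) \in ((2+\eps)r_i,(2+\tfrac{3}{2}\eps)r_i]$, the distance exceeds the range directly certified by $\SPC_i$, so the plan is to pick an intermediate vertex $z'$ on a shortest $u$-$z$ path with $d_X(u,z') \in (r_i,(2+\eps)r_i]$, apply the $(r_i,\eps)$-SPC property of $\SPC_i$ to the pair $(u,z')$ to obtain $x \in \SPC_i$, pass to its representative $y \in H'_i$ via the observation, and absorb the detour $d_X(z',z)=d_X(u,z)-d_X(u,z') \leq \tfrac{\eps}{2}\,r_i$ into the triangle-inequality bound. The stretch again closes at $(1+\tfrac{3}{2}\eps)\,d_X(u,z)$, now using $d_X(u,z)>(2+\eps)r_i$ to convert the surplus from an additive $O(\eps)r_i$ into a multiplicative $O(\eps)d_X(u,z)$.

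The main obstacle I expect is precisely this second subrange: the argument as sketched assumes one can locate a vertex $z'$ on the shortest path with distance in the required window $(r_i,(2+\eps)r_i]$, which is not guaranteed in a graph with arbitrarily heavy edges. If such a $z'$ is unavailable, the plan falls back to invoking $\SPC_{i+1}$ (or $\SPC_{i+2}$), whose range is wide enough to cover $(2+\tfrac{3}{2}\eps)r_i$ thanks to the careful choice $\sigma = \tfrac{\eps}{4+3\eps}$, and then tracing the resulting higher-level hub back into $H'_i$ via the case-2 promotion mechanism of the construction. Apart from this subtlety, the remainder of the proof is routine triangle-inequality bookkeeping.
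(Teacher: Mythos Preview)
Your argument for the first subrange $d_X(u,z)\in(r_i,(2+\eps)r_i]$ is correct and is essentially identical to the paper's proof. In fact, the paper's own proof stops there: it only verifies the hub property for pairs with $d_X(u,z)\in(r_i,(2+\eps)r_i]$ and then declares $H'_i$ to be an $(r_i,\tfrac{3}{2}\eps)$-shortest path cover, without addressing the remaining window $((2+\eps)r_i,(2+\tfrac{3}{2}\eps)r_i]$ at all. You were more careful in noticing that \Cref{def:SPC} formally demands coverage of this larger range. For the purposes of the paper this is harmless, since every later use of the lemma only invokes it for pairs with $d_X(u,z)\in(r_k,r_{k+1}]\subset(r_k,(2+\eps)r_k]$; the extended window is never needed.

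That said, your proposed fix for the second subrange does not work as stated. The shortest-path argument may indeed fail for graphs with heavy edges, as you note. But the fallback via $\SPC_{i+1}$ (or $\SPC_{i+2}$) is also broken: the case-2 mechanism in the construction of $H'_i$ only promotes a higher-level hub $y\in H'_j$ into $H'_i$ when $y$ happens to lie within $\tfrac{\eps}{4}r_i$ of some hub of $\SPC_i$ that is being processed. A hub of $\SPC_{i+1}$ good for the pair $(u,z)$ need not be close to any point of $\SPC_i$, and hence need not be traceable into $H'_i$. Moreover, a quick calculation shows $(2+\eps)r_{i+1}<(2+\tfrac{3}{2}\eps)r_i$ for the paper's choice of $\sigma$, so $\SPC_{i+1}$ alone does not even cover the full window. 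If you want a clean statement, the simplest repair is to weaken the lemma to ``for every pair with $d_X(u,z)\in(r_i,(2+\eps)r_i]$ there is $y\in H'_i$ with $d_X(u,y)+d_X(y,z)\le(1+\tfrac{3}{2}\eps)d_X(u,z)$,'' which is all the paper actually proves and uses.
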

\begin{proof}
    Consider a pair $u,z\in X$ with $d_X(u,z)\in (r_i,(2+\eps)r_i]$. 
    By \Cref{def:SPC} there is a hub $x\in\SPC_i$ with $d_X(u,x)+d_X(x,z)\leq(1+\eps)d_X(u,z)$.
    From the construction of $H'_i$, there is some hub $y\in H'_i$ at distance at most $\frac{\eps}{4}r_{i} < \frac{\eps}{4}d_X(u,z)$ from $x$.
    Thus using the triangle inequality, we have 
    \[\textstyle
    d_X(u,y)+d_X(y,z)\leq d_X(u,x)+2d_X(x,y)+d_X(x,z)\leq(1+\frac{3}{2}\eps)d_X(u,z)~,
    \]
    i.e., $H'_i$ is an $(r_i,\frac{3}{2}\eps)$-shortest path cover.
\end{proof}

Since $H'_i$ is a shortest path cover, we may make it inclusion-wise minimal by removing any redundant hubs.
After this step we finally let~$H_i=\bigcup_{j\geq i} H'_j$ (note that we cannot make $H_i$ inclusion-wise minimal without losing the hierarchy property).
As a result we get the following crucial packing property (cf.~\Cref{def:net}).

\begin{lemma}[Hub Packing]\label{lem:HubPacking}
    For each $i\in\mathbb{N}_0$, the hub set $H_i$ is an $\frac{\eps}{4}r_{i}$-packing (that is, $\forall x,y\in H_i$, $d_X(x,y)>\frac\eps4 r_i$).
\end{lemma}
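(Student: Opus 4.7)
The plan is to prove the claim by downward induction on the level $i$, from $i = L$ to $i = 0$. The base case $i = L$ is immediate, because no pair of points in $X$ is at distance greater than $r_L \geq \diam(X)$, which forces $\SPC_L = \emptyset$ and hence $H_L = H'_L = \emptyset$; the packing property then holds vacuously.

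For the inductive step, I would assume $H_{i+1}$ is an $\frac{\eps}{4} r_{i+1}$-packing, and note that since $r_{i+1} > r_i$ this is in particular an $\frac{\eps}{4} r_i$-packing. Pick two distinct hubs $u_1, u_2 \in H_i = H'_i \cup H_{i+1}$. The case where both $u_1, u_2 \in H_{i+1}$ is closed by the inductive hypothesis, so assume without loss of generality that $u_1 \in H'_i \setminus H_{i+1}$.

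The crucial observation is that such a hub $u_1$ cannot have been inserted into $H'_i$ by rule~(2), since rule~(2) adds only hubs already belonging to some $H'_l$ with $l > i$, i.e., hubs in $H_{i+1}$. Therefore $u_1$ was added by rule~(3); in particular $u_1 = x_1$ for some $x_1 \in \SPC_i$, and moreover rule~(2) \emph{failed} when $x_1$ was processed, so every $y \in \bigcup_{l > i} H'_l = H_{i+1}$ satisfies $d_X(x_1, y) > \frac{\eps}{4} r_i$. I then split on $u_2$. If $u_2 \in H_{i+1}$, the failure of rule~(2) at $x_1$ immediately yields $d_X(u_1, u_2) = d_X(x_1, u_2) > \frac{\eps}{4} r_i$. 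Otherwise $u_2 \in H'_i \setminus H_{i+1}$, and the same argument shows $u_2 = x_2 \in \SPC_i$ added by rule~(3); ordering so that $x_2$ is processed after $x_1$, at that moment $u_1 = x_1$ already lies in $H'_i$, and since $x_2$ was not skipped, the rule~(1) check forces $d_X(x_1, x_2) > \frac{\eps}{4} r_i$.

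The main potential obstacle would have been a cross-level collision between a rule~(3) hub $u_1 \in H'_i$ and some $u_2$ either at the same level (added later, possibly via rule~(2)) or promoted from a higher level: the packing separation at scale $r_i$ is not forced by rule~(1) alone, and the higher-level packing at scale $r_{i+1}$ is only barely stronger than what we need. My plan circumvents this by exploiting the fact that rule~(2) only imports hubs from $H_{i+1}$, so their packing is inherited from the inductive hypothesis, leaving only rule~(3) hubs whose separation is supplied either by the failure of rule~(2) (against $H_{i+1}$) or by the rule~(1) test between iterations (against other $H'_i$ hubs). Finally, the subsequent inclusion-wise minimization of $H'_i$ cannot upset the argument, since the surviving hubs still satisfy every separation inequality derived during construction.
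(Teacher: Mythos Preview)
Your proof is correct and follows essentially the same downward induction as the paper's own argument; you simply unpack the case analysis that the paper compresses into the single sentence ``was only added if there was no other hub of $H'_j$ for $j\geq i$ at distance at most $\frac{\eps}{4}r_i$.'' Your explicit observation that a hub in $H'_i\setminus H_{i+1}$ must have been added via rule~(3), together with the symmetric rule~(1)/rule~(2) checks, is exactly what makes that sentence true.
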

\begin{proof}
    The packing property follows by induction: on the highest level, $H_L=\emptyset$ trivially is an $\frac{\eps}{4}r_L$-packing.
    For any $i<L$ consider the set $H_{i}$, which is a superset of $H_{i+1}$. 
    By induction, $H_{i+1}$ is an $\frac{\eps}{4}r_{i+1}$-packing and thus also an $\frac{\eps}{4}r_i$-packing, as~$r_i<r_{i+1}$.
    Any hub $x\in H_i$ that is not already in~$H_{i+1}$, lies in $H'_i$ and was only added if there was no other hub of $H'_j$ for $j\geq i$ at distance at most~$\frac{\eps}{4}r_{i}$. 
    Given that $H_i=\bigcup_{j\geq i} H'_j$, this means that $H_{i}$ is an $\frac{\eps}{4}r_{i}$-packing.
\end{proof}

From this lemma we derive the following properties for the hierarchical hub sets $H_i$, which are similar to those for~$\SPC_i$ given by \Cref{def:SPC} and \Cref{lem:hub_bound} where $s=h(\eps)$.
	
\begin{lemma}\label{lem:HubHierarchy}
If $\SPC_i$ is locally $h(\eps)$-sparse for every $i$, then $H_i$ is an $(r_i,\frac{3}{2}\eps)$-shortest path cover that is locally $O(\frac{1}{\eps}\log(\frac{1}{\eps})\cdot h(\eps))$-sparse.
Furthermore, for each ball $B_v=B_X(v,(2+4\eps)r_i)$ the number of hubs in $H_i$ at distance at most $(2+\eps)r_i$ from $B_v$ is at most $O((\frac{1}{\eps}\log(\frac{1}{\eps})\cdot h(\eps))^2)$.
\end{lemma}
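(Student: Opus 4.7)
My plan is to establish the two claims separately. The shortest path cover property of $H_i$ is immediate: since $H'_i \subseteq H_i$ and Lemma~\ref{lem:H'-is-SPC} shows that $H'_i$ is an $(r_i, \tfrac{3}{2}\eps)$-shortest path cover, so is $H_i$. The main work lies in the local sparsity bound.

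To bound $|H_i \cap B|$ for a ball $B = B_X(v, (2+4\eps) r_i)$, I would set up a multi-scale counting argument. For each hub $y \in H_i \cap B$, define $\Lambda(y) = \max\{\, j \geq i : y \in H'_j \,\}$ to be the highest level at which $y$ appears. Since the $H'_j$ are constructed top-down, at level $\Lambda(y)$ there was no higher-level hub available to invoke case~2, so $y$ must have been added via case~3, meaning $y \in \SPC_{\Lambda(y)}$. This partitions $H_i \cap B$ by level.

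Next, I would split the analysis at a threshold level $j^\star$ chosen so that the packing distance $\tfrac{\eps}{4} r_{j^\star+1}$ exceeds the diameter $2(2+4\eps) r_i$ of $B$; since $\sigma = \Theta(\eps)$, this forces $j^\star - i = O\!\left(\tfrac{\log(1/\eps)}{\eps}\right)$. For each low level $j \in [i, j^\star]$, the hubs with $\Lambda(y) = j$ lie in $\SPC_j \cap B \subseteq \SPC_j \cap B_X(v, (2+4\eps) r_j)$ (using $r_j \geq r_i$), so local $h(\eps)$-sparsity of $\SPC_j$ yields at most $h(\eps)$ of them per level. Summing over the $O(\log(1/\eps)/\eps)$ low levels produces the desired $O\!\left(\tfrac{1}{\eps}\log\tfrac{1}{\eps} \cdot h(\eps)\right)$ contribution.

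The main obstacle, and the step requiring real care, is bounding the contribution from the arbitrarily many high levels $j > j^\star$ without introducing a dependence on the total number of levels $L$ (which could be large for high-aspect-ratio metrics). The idea is that any hub $y$ with $\Lambda(y) \geq j^\star + 1$ lies in $H_{j^\star+1}$, which by Lemma~\ref{lem:HubPacking} is an $\tfrac{\eps}{4} r_{j^\star+1}$-packing. Because the packing distance now exceeds the diameter of $B$, the entire set $H_{j^\star+1} \cap B$ has cardinality at most $1$, simultaneously capping the aggregate contribution from every high level. Adding this to the low-level count yields the claimed local sparsity.

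For the ``Furthermore'' statement about hubs within distance $(2+\eps) r_i$ of the ball $B_v$, the argument is analogous applied to the slightly enlarged ball $B_X(v, (4+5\eps) r_i)$, with the threshold $j^\star$ adjusted and Lemma~\ref{lem:hub_bound_2.8Ball} used for the (constant number of) bottom levels where the enlarged radius exceeds $(2+4\eps) r_j$. Alternatively, Lemma~\ref{lem:hub_bound} can be invoked directly on $H_i$ (or an inclusion-wise minimal subcover thereof) with $s = O\!\left(\tfrac{1}{\eps}\log\tfrac{1}{\eps} \cdot h(\eps)\right)$, which delivers the stated $O(s^2)$ bound.
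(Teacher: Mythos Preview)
Your argument for the local sparsity bound is essentially the paper's: both identify for each $y\in H_i$ a level $\Lambda(y)\ge i$ with $y\in\SPC_{\Lambda(y)}$, use the $h(\eps)$-sparsity of each $\SPC_\ell$ for the $O(\frac{1}{\eps}\log\frac{1}{\eps})$ low levels, and invoke the packing property (Lemma~\ref{lem:HubPacking}) to cap the aggregate high-level contribution at one.

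For the ``Furthermore'' part, however, both of your suggested routes have gaps. Your second option fails outright: Lemma~\ref{lem:hub_bound} requires the shortest path cover to be inclusion-wise minimal, and $H_i$ is not (the paper flags this explicitly). Passing to an inclusion-wise minimal subcover $\tilde H_i\subseteq H_i$ only bounds the hubs of $\tilde H_i$ in the region, whereas the lemma statement asks you to bound all of $H_i$; the dropped hubs are precisely the ones you still have to count. Your first option is closer but uses the wrong tool: Lemma~\ref{lem:hub_bound_2.8Ball} only controls hubs inside a ball of radius $(2.8+6\eps)r_\ell$, which for $\ell$ near $i$ is strictly smaller than the enlarged region $B_X(v,(4+5\eps)r_i)$ you need to cover (since $4+5\eps>2.8+6\eps$ for $\eps\le\frac16$), so those bottom levels remain unbounded. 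Replacing it with Lemma~\ref{lem:hub_bound} applied to the inclusion-wise minimal $\SPC_\ell$ would fix this.

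The paper's own route is a third variant: it applies Lemma~\ref{lem:hub_bound} not to $H_i$ but to each inclusion-wise minimal $H'_\ell$ separately (recall $H_i=\bigcup_{\ell\ge i}H'_\ell$), plugging in the sparsity bound $s=O(\frac{1}{\eps}\log\frac{1}{\eps}\cdot h(\eps))$ just established in the first part, and sums over the $O(\frac{1}{\eps}\log\frac{1}{\eps})$ bottom levels. Above a threshold $j$ where $(2+4\eps)r_j\ge(4+5\eps)r_i$, the local sparsity of $H_j$ at scale $r_j$ handles all remaining levels at once.
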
 
\begin{proof}
   By the  definition of $H_i$, and  \Cref{lem:H'-is-SPC}, $H_i$ is an $(r_i,\frac{3}{2}\eps)$-shortest path cover.
    To bound the local sparsity of $H_i$, consider any ball $B_v=B_X(v,(2+4\eps)r_i)$.
    Since~$H_i$ consists of subsets of hubs from $\SPC_\ell$ for $\ell\geq i$, $H_i\cap B_v$ contains at most $h(\eps)$ hubs of~$\SPC_\ell$ for any~$\ell\geq i$, given that $\SPC_\ell$ is locally $h(\eps)$-sparse.
    At the same time, for any $\ell\geq i$ the hubs of~$\SPC_\ell$ chosen for~$H_i$ lie in $H_{\ell}$, as these hub sets form a hierarchy.
    By \Cref{lem:HubPacking}, $H_{\ell}$ is an $\frac{\eps}{4}r_{\ell}$-packing. 
    Letting~$j=i+\lceil\log_{1+\sigma}(\frac{16}{\eps}+32)\rceil$, we get that $\frac{\eps}{4}r_{j}=\frac{\eps}{4}(1+\sigma)^{j}\ge\frac{\eps}{4}\cdot r_{i}\cdot(\frac{16}{\eps}+32)=2\cdot(2+4\eps)\cdot r_{i}$
     is at least the diameter of the ball~$B_v$, and thus there is at most one hub of $\bigcup_{\ell\geq j}\SPC_\ell$ in $H_i\cap B_v$.
    Noting that $\sigma=\frac{\eps}{4+3\eps}\geq \frac{2\eps}{9}$ (as~$\eps\leq \frac16$), we can upper bound the total number of hubs of $H_i$ in the ball $B_v$ by
    \[\textstyle
    O(\log_{1+\sigma}(\frac{16}{\eps}+32)\cdot h(\eps))= O(\frac{1}{\eps}\log(\frac{1}{\eps})\cdot h(\eps))~.
    \]
    That is, $H_i$ is locally $O(\frac{1}{\eps}\log(\frac{1}{\eps})\cdot h(\eps))$-sparse.

    To bound the number of hubs in $H_i$ close to a ball $B_v$ we would like to apply \Cref{lem:hub_bound}. 
    However, $H_i$ is not necessarily inclusion-wise minimal. 
    Instead, we may apply \Cref{lem:hub_bound} to each inclusion-wise minimal~$H'_\ell$ for~$\ell\geq i$ that $H_i$ is comprised of.
    That is, since $H'_\ell\subseteq H_i$ is locally $O(\frac{1}{\eps}\log(\frac{1}{\eps})\cdot h(\eps))$-sparse, due to \Cref{lem:hub_bound}, $H'_\ell$ contains $O(\frac{1}{\eps}\log(\frac{1}{\eps})\cdot h(\eps)^2)$ hubs at distance at most $(2+\eps)r_i$ from~$B_v$.
    Moreover, whenever $\ell\geq i+\log_{1+\sigma}(\frac{4+5\eps}{2+4\eps})$, all hubs of $H'_\ell$ at distance at most $(2+\eps)r_i$ from~$B_v$ lie the ball $B_X(v,(2+4\eps)r_\ell)$, since $r_\ell=(1+\frac{\eps}{4+3\eps})^\ell$ and the radius of $B_v$ is $(2+4\eps)r_i$.
    Using the local $O(\frac{1}{\eps}\log(\frac{1}{\eps})\cdot h(\eps))$-sparsity of~$H_\ell$ and the hierarchy property, this means that if $j=i+\lceil\log_{1+\sigma}(\frac{4+5\eps}{2+4\eps})\rceil$, only $O(\frac{1}{\eps}\log(\frac{1}{\eps})\cdot h(\eps))$ hubs of $\bigcup_{\ell=j}^{L}H'_\ell=H_j$ are at distance at most $(2+\eps)r_i$ from~$B_v$.
    Given that $H_i=H_j\cup \bigcup_{\ell=i}^{j-1}H'_\ell$, the total number of hubs in $H_i$ at distance at most $(2+\eps)r_i$ from~$B_v$ can be upper bounded by 
    \[\textstyle
    O(\frac{1}{\eps}\log(\frac{1}{\eps})\cdot h(\eps))+ \log_{1+\sigma}\left(\frac{4+5\eps}{2+4\eps}\right)\cdot O(\frac{1}{\eps}\log(\frac{1}{\eps})\cdot h(\eps)^2) =
    O((\frac{1}{\eps}\log(\frac{1}{\eps})\cdot h(\eps))^2)
    \]
    where we used $\log_{1+\sigma}(\frac{4+5\eps}{2+4\eps})= O(\frac{1}{\eps}\log(\frac{1}{\eps}))$, similar to above.
    This concludes the proof.
\end{proof}

In order to compute solutions for the towns and the remaining instance, we define a special more structured type of solution. 
For this, in addition to a hub hierarchy we will also need a net hierarchy:
let $N_0\supseteq N_1\supseteq \ldots \supseteq N_L$ be a hierarchy of nets of the given metric where each $N_i\subseteq X$ is an $\eps r_i$-net of $X$.
We can compute such a hierarchy using the Gonzalez order of $X$ \cite{Gon85}.$^{\ref{foot:Gonzales}}$ 
Our algorithm will compute a solution that is structured with respect to the hub and net hierarchies.
As a first step, we begin by making a solution \emph{net-respecting}, which means that for every connection $uz$ with $d_X(u,z)\in(r_i,r_{i+1}]$ it holds that $u,z\in N_i$.




\begin{lemma}\label{lem:net-resp}
Given any walk $P$, in polynomial time we can compute a net-respecting walk $P^{N}$ visiting a superset of the nodes of $P$ with cost $w(P^{N})\leq (1+60\eps) w(P)$.
\end{lemma}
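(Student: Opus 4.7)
The plan is to construct $P^N$ from $P$ by replacing each connection with a net-respecting detour built out of nearby net points, iterating on the resulting short side-connections until no non-net-respecting connection remains. For a connection $uz$ in $P$ with $d_X(u,z)\in(r_i,r_{i+1}]$, I pick net points $u'\in N_{i+k}$ nearest to $u$ and $z'\in N_{i+k}$ nearest to $z$ for a small integer $k$ (chosen below), each at distance at most $\eps r_{i+k}$ from its original endpoint, and replace $uz$ by the sub-walk $u\to u'\to z'\to z$. Because $N_{i+k}\subseteq N_j$ for every $j\le i+k$, the middle connection $u'z'$ is automatically net-respecting whenever its natural level does not exceed $i+k$, and the two side connections $u\to u'$ and $z'\to z$ lie at strictly lower levels and are handled by the same procedure applied recursively.

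The key technical steps are as follows. First, choose $k$ as the smallest integer such that $(1+\sigma)^k\ge \frac{1}{1+\sigma-2\eps}$; this guarantees $d_X(u',z')\le d_X(u,z)+2\eps r_{i+k}\le r_{i+k+1}$, so the middle connection sits at level at most $i+k$ and is net-respecting. Since $\sigma=\eps/(4+3\eps)$ and $\eps\le\frac16$, one checks that $k$ is a small absolute constant. Second, bound the per-connection overhead by $(1+O(\eps))\cdot d_X(u,z)$: the two snappings contribute at most $2\eps r_{i+k}\le 2\eps(1+\sigma)^k\cdot r_i\le O(\eps)\cdot d_X(u,z)$. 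Third, establish termination of the recursion: the length of a side connection $u\to u'$ is at most $\eps r_{i+k}$, so its level $m$ satisfies $m\le i+k-\log_{1+\sigma}(1/\eps)$, a strict decrease by $\Omega(\log(1/\eps)/\sigma)$ per recursive call; the assumption that the minimum distance in $X$ exceeds $r_0=1$ forces $N_0=X$, so once we reach level $0$ the net-respecting condition is automatic. Fourth, sum the contributions: the total overhead per original connection forms a geometric series $O(\eps)\cdot d_X(u,z)+O(\eps^2)\cdot d_X(u,z)+\cdots=O(\eps)\cdot d_X(u,z)$, and explicit constant tracking involving $k$, $\sigma$, and the geometric ratio yields the stated bound $w(P^N)\le(1+60\eps)w(P)$. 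Polynomial runtime follows because $P$ has polynomial length, the net hierarchy $N_0\supseteq N_1\supseteq\cdots\supseteq N_L$ is computable in polynomial time from a Gonzales ordering, and the recursion has depth $O(L)$ per original connection.

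The main obstacle is the scale mismatch between the net granularity $\eps r_i$ and the level gap $\sigma r_i$: since $\sigma<2\eps$, the naive choice of snapping to $N_i$ can push the middle-connection length above $r_{i+1}$, landing at a higher level where the snapped endpoints need not be net points; repeating the snap at the new level can then blow up the cost. The remedy of snapping to the slightly sparser net $N_{i+k}$ trades a constant factor in the per-snap cost (bounded since $(1+\sigma)^k=O(1)$) for a clean level-control guarantee on the middle connection, and the careful balance of these two effects is what drives the explicit constant $60$.
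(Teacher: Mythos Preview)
Your approach is essentially the same as the paper's: replace each connection $uz$ at level $i$ by the detour $u\to u'\to z'\to z$ through nearby points of $N_{i+c}$, argue the middle piece lands at level at most $i+c$ and is therefore net-respecting, and recurse on the short side pieces, with a geometric/inductive bound on the total blow-up. The paper simply fixes $c=16$ (rather than optimising your $k$), processes violating connections in decreasing order of length, and carries out the explicit arithmetic showing $\gamma=60$ closes the induction; your condition on $k$ is off by one factor of $(1+\sigma)$ (you need $(1+\sigma)^k\ge\frac{1+\sigma}{1+\sigma-2\eps}$ to force $d_X(u',z')\le r_{i+k+1}$), but this does not affect the argument.
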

\begin{proof}
We transfer the given walk $P$ to a net-respecting walk~$P^{N}$ following an argument similar in spirit to \cite{BGK16}.
This will be done in a greedy manner, where in each iteration we fix the violating connection of maximal length. 
Specifically, while there is a connection~$uz$ with $d_X(u,z)\in(r_i,r_{i+1}]$ for which $u$ or $z$ does not belong to $N_i$, let $uz$ be such a connection of maximum length.
Fix $c=16$, and let $u',z'\in N_{i+c}$ be the closest net points from the net $N_{i+c}$ to $u,z$ respectively. 
We will replace the connection $uz$ with the three connections $uu'$, $u'z'$, and $z'z$. 
First note that $u'z'$ is not a violating connection any longer. Indeed, as $N_{i+c}$ is an $\eps r_{i+c}$-net and $r_i=(1+\sigma)^i$,
\begin{align*}
d_{X}(u',z') & \le d_{X}(u',u)+d_{X}(u,z)+d_{X}(z,z')\le2\eps\cdot r_{i+c}+r_{i+1}\\
 & =\left(1+\sigma+2\eps\cdot(1+\sigma)^{c}\right)\cdot r_{i}\overset{(*)}{<}(1+\sigma)^{c}\cdot r_{i}=r_{i+c}~,
\end{align*}
here inequality $(*)$ requires that $(1+\sigma)^{c}(1-2\eps) \ge 1+\sigma$. To see that this holds, we observe two facts:
(1) Since $\sigma > \frac{\eps}{5}$, we have $(1+\sigma)^{c-1} > (1+\frac{\eps}{5})^{15} > 1+3\eps$. And (2) Since $\eps \le \frac{1}{6}$, we have $1+3\eps \ge \frac{1}{1-2\eps}$.
Combining (1)+(2)  yields $(1+\sigma)^{c-1} > \frac{1}{1-2\eps}$, which simplifies to the required condition.
%
As $d_{X}(u',z')\in (r_j,r_{j+1}]$ for some $j<i+c$, and $u',z'\in N_{i+c}\subseteq N_j$, it follows that $u'z'$ is indeed not violating.

We continue this process until no violating connections remain. 
In each step, we replace a violating connection $uz$, with a non violating connection $u'z'$, and two potentially violating connections: $u'u$, $z'z$. However, the weight of these two new connections is bounded by: $\eps\cdot(1+\sigma)^{i+c}=\eps\cdot(1+\sigma)^{c}\cdot r_i<\frac{1}{2}\cdot d_{X}(u,z)$. Here in the last inequality we used that $\eps\le\frac16$, and $\sigma\le\frac\eps4\le\frac{1}{24}$.
As there is a minimum pairwise distance in $X$, it follows that this process will eventually halt.

Finally, fix $\gamma=60$. We argue that the weight of the resulting walk $P^N$ is bounded by $1+\gamma\eps$ times the given walk $P$. 
Specifically, we will prove that for a connection~$uz$, the total weight of all the edges in the entire process replacing  $uv$ is at most $(1+\gamma\eps)\cdot d_X(u,z)$, the bound on the total length of $P^N$ follows. 
The proof is by induction on the length $d_X(u,z)$.
If $uz$ is not a violating connection, there is nothing to prove. 
Otherwise, it was replaced by the connections $u'u$, $u'z'$, $z'z$ as above. 
The new connection $u'z'$ is not violating and will not be replaced again, while using the induction hypothesis on $u'u$, $z'z$, we conclude that the total weight of connections replacing $uz$ is bounded by
\begin{align*}
d_{X}(u',z')+(1+\gamma\eps)\cdot\left(d_{X}(u',u)+d_{X}(z',z)\right) & \le r_{i+c}+(1+\gamma\eps)\cdot2\cdot\eps\cdot r_{i+c}\\
 & \le\left(1+(1+\gamma\eps)\cdot2\cdot\eps\right)\cdot(1+\sigma)^{c}\cdot r_{i}\\
 & \overset{(*)}{\le}(1+\gamma\eps)\cdot r_{i}\le(1+\gamma\eps)\cdot d_{X}(u,v)~.
\end{align*}
Here in the inequality $^{(*)}$ we used that $(1+\sigma)^{c}\le e^{\sigma c}\le1+2\sigma c\le1+8\eps$, and that $\left(1+(1+\gamma\eps)\cdot2\cdot\eps\right)\cdot(1+8\eps)=1+10\eps+(16+2\gamma)\cdot\eps^{2}+16\gamma\cdot\eps^{3}<1+(13+(\frac{1}{3}+\frac{16}{36})\cdot\gamma)\eps<1+\gamma\eps$.
\end{proof}

We now introduce the required structure of walks, which respects the hubs as well as the nets. 
Roughly speaking, we require that every connection in the walk which is outgoing from a town, will be out of a net point, and into a hub (scaled accordingly).
In the following we denote by $\mathcal{T}_\ell$ the set of towns (as defined in \Cref{def:towns}) w.r.t.\ the hubs $H_\ell$ on level $\ell$ of the hub hierarchy.


\begin{definition}\label{def:hub-net-resp}
A walk $P^{HN}$ is called \emph{hub-net-respecting} (w.r.t.\ the hub hierarchy $\{H_j\}_{j\in\mathbb{N}_0}$ and net hierarchy $\{N_j\}_{j\in\mathbb{N}_0}$) if for any indices $j,\ell\in\mathbb{N}_0$ and town $T\in\mathcal{T}_\ell$, every connection $ux$ of $P^{HN}$ for which $d_X(u,x)\in((1+\frac{3}{4}\eps)r_j,(1+\frac{3}{4}\eps)r_{j+1}]$, $u\in T$, and $x\notin T$ has the property that $u\in N_k$ and $x\in H_k$ where $k=\max\{\ell,j\}$.
\end{definition}

\begin{lemma}\label{lem:hub-net-resp}
For any walk $P$, in polynomial time we can compute a hub-net-respecting walk $P^{HN}$ visiting a superset of the nodes of $P$ with cost $w(P^{HN})\leq (1+77\eps)w(P)$.
\end{lemma}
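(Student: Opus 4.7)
The plan is to first invoke \Cref{lem:net-resp} to transform $P$ into a net-respecting walk $P^N$ at a cost factor of $(1+60\eps)$, and then to repair each outgoing town-connection that violates the hub-net-respecting condition, in a greedy longest-first fashion mirroring the proof of \Cref{lem:net-resp}.

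For each violating connection $ux$ with $u\in T\in\mathcal{T}_\ell$, $x\notin T$, and $d_X(u,x)\in((1+\tfrac{3}{4}\eps)r_j,(1+\tfrac{3}{4}\eps)r_{j+1}]$, I set $k=\max\{\ell,j\}$. The town separation property from \Cref{lem:townproperties} gives $d_X(u,x)>r_\ell\ge r_k$ when $k=\ell$, and $d_X(u,x)>r_k$ is immediate when $k=j$; combined with $d_X(u,x)\le(1+\tfrac{3}{4}\eps)r_{j+1}\le(2+\tfrac{3}{2}\eps)r_k$, the length $d_X(u,x)$ lies in the coverage interval of the $(r_k,\tfrac{3}{2}\eps)$-shortest path cover $H_k$ from \Cref{lem:HubHierarchy}. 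Hence there exists a hub $h\in H_k$ with $d_X(u,h)+d_X(h,x)\le(1+\tfrac{3}{2}\eps)d_X(u,x)$. I then let $u'\in N_k$ be the net point closest to $u$ (so $d_X(u,u')\le\eps r_k$) and replace the connection $u\to x$ by the three sub-connections $u\to u'\to h\to x$. The outgoing sub-connection $u'\to h$ now satisfies the hub-net-respecting condition at level $k$ (source in $N_k$, target in $H_k$), the sub-connection $h\to x$ is not outgoing from $T$, and the short detour $u\to u'$ stays inside $T$ whenever $\eps r_k\le r_\ell$, which by the separation property forces $u'\in T$.

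The cost analysis then follows the inductive template of \Cref{lem:net-resp}: I prove by induction on the length $d$ that the repair of one violating connection produces a sequence of connections of total length at most $(1+\gamma\eps)d$ for an absolute constant $\gamma$. The SPC step contributes a factor of $(1+\tfrac{3}{2}\eps)$, the net-snap contributes an additive $\eps r_k$ that is charged against the separation guarantee $d_X(u,x)>r_\ell$ (so amounts to $O(\eps)\cdot d_X(u,x)$ when $r_k$ is within a constant factor of $r_\ell$), and sub-connections that remain violating are handled recursively at strictly smaller scales. Termination is guaranteed by the finite minimum pairwise distance in $X$ together with the fact that each replacement strictly decreases the length of the connection being processed. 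Composing the resulting factor with the $(1+60\eps)$ from \Cref{lem:net-resp} and using $\eps\le\tfrac{1}{6}$ yields the claimed bound $(1+77\eps)w(P)$.

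The main obstacle is the case $k>\ell$, in which $\eps r_k$ may exceed $r_\ell$ and the snapped net point $u'$ can lie outside the town $T$. In that regime the ostensibly internal detour $u\to u'$ becomes itself a short violating outgoing connection, but at a strictly smaller scale than the original, and the corresponding cascade of repairs must be shown to telescope geometrically. Verifying that this cascade contributes only $O(\eps)\cdot d_X(u,x)$ to the total cost, together with carefully tuning which hub level $k$ to aim for in each repair step so that the additive snap error never accumulates beyond a single geometric series, is the delicate bookkeeping needed to pin down the constant $77$ in the stated bound.
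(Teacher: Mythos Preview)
Your approach is substantially more complicated than the paper's, and you have not actually completed it: you yourself flag the cascade that arises when $k>\ell$ as ``delicate bookkeeping needed to pin down the constant $77$,'' but you never carry it out. In particular, you do not verify that the repaired pieces $u'\to h$ and $h\to x$ are hub-net-respecting with respect to \emph{every} town at \emph{every} level (the definition quantifies over all $\ell$ and all $T\in\mathcal{T}_\ell$, not just the one town you are currently repairing), nor do you show that the recursive repairs contribute only a single geometric series rather than a product of $(1+\tfrac{3}{2}\eps)$ factors. As written this is a sketch, not a proof.

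The paper sidesteps all of this with a much cleaner, non-recursive construction: after making $P$ net-respecting via \Cref{lem:net-resp}, it processes \emph{every} connection $uz$ of $P^N$ (not just the violating ones), finds the index $k$ with $d_X(u,z)\in(r_k,r_{k+1}]$, takes a hub $x\in H_k$ with $d_X(u,x)+d_X(x,z)\le(1+\tfrac{3}{2}\eps)d_X(u,z)$ from \Cref{lem:HubHierarchy}, and replaces $uz$ by the two connections $ux$ and $xz$. That is the entire construction---one pass, one hub per connection, no recursion, no net-snap. The cost bound is then immediate: $(1+\tfrac{3}{2}\eps)(1+60\eps)<1+77\eps$. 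The remainder of the proof is a verification that each resulting connection $ux$ crossing a town boundary has the endpoint \emph{inside} the town equal to the original net point $u\in N_k$ (a short case analysis rules out $x\in T$, $u\notin T$), and that $k\ge\max\{\ell,j\}$; the latter uses the small claim $d_X(u,x)\le(1+\tfrac{3}{4}\eps)d_X(u,z)$, which is where the constant $\tfrac{3}{4}$ in \Cref{def:hub-net-resp} comes from. The hierarchies of $\{N_k\}$ and $\{H_k\}$ then give the required membership at level $\max\{\ell,j\}$ for free. Your separate ``net-snap'' step is unnecessary precisely because $P^N$ is already net-respecting, so $u\in N_k$ automatically.
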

\begin{proof}
Given a walk $P$, we begin by constructing a net-respecting walk $P^N$ using \Cref{lem:net-resp}.
Next, we construct a walk $P^{HN}$ using the following procedure:
iteratively consider each connection $uz$ of $P^N$ and let $k$ be the index for which $d_X(u,z)\in(r_k,r_{k+1}]$.
As $r_{k+1}=(1+\sigma)r_k<(2+\eps)r_k$, according to \Cref{lem:HubHierarchy} there is a hub $x\in H_k$ such that $d_X(u,x)+d_X(x,z)\leq(1+\frac{3}{2}\eps)d_X(u,z)$.
Replace the connection $uz$ by $ux$ and $xz$.

To prove that $P^{HN}$ is hub-net-respecting, consider a town $T\in\mathcal{T}_\ell$ on some level $\ell$.
Let $ux$ be any connection of $P^{HN}$, which was obtained from a connection $uz$ of $P^N$ with $d_X(u,z)\in(r_k,r_{k+1}]$ for some $k$ according to the above procedure, so that $x\in H_k$ and $u\in N_k$ as~$P^N$ is net-respecting.
Assume that one endpoint of $ux$ lies in $T$ and the other outside of $T$, which by
\Cref{lem:townproperties} implies that $d_X(u,x)> r_\ell$.

\noindent
\begin{minipage}{\textwidth} 
\setlength{\parindent}{1.5em}
\begin{wrapfigure}{r}{0.2\textwidth}
	\begin{center}
		\vspace{-20pt}
		\includegraphics[width=0.8\linewidth]{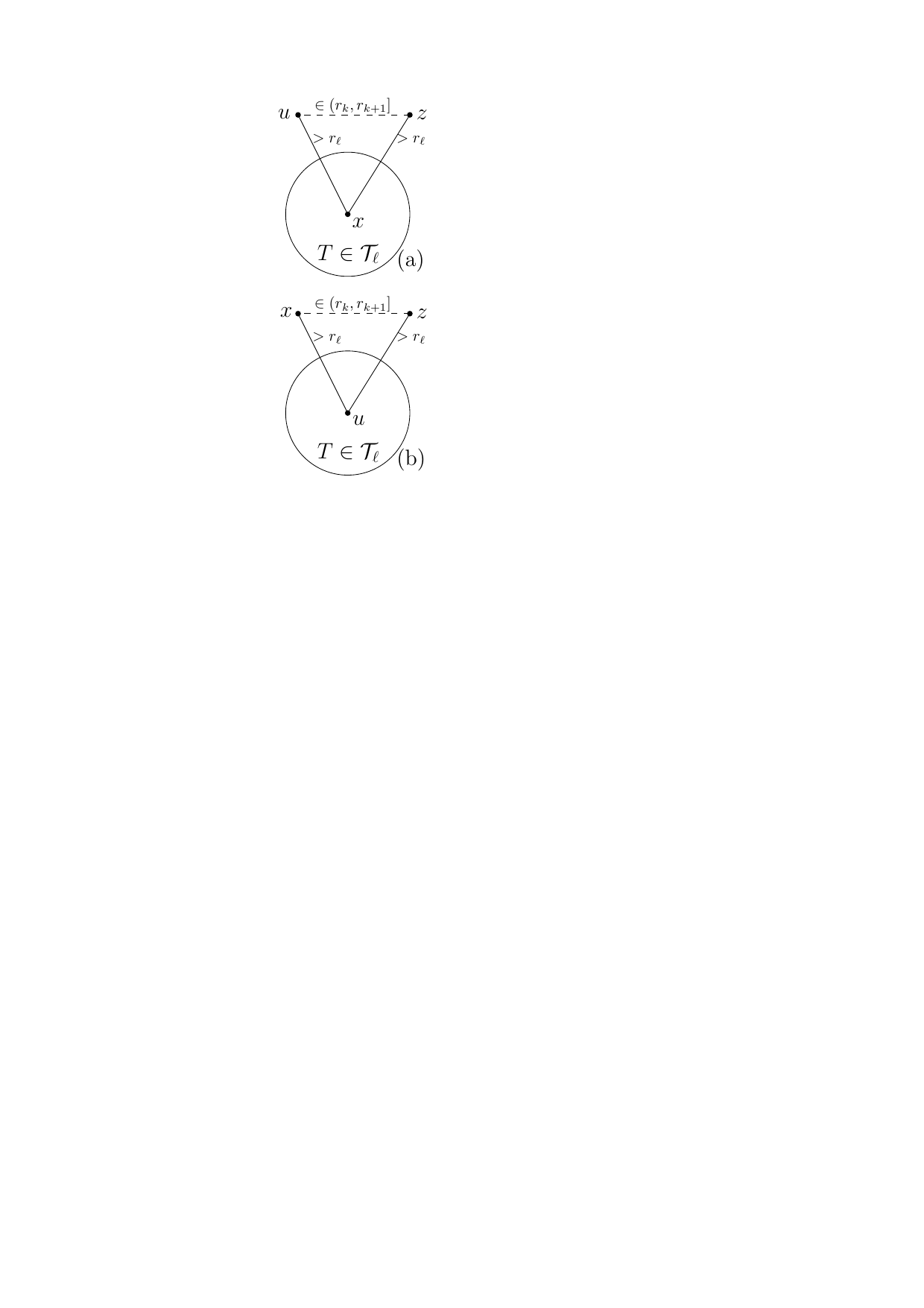}
		\vspace{-5pt}
	\end{center}
	\vspace{-15pt}
\end{wrapfigure}
Assume 
first that $x\in T$ and~$u\notin T$ (illustration (a) on the right).
Since $x\in H_k$ but $T$ does not contain any hub of $H_\ell$ according to \Cref{def:towns}, the fact that the hubs form a hierarchy implies $\ell\geq k+1$.
This means that~$z$ cannot lie in $T$, as otherwise $d_X(u,z)>r_\ell\geq r_{k+1}$ by \Cref{lem:townproperties}, contradicting $d_X(u,z)\in(r_k,r_{k+1}]$.
Hence~$z\notin T$, which again by \Cref{lem:townproperties} now implies that $d_X(x,z)> r_\ell$.
But then $d_X(u,x)+d_X(x,z)>2r_\ell\geq 2r_{k+1}$ while at the same time $d_X(u,x)+d_X(x,z)\leq(1+\frac{3}{2}\eps)d_X(u,z)\leq (1+\frac{3}{2}\eps)r_{k+1}$, a contradiction (as~$\eps\leq\frac{1}{6}$).

Thus we can conclude that $u\in T$ and $x\notin T$ (illustration (b) on the right).
Now, it cannot be that $z\in T$, since by \Cref{lem:townproperties} this would mean that $2 d_X(u,z)\leq 2r_\ell< d_X(u,x)+d_X(x,z)\leq(1+\frac{3}{2}\eps)d_X(u,z)$, which is a contradiction. 
Therefore, $z\notin T$ and we get $d_X(u,z)>r_\ell$ again by \Cref{lem:townproperties}, so that $k\geq\ell$ as $d_X(u,z)\in(r_k,r_{k+1}]$.
Since the hubs form a hierarchy, this means that $x\in H_k\subseteq H_\ell$
\end{minipage}

We have established that $u\in N_k$ and $x\in H_k$, but we still need to show that $k\geq\max\{\ell,j\}$ where~$j$ is the index for which $d_X(u,x)\in((1+\frac{3}{4}\eps)r_j,(1+\frac{3}{4}\eps)r_{j+1}]$.
Indeed, this will establish that $P^{HN}$ is hub-net-respecting, since the nets and hubs form a hierarchy and thus $u$ and $x$ will be a net point and a hub, respectively, on level $\max\{\ell,j\}$.
We already know that $k\geq \ell$, and thus it suffices to show that $k\geq j$.
For this we first prove the following useful property, which we will reuse later on.

\begin{claim}\label{clm:dist-u-x}
Given $d_X(u,x)+d_X(x,z)\leq(1+\frac{3}{2}\eps)d_X(u,z)$ we have $d_X(u,x)\leq (1+\frac{3}{4}\eps)d_X(u,z)$.
\end{claim}
\begin{proof}
By the triangle inequality we get $d_X(x,z)\geq d_X(u,x)-d_X(u,z)$.
Together with the given inequality we obtain 
\[\textstyle
d_X(u,x)\leq(1+\frac{3}{2}\eps)d_X(u,z)-d_X(x,z)\leq (1+\frac{3}{2}\eps)d_X(u,z)-(d_X(u,x)-d_X(u,z))~,
\]
which we can solve for $d_X(u,x)$ to get $d_X(u,x)\leq (1+\frac{3}{4}\eps)d_X(u,z)$.
\end{proof}

By this claim, since $d_X(u,z)\leq r_{k+1}$ we have $d_X(u,x)\leq (1+\frac{3}{4}\eps)r_{k+1}$, which implies $j\leq k$, as required.


Finally, to bound the length of $P^{HN}$, recall that $w(P^N)\leq(1+60\eps)w(P)$ according to \Cref{lem:net-resp}.
Since each connection of $P^{N}$ is replaced by a detour through a hub that is at most a factor of $1+\frac{3}{2}\eps$ longer, the length of $P^{HN}$ can be bounded by $w(P^{HN})\leq (1+\frac{3}{2}\eps)w(P^N)\leq (1+\frac{3}{2}\eps)(1+60\eps)w(P)< (1+77\eps)w(P)$, as $\eps\leq\frac{1}{6}$. 
\end{proof}

\subsection{The divide step}

Let $\mathcal{T}_i$ be the set of towns for the shortest path cover~$H_i$, as given by \Cref{def:towns}.
Roughly speaking, the next lemma shows that the metric has bounded doubling dimension if the number of towns of every $\mathcal{T}_i$ intersecting each ball of level~$i$ is bounded.
However, we will focus our attention on the metric induced by the terminal set $K$ lying in some given town~$T\in\mathcal{T}_i$ (recall that we are solving the subset TSP problem, see \Cref{def:subsetTSP}).
The upper bound~$q$ is then on the number of towns for each level~$j$, that contain a terminal and intersect a ball corresponding to level~$j$ inside $T$.
Here we let $q$ be arbitrary, but we will choose a specific value for~$q$ as a function of the highway dimension and $\eps$ for our purposes later on.

\begin{lemma}\label{lem:towns-w-small-dd}
Let $K$ be a set of terminals in the metric $X$.
Assume that for every $j\in\mathbb{N}_0$ and $v\in X$, the ball $B_v^j=B_X(v,(2+4\eps)r_j)$ intersects at most~$q$ towns of~$\mathcal{T}_j$ containing a terminal, i.e., 
\[
|\{T\in\mathcal{T}_j\mid (T\cap B_v^j)\neq\emptyset\ \land\ (T\cap K)\neq\emptyset\}|\leq q.
\]
Then the metric induced by $K$ has doubling dimension $O\left(\log(q)+\frac{1}{\eps}\log(\frac{h(\eps)}{\eps})\right)$.
\end{lemma}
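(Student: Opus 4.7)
I plan to bound the doubling dimension of the induced metric $(K,d_X|_K)$ by showing that every ball $B_K(v,R)=B_X(v,R)\cap K$ can be partitioned into at most $q\cdot f^{O(1/\eps)}$ pieces each of diameter at most $R/2$, where $f=O\!\bigl((\tfrac{1}{\eps}\log(\tfrac{1}{\eps})\,h(\eps))^{2}\bigr)$ is the hub bound from Lemma~\ref{lem:HubHierarchy}. Since any diameter-$(R/2)$ subset of $K$ fits into a $K$-ball of radius $R/2$ centered at any of its points, such a partition yields a cover of $B_K(v,R)$ by $q\cdot f^{O(1/\eps)}$ balls of radius $R/2$, and taking logarithms yields the stated doubling dimension. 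The strategy is to iteratively peel off terminals lying in towns (which are small by Lemma~\ref{lem:townproperties}) while recursing on the sprawl through a bounded number of hub neighborhoods whose radii shrink by a constant factor $1-\Theta(\eps)$ per step.

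\textbf{Single-scale decomposition.} For a current ball $B_X(v_k,R_k)\cap K$ (initially $v_0=v$, $R_0=R$), I choose a level $j_k$ such that $r_{j_k}\in[R_k/(2+4\eps),\,R_k/(2+\eps))$. Such an $r_{j_k}$ exists in the geometric sequence $r_j=(1+\sigma)^j$ with $\sigma=\eps/(4+3\eps)$, because the endpoint ratio $(2+4\eps)/(2+\eps)=1+\tfrac{3\eps}{2+\eps}$ is strictly larger than $1+\sigma$ for every $\eps\in(0,\tfrac16]$. This choice guarantees $B_X(v_k,R_k)\subseteq B_{v_k}^{j_k}$, so the lemma's assumption provides at most $q$ towns of $\mathcal{T}_{j_k}$ containing a terminal and intersecting the ball; and it also guarantees $r_{j_k}\leq R_k/(2+\eps)\leq R/(2+\eps)\leq R/2$, so each such town has diameter $\leq r_{j_k}\leq R/2$ by Lemma~\ref{lem:townproperties}. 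The remaining terminals lie in the sprawl of $H_{j_k}$ and each is within distance at most $(2+\eps)r_{j_k}$ of some hub in $H_{j_k}$. By Lemma~\ref{lem:HubHierarchy}, the number of hubs of $H_{j_k}$ at distance at most $(2+\eps)r_{j_k}$ from $B_{v_k}^{j_k}$ is bounded by $f$, so the sprawl terminals are contained in at most $f$ hub neighborhoods $B_X(y,R_{k+1})\cap K$ with $R_{k+1}=(2+\eps)r_{j_k}$.

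\textbf{Iteration and accounting.} The recursion radius satisfies $R_{k+1}=(2+\eps)r_{j_k}\leq R_k\cdot\tfrac{(2+\eps)(1+\sigma)}{2+4\eps}=R_k\cdot(1-\Theta(\eps))$, so each hub neighborhood is strictly smaller than its parent by a fixed constant factor. I then recurse on every hub neighborhood by the same decomposition; after $D=O(1/\eps)$ levels the radius has shrunk below $R/4$, so each remaining hub neighborhood already has diameter $\leq R/2$ and becomes a final piece. Crucially, only hub neighborhoods are recursed upon — towns are finalized immediately as diameter-$(R/2)$ pieces at each step — so the factor $q$ appears at most once per node of the recursion tree and does not compound with depth. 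Summing $q$ town pieces at every node plus $f^D$ final hub pieces along a recursion tree of branching factor $f$ and depth $D$, the total number of pieces is
\[
q\cdot\bigl(1+f+f^{2}+\cdots+f^{D-1}\bigr)+f^{D}=O\!\bigl((q+1)\cdot f^{O(1/\eps)}\bigr),
\]
and taking logarithms yields doubling dimension $O\!\bigl(\log q+\tfrac{1}{\eps}\log(h(\eps)/\eps)\bigr)$, as required.

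\textbf{Main obstacle.} The delicate part is tracking the two inequalities $r_{j_k}\geq R_k/(2+4\eps)$ (to keep the ball inside $B_{v_k}^{j_k}$) and $r_{j_k}<R_k/(2+\eps)$ (to force a strict radius decrease) simultaneously at every recursion level, which in turn requires the specific choice of $\sigma=\eps/(4+3\eps)$ made in Section~\ref{sec:TSP} so that the geometric sequence $\{r_j\}$ is fine enough to always contain a term in this narrow interval. A secondary subtlety is that the recursion's hub neighborhoods $B_X(y,R_{k+1})$ may extend outside the original ball $B_K(v,R)$ and are centered at hubs (which may lie in $X\setminus K$); this is harmless for the upper bound on the cover size, but must be checked carefully so that the $O(1/\eps)$-bound on the recursion depth holds uniformly regardless of where the sub-balls sit in the ambient metric.
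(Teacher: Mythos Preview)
Your proposal is correct and follows essentially the same approach as the paper's proof: peel off at most $q$ towns at each scale, cover the remaining sprawl by $f=O\bigl((\tfrac{1}{\eps}\log\tfrac{1}{\eps}\,h(\eps))^2\bigr)$ hub neighborhoods via Lemma~\ref{lem:HubHierarchy}, and recurse for $O(1/\eps)$ rounds until the radius halves. One small imprecision: your shrinkage bound $R_{k+1}\le R_k\cdot\tfrac{(2+\eps)(1+\sigma)}{2+4\eps}$ requires taking the \emph{smallest} $j_k$ with $r_{j_k}\ge R_k/(2+4\eps)$ (so that $r_{j_k}<(1+\sigma)R_k/(2+4\eps)$), which you should state explicitly---an arbitrary $j_k$ in the interval could sit arbitrarily close to $R_k/(2+\eps)$ and give no uniform contraction.
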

\begin{proof}
To bound the doubling dimension of $K$ we need to show that for any $\ell>0$ and $v\in K$, the ball $B_{2\ell}=B_X(v,2\ell)\cap K$ of radius $2\ell$ around $v$ restricted to $K$ can be covered by respectively bounded number of balls of radius~$\ell$.
For this, let $j\in\mathbb{N}_0$ be the smallest index for which $B_X(v,2\ell)\subseteq B_v^j$.
Note that this means that $2\ell>(2+4\eps)r_{j-1}>2(1+\sigma)r_{j-1}=2r_j$ as $\sigma=\frac{\eps}{4+3\eps}$, and so~$\ell>r_j$.
By~\Cref{lem:townproperties}, each town in $\mathcal{T}_j$ has diameter at most $r_j$ and can thus be covered by a ball of radius~$\ell$.
To cover the terminals of $B_{2\ell}$ lying in towns of $\mathcal{T}_j$ it thus suffices to use at most~$q$ balls of radius $\ell$, which can be centered at terminals.

We still need to cover any terminals in $B_{2\ell}$ that do not lie in a town of $\mathcal{T}_j$.
By \Cref{def:towns} these terminals belong to the sprawl, and by \Cref{lem:townproperties} the sprawl can be covered by balls of radius~$(2+\eps)r_j$ around hubs of $H_j$.
To cover these terminals in $B_{2\ell}\subseteq B_v^j$, we only need to utilize hubs at distance at most $(2+\eps)r_j$ from $B_v^j$ as ball centers, and by \Cref{lem:HubHierarchy} there are $O((\frac{1}{\eps}\log(\frac{1}{\eps})\cdot h(\eps))^2)$ such hubs.
The radius $(2+\eps)r_j$ of these balls is larger than $\ell$, but since $\eps\in(0,1]$ we have $(2+\eps)r_j=(2+\eps)(1+\sigma)r_{j-1}< (2+4\eps)r_{j-1}$.
For each of these balls we may therefore cover the terminals in the towns of $\mathcal{T}_{j-1}$ using at most $q$ balls of radius $\ell$ centered at terminals.
The terminals in the sprawl for $H_{j-1}$ lying in one of these balls of radius~$(2+4\eps)r_{j-1}$ can in turn be covered by $O((\frac{1}{\eps}\log(\frac{1}{\eps})\cdot h(\eps))^2)$ balls of radius~$(2+4\eps)r_{j-2}$.
We may repeat this until the balls are small enough, i.e., of radius at most $\ell$.
However, note that the balls covering the sprawl are not necessarily centered at terminals, but to bound the doubling dimension of $K$, ultimately we need the balls of half the radius covering $B_{2\ell}$ to have terminal centers.
We therefore repeat this process until we reach a level $j'$ for which~$(2+4\eps)r_{j'}\leq \ell/2$.
At this point we may pick one terminal from each such ball and cover this ball by a ball of radius $2(2+4\eps)r_{j'}\leq\ell$ centered at the chosen terminal.

Given that $j'$ is the largest index for which $(2+4\eps)r_{j'}\leq \ell/2$ while $j$ is the smallest index for which $(2+4\eps)r_j\geq 2\ell$, the number of levels for which we repeat the above process is $\lceil\log_{1+\sigma}(4)\rceil\leq \lceil\frac{10}{\eps}\rceil$ as~$\eps\leq \frac{1}{6}$.
Since on each level we cover the terminals in towns using $q$ balls of radius $\ell$ and we then recurse on $O((\frac{1}{\eps}\log(\frac{1}{\eps})\cdot h(\eps))^2)$ balls of the next level, by a simple recursive formula the total number of balls of radius~$\ell$ needed to cover the terminals of $B_{2\ell}$ is at most
\[
{\textstyle
O((\frac{1}{\eps}\log(\frac{1}{\eps})\cdot h(\eps))^{2\lceil\frac{10}{\eps}\rceil})
}
+\sum_{k=0}^{\lceil\frac{10}{\eps}\rceil-1} 
{\textstyle
O((\frac{1}{\eps}\log(\frac{1}{\eps})\cdot h(\eps))^{2k})\cdot q
\ =\ (\frac{1}{\eps}\log(\frac{1}{\eps})\cdot h(\eps))^{O(1/\eps)} \cdot q\ .
}
\]
Thus the doubling dimension is $O\left(\log(q)+\frac{1}{\eps}\log(\frac{h(\eps)}{\eps})\right)$.
\end{proof}

Since $H_L$ is empty, on the highest level $L$, $\cT_L$ is a singleton town which is simply the entire metric~$X$. Hence \Cref{lem:towns-w-small-dd} implies that if no ball $B_X(v,(2+4\eps)r_j)$ on any level~$j$ intersects more than $q$ towns of $\mathcal{T}_j$ containing terminals, then the metric induced by the terminals~$K$ has bounded doubling dimension.
In this case we can invoke \Cref{thm:TSP-dd} to obtain a near-optimal Subset \TSP solution for the metric~$X$, by solving the \TSP instance induced by $K$.
The runtime will be as desired if we carefully choose $q$ as a function of the highway dimension and~$\eps$.

Otherwise, we identify a minimum index $i\in\mathbb{N}_0$ for which more than $q$ towns with terminals intersect the corresponding ball $B_v=B_X(v,(2+4\eps)r_i)$ of this level $i$.
That is, if 
\[
\mathcal{U}'=\{T\in\mathcal{T}_i\mid (T\cap B_v)\neq\emptyset\ \land\ (T\cap K)\neq\emptyset\}
\]
then $i$ is the smallest index for which $|\mathcal{U}'|> q$.
Furthermore, we will exclude a town that contains a certain net point of a higher level from this bound.
More concretely, let $y_N\in N_j$ be a net point of level $j=i+\lceil\log_{1+\sigma}(\frac{6}{\eps}+8)\rceil$ lying in the ball $B_X(v,(3+4\eps)r_i)$. 
Note that there is at most one such point since $N_j$ is an $\eps r_j$-net and for this value of $j$ we get $\eps r_{j}\ge\eps\cdot(\frac{6}{\eps}+8)\cdot r_{i}=2(3+4\eps)r_{i}$, i.e., the distances between net points of $N_j$ are larger than the diameter of the ball. 
Observe that the ball $B_X(v,(3+4\eps)r_i)$ contains $B_v$ but also all towns intersecting the ball~$B_v$, since the latter has radius $(2+4\eps)r_i$ and each town has diameter at most $r_i$ according to \Cref{lem:townproperties}.
Now we let $\mathcal{U}$ be the set of all towns from $\mathcal{U}'$ except the town containing $y_N$ (if any),
so that $|\mathcal{U}|>q-1$.

By minimality of $i$ we may apply \Cref{lem:towns-w-small-dd} to each terminal set contained in a town of $\mathcal{U}$, to conclude that every such town has bounded doubling dimension.
The idea is to solve each of the towns of $\mathcal{U}$ using \Cref{thm:TSP-dd}, and then stitch these solutions together with a recursively computed solution to the remaining instance. 
The seam at which these solutions will be stitched together is given by a set $I\subseteq X$, which we call the \emph{interface} and contains all hubs in $H_j$ at distance at most $(2+\eps)r_j$ from the ball~$B_v$, for every $j\in\mathbb{N}_0$ with $i\leq j\leq i+\log_{1+\sigma}(\frac{6}{\eps}+8)$. Formally, 
\[\textstyle
\mbox{\emph{interface}\qquad}I=\left\{ x\in H_{j}\cap B_{X}\left(B_{v},(2+\eps)\cdot r_{j}\right)\,\mid\,i\leq j\leq i+\log_{1+\sigma}(\frac{6}{\eps}+8)\right\} 
\]
We call a walk $P$ \emph{interface-respecting} if any connection $uz$ of $P$ from a town $T\in\mathcal{U}$ to the outside of~$T$ connects to an interface point: for a connection $uz$ where $u\in T$ and $z\notin T$ it must be that~$z\in I$.

\begin{lemma}\label{lem:HN-implies-I-resp}
Any hub-net-respecting walk is also interface-respecting.
\end{lemma}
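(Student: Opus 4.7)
The plan is to take an arbitrary connection $uz$ of a hub-net-respecting walk $P^{HN}$ with $u\in T\in\mathcal{U}$ and $z\notin T$, and argue that $z\in I$. Since $T\in\mathcal{T}_i$, applying \Cref{def:hub-net-resp} with $\ell=i$ yields $u\in N_k$ and $z\in H_k$, where $k=\max\{i,j\}$ for the integer $j$ with $d_X(u,z)\in((1+\tfrac{3}{4}\eps)r_j,(1+\tfrac{3}{4}\eps)r_{j+1}]$. In particular $k\geq i$, so it remains to verify the two conditions defining membership in $I$: namely (i) $z\in B_X(B_v,(2+\eps)r_k)$, and (ii) $k\leq i+\log_{1+\sigma}(\tfrac{6}{\eps}+8)$.

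For (i), note that since $T\cap B_v\neq\emptyset$ and $\diam(T)\leq r_i$ by \Cref{lem:townproperties}, we have $d_X(u,B_v)\leq r_i$. The key arithmetic identity $(1+\tfrac{3}{4}\eps)(1+\sigma)=1+\eps$, which holds by exactly because $\sigma=\tfrac{\eps}{4+3\eps}$, yields $d_X(u,z)\leq(1+\eps)r_k$ in both sub-cases $j<i$ (where $k=i$ and $r_{j+1}\leq r_i$) and $j\geq i$ (where $k=j$ and $r_{j+1}=(1+\sigma)r_k$). Combining with $r_i\leq r_k$ gives $d_X(z,B_v)\leq d_X(u,z)+r_i\leq(2+\eps)r_k$.

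For (ii), set $j_N=i+\lceil\log_{1+\sigma}(\tfrac{6}{\eps}+8)\rceil$, i.e., the level of $y_N$. Suppose for contradiction $k\geq j_N$. Since $T$ intersects $B_v$ and $\diam(T)\leq r_i$, one has $T\subseteq B_X(v,(3+4\eps)r_i)$, and therefore $u\in N_k\cap B_X(v,(3+4\eps)r_i)\subseteq N_{j_N}\cap B_X(v,(3+4\eps)r_i)$. By the net-spacing argument preceding the definition of $\mathcal{U}$, this intersection is either empty or equals $\{y_N\}$, so in particular $u=y_N$. But $T\in\mathcal{U}$ was chosen precisely to exclude the town containing $y_N$, contradicting $u\in T$. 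Hence $k<j_N$, and since $k$ is an integer, $k\leq j_N-1\leq i+\log_{1+\sigma}(\tfrac{6}{\eps}+8)$. Combining (i) and (ii) with $z\in H_k$ gives $z\in I$, as required.

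The main obstacle is point (ii): one has to notice that the seemingly ad hoc exclusion of the single town containing $y_N$ from $\mathcal{U}$ is precisely what forces the hub level $k$ of the outgoing connection to lie below the interface cutoff, via the net-hierarchy half of the hub-net-respecting property. Once this is seen, the rest is essentially the bookkeeping in (i), made clean by the identity $(1+\tfrac{3}{4}\eps)(1+\sigma)=1+\eps$ baked into the choice of $\sigma$.
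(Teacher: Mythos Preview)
Your proof is correct and follows essentially the same approach as the paper's own argument: apply \Cref{def:hub-net-resp} with $\ell=i$ to get $u\in N_k$ and $z\in H_k$, then use the exclusion of the town containing $y_N$ together with the net hierarchy to bound $k\leq i+\log_{1+\sigma}(\tfrac{6}{\eps}+8)$, and use the identity $(1+\tfrac{3}{4}\eps)(1+\sigma)=1+\eps$ combined with $d_X(u,B_v)\leq r_i$ to obtain $d_X(z,B_v)\leq(2+\eps)r_k$. The only cosmetic differences are that the paper treats the level bound before the distance bound and does not split the distance estimate into the two sub-cases $j<i$ and $j\geq i$.
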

\begin{proof}
To show that a hub-net-respecting walk $P^{HN}$ is also interface-respecting, consider the case when $T\in\mathcal{U}$ in \Cref{def:hub-net-resp}, which in particular means that $\ell=i$.
For a connection $ux$ of $P^{HN}$ with $d_X(u,x)\in((1+\frac{3}{4}\eps)r_j,(1+\frac{3}{4}\eps)r_{j+1}]$, $u\in T$, and $x\notin T$, the endpoint $u$ lies in the net $N_k$ with $k=\max\{i,j\}$.
Thus if $j\geq i+\lceil\log_{1+\sigma}(\frac{6}{\eps}+8)\rceil$ then~$u=y_N$, as the nets are hierarchical and $y_N\in N_{i+\lceil\log_{1+\sigma}(\frac{6}{\eps}+8)\rceil}\supseteq N_k$ is the only net point of $N_k$ in the ball $B_X(v,(3+4\eps)r_i)$, which contains~$T$.
However this town is excluded from $\mathcal{U}$, which contradicts $T\in\mathcal{U}$.
Therefore it must be that $j\leq i+\log_{1+\sigma}(\frac{6}{\eps}+8)$, and thus $i\leq k\leq i+\log_{1+\sigma}(\frac{6}{\eps}+8)$.

By \Cref{def:hub-net-resp} we have $x\in H_k$, and we need to show that $x\in I$.
By definition of~$\mathcal{U}$, the town~$T$ intersects the ball~$B_v$, and by \Cref{lem:townproperties} the diameter of $T$ is at most~$r_i$, which means that $u$ is at distance at most $r_i$ from~$B_v$.
Also, by definition of $\sigma$ we have $1+\sigma=1+\frac{\eps}{4+3\eps}=\frac{1+\eps}{1+\frac{3}{4}\eps}$
and thus $(1+\frac{3}{4}\eps)r_{j+1}=(1+\eps)r_j$ as $r_{j+1}=(1+\sigma)^{j+1}$.
Thus we can conclude that the distance of $x$ from $B_v$ is at most $d_X(B_v,u)+d_X(u,x)\leq r_i+(1+\frac{3}{4}\eps)r_{j+1}\leq (2+\eps)r_{k}$ as $k=\max\{i,j\}$.
By definition of the interface, this means that the hub $x\in H_k$ lies in~$I$ given that $i\leq k\leq i+\log_{1+\sigma}(\frac{6}{\eps}+8)$.
\end{proof}

We will compute a hub-net-respecting solution that for each town in $\mathcal{U}$ connects to the interface~$I$ using only two connections.
To achieve this, for each $T\in\mathcal{U}$ we produce one \TSP instance $(X_T=((T\cap K)\cup\{p\}),d_T)$, which is given by the metric induced by the terminals $T\cap K$ of the town in addition to one new vertex~$p$. 
The distance from any $t\in T\cap K$ to $p$ is the distance to the interface~$I$. 
More concretely, $d_T(s,t)=d_X(s,t)$ for any $s,t\in T\cap K$, and if $\chi_u\in I$ is the closest interface hub to any~$u\in X$ according to the distance function $d_X$, then $d_T(t,p)=d_X(t,\chi_t)$.
Before arguing that $(X_T,d_T)$ is indeed a metric, we should make sure that the distances to $p$ are well-defined.
This follows from the next lemma (given $q\geq 2$).
In fact, as we will need this later on, the lemma also gives an upper bound on the distance from any $u\in T$ to~$\chi_u$. 

\begin{lemma}\label{clm:close-interface-points}
If $|\mathcal{U}|\geq 2$ then $I\neq\emptyset$ and the closest interface point $\chi_u\in I$ to any $u\in T$ where $T\in\mathcal{U}$ lies at distance $d_X(u,\chi_u)\leq(3+8\eps)r_i$.
\end{lemma}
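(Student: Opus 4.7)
My plan is to prove both parts of the lemma in a single construction: for every $u$ in any town $T\in\mathcal{U}$, I will exhibit a hub $x\in I$ with $d_X(u,x)\leq(3+8\eps)r_i$. Existence of such an $x$ immediately yields both $I\neq\emptyset$ and the claimed bound on $d_X(u,\chi_u)$. The key tool is the $(r_j,\tfrac{3}{2}\eps)$-shortest path cover $H_j$ guaranteed by \Cref{lem:HubHierarchy}, applied to a carefully chosen pair with one endpoint equal to $u$.

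The first step is to pick a \emph{partner} $z\in B_v$ for $u$ satisfying $r_i<d_X(u,z)\leq(3+4\eps)r_i$. Since $T\cap B_v\ne\emptyset$ and $\diam(T)\leq r_i$ by \Cref{lem:townproperties}, we have $d_X(u,v)\leq r_i+(2+4\eps)r_i=(3+4\eps)r_i$. Thus, if $d_X(u,v)>r_i$ I set $z=v$. Otherwise, using $|\mathcal{U}|\geq 2$, I pick another town $T'\in\mathcal{U}\setminus\{T\}$ and choose $z\in T'\cap B_v$; the separation property of \Cref{lem:townproperties} forces $d_X(u,z)>r_i$ (as $u\in T$, $z\notin T$), while the triangle inequality gives the same upper bound $(3+4\eps)r_i$. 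I then let $j\ge i$ be the smallest index with $d_X(u,z)\leq(2+\tfrac{3}{2}\eps)r_j$, which automatically also gives $d_X(u,z)>r_j$, and invoke the SPC property of $H_j$ to obtain $x\in H_j$ with $d_X(u,x)+d_X(x,z)\leq(1+\tfrac{3}{2}\eps)d_X(u,z)$.

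Applying (the $\tfrac{3}{2}\eps$-version of) \Cref{clm:dist-u-x} to this inequality bounds $d_X(u,x)\leq(1+\tfrac{3}{4}\eps)d_X(u,z)\leq(1+\tfrac{3}{4}\eps)(3+4\eps)r_i$, and a short expansion shows this is at most $(3+8\eps)r_i$ for $\eps\leq\tfrac{1}{6}$. Also, $r_j\leq(1+\sigma)d_X(u,z)/(2+\tfrac{3}{2}\eps)$, so $j\leq i+\log_{1+\sigma}(\tfrac{6}{\eps}+8)$, putting $j$ in the allowed range for the interface.

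The main obstacle is verifying the other interface condition, $d_X(x,B_v)\leq(2+\eps)r_j$. The trick is to average distances to two witnesses $a,b\in B_v$, so that $d_X(x,B_v)\leq\tfrac{1}{2}(d_X(x,a)+d_X(x,b))$. When $u\in B_v$ (automatic in the second case, and holds in the first whenever $d_X(u,v)\leq(2+4\eps)r_i$), I take $a=u$, $b=z$ and use the SPC inequality directly: $d_X(x,B_v)\leq\tfrac{1+3\eps/2}{2}d_X(u,z)\leq\tfrac{(1+3\eps/2)(2+3\eps/2)}{2}r_j$, which simplifies to $(2+\eps)r_j$ for $\eps\leq\tfrac{1}{6}$. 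The delicate case is when $z=v$ but $u\notin B_v$, i.e., $d_X(u,v)>(2+4\eps)r_i$: here I substitute $b=t$ for some $t\in T\cap B_v$, exploit $d_X(u,t)\leq r_i$ and the triangle inequality to get $d_X(x,t)\leq d_X(x,u)+r_i$, and add to the SPC inequality to obtain $d_X(x,v)+d_X(x,t)\leq(1+\tfrac{3}{2}\eps)(2+\tfrac{3}{2}\eps)r_j+r_i$. The required bound then reduces to $r_j\geq r_i/(2-\tfrac{5}{2}\eps-\tfrac{9}{4}\eps^2)$, which is comfortably satisfied because in this case $r_j\geq d_X(u,v)/(2+\tfrac{3}{2}\eps)>r_i$. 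This last verification is where the specific constants in \Cref{def:HD} and in the definition of $I$ have to line up, and it is the step that requires the most care.
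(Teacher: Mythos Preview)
Your proof is correct, and the overall structure matches the paper's: choose a partner $z\in B_v$ with $d_X(u,z)\in(r_i,(3+4\eps)r_i]$ (same two cases), apply the $(r_j,\tfrac{3}{2}\eps)$-SPC property of $H_j$ at an appropriate level $j$ to obtain a hub $x$, and then verify $x\in I$ together with the distance bound. The one substantive difference is how you pick $j$, and this difference is exactly what forces your case analysis.

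The paper takes $j$ to be the unique index with $d_X(u,z)\in(r_j,r_{j+1}]$. Because $\sigma$ was chosen so that $(1+\tfrac{3}{4}\eps)(1+\sigma)=1+\eps$, \Cref{clm:dist-u-x} (applied symmetrically to $z$) gives $d_X(x,z)\le(1+\tfrac{3}{4}\eps)\,r_{j+1}=(1+\eps)r_j$. Since $z\in B_v$ in \emph{both} partner cases, this single line yields $d_X(x,B_v)\le(1+\eps)r_j<(2+\eps)r_j$, and the interface membership follows with no averaging and no case split on whether $u\in B_v$. Your choice of $j$ as the smallest index with $d_X(u,z)\le(2+\tfrac{3}{2}\eps)r_j$ only gives the weaker $d_X(x,z)\le(1+\tfrac{3}{4}\eps)(2+\tfrac{3}{2}\eps)r_j>(2+\eps)r_j$, which is why you need the averaging argument and the separate treatment of the case $u\notin B_v$. (Incidentally, in that last case your final inequality only requires $r_j\ge r_i$, which you already have from $j\ge i$; the stronger bound $r_j>r_i$ via $d_X(u,v)>(2+4\eps)r_i$ is not needed.) So the paper's route is shorter precisely because it exploits the fine scale $r_{j+1}/r_j=1+\sigma$, whereas your coarser choice of level loses that leverage.
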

\begin{proof}
First we will identify a vertex $z\in B_v$ for which $d_X(u,z)\in(r_i,(3+4\eps)r_i]$.
In case $d_X(u,v)>r_i$, recalling that $T$ (and thus $u$) lies in the ball $B_X(v,(3+4\eps)r_i)$, we may use $z=v$.
Otherwise if $d_X(u,v)\leq r_i$, note that $u$ and $v$ cannot lie in different towns of $\mathcal{U}$, since the distance between any such two towns is more than $r_i$ by \Cref{lem:townproperties}.
Given that $|\mathcal{U}|\geq 2$, this means that there exists a town $T'\in\mathcal{U}$ containing neither $u$ nor $v$.
The town~$T'$ intersects the ball $B_v$ of radius $(2+4\eps)r_i$, and thus we may choose a $z\in T'\cap B_v$ for which the distance from~$u$ is at most $d_X(z,v)+d_X(v,u)\leq (2+4\eps)r_i+r_i=(3+4\eps)r_i$.

Given that $d_X(u,z)\in(r_i,(3+4\eps)r_i]$, there is an index $j$ for which $d_X(u,z)\in(r_j,r_{j+1}]$ where $i\leq j\leq i+\log_{1+\sigma}(3+4\eps)$, as $r_j=(1+\sigma)^j$.
%
%
By \Cref{lem:HubHierarchy} there exists a hub $x\in H_j$ on level $j$ such that $d_X(u,x)+d_X(x,z)\leq(1+\frac{3}{2}\eps)d_X(u,z)$.
From \Cref{clm:dist-u-x} we get $d_X(u,x)\leq(1+\frac{3}{4}\eps)d_X(u,z)\leq(1+\frac{3}{4}\eps)r_{j+1}\leq (1+\eps)r_j$ since $1+\sigma=\frac{1+\eps}{1+\frac{3}{4}\eps}$, and analogously $d_X(x,z)\leq (1+\eps)r_j$.
From the latter inequality, as $z\in B_v$ we get $d_X(x,B_v)<(2+\eps)r_j$. 
Therefore $x$ is an interface hub in $I$, because $j\leq i+\log_{1+\sigma}(3+4\eps)< i+\log_{1+\sigma}(\frac{6}{\eps}+8)$, using~$\eps\leq\frac{1}{6}$.
In particular $I\neq\emptyset$.
Moreover, $d_X(u,\chi_u)\leq d_X(u,x)\leq (1+\eps)r_j\leq (1+\eps)(3+4\eps)r_i < (3+8\eps)r_i$, using~$\eps\leq\frac{1}{6}$.
\end{proof}

Next, we argue that the distances $d_T$ of the \TSP instance for $T$ define a metric: 
first off, for any triple of vertices within $T\cap K$ the triangle inequality holds, since the distances are carried over from the input metric.
Next note that, since the hub sets form a hierarchy, i.e., $H_j\subseteq H_i$ for any~$j\geq i$, we have $I\subseteq H_i$, which implies that $\chi_s,\chi_t\notin T$, since $T\in\mathcal{U}$ is a town on level~$i$.
Therefore, if we consider a triple $s,t,p$, where $s,t\in T\cap K$, then $d_T(s,t)=d_X(s,t)< d_X(s,\chi_s)+d_X(\chi_t,t)=d_T(s,p)+d_T(p,t)$ given that the diameter of a town is less than the distance to any vertex outside the town by \Cref{lem:townproperties}.
Finally, $d_T(s,p)=d_X(s,\chi_s)\leq d_X(s,\chi_t)\leq d_X(s,t)+d_X(t,\chi_t)=d_T(s,t)+d_T(t,p)$, where we first used that $\chi_s$ is the closest interface point to $s$, and then the triangle inequality for~$d_X$. By symmetry we also obtain $d_T(t,p)\leq d_T(t,s)+d_T(s,p)$, which shows that $d_T$ fulfills the triangle inequality.

Now, by \Cref{lem:towns-w-small-dd} and by minimality of the level $i$ for which $\mathcal{U}$ is defined, the \TSP instance $(X_T,d_T)$ has doubling dimension $O\left(\log(q)+\frac{1}{\eps}\log(\frac{h(\eps)}{\eps})\right)$,
since adding one vertex to $T\cap K$ can only increase the number of balls needed to cover any subset by at most 1.
We therefore use \Cref{thm:TSP-dd} to obtain a $(1+\eps)$-approximation~$P_T$ for this instance.
Finally, we convert this solution into a walk in the input metric $X$ connecting to the interface $I$.
For this we may assume w.l.o.g.\ 
that the \TSP tour $P_T$ visits every vertex of $X_T$ exactly once (it is w.l.o.g.\ since the approximate \TSP tour $P_T$ does not have to respect any structural properties, e.g., net/hub respecting).
In particular, it visits~$p$ only once and therefore contains exactly two connections $sp$ and $tp$ to $p$ from some $s,t\in T\cap K$.
We then obtain a (not necessarily closed) walk $W_T$ from $P_T$ by starting in $\chi_s$, going to $s$ and following $P_T$ from $s$ to $t$ inside $T\cap K$, and then going from $t$ to~$\chi_t$.
Note that $W_T$ connects to the interface using exactly two connections, has the same length as $P_T$ due to the distance function $d_T$, and visits all terminals of $T$.

\subsection{The conquer step}

In order to obtain a solution to the input instance, apart from the solutions to the towns of $\mathcal{U}$ we also need a solution for the remaining instance.
In particular, the remaining instance is still given by the metric~$X$, but we reduce the terminals to $K'=\{t_{\mathcal{U}}\}\cup K\setminus (\bigcup_{T\in\mathcal{U}} T)$, where $t_{\mathcal{U}}\in K\cap \bigcup_{T\in\mathcal{U}} T$ is an arbitrary fixed terminal 
among those in the towns of $\mathcal{U}$.
Note that the highway dimension of this instance remains the same since we did not change $X$, but now the ball~$B_v$ intersects at most one town of~$\mathcal{T}_i$ containing terminals, namely the one containing~$t_{\mathcal{U}}$. 
Thus this instance can be solved recursively.
Let~$P^{HN}_{K'}$ denote the recursively computed hub-net-respecting $(1+O(\eps))$-approximation for this instance.

We now show how to stitch these walks together in order to obtain a solution to the original instance. This lemma uses the well-known patching technique, which involves adding a minimum spanning tree $\textup{MST}(I)$ on the interface~$I$.

\begin{lemma}\label{lem:approx-sol}
Given a walk $W_T$ for each $T\in\mathcal{U}$ and a hub-net-respecting solution $P^{HN}_{K'}$ for~$K'$, in polynomial time we can construct a hub-net-respecting Subset \TSP solution $P^{HN}$ for $K$ of cost $w(P^{HN})\leq w(P^{HN}_{K'})+(1+77\eps)(\sum_{T\in\mathcal{U}}w(W_T)+2\cdot w(\textup{MST}(I)))$.
\end{lemma}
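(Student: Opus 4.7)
The plan is to patch the three ingredients---$P^{HN}_{K'}$, the town walks $\{W_T\}_{T \in \mathcal{U}}$, and $\textup{MST}(I)$---into a single hub-net-respecting closed walk, using a standard Eulerian-multigraph argument. I set aside $P^{HN}_{K'}$ and attach it only at the very end, so that its cost is not amplified by the factor $(1+77\eps)$ of \Cref{lem:hub-net-resp}. First I would form an auxiliary multigraph $M$ whose edge set consists of all edges of the $W_T$'s (one copy per walk), all edges of $\textup{MST}(I)$, and an $O$-join $J \subseteq \textup{MST}(I)$ chosen to repair degree parities.

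To define $J$, I observe that in $\bigcup_{T} W_T \cup \textup{MST}(I)$ every vertex outside $I$ has even degree, since its only incidences come from traversals of some $W_T$ in which it appears as an internal vertex, while a vertex $\chi \in I$ has degree $k_\chi + \deg_{\textup{MST}(I)}(\chi)$, where $k_\chi$ counts the walks $W_T$ having $\chi$ as an endpoint. Let $O \subseteq I$ be the subset of interface vertices for which this sum is odd; since the total degree in any multigraph is even, $|O|$ is even. I would take $J$ to be the unique $O$-join that lives inside the tree $\textup{MST}(I)$: an edge $e$ of $\textup{MST}(I)$ is placed in $J$ iff removing $e$ separates $\textup{MST}(I)$ into components, one of which contains an odd number of $O$-vertices. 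Because $J \subseteq \textup{MST}(I)$ we have $w(J) \leq w(\textup{MST}(I))$, and adding $J$ flips exactly the parities of the vertices in $O$, making every vertex of $M$ of even degree. Since $\textup{MST}(I)$ spans $I$ and each $W_T$ meets $I$ through its endpoints $\chi_s^T, \chi_t^T$, the multigraph $M$ is also connected, so $M$ admits an Euler tour $W^*$, computable in polynomial time, of weight $w(W^*) = w(M) \leq \sum_{T} w(W_T) + 2\,w(\textup{MST}(I))$.

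Next I would feed $W^*$ into \Cref{lem:hub-net-resp} to obtain a hub-net-respecting closed walk $W^{*HN}$ visiting a superset of the vertices of $W^*$ and of weight $w(W^{*HN}) \leq (1+77\eps)\,w(W^*)$. In particular $W^{*HN}$ visits $t_{\mathcal{U}}$, since $W^*$ traverses every edge of $W_{T_{\mathcal{U}}}$, which itself visits $t_{\mathcal{U}} \in T_{\mathcal{U}} \cap K$. Finally I would form $P^{HN}$ by splicing $P^{HN}_{K'}$ and $W^{*HN}$ at the common vertex $t_{\mathcal{U}}$: start at $t_{\mathcal{U}}$, traverse $P^{HN}_{K'}$ back to $t_{\mathcal{U}}$, and then traverse $W^{*HN}$ back to $t_{\mathcal{U}}$. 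This is a closed walk, it introduces no new connections, and both constituent walks are hub-net-respecting, so $P^{HN}$ itself is hub-net-respecting. It covers $K$ because $P^{HN}_{K'}$ covers $K'$ while $W^{*HN}$ covers $\bigcup_{T \in \mathcal{U}} (T \cap K)$, and $K' \cup \bigcup_{T} (T \cap K) = K$. Summing weights gives the stated bound.

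The main delicate point is the $O$-join estimate, which is precisely what lets us charge at most $2\,w(\textup{MST}(I))$ to interface edges rather than a larger multiple: the base copy of $\textup{MST}(I)$ together with the parity-fixing $J \subseteq \textup{MST}(I)$ contribute at most $2\,w(\textup{MST}(I))$, and no edge outside $\textup{MST}(I) \cup \bigcup_{T} W_T$ is ever used. The remaining ingredients (the MST computation, the tree $O$-join, the Euler tour, \Cref{lem:hub-net-resp}, and the concatenation) are all standard and run in polynomial time.
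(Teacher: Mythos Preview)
Your proposal is correct and follows essentially the same approach as the paper: build a connected Eulerian multigraph from the $W_T$'s together with $\textup{MST}(I)$ plus a parity-fixing piece bounded by $w(\textup{MST}(I))$, take its Euler tour, apply \Cref{lem:hub-net-resp}, and splice with $P^{HN}_{K'}$ at the shared terminal~$t_{\mathcal{U}}$. The only cosmetic difference is that the paper fixes parities with a minimum-weight matching on the odd vertices (bounded via the Christofides argument $w(M(A))\le\frac12 w(\textup{TSP}(I))\le w(\textup{MST}(I))$) rather than your tree $O$-join inside $\textup{MST}(I)$; both yield the same $2\cdot w(\textup{MST}(I))$ overhead.
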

\begin{proof}
We claim that taking the union of the minimum spanning tree on the interface $I$ and all walks $W_T$ for towns $T\in\mathcal{U}$, contains a Subset \TSP solution for the terminal set $\bigcup_{T\in\mathcal{U}}T\cap K$ lying in these towns of $\mathcal{U}$, where each edge of $\textup{MST}(I)$ is used at most twice.
We prove a slightly more general statement, which we will reuse later on.

\begin{claim}\label{clm:patching}
Given a set of walks $W_1,\ldots,W_k$ with endpoints in $I$, there is a closed walk which traverses $W_1,\ldots,W_k$ and visits all of $I$, of total weight at most $\sum_{j=1}^k w(W_j)+2\cdot w(\textup{MST}(I))$.
\end{claim}
\begin{proof}
    For each walk $W_j$, let $x_j,y_j\in I$ be its endpoints lying in the interface. We construct an auxiliary multi-graph $H$ over $I$ as follows: (1) First for every walk $W_j$ add an edge $(x_j,y_j)$ between its endpoints to $H$. (2) Add the edges of $\MST(I)$ to $H$. (3) Let $A\subseteq I$ be all the interface vertices that currently have odd degree (note that $|A|$ has to be even), add a minimum weight matching $M(A)$ between the endpoints of $A$.
    This finishes the construction of $H$. We will call the edges in $M(A)\cup \MST(I)$ black, while the edges between the endpoints of the walks will be called red. Note that the weight of the edges in $M(A)$ is at most $\MST(I)$. One can see this be recalling the proof of Christofides \cite{Chr76}: $w(M(A))\le \frac12\cdot w(\textup{TSP}(I))\le w(\MST(I))$. That is, the weight of a perfect matching is at most half the weight of the optimal \TSP tour over $I$, which by itself is at most twice the weight of $\MST(I)$. We conclude that the total weight of the black edges is $2\cdot w(\MST(I))$.
    All the degrees in $H$ are even, and hence there is an Eulerian tour of $H$. We will construct a walk over the interface points by following this Eulerian tour and replacing each red edge $(x_i,y_i)$ by the corresponding walk $W_i$. The claim now follows.
\end{proof}

By \Cref{clm:patching}, the union of $\textup{MST}(I)$ and the walks $W_T$ for towns $T\in\mathcal{U}$ contains a Subset \TSP solution $P'$ of cost at most $\sum_{T\in\mathcal{U}}w(W_T)+2\cdot w(\textup{MST}(I))$ for the terminal set contained in the towns of $\mathcal{U}$.
Using \Cref{lem:hub-net-resp}, we can make $P'$ hub-net-respecting so that its cost is at most $(1+77\eps)(\sum_{T\in\mathcal{U}}w(W_T)+2\cdot w(\textup{MST}(I)))$.
Note that $P'$ also visits the terminal~$t_{\mathcal{U}}$, which is contained in the set $K'$, since~$t_{\mathcal{U}}$ lies in some town of $\mathcal{U}$.
Hence the union of~$P'$ and the closed walk $P^{HN}_{K'}$ contains a closed walk on all terminals in $K$, and its cost is bounded as desired.
Furthermore, the union of two hub-net-respecting solutions is again hub-net-respecting, since all connections fulfill the conditions of \Cref{def:hub-net-resp}.
\end{proof}

It remains to show that the obtained solution is near-optimal and to bound the runtime.
To show the former, the main idea is to charge the extra cost incurred by the minimum spanning tree on the interface to the cost of connecting all towns of $\mathcal{U}$ to the interface in the optimum solution.
Accordingly, we first bound the cost of a minimum spanning tree on the interface.

\begin{lemma}\label{lem:cost-of-MST}
The cost of a minimum spanning tree on the interface $I$ is \\$w(\textup{MST}(I))\leq O((\frac{1}{\eps^2}\log(\frac{1}{\eps})\cdot h(\eps))^2)\cdot r_{i}$
\end{lemma}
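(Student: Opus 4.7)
The strategy is to build an explicit spanning tree of $I$ by first spanning the interface hubs of each level separately and then stitching these sub-spanners across levels. For a level $j$ with $i\le j\le i+\log_{1+\sigma}(\tfrac{6}{\eps}+8)$, let $I_j=I\cap H_j$ be the interface hubs on level $j$. By definition $I_j$ is contained in $B_X(v,(2+4\eps)r_i+(2+\eps)r_j)\subseteq B_X(v,(4+5\eps)r_j)$, since $r_i\le r_j$. Thus $I_j$ is exactly the set of hubs of $H_j$ at distance at most $(2+\eps)r_j$ from $B_X(v,(2+4\eps)r_j)$, so \Cref{lem:HubHierarchy} gives $|I_j|\le O((\tfrac{1}{\eps}\log(\tfrac{1}{\eps})\cdot h(\eps))^2)$. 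Moreover, the diameter of $I_j$ is $\le 2(4+5\eps)r_j=O(r_j)$.

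Next I span $I_j$ by a star rooted at an arbitrary point $c_j\in I_j$ (if $I_j\ne\emptyset$): every edge has length at most the diameter, so this star has weight at most $|I_j|\cdot O(r_j)=O((\tfrac{1}{\eps}\log(\tfrac{1}{\eps})\cdot h(\eps))^2)\cdot r_j$. To glue the level-stars together, observe that each $c_j$ lies in $B_X(v,(4+5\eps)r_j)$, and every interface hub is within distance at most $(2+4\eps)r_i+(2+\eps)\cdot(\tfrac{6}{\eps}+8)r_i=O(r_i/\eps)$ of $v$; so connecting all $c_j$'s to $c_i$ via direct edges contributes at most $O(\tfrac{1}{\eps}\log(\tfrac{1}{\eps}))\cdot O(r_i/\eps)=O(\tfrac{1}{\eps^2}\log(\tfrac{1}{\eps}))\cdot r_i$, which is dominated by the main term below.

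The total weight of this spanning tree (an upper bound on $\MST(I)$) is therefore
\[
w(\MST(I))\ \le\ O\!\left(\!\left(\tfrac{1}{\eps}\log\tfrac{1}{\eps}\cdot h(\eps)\right)^{\!2}\right)\cdot\sum_{j=i}^{i+\lceil\log_{1+\sigma}(6/\eps+8)\rceil}\! r_j\ +\ O(\tfrac{1}{\eps^2}\log\tfrac{1}{\eps})\cdot r_i.
\]
The geometric sum over $r_j=(1+\sigma)^j$ is dominated by its last term and evaluates to at most $r_i\cdot(1+\sigma)^{\lceil\log_{1+\sigma}(6/\eps+8)\rceil+1}/\sigma\le O(\tfrac{1}{\eps})\cdot r_i/\sigma=O(\tfrac{1}{\eps^2})\cdot r_i$, since $\sigma=\Theta(\eps)$. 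Plugging in yields the claimed bound $w(\MST(I))\le O((\tfrac{1}{\eps^2}\log(\tfrac{1}{\eps})\cdot h(\eps))^2)\cdot r_i$.

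The main (minor) pitfall is making sure the level-by-level application of \Cref{lem:HubHierarchy} is valid for the interface as defined with base ball $B_v=B_X(v,(2+4\eps)r_i)$ rather than $B_X(v,(2+4\eps)r_j)$; the one-line containment argument $B_v\subseteq B_X(v,(2+4\eps)r_j)$ for $j\ge i$ handles this. Everything else is bookkeeping: counting at most $O(\tfrac{1}{\eps}\log\tfrac{1}{\eps})$ active levels (from $\sigma=\Theta(\eps)$) and summing the geometric progression of radii, where the final factor of $1/\sigma=\Theta(1/\eps)$ is precisely what buys the extra $1/\eps$ in $1/\eps^2$ inside the target expression.
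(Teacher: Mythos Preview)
Your proof is correct and follows essentially the same approach as the paper: bound the number of interface hubs contributed by each level $j$ via \Cref{lem:HubHierarchy}, note each such hub is within $O(r_j)$ of $v$, and sum the resulting geometric series $\sum_j r_j = O(r_i/\eps^2)$. The paper streamlines the construction by using a single Steiner star centered at $v$ (and then observing $w(\MST(I))\le 2\cdot w(\text{Steiner tree})$), which avoids your per-level stars and the extra stitching of the $c_j$'s; also, one small slip to fix is that with $I_j:=I\cap H_j$ a hub contributed at level $j'>j$ lies in $H_j$ by the hierarchy but need not satisfy $d_X(x,B_v)\le(2+\eps)r_j$, so you should instead let $I_j$ be the per-level contribution $\{x\in H_j:d_X(x,B_v)\le(2+\eps)r_j\}$ (whose union is still $I$), after which your containment and cardinality claims go through verbatim.
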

\begin{proof}
We construct a spanning tree on $I$ by simply forming a star with leaf set $I$ and center point~$v$, which is also the center of the ball $B_v$.
Each interface point in~$I$ belongs to a hub set $H_j$ at distance at most $(2+\eps)r_j$ from $B_v$, and the radius of $B_v$ is~$(2+4\eps)r_i$.
This means that for some $i\leq j\leq i+\log_{1+\sigma}(\frac{6}{\eps}+8)$ every edge of the star has length at most $(2+\eps)r_j+(2+4\eps)r_i\leq 5r_j$, given~$\eps\leq \frac{1}{6}$.
By \Cref{lem:HubHierarchy}, there are $O((\frac{1}{\eps}\log(\frac{1}{\eps})\cdot h(\eps))^2)$ hubs of $H_j$ in $I$, as $j\geq i$.
Hence, as $r_j=(1+\sigma)^j$, the cost of the star is at most 
\begin{align*}
	\sum_{j=i}^{i+\lfloor\log_{1+\sigma}(\frac{6}{\eps}+8)\rfloor}{\textstyle O((\frac{1}{\eps}\log(\frac{1}{\eps})\cdot h(\eps))^2)\cdot5r_{j}} 
        & ={\textstyle O((\frac{1}{\eps}\log(\frac{1}{\eps})\cdot h(\eps))^2)\cdot r_{i}\cdot}\sum_{j=0}^{\lfloor\log_{1+\sigma}(\frac{6}{\eps}+8)\rfloor}(1+\sigma)^{j}\\
	& ={\textstyle O((\frac{1}{\eps}\log(\frac{1}{\eps})\cdot h(\eps))^2)\cdot r_{i}\cdot O(\frac{1}{\eps^2})}\\
	& ={\textstyle O((\frac{1}{\eps^2}\log(\frac{1}{\eps})\cdot h(\eps))^2)\cdot r_{i}}    
\end{align*}
The weight of $\textup{MST}(I)$ is at most twice the weight of the minimum Steiner tree on $I$, which by itself is bounded by the weight of the star. The lemma follows.
\end{proof}

Given \Cref{lem:cost-of-MST}, we can charge the cost of a minimum spanning tree on the interface to connection costs for the towns in $\mathcal{U}$.
This is made possible by a large enough lower bound $q$ on the number of towns in~$\mathcal{U}$, and the distance lower bound from any town to the outside given by \Cref{lem:townproperties}.


\begin{lemma}\label{lem:cost-of-sol}
Let $O^{HN}_{K'}$ and $O^{HN}$ be the hub-net-respecting solutions of minimum cost for $K'$ and~$K$, respectively.
Given a hub-net-respecting solution $P^{HN}_{K'}$ for the terminal set $K'$ with $w(P^{HN}_{K'})\leq (1+92\eps) w(O^{HN}_{K'})$, the hub-net-respecting solution $P^{HN}$ given by \Cref{lem:approx-sol} for $K$ has cost $w(P^{HN})\leq (1+92\eps)w(O^{HN})$, if $|\mathcal{U}|\geq q=\frac{c}{\eps^5}\log^2_2(\frac{1}{\eps})\cdot h(\eps)^2$ for a sufficiently large constant~$c$.
\end{lemma}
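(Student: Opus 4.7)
The plan is to combine three ingredients: the assumed bound $w(P^{HN}_{K'})\le(1+92\eps)w(O^{HN}_{K'})$, an upper bound on $\sum_{T\in\mathcal{U}}w(W_T)$, and an upper bound on $w(O^{HN}_{K'})$ in terms of $w(O^{HN})$. Plugging all three into \Cref{lem:approx-sol} will produce an inequality of the form
\[
w(P^{HN})\le(1+92\eps)w(O^{HN})-\eps A^*+O(w(\textup{MST}(I))+r_i),
\]
where $A^*=\sum_{T\in\mathcal{U}}w(O^{HN}_T)$ is the total weight of $O^{HN}$'s sub-walks inside towns of $\mathcal{U}$ together with their entering/leaving connections. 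The hypothesis $|\mathcal{U}|\ge q$ then ensures the slack $\eps A^*$ absorbs the additive error.

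To bound $\sum_Tw(W_T)$, for each $T\in\mathcal{U}$ I would convert the relevant sub-walks of $O^{HN}$ into a feasible \TSP tour on $X_T$ of weight at most $w(O^{HN}_T)$: replace each entering/leaving connection's interface endpoint by $p$ (cost does not grow, since $d_T(\cdot,p)=d_X(\cdot,\chi_{\cdot})$ is the distance to the \emph{closest} interface point), then concatenate the resulting sub-walks at $p$. Since $W_T$ has the same weight as the $(1+\eps)$-approximation $P_T$ of $X_T$ given by \Cref{thm:TSP-dd}, one gets $w(W_T)\le(1+\eps)w(O^{HN}_T)$. By \Cref{lem:HN-implies-I-resp} $O^{HN}$ is interface-respecting, and $I\subseteq H_i$ is disjoint from every $T\in\mathcal{U}\subseteq\mathcal{T}_i$, so no connection of $O^{HN}$ has endpoints in two different towns of $\mathcal{U}$; hence summing over $T$ does not double count, giving $\sum_Tw(W_T)\le(1+\eps)A^*$.

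To bound $w(O^{HN}_{K'})$, I would construct a feasible hub-net-respecting tour for $K'$ from $O^{HN}$ in four steps: (i) keep the sub-walks of $O^{HN}$ whose vertices lie outside all towns in $\mathcal{U}$, of total weight $B^*=w(O^{HN})-A^*$; (ii) patch them into a closed walk using \Cref{clm:patching} on the interface $I$, at additional weight $\le 2w(\textup{MST}(I))$; (iii) append a detour to $t_\mathcal{U}$ via its closest interface point, of weight at most $2(3+8\eps)r_i$ by \Cref{clm:close-interface-points}; and (iv) make the result hub-net-respecting. The sub-walks from (i) are hub-net-respecting by inheritance; since hub-net-respectingness is a per-connection property, I can invoke \Cref{lem:hub-net-resp} individually on each offending patching/detour connection, localizing the $(1+77\eps)$ blow-up to those edges only. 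This yields $w(O^{HN}_{K'})\le w(O^{HN})-A^*+(1+77\eps)\bigl(2w(\textup{MST}(I))+2(3+8\eps)r_i\bigr)$.

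Substituting into \Cref{lem:approx-sol} and using $(1+77\eps)(1+\eps)-(1+92\eps)\le-\eps$ for $\eps\le\tfrac16$ yields the announced inequality. By the separation property of \Cref{lem:townproperties}, $O^{HN}$ pays more than $2r_i$ per town of $\mathcal{U}$, so $A^*\ge 2|\mathcal{U}|r_i\ge 2qr_i$; by \Cref{lem:cost-of-MST}, $w(\textup{MST}(I))=O\bigl(\tfrac1{\eps^4}\log^2(\tfrac1\eps)h(\eps)^2\bigr)r_i$. Choosing $q=\tfrac c{\eps^5}\log^2(\tfrac1\eps)h(\eps)^2$ with $c$ a sufficiently large constant then ensures $\eps A^*$ dominates the additive error. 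The main obstacle is step (iv): applying \Cref{lem:hub-net-resp} to the \emph{whole} patched tour would inflate also $B^*$ (which may be nearly $w(O^{HN})$) by $(1+77\eps)$, overshooting $(1+92\eps)$; the trick is to confine the $(1+77\eps)$ correction to the $O(w(\textup{MST}(I))+r_i)$ patching/detour part via the per-connection definition of hub-net-respectingness, so that it can be absorbed by the slack $\eps A^*\ge 2\eps q r_i$.
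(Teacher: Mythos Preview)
Your proposal is correct and follows essentially the same approach as the paper's proof: you bound $\sum_T w(W_T)$ by converting the town sub-walks of $O^{HN}$ into feasible tours on $X_T$, bound $w(O^{HN}_{K'})$ by patching the remaining sub-walks through the interface plus a detour to $t_{\mathcal{U}}$, and crucially localize the $(1+77\eps)$ blow-up of \Cref{lem:hub-net-resp} to the patching/detour edges only (the paper phrases this as applying \Cref{lem:hub-net-resp} to each maximal sub-walk whose nodes lie in $I\cup\{t_{\mathcal{U}}\}$, which is equivalent to your per-connection observation). The final arithmetic, the use of $(1+77\eps)(1+\eps)-(1+92\eps)\le-\eps$, the bound $A^*\ge 2qr_i$ from \Cref{lem:townproperties}, and the choice of $q$ all match the paper exactly.
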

\begin{proof}

We begin by comparing the cost of the computed walk $W_T$ for a fixed town $T\in\mathcal{U}$ with the corresponding sub-walks of the optimum $O^{HN}$ visiting $T$.
More concretely, let $\widetilde{W}_1,\ldots,\widetilde{W}_k$ be all the sub-walks contained in $O^{HN}$ that begin and end at an interface point of $I$ but all internal vertices are not in~$I$.
By \Cref{lem:HN-implies-I-resp} the hub-net-respecting walk $O^{HN}$ is also interface-respecting, and so the internal vertices of each such walk~$\widetilde{W}_j$ intersecting $T$ must all lie in $T$.
We denote by~$U_T$ the union of all the sub-walks $\widetilde{W}_j$ intersecting $T$.
For the computed walk $W_T$, recall that we obtained $W_T$ from a tour $P_T$ for the \TSP instance $X_T$, where $P_T$ is a $(1+\eps)$-approximation for $X_T$, and $w(W_T)=w(P_T)$.
We may convert the union~$U_T$ into a solution $P'_T$ to $X_T$ by first short-cutting to the terminal set $T\cap K$ and then replacing all connections to the interface by connections to the additional vertex $p$ of $X_T$.
By definition of the distance function $d_T$, which reflects the distance to the closest interface points, the length of $P'_T$ is at most the total length of $U_T$.
Therefore we have 
\[
w(W_T)=w(P_T)\leq (1+\eps)w(P'_T)\leq (1+\eps)w(U_T).
\]

The next step is to consider the hub-net-respecting solution $P^{HN}_{K'}$ on the terminal set $K'=\{t_{\mathcal{U}}\}\cup (K\setminus(\bigcup_{T\in\mathcal{U}}T))$ for the fixed terminal $t_{\mathcal{U}}$ lying in some town of $\mathcal{U}$.
We would like to compare the cost of~$P^{HN}_{K'}$ with the part of the solution $O^{HN}$ that connects all terminals in~$K'\setminus\{t_{\mathcal{U}}\}$, which are not in towns of $\mathcal{U}$.
That is, consider all sub-walks of $O^{HN}$ connecting to the interface but not visiting any town of $\mathcal{U}$:
formally, these are the sub-walks of~$O^{HN}$ not contained in any~$U_T$, that begin and end at an interface point but all internal vertices are not in~$I$.
By patching the interface using \Cref{clm:patching}, these sub-walks give rise to a closed walk~$P$ visiting all terminals in $K'\setminus\{t_{\mathcal{U}}\}$ of cost at most $w(O^{HN})-\sum_{T\in\mathcal{U}}w(U_T)+2\cdot w(\textup{MST}(I))$.
We now extend this closed walk to also visit~$t_{\mathcal{U}}$ by simply adding a connection from $t_{\mathcal{U}}$ to its closest interface point~$\chi_{t_{\mathcal{U}}}$ and back.
By \Cref{clm:close-interface-points}, the distance between $\chi_{t_{\mathcal{U}}}\in I$ and $t_{\mathcal{U}}$ is at most~$(3+8\eps)r_i$.
Since~$P$ visits every interface point and thus also $\chi_{t_{\mathcal{U}}}$, the resulting closed walk $P_{K'}$ visits all terminals of~$K'$ and has cost at most $w(O^{HN})-\sum_{T\in\mathcal{U}}w(U_T)+2\cdot w(\textup{MST}(I))+2(3+8\eps)r_i$.

Note that $P_{K'}$ is not necessarily hub-net-respecting, which we would need to compare its cost with (parts of) the hub-net-respecting solution~$O^{HN}$.
However, by construction the only connections of~$P_{K'}$ that do not exist in $O^{HN}$ have both endpoints in~$I\cup\{{t_\mathcal{U}}\}$.
Therefore, to obtain a hub-net-respecting closed walk from~$P_{K'}$, we may apply \Cref{lem:hub-net-resp} to each sub-walk of~$P_{K'}$ of maximal length that only uses nodes of $I\cup\{{t_\mathcal{U}}\}$.
Observe that these sub-walks contribute a cost of $2\cdot w(\textup{MST}(I))+2(3+8\eps)r_i$ to $P_{K'}$, while the remaining sub-walks are sub-walks of $O^{HN}$ and hence already hub-net-respecting.
Applying \Cref{lem:hub-net-resp} multiplies the cost of each non-hub-net-respecting sub-walk by at most $(1+77\eps)$, while leaving the already hub-net-respecting sub-walks unchanged. Thus the resulting closed walk has cost at most
\begin{align*}
    &\underbrace{w(O^{HN})-\sum_{T\in\mathcal{U}}w(U_T)}_{\text{hub-net-respecting sub-walks}}
    \;+\; (1+77\eps)\cdot \underbrace{\bigl(2\cdot w(\textup{MST}(I))+2(3+8\eps)r_i\bigr)}_{\text{sub-walks made hub-net-respecting via \Cref{lem:hub-net-resp}}} \\[4pt]
    &= w(O^{HN})-\sum_{T\in\mathcal{U}}w(U_T)+2(1+77\eps)\cdot \bigl(w(\textup{MST}(I))+(3+8\eps)r_i\bigr).
\end{align*}
This means that the cost of the cheapest hub-net-respecting closed walk $O^{HN}_{K'}$ for $K'$ can be bounded by this term.
Since we are assuming that the solution $P^{HN}_{K'}$ costs at most $1+92\eps$ times the cost of $O^{HN}_{K'}$, using $\eps\leq\frac{1}{6}$ we get 
\begin{align*}
w(P^{HN}_{K'}) &\leq (1+92\eps)\left(w(O^{HN})-\sum_{T\in\mathcal{U}}w(U_T)+2(1+77\eps)\cdot (w(\textup{MST}(I))+(3+8\eps)r_i)\right)\\
&\leq (1+92\eps)\left(w(O^{HN})-\sum_{T\in\mathcal{U}}w(U_T)\right)+O\big(w(\textup{MST}(I))+r_i\big).
\end{align*}

To obtain a lower bound on the optimum solution, note that $\sum_{T\in\mathcal{U}}w(U_T)\geq 2qr_i$, given that~$|\mathcal{U}|\geq q$ and each $U_T$ contains at least two connections to the interface, each of length at least $r_i$ by \Cref{lem:townproperties}.
Thus given the bounds obtained above in addition to those of \Cref{lem:approx-sol,lem:cost-of-MST} on the costs of the computed hub-net-respecting solution $P^{HN}$ and of~$\textup{MST}(I)$, we can bound the cost of the computed solution $P^{HN}$ as follows, crucially using $(1+77\eps)(1+\eps)-(1+92\eps)\leq-\eps$ as $\eps\leq\frac{1}{6}$:
\begin{align*}
w(P^{HN}) &\leq w(P^{HN}_{K'})+(1+77\eps)\left(\sum_{T\in\mathcal{U}}w(W_T)+2\cdot w(\textup{MST}(I))\right) \\
&\leq
(1+92\eps)\left(w(O^{HN})-\sum_{T\in\mathcal{U}}w(U_T)\right) + O\big(w(\textup{MST}(I))+r_i\big)\\
&\qquad +(1+77\eps)\left(\sum_{T\in\mathcal{U}}w(W_T)+2\cdot w(\textup{MST}(I))\right)\\
&\leq (1+92\eps)w(O^{HN}) + \big((1+77\eps)(1+\eps)-(1+92\eps)\big)\sum_{T\in\mathcal{U}}w(U_T)
+O\big(w(\textup{MST}(I))+r_i\big) \\
&\leq (1+92\eps) w(O^{HN}) -\eps\cdot \sum_{T\in\mathcal{U}}w(U_T) +O\big(w(\textup{MST}(I))+r_i\big)\\
&\leq {\textstyle (1+92\eps) w(O^{HN}) - 2\eps q r_i +O\big(w(\textup{MST}(I))+r_i\big)} \\
&\leq {\textstyle (1+92\eps) w(O^{HN})+O\left((\frac{1}{\eps^2}\log(\frac{1}{\eps})\cdot h(\eps))^2-\eps q\right)r_i} \\
&\leq (1+92\eps) w(O^{HN}),
\end{align*}
given that $q=\frac{c}{\eps^5}\log^2_2(\frac{1}{\eps})\cdot h(\eps)^2$ for a sufficiently large constant~$c$.
\end{proof}

We are now ready to prove the main result, i.e., that the algorithm above is an approximation scheme to the Subset \TSP problem on inputs of bounded highway dimension. We restate \Cref{thm:TSPHighway} for convenience.
\TSPthm*
\begin{proof}
If there is no ball intersecting many towns containing terminals, then by \Cref{lem:towns-w-small-dd} the metric induced by the terminals has doubling dimension $O\left(\log(q)+\frac{1}{\eps}\log(\frac{h(\eps)}{\eps})\right)$.
We may then use \Cref{thm:TSP-dd} to compute a $(1+\eps)$-approximation on the metric given by the terminal set.

Otherwise we recursively compute a solution for $K'$ and stitch it together with the solutions computed for all towns in $\mathcal{U}$.
The solution is then given by \Cref{lem:approx-sol} and, due to \Cref{lem:cost-of-sol}, 
has cost at most $(1+92\eps)w(O^{HN})$ for the cheapest hub-net-respecting solution $O^{HN}$.
For the termination of the recursion, if the metric given by $K'$ has bounded doubling dimension, then we use \Cref{lem:towns-w-small-dd} to obtain a $(1+\eps)$-approximation, which however is not necessarily hub-net-respecting.
We therefore apply \Cref{lem:hub-net-resp} to this solution to obtain a hub-net-respecting solution~$P^{HN}_{K'}$ of cost at most $(1+77\eps)(1+\eps)w(O_{K'})$, where $O_{K'}$ is the optimum solution for $K'$.
Obviously, $w(O_{K'})\leq w(O^{HN}_{K'})$ for the cheapest hub-net-respecting solution $O^{HN}_{K'}$ for $K'$, and thus $w(P^{HN}_{K'})\leq (1+77\eps)(1+\eps)w(O^{HN}_{K'})\leq (1+92\eps)w(O^{HN}_{K'})$, as $\eps\leq\frac{1}{6}$.
Hence the induction hypothesis is given in order to compute a solution of cost at most $(1+92\eps)w(O^{HN})$ according to \Cref{lem:cost-of-sol}.

To compare the cost of the cheapest hub-net-respecting solution $O^{HN}$ to the optimum Subset \TSP solution $O^\star$, we may apply \Cref{lem:hub-net-resp} to conclude $w(O^{HN})\leq (1+77\eps)w(O^\star)$.
Thus our algorithm computes a solution of cost at most $(1+92\eps)(1+77\eps)w(O^\star)\leq (1+1350\eps)w(O^\star)$.
Thus setting $\eps'=\eps/1350$, the algorithm computes a $(1+\eps')$-approximation.

To bound the runtime, we apply the algorithm of \Cref{thm:TSP-dd} sequentially to each town of $\mathcal{U}$ in each recursive call.
By \Cref{lem:cost-of-sol}, asymptotically the value of $q$ is $O(\frac{1}{\eps^5}\log^2(\frac{1}{\eps})\cdot h(\eps)^2)$.
    Thus by \Cref{lem:towns-w-small-dd} the doubling dimension of the towns in $\mathcal{U}$ is bounded by $O(\frac{1}{\eps}\log(\frac{h(\eps)}{\eps}))$.
The resulting runtime according to \Cref{thm:TSP-dd} is therefore $2^{2^{O\left(\frac{1}{\eps}\log^2\frac{h(\eps)}{\eps}\right)}}n^{O(1)}$, which concludes the proof.
\end{proof}


\section{Metric Toolkit}\label{sec:MetricToolkit}
	In this paper we attempt to systematically study metric spaces with low highway dimension w.r.t. the new proposed \Cref{def:HD}. 
    We will introduce several of the most important tools in the algorithm designer toolkit in this context. We will then mention a sample of applications achievable with this new toolkit. However, as the list of applications is very long, we will not attempt to mention all of them. We are sure that more applications will be found in future work.
    
	Padded decompositions and sparse covers are very basic tools in algorithmic design for metric problems. The padding parameter of a metric space is a crucial parameter of the space.  
	Consider a \emph{partition} $\mathcal{P}$ of $V$ into disjoint clusters.
    The weak diameter of a cluster $C\in\cP$ is $\max_{u,v\in C}d_G(u,v)$ the maximum pairwise distance (w.r.t. the shortest path distance in the original graph).
    The strong diameter of a cluster $C\in\cP$ is $\max_{u,v\in C}d_{G[C]}(u,v)$ the maximum pairwise distance in the induced graph. Note that in metric spaces (a.k.a complete graphs with triangle inequality on the edge weights) the notions of strong and weak diameter are equivalent.
	For a vertex $v\in V$, we denote by $P(v)$ the cluster $P\in \mathcal{P}$ that contains $v$.
	A partition $\mathcal{P}$ is strongly $\Delta$-\emph{bounded} (resp.\ weakly $\Delta$-bounded ) if the strong-diameter (resp.\ weak-diameter) of every $P\in\mathcal{P}$ is bounded by $\Delta$.
	If the ball $B_G(v,\gamma\Delta)$ of radius $\gamma\Delta$ around a vertex $v$ is fully contained in $P(v)$, we say that $v$ is $\gamma$-{\em padded} by $\mathcal{P}$. Otherwise, if $B_G(v,\gamma\Delta)\not\subseteq P(v)$, we say that the ball is \emph{cut} by the partition.
	\begin{definition}[Padded Decomposition]\label{def:PadDecompostion}
		A distribution $\mathcal{D}$ over partitions of a graph $G=\left(V,E,w\right)$ is a strong (resp.\ weak) $(\beta,\delta,\Delta)$-padded decomposition if every $\mathcal{P}\in\supp(\mathcal{D})$ is strongly (resp.\ weakly) $\Delta$-bounded and for any $0\le\gamma\le\delta$, and $z\in V$,
		$\Pr[B_G(z,\gamma\Delta)\subseteq P(z)] \ge e^{-\beta\gamma}$.
		We say that $G$ admits a strong/weak $(\beta,\delta)$-padded decomposition scheme if for every $\Delta>0$, it admits a strong/weak $(\beta,\delta,\Delta)$-padded decomposition. Here $\beta$ is called the padding parameter.
	\end{definition}

    It is known that general $n$-point graphs (or metrics) admit strong $(O(\log n),\Omega(1))$-padded decomposition schemes~\cite{Bar96}, graphs whose shortest path metric has doubling dimension $d$ admit strong $(O(d),\Omega(1))$-padded decomposition scheme~\cite{GKL03,ABN11,Fil19padded}, $K_r$-minor free graphs admit weak $(O(\log r),\Omega(1))$-padded decomposition schemes~\cite{KPR93,FT03,AGGNT19,Fil19padded,FFIKLMZ24,CF25}, or strong $(O(r),\Omega(1))$-padded decomposition schemes~\cite{Fil19padded}.    
Here we show that metric spaces with highway dimension $h:\R_{\ge0}\rightarrow\N$ have padding parameter~$O(\ln h(\eps))$. Note that by \Cref{obs:doublingToHighway}, \Cref{thm:padded} below generalizes a previous result for doubling metrics~\cite{Fil19padded}.

 \begin{restatable}[]{theorem}{PaddedDecomp}
\label{thm:padded}
Let $G=(V,E,w)$ be a weighted graph with highway dimension $h:\mathbb{R}_{\geq 0}\to\mathbb{N}\cup\{\infty\}$, and fix~$\eps\in[0,\frac{1}{4}]$. 
Then $G$ admits a strong $\left(O(\ln h(\eps)),\Omega(1)\right)$-padded decomposition scheme.
\end{restatable}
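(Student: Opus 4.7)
The plan is to adapt Filtser's \cite{Fil19padded} strong-diameter MPX-style padded decomposition from the doubling setting to the highway-dimension setting, using the town/sprawl dichotomy of Section \ref{sec:BasicProperties} to handle the non-doubling part. Fix $\Delta>0$ and choose a level index $i$ with $r_i=\Theta(\Delta)$. By Lemma \ref{lem:sparse-SPC}, compute an inclusion-wise minimal locally $h(\eps)$-sparse $(r_i,\eps)$-shortest path cover $\SPC$, and let $\mathcal{T}$ be its induced family of towns. I would take $C=\SPC\cup\{v_T:T\in\mathcal{T}\}$ as the candidate cluster centers, have each $c\in C$ draw an independent truncated exponential shift $\delta_c\in[0,\tau]$ at rate $\beta/\Delta$ with $\tau=\Theta(\Delta)$, and assign each vertex $v$ to the center minimizing $d_G(v,c)-\delta_c$; Filtser's truncated ball-growing variant then produces clusters of strong diameter $O(\Delta)$.

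Town vertices are handled essentially deterministically: for $v\in T$, Lemma \ref{lem:townproperties} gives $d_G(v,v_T)\le r_i$ while every other center lies at distance $>r_i$ from $v$ (hubs are $>(2+\eps)r_i$ from $v_T$, and other town centers are separated from $T$). Choosing $\tau$ strictly below the $\eps r_i$ slack forces each town to coalesce into a single cluster, and the $>r_i$ separation of towns gives $B_G(v,r_i)\subseteq P(v)$ for every $v\in T$ with probability one, i.e.\ $\gamma=\Omega(1)$ padding for free. For a sprawl vertex $v$, the standard MPX telescoping argument bounds the probability that $B_G(v,\gamma\Delta)$ is cut by $O(\gamma\beta)$ times the number of candidate centers lying within $O(\Delta)$ of $v$. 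Lemma \ref{lem:hub_bound_2.8Ball} controls the hub contribution by $O(h(\eps))$; town-center competitors are controlled by the same quantity because each town center has a hub-free $(2+\eps)r_i$-neighborhood by definition, so any two nearby town centers can be charged against distinct nearby hubs that Lemma \ref{lem:hub_bound_2.8Ball} already counts. Taking $\beta=\Theta(\ln h(\eps))$ then delivers $\Pr[B_G(v,\gamma\Delta)\subseteq P(v)]\ge e^{-\beta\gamma}$.

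The main obstacle is the sprawl-to-town interface. If towns were simply pre-declared atomic clusters with no participation in the MPX randomness, a sprawl vertex within $\gamma\Delta$ of a town boundary would have its ball $B_G(v,\gamma\Delta)$ cross deterministically into that town, destroying any hope of padding. Adding a randomized super-center per town turns this boundary into a smooth probabilistic event compatible with MPX, but then one must verify (i) that the number of town-center competitors near any sprawl vertex is still $O(h(\eps))$, and (ii) that clusters mixing a town with a thin strip of adjacent sprawl retain strong diameter $O(\Delta)$. Both reduce, via Lemma \ref{lem:townproperties}'s strict town separation and Lemma \ref{lem:hub_bound_2.8Ball}'s hub packing, to charging town-side complexity against hub-side complexity — which is precisely where the $\ln h(\eps)$ padding parameter arises.
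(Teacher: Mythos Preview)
The proposal has a genuine gap in the sprawl-side analysis. You assert that the number of town-center competitors near a sprawl vertex is $O(h(\eps))$ via a charging argument (``each town center has a hub-free $(2+\eps)r_i$-neighborhood, so any two nearby town centers can be charged against distinct nearby hubs''), but this is backwards: hub-free neighborhoods give you nothing to charge \emph{to}. The number of towns meeting a ball of radius $O(r_i)$ is not bounded by any function of $h(\eps)$ --- the paper says so explicitly at the start of \S\ref{sec:PaddedDecomp} (``now there is no bound on the number of centers contained in a ball of radius $2.8r$''), and the entire divide step of the TSP algorithm (\Cref{lem:towns-w-small-dd} and the threshold~$q$) is built around precisely this phenomenon. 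Town centers do form a $(2-\eps)r_i$-packing by \Cref{clm:SeprationTownCenter}, but the size of such a packing in a ball of radius $\Theta(r_i)$ is controlled by doubling dimension, not highway dimension.

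The paper's fix is not a counting argument but a distributional one: sample town-center shifts from $\Texp_{[0,t]}(\lambda)$ and hub shifts from $\Texp_{[a,b]}(\lambda)$ with $a>t$. The gap $a-t\ge 2\eps$, together with \Cref{clm:SeprationTownCenter}, forces every sprawl vertex to prefer some hub over every town center (\Cref{clm:TownCenterOnlyClustersItsTown}), so town centers drop out of the sprawl-side competition entirely and only the $O(h(\eps)^2)$ hubs from \Cref{lem:hub_bound_2.8Ball} need to be counted. Your proposal also contains an internal tension: you first set $\tau=\Theta(\Delta)=\Theta(r_i)$ for the MPX telescoping, but then require $\tau$ ``strictly below the $\eps r_i$ slack'' so that towns coalesce deterministically; for small $\eps$ these are incompatible. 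The two-distribution trick --- giving hubs a systematic head start over town centers --- resolves both issues at once.
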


        

	A dual notion to padded decompositions is a \emph{sparse cover}. Here we allow the clusters to intersect, but every point should be padded somewhere.
	\begin{definition}[Sparse Cover / Sparse Partition Cover]\label{def:SparseCover}
		Given a weighted graph $G=(V,E,w)$, a collection of clusters $\mathcal{C} = \{C_1,..., C_t\}$ is called a weak/strong $(\beta,s,\Delta)$ sparse cover if the following conditions hold.
		\begin{enumerate}
			\item Bounded diameter: The weak/strong diameter of every cluster $C_i\in\mathcal{C}$ is bounded by $\Delta$.\label{condition:RadiusBlowUp}
			\item Padding: For each $v\in V$, there exists a cluster $C_i\in\mathcal{C}$ such that $B_G(v,\frac\Delta\beta)\subseteq C_i$.
			\item Sparsity: For each $v\in V$, there are at most $s$ clusters in $\mathcal{C}$ containing $v$.		
		\end{enumerate}	
		If the clusters $\cC$ can be partitioned into $s$ partitions $\cP_1,\dots,\cP_s$ such that $\cC=\bigcup_{i=1}^s\cP_i$, then  $\{\cP_1,\dots,\cP_s\}$ is called a weak/strong $(\beta,s,\Delta)$ sparse partition cover.  
		We say that a graph $G$ admits a weak/strong $(\beta,s)$ sparse (partition) cover scheme (abbr.~\SPCS), if for every parameter $\Delta>0$ it admits a weak/strong $(\beta,s,\Delta)$ sparse (partition) cover. 
	\end{definition}
	The notion of sparse cover scheme is the more common in the literature. However, an \SPCS provides additional structure that is crucial for different applications.\footnote{In this paper we require the partition property of \SPCS for metric embeddings into $\ell_\infty$, and for the oblivious buy-at-bulk. This property is also crucial in the construction of ultrametric covers~\cite{FL22,Fil23,FGN24}.}
	Obtaining a strong diameter guarantee is also considerably more challenging, however it is frequently required in ``subgraph based'' applications. 
It is known that for every integer $k\ge 1$, every $n$-point graph (or metric) admits a strong $(4k-2,2k\cdot n^{\frac1k})$-\SPCS~\cite{AP90}.  Graphs whose shortest path metric have doubling dimension $d$ admit strong $(O(d),\tilde{O}(d))$-\SPCS \cite{Fil19padded}, and every $K_r$-minor free graph admits a weak $(8+\eps,O(r^4/\eps^2))$-\SPCS \cite{CF25}, or strong $(4+\eps,O(\frac1\eps)^r)$-\SPCS 
\cite{Fil24} (see also~\cite{KPR93,KLMN04}).
Here we show that metric spaces with small highway dimension admit good sparse covers, and \SPCS. 

  \begin{restatable}[]{theorem}{SparseCover}
\label{thm:SparseCover}
Consider a weighted graph $G=(V,E,w)$ with highway dimension $h:\mathbb{R}_{\geq 0}\to\mathbb{N}\cup\{\infty\}$. For every $\eps\in(0,\frac{1}{10}]$, $G$ admits a strong $(8,2\cdot h(\eps)^2+1)$ sparse cover scheme.
\end{restatable}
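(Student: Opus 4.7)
The plan is to fix $r = \Delta/8$, introduce an intermediate scale $r' = 3r/(2+\eps)$, and use \Cref{lem:sparse-SPC} to obtain an $(r',\eps)$-shortest path cover $\SPC$ that is locally $h(\eps)$-sparse. I discard redundant hubs so that $\SPC$ is inclusion-wise minimal (this only decreases its local density), so \Cref{lem:hub_bound_2.8Ball} is applicable with $s=h(\eps)$. The sparse cover will consist of: (i)~every town $T$ of scale $r'$ in the sense of \Cref{def:towns}, and (ii)~for each hub $x\in\SPC$, the cluster $C_x := \bigcup_{u \in B_G(x,4r)} V(P_{u,x})$, where $P_{u,x}$ is an arbitrarily fixed shortest $u$-$x$ path in $G$.

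For the strong diameter bound, every town $T$ has weak diameter at most $r'$ by \Cref{lem:townproperties}, and since $d_G(T, V\setminus T) > r'$, the shortest path in $G$ between any two vertices of $T$ must stay inside $T$; hence the strong diameter of $T$ is at most $r' < 8r = \Delta$. For a hub cluster $C_x$, each vertex of $C_x$ is joined to $x$ by a sub-path of some $P_{u,x}$ of length at most $4r$ contained in $C_x$, so concatenation through $x$ gives strong diameter $\leq 8r$. For the padding requirement of radius $\Delta/8 = r$: a town vertex $v\in T$ satisfies $B_G(v,r)\subseteq T$ since $r\leq r'$ and the town separation exceeds $r'$; a sprawl vertex $v$ admits a hub $x\in\SPC$ with $d_G(v,x)\leq (2+\eps)r' = 3r$ by the third part of \Cref{lem:townproperties}, and then every $u \in B_G(v,r)$ satisfies $d_G(u,x)\leq 4r$, so $u$ lies on some shortest $u$-$x$ path and hence in $C_x$.

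For sparsity, every vertex lies in at most one town cluster, and membership $y\in C_x$ forces $d_G(y,x)\leq 4r$, so $x \in B_G(y,4r)\cap \SPC$. The arithmetic check $4r \leq (2.8+6\eps) r'$ reduces to $4(2+\eps) \leq 3(2.8+6\eps)$, i.e., $14\eps \geq -0.4$, which holds trivially, so \Cref{lem:hub_bound_2.8Ball} (with $s=h(\eps)$) bounds the number of hub clusters containing $y$ by $2h(\eps)^2$, giving a total sparsity of $2h(\eps)^2+1$. The main obstacle is the three-way balancing act on the scale $r'$: it must be large enough for $r\leq r'$ (padding inside towns) and $4r \leq (2.8+6\eps)r'$ (sparsity from \Cref{lem:hub_bound_2.8Ball}), yet small enough for $(2+\eps)r' + r \leq 4r$ (so that sprawl padding fits inside a $4r$-ball around a hub and the strong diameter stays $\leq 8r$). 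The choice $r'=3r/(2+\eps)$ exactly saturates the middle inequality and satisfies the other two for all $\eps\in(0,\tfrac{1}{10}]$.
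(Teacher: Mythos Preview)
Your proof is correct and follows essentially the same approach as the paper: the cover consists of the towns together with balls around the hubs of an inclusion-wise minimal, locally sparse shortest-path cover, and sparsity is read off from \Cref{lem:hub_bound_2.8Ball}. Note that your cluster $C_x=\bigcup_{u\in B_G(x,4r)} V(P_{u,x})$ is in fact exactly the ball $B_G(x,4r)$ (every vertex on a shortest $u$--$x$ path of length $\le 4r$ lies in that ball, and conversely each $u\in B_G(x,4r)$ lies on its own $P_{u,x}$), so the only difference from the paper is a reparametrisation --- the paper works directly with an $(r,\eps)$-SPC, ball radius $(2.8+6\eps)r$, and padding radius $(0.8+2\eps)r$, whereas you rescale the SPC to scale $r'=3r/(2+\eps)$ so that the padding radius becomes exactly $r=\Delta/8$; also, your choice of $r'$ saturates the \emph{third} listed inequality $(2+\eps)r'+r\le 4r$, not the middle one.
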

\begin{restatable}[]{theorem}{SparsePartitionCover}
\label{thm:SparsePartitionCover}
 Consider a weighted graph $G=(V,E,w)$ with highway dimension $h:\mathbb{R}_{\geq 0}\to\mathbb{N}\cup\{\infty\}$. For every $\eps>0$, $G$ admits a strong $(\frac{4\cdot(1+\eps)}{\eps},3\cdot h(\eps)^2+1)$ sparse partition cover scheme.
\end{restatable}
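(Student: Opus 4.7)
The plan mimics the sparse cover construction for \Cref{thm:SparseCover}, but we group the clusters into partitions by coloring an intersection graph with a degree bound derived from \Cref{lem:hub_bound}. Fix $\Delta>0$ and set $r=\tfrac{\Delta}{4(1+\eps)}$. By \Cref{lem:sparse-SPC} there is an $(r,\eps)$-shortest path cover $H$ that is locally $h(\eps)$-sparse, and we may remove redundant hubs to also make $H$ inclusion-wise minimal; let $\mathcal{T}$ be the associated towns (\Cref{def:towns}). Use two types of clusters: each town $T\in\mathcal{T}$ is kept as a single cluster, and for every hub $x\in H$ we let $C_x$ be the set of vertices reached by running Dijkstra from $x$ up to radius $R:=\Delta/2=2(1+\eps)r$. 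Retaining the Dijkstra tree, the $x$-to-$v$ path of length $d_G(x,v)\le R$ lies in $G[C_x]$, so the strong diameter of $C_x$ is at most $2R=\Delta$. For a town $T$, the separation in \Cref{lem:townproperties} forces every edge leaving $T$ to have weight greater than $r$, so any shortest $u$--$v$ path between vertices of $T$, whose length is at most $\diam(T)\le r$, cannot leave~$T$; hence its strong diameter is at most $r\le\Delta$ too.

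For the padding, the target radius is $\Delta/\beta=\tfrac{\eps\Delta}{4(1+\eps)}=\eps r$. Every town vertex $v\in T$ is $\eps r$-padded by~$T$ because separation gives $B_G(v,r)\subseteq T$. For a sprawl vertex~$v$, \Cref{lem:townproperties} provides a hub $x\in H$ with $d_G(v,x)\le(2+\eps)r$, and the triangle inequality yields $B_G(v,\eps r)\subseteq B_G(x,(2+2\eps)r)=C_x$, giving the required padding.

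To group the clusters into partitions, take all towns together with a singleton for every non-town vertex as one partition. For the hub clusters, build an intersection graph $\mathcal{G}$ on $H$ with edge $\{x,y\}$ whenever $C_x\cap C_y\neq\emptyset$; any such intersection forces $d_G(x,y)\le 2R=(4+4\eps)r\le(4+5\eps)r$, so $y$ lies at distance at most $(2+\eps)r$ from the ball $B_G(x,(2+4\eps)r)$. \Cref{lem:hub_bound} with $s=h(\eps)$ then bounds the number of such hubs, including $x$ itself, by $3h(\eps)^2$, so $\mathcal{G}$ has maximum degree at most $3h(\eps)^2-1$ and admits a greedy proper coloring with at most $3h(\eps)^2$ colors. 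Each color class consists of pairwise disjoint hub clusters and extends to a partition of $V$ by adding singletons for uncovered vertices; together with the town partition this yields $3h(\eps)^2+1$ partitions. The main subtlety is using the ``hubs-around-a-ball'' \Cref{lem:hub_bound} rather than the weaker ``hubs-in-a-ball'' \Cref{lem:hub_bound_2.8Ball}, which would require $2R\le(2.8+6\eps)r$ and is incompatible with the padding-forced lower bound $R\ge(2+2\eps)r$; a minor secondary point is that the weak-diameter bounds for towns and hub balls must be upgraded to strong diameter, handled respectively by the separation argument above and by the Dijkstra-tree construction of~$C_x$.
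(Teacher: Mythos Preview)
Your overall strategy matches the paper's: take the towns (plus singletons) as one partition, then greedily color the intersection graph of hub-centered clusters using \Cref{lem:hub_bound} to bound the degree. The padding and strong-diameter arguments are fine. However, there is a genuine gap in the step where you invoke \Cref{lem:hub_bound}.

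You define $C_x=B_G(x,(2+2\eps)r)$ and claim that $C_x\cap C_y\neq\emptyset$ forces $y$ to lie at distance at most $(2+\eps)r$ from $B_G(x,(2+4\eps)r)$. But from an intersection witness $u$ you only get $d_G(x,u)\le(2+2\eps)r$ and $d_G(y,u)\le(2+2\eps)r$; the first puts $u$ in the ball $B_G(x,(2+4\eps)r)$, but the second only yields $d_G(y,B_G(x,(2+4\eps)r))\le(2+2\eps)r$, which is $\eps r$ too large for \Cref{lem:hub_bound}. Your phrasing ``$d_G(x,y)\le(4+5\eps)r$, so $y$ lies at distance at most $(2+\eps)r$ from the ball'' implicitly assumes you can split the $x$--$y$ distance at a vertex exactly $(2+4\eps)r$ from $x$; in a weighted graph such an intermediate vertex need not exist. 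Shrinking $R$ to $(2+\eps)r$ would fix the intersection bound but break padding, so no single ball radius works.

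The paper avoids this by defining $C_x=B_G\big(B_G(x,(2+\eps)r),\eps r\big)$, the $\eps r$-neighborhood of the $(2+\eps)r$-ball. This still pads the sprawl (if $d_G(v,x)\le(2+\eps)r$ then $v\in B_G(x,(2+\eps)r)$ and $B_G(v,\eps r)\subseteq C_x$), and still has strong diameter at most $4(1+\eps)r$. Crucially, if $u\in C_x\cap C_y$ there are $u_x\in B_G(x,(2+\eps)r)$ and $u_y\in B_G(y,(2+\eps)r)$ each at distance $\le\eps r$ from $u$; then $d_G(x,u_y)\le(2+\eps)r+2\eps r<(2+4\eps)r$, so $u_y\in B_G(x,(2+4\eps)r)$ while $d_G(y,u_y)\le(2+\eps)r$, and \Cref{lem:hub_bound} applies cleanly. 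Replacing your $C_x$ by this ``fuzzy ball'' (and noting that in a graph it is in general a strict subset of $B_G(x,(2+2\eps)r)$) closes the gap.
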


Recently, Conroy and Filtser \cite{CF25} proved that give a $(\beta,s,\Delta)$-sparse cover, one can construct a $(O(\beta\log s),\frac{1}{4\beta},\Delta)$-padded decomposition. However, the diameter guarantee in the padded-decomposition  obtained from \cite{CF25} is weak (regardless of the diameter guarantee of the given sparse cover).
Thus our \Cref{thm:SparseCover} implies a weak diameter version of our \Cref{thm:padded}. We prove \Cref{thm:padded} directly to obtain a strong diameter gurantee.

Given a metric space $(X,d_X)$, we say that another metric space $(X,d_H)$ over the same point set $X$ dominates  $(X,d_X)$  if  $\forall u,v\in X$, $d_X(u,v)\le d_H(u,v)$.
We say that $T=(V,E)$ is a dominating tree over $X$ if $T$ is a tree (that is, connected and acyclic), and the points of $X$ can be mapped by some function $f$ into $V$ such that the shortest path distance between the endpoints is dominating. That is, $\forall u,v\in X$, $d_X(u,v)\le d_T(f(u),f(v))$ where $d_T$ is the shortest-path distance function of the tree~$T$.
Note that the tree $T$ is allowed to contain points not in $f(X)$ (these are usually called Steiner points). When the mapping $f$ is clear, we can abuse notation and write $v$ instead of $f(v)$.
A \emph{tree cover} is a small collection of dominating trees such that every pairwise distance is closely approximated by one of the trees in the collection. Formally: 
\begin{definition}[Tree cover]\label{def:tree-cover}
	A \emph{$(\tau,\rho)$-tree cover} for a metric space $(X,d)$ is a collection of at most~$\tau$ dominating trees $\cT = \{(T_i,d_{T_i})\}_{i=1}^{\tau}$ over $X$, such that for every $x,y\in X$ there is a tree $T_i$ for which $d_{T_i}(x,y)\le \rho\cdot d_X(x,y)$.
	
\end{definition}

For general metric spaces it is known that for every integer $k\ge 1$, every $n$ point metric space admits a $(O(n^{\frac1k}\cdot k),O(k))$-tree cover, see~\cite{MN07,NT12}.\footnote{In fact these trees have a stronger Ramsey property, where every point $x\in X$ has a home tree $T_i$ that approximates all the distances to $x$ up to a $O(k)$-factor. See also~\cite{BFM86,BBM06,BLMN05,BGS16,ACEFN20,Bar21,FL21}.}
It is also known that planar graphs admit $(1+\eps,\eps^{-3}\cdot\log(\frac1\eps))$-tree covers~\cite{CCLMST23Planar}, minor free graphs admit $(1+\eps,2^{r^{O(r)}/\eps})$-tree covers~\cite{CCLMST24}, finite subsets of Euclidean $d$-dimensional spaces admit $\left(1+\eps,O(\eps^{-\frac{d+1}{2}}\cdot\log(\frac1\eps))\right)$-tree covers (for constant $d$)~\cite{ADMSS95,CCLMST24Euclidean}, and most relevant to our paper, every finite metric space with doubling dimension $d$ admits a $(1+\eps,\eps^{-O(d)})$-tree cover~\cite{BFN22}.
See also~\cite{FL22,Fil23,FGN24} for the closely related notion of ultrametric/HST cover.

Here, we follow the construction of Bartal, Fandina and Neiman~\cite{BFN22} for doubling metrics and construct a tree cover for graphs with bounded highway dimension (where the hubs replace the net points).
\begin{restatable}[]{theorem}{TreeCover}
\label{thm:HighwayTreeCover}
 Consider a weighted graph $G=(V,E,w)$ with highway dimension $h:\mathbb{R}_{\geq 0}\to\mathbb{N}\cup\{\infty\}$. For every $\eps\in(0,1]$, $G$ admits a $(1+2\eps,h(\eps)\cdot O(\eps^{-1}\log\frac1\eps))$-tree cover.
\end{restatable}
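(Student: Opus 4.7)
The plan is to adapt the tree cover construction of Bartal, Fandina, and Neiman~\cite{BFN22} for doubling metrics, with hubs of locally sparse shortest path covers playing the role of the net points. In BFN's scheme, the hierarchical nets serve as the skeleton of each tree, and their packing density governs the number of trees; here the $(r_i,\eps)$-shortest path covers guaranteed by \Cref{lem:sparse-SPC}, combined with the local sparsity bound of \Cref{lem:hub_bound_2.8Ball}, play the analogous role.

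First, I would build a hierarchy $H_0 \supseteq H_1 \supseteq \cdots \supseteq H_L$ of hub sets at geometric scales $r_i = (1+\eps)^i$, with each $H_i$ an $(r_i,\eps/2)$-shortest path cover; the precision parameter $\eps/2$ is used so that the final stretch, which composes two $(1+\eps/2)$-factors, is at most $1+2\eps$ (this is also why the theorem is stated with $h(\eps/2)$). The hierarchization and the polynomial-in-$1/\eps$ bounds on local sparsity are obtained exactly as in the construction preceding \Cref{lem:HubHierarchy}. Combined with \Cref{lem:hub_bound_2.8Ball}, one obtains that for every vertex $v$ and every level $i$, the number of hubs in $H_i$ that could plausibly serve as $v$'s level-$i$ ancestor in a tree (namely those within distance $\approx(2+\eps)r_i$ of $v$) is at most $h(\eps/2)\cdot\polylog(1/\eps)$.

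Next, I would form the trees by a coloring argument with $\tau = h(\eps/2)\cdot\tilde{O}(\eps^{-1})$ colors. For each color $c\in[\tau]$ I define a parent function $\pi_c$ that assigns, to each vertex (or hub) and each level $i$, one of its candidate level-$i$ hubs in $H_i$. The $h(\eps/2)$-factor enumerates candidate hubs per point per level, and the $\tilde{O}(\eps^{-1})$-factor accounts for the narrow range of levels over which a single hub $x\in H_j$ can serve as a ``good middle point'' for pairs at various scales (since a shortest path cover covers a multiplicative window $(r_j,(2+\eps)r_j]$, translating to $O(\eps^{-1}\log(1/\eps))$ levels in the geometric sequence $r_j=(1+\eps)^j$). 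The coloring is chosen so that for every pair $(u,z)$ with $d_X(u,z)\in(r_i,r_{i+1}]$ and every hub $x\in H_j$ close to both $u$ and $z$ (for some appropriate $j$), there exists a color $c$ with $\pi_c(u,j)=\pi_c(z,j)=x$. The tree $T_c$ is then the tree on $V\cup\bigcup_i H_i$ induced by $\pi_c$, with every child-to-parent edge weighted by its true metric distance; this automatically dominates $d_X$ by the triangle inequality.

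The main obstacle is the stretch analysis. Fix a pair $(u,z)$ with $d_X(u,z)\in(r_i,r_{i+1}]$. By the $(r_j,\eps/2)$-shortest path cover property of $H_j$ for the appropriate level $j$, there is a hub $x\in H_j$ with $d_X(u,x)+d_X(x,z)\leq(1+\eps/2)d_X(u,z)$. The coloring guarantees a color $c$ for which $x$ is a common ancestor of $u$ and $z$ in $T_c$, so $d_{T_c}(u,z)\leq d_{T_c}(u,x)+d_{T_c}(x,z)$. The delicate step is to show $d_{T_c}(u,x)\leq(1+\eps/2)d_X(u,x)$, i.e., that the chain of ancestors $u\to\pi_c(u,0)\to\pi_c(u,1)\to\cdots\to x$ has total length close to $d_X(u,x)$ rather than the naive $O(d_X(u,x)/\eps)$ geometric-sum bound. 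Following BFN, I would achieve this by requiring that along the chain each successor lies on an approximate shortest path toward $x$, making the telescoping sum nearly tight; verifying this is where most of the work lies, and the combinatorics of the required coloring are controlled by the $h(\eps/2)\cdot\polylog(1/\eps)$ sparsity bound from \Cref{lem:hub_bound_2.8Ball}. Combining the two bounds then yields $d_{T_c}(u,z)\leq(1+\eps/2)^2 d_X(u,z)\leq(1+2\eps)d_X(u,z)$, as required.
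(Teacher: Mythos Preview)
Your high-level plan (hubs replace nets, color the hubs, one tree per color) is right, but the stretch argument you sketch does not work, and the paper's actual mechanism is different and simpler.

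You propose to show $d_{T_c}(u,x)\le(1+\eps/2)\,d_X(u,x)$ by arranging that each successive ancestor of $u$ lies on an approximate shortest path \emph{toward $x$}. But $\pi_c$ is fixed once the tree is built, while the target hub $x$ depends on the query pair $(u,z)$; you cannot orient $u$'s level-$i$ parent toward a hub that is only determined after the query arrives. With a fixed parent at every level of a $(1+\eps)$-geometric hierarchy, the chain $u\to\pi_c(u,0)\to\pi_c(u,1)\to\cdots\to x$ has total length $\Theta(d_X(u,x)/\eps)$ by the crude geometric sum, and there is no telescoping available to rescue it.

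The paper sidesteps this entirely by a \emph{level-skipping} trick: each tree $T_{q,j}$ uses only levels $q,\,q{+}K,\,q{+}2K,\ldots$ with gap $K=\lceil\log_{1+\eps/3}(32/\eps)\rceil=O(\eps^{-1}\log\tfrac1\eps)$. The large gap forces the entire path from $u$ to its representative one skipped-level below the meeting hub to have weight at most $\tfrac{\eps}{4}\,d_G(u,v)$ by a single geometric sum (\Cref{clm:TreeCoverDistRepresentative}); no directional control is needed. The $O(\eps^{-1}\log\tfrac1\eps)$ factor in the tree count is precisely the number of offsets $q\in\{0,\dots,K{-}1\}$, not ``the range of levels a hub can serve'' as you wrote.

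Two further simplifications you missed: the paper uses the raw covers $\SPC_i$ from \Cref{lem:sparse-SPC} with their native $h(\eps)$ local sparsity---no hub hierarchy, no \Cref{lem:hub_bound_2.8Ball}, no need to drop to precision $\eps/2$. Non-nestedness is handled by keeping a fresh copy $x_k$ of each hub at each skipped level where it is used. The coloring is then just a greedy partition of each $\SPC_i$ into $h(\eps)$ classes whose hubs are pairwise at distance $>(2+4\eps)r_i$, so every vertex has at most one candidate parent per class per skipped level.
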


\subsection{Sparse Cover}
	Consider a weighted graph $G=(V,E,w)$ with highway dimension $h:\mathbb{R}_{\geq 0}\to\mathbb{N}\cup\{\infty\}$. Fix $r\in\mathbb{R}_{\geq 0}$, $\eps\in[0,\frac{1}{10}]$, and $\alpha=2.8+6\eps$.
	Let $\SPC$ be an $(r,\eps)$-shortest path cover and note that by \Cref{lem:hub_bound_2.8Ball}, every ball of radius $\alpha\cdot r$ contains at most $2\cdot h(\eps)^2$ hubs from $\SPC$.
    In addition, let $\cT$ be the set of towns for \SPC.
	We will construct a sparse cover as follows. Let 
	$$\cC_1=\left\{B_{G}(x,\alpha\cdot r)\right\}_{x\in\SPC}$$
	be the set of balls of radius $\alpha\cdot r$ around the hubs \SPC.	
	Our sparse cover will be $\cC=\cC_1\cup\cT$, i.e., the union of the towns and the balls around hubs.
	
	Clearly, every cluster $C\in\cC$ has strong diameter at most $2\alpha\cdot r$ (recall that the diameter of each town is at most $r$ by \Cref{lem:townproperties}).
	Set $\beta=0.8+2\eps\le1$. We claim that for every vertex $v\in V$, the ball $B_{G}(v,\beta\cdot r)$ is fully contained in some cluster.
	Indeed, if $v\in T$ for some town $T\in \cT$, then by \Cref{lem:townproperties} $T$ has diameter at most $r$, and has distance greater than $r$ to the rest of the graph ($d_G(T,V\setminus T)>r$), thus  $B_{G}(v,r)= T$.
	Next, for a vertex $v\in\cS$ in the sprawl,
    by \Cref{lem:townproperties} there is a hub $x\in\SPC$ at distance at most $(2+\eps)\cdot r$. Using the triangle inequality, $B_{G}(v,\beta\cdot r)\subseteq B_{G}(x,(2+\eps+\beta)\cdot r)\subseteq B_{G}(x,\alpha\cdot r)\in\cC$, as required. 
	Thus we indeed get a sparse cover with padding parameter $\frac{2\alpha}{\beta}=\frac{5.6+12\eps}{0.8+2\eps}<8$.
	As all the towns are disjoint, and every vertex has at most $2\cdot h(\eps)^2$ hubs at distance $\alpha\cdot r$, it follows that the sparsity of our cover is at most $2\cdot h(\eps)^2+1$.
	As $r$ was chosen arbitrarily, we conclude:

\SparseCover*
	
\subsubsection{Sparse Partition Cover}
In this subsection we construct a sparse partition cover, i.e., a collection of partitions that together constitute a sparse cover.
Unfortunately, it is not clear if the sparse cover from \Cref{thm:SparseCover} can be turned into a sparse partition cover. 
Therefore, here, instead of \Cref{lem:hub_bound_2.8Ball}, we will be using \Cref{lem:hub_bound}. We argue the following:
\SparsePartitionCover*
\begin{proof}
    Let $\SPC$ be an $(r,\eps)$-shortest path cover,
    and let $\cT$ be the towns for \SPC. The first partition in our sparse partition cover will simply be the towns $\cT$ (and for every sprawl vertex we will simply add a singleton cluster).
    In addition, for every hub $x\in\SPC$ let $C_x=B_{G}(B_{G}(x,(2+\eps)\cdot r),\eps\cdot r)$ be all the points at distance at most $\eps\cdot r$ from the ball $B_{G}(x,(2+\eps)\cdot r)$.
    We will add $\{C_x\mid x\in\SPC\}$ to our cluster set.

    First note that for every vertex $u\in\cS$ in the sprawl there is a hub $x\in\SPC$ at distance at most $(2+\eps)r$ from $u$ by \Cref{lem:townproperties}. That is $u\in B_{G}(x,(2+\eps)\cdot r)$, which by triangle inequality implies $B_{G}(u,\eps\cdot r)\subseteq B_{G}(B_{G}(x,(2+\eps)\cdot r),\eps\cdot r)$, and thus all the vertices are padded.

    Next, fix a hub $x\in \SPC$, and let $W_x=\SPC\cap B_{G}\left(B_{G}(x,(2+4\eps)\cdot r),(2+\eps)\cdot r\right)$ be the set of hubs at distance at most $(2+\eps)\cdot r$ from the ball $B_{G}(x,(2+4\eps)r)$.
    Note that by \Cref{lem:hub_bound}, $|W_x|\le 3\cdot h(\eps)^2$.
    We argue that if the cluster $C_x$ intersects with another cluster $C_y$ centered at a hub $y\in\SPC$, then $y\in W_x$. 
    Indeed, suppose that the intersection  $C_x\cap C_y\ne\emptyset$ is not empty, and let $u\in C_x\cap C_y$ be a vertex in this intersection. By the definition of the clusters, it follows that there are vertices $u_x\in B_G(u,\eps\cdot r)\cap B_G(x,(2+\eps)\cdot r)$, and $u_y\in B_G(u,\eps\cdot r)\cap B_G(y,(2+\eps)\cdot r)$. By the triangle inequality, it holds that 
    $$d_{G}(x,u_{y})\le d_{G}(x,u_{x})+d_{G}(u_{x},u)+d_{G}(u,y_{y})\le(2+\eps+\eps+\eps)\cdot r<(2+4\eps)\cdot r~.$$
    As $d_G(y,u_y)\le(2+\eps)\cdot r$, it follows that $y\in W_x$.
    We conclude that excluding the town clusters, every cluster can intersect at most $3\cdot h(\eps)^2$ other clusters (including itself).
    It follows that we can simply partition (in a greedy manner) the set $\{C_x\mid x\in\SPC\}$ of clusters into $3\cdot h(\eps)^2$ subsets, where no two clusters intersect.


    To conclude, note that the diameter of all the clusters in $\cT$ is bounded by $r$, and the diameter of the other clusters is bounded by $2\cdot(2+\eps+\eps)=4\cdot(1+\eps)$. The theorem now follows.    
\end{proof}

	\subsection{Padded Decomposition}\label{sec:PaddedDecomp}
	
    Filtser~\cite{Fil19padded} proved that if a graph has a ``sparse net'', then it admits a strong padded decomposition.
    Specifically, if there is a set $N\subseteq V$ of center vertices such that every vertex $v\in V$ has a center $x\in N$ at distance at most $d_G(v,x)\le\Delta$, and at most $|B_G(v,3\Delta)\cap N|\le\tau$ centers at distance $3\Delta$, then the graph admits a strong $(O(\log \tau),\frac{1}{16},4\Delta)$ padded decomposition. 
	This meta theorem from sparse nets to padded decompositions almost fits our case of highway dimension graphs. Indeed, consider the set of hubs \SPC, then every vertex has at most $2\cdot h(\eps)^2$ hubs at distance $(2.8+6\eps)r$ by \Cref{lem:hub_bound_2.8Ball}, and every vertex from the sprawl has a hub at distance at most $(2+\eps)r$ by \Cref{lem:townproperties}. There are two issues in applying the meta theorem:
	\begin{enumerate}
		\item vertices in towns might not have a nearby hub, and
		\item the gap between the cover parameter $(2+\eps)r$, to the radius of the sparse ball $\frac{2.8+6\eps}{2+\eps}$ is less than $3$ (unless we pick large $\eps\ge1$, but we usually think of $\eps$ as being rather small).
	\end{enumerate}
	The second issue is of a technical nature, as the parameter $3$ in~\cite{Fil19padded} is somewhat arbitrary, and every constant gap will actually be sufficient.
	A natural solution to the first issue is to ``enrich'' the set of hubs by adding a single representative point from every town. This will ensure that every point has a center at distance at most $(2+\eps)r$, however, now there is no bound on the number of centers contained in a ball of radius $2.8r$ (or even smaller).
	
	Filtser's~\cite{Fil19padded} meta theorem is based on an interpretation of the exponential random shifts algorithm of Miller, Peng, and Xu~\cite{MPX13}. This is an algorithm that given a set of centers $N$, sends each vertex to its closest center (that is Voronoi partition), but where the closest center is chosen w.r.t.\ a random shift sampled using an exponential distribution. Roughly speaking, the analysis shows that the resulting padding parameter is logarithmic in the number of nearby centers.
	Our solution to the issue above is to sample the shifts for the town representatives using a different distribution from the one used for the centers in \SPC. We will choose the distributions such that a vertex from the sprawl will never join a cluster centered in a town vertex. The analysis of the padding probability will now go through. We will prove the following:
\PaddedDecomp*
	We continue with background on the~\cite{MPX13} clustering algorithm, and then with the algorithm we will be using, and finally with a formal analysis.

	\subsubsection{Clustering Algorithm Using Shifted Starting Times~\cite{MPX13}}\label{subsec:clustering}
	Let $\Delta>0$ be some parameter and let $N\subseteq V$ be some set of centers such that for every $v\in V$, $d_G(v,N)\le \Delta$. 
	For each center $x\in N$, let $\delta_x\in [0,\Gamma]$, for a parameter $\Gamma$ to be chosen later.
    The choice of $\{\delta_x\}_{x\in N}$ differs among different implementations of the algorithm.
	Each vertex $v$ will join the cluster $C_x$ of the center $x\in N$ for which the value $f_v(x)=\delta_x-d_G(x,v)$ is maximized. Ties are broken in a consistent manner, that is we have some order $x_1,x_2,\dots$, and among the centers $x_i$ that maximize
    $\delta_{x_i}-d_G(x_i,v)$, $v$ will join the cluster of the center with minimal index.
	Note that it is possible that a center $x\in N$ will join the cluster of a different center $x'\in N$.
	An intuitive way to think about the clustering process is as follows: each center $x$ wakes up at time $-\delta_x$ and begins to ``spread'' in a continuous manner. The spread of all centers is done in the same unit speed. A vertex $v$ joins the cluster of the first center that reaches it. 
	The proof of the two following claim was known before (see e.g.~\cite{Fil19padded}).
	
	\begin{claim}\label{claim:StrongDiam}
		Every non-empty cluster $C_x$ created by the algorithm has strong diameter at most $2\cdot(\Delta+\Gamma)$ .
	\end{claim}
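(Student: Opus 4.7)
The plan is to show the stronger structural property that whenever $v \in C_x$, the entire shortest $v$-to-$x$ path in $G$ lies inside $C_x$. Once this is established, every $v \in C_x$ satisfies $d_{G[C_x]}(v,x) = d_G(v,x)$, and a distance bound of $d_G(v,x) \leq \Delta + \Gamma$ will give, by the triangle inequality applied inside $G[C_x]$, that $\diam(G[C_x]) \leq 2(\Delta + \Gamma)$.

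For the distance bound, fix $v \in C_x$ and let $x^*$ be the closest center in $N$ to $v$, so that $d_G(v,x^*) \leq \Delta$ by the covering assumption on $N$. Since $v$ joined $C_x$, we have $f_v(x) \geq f_v(x^*)$, i.e.\ $\delta_x - d_G(x,v) \geq \delta_{x^*} - d_G(x^*,v) \geq -\Delta$, which rearranges to $d_G(x,v) \leq \delta_x + \Delta \leq \Gamma + \Delta$.

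For the shortest-path containment, let $u$ be any vertex on a shortest $v$-to-$x$ path $P$ in $G$. Since $u$ lies on $P$, we have the crucial identity $d_G(x,v) = d_G(x,u) + d_G(u,v)$, which gives $f_u(x) = \delta_x - d_G(x,u) = f_v(x) + d_G(u,v)$. For any other center $y \in N$, the triangle inequality gives $d_G(y,u) \geq d_G(y,v) - d_G(u,v)$, hence $f_u(y) \leq f_v(y) + d_G(u,v)$. Subtracting,
\[
f_u(x) - f_u(y) \;\geq\; f_v(x) - f_v(y) \;\geq\; 0,
\]
where the last inequality holds because $v$ joined $C_x$ rather than $C_y$. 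Thus $u$'s preference for $x$ over $y$ is at least as strong as $v$'s. Applying this to every $y \in N$ shows that $u$ also joins $C_x$, so $u \in C_x$.

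The only delicate point is the tie-breaking: the inequality $f_u(x) \geq f_u(y)$ may not be strict. However, the derivation shows $f_u(x) - f_u(y) \geq f_v(x) - f_v(y)$, so if $f_v(x) > f_v(y)$ then also $f_u(x) > f_u(y)$, while if $f_v(x) = f_v(y)$ then also $f_u(x) = f_u(y)$ and the fact that $v$ chose $x$ forces $x$ to precede $y$ in the consistent order, so $u$ also chooses $x$ over $y$. Having established that $x \in C_x$ and that shortest paths to $x$ stay inside $C_x$, the strong-diameter bound follows immediately.
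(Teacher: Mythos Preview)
Your proof is correct and follows essentially the same argument as the paper's: bound $d_G(v,x)\le\Delta+\Gamma$ via the nearest center, then show the shortest $v$--$x$ path stays in $C_x$ by comparing $f_u(x)-f_u(y)$ to $f_v(x)-f_v(y)$, with the same tie-breaking analysis. One tiny imprecision: in the case $f_v(x)=f_v(y)$ you assert $f_u(x)=f_u(y)$, but your inequality only yields $f_u(x)\ge f_u(y)$; either way $u$ still selects $x$ (by maximization or by the fixed order), so the conclusion is unaffected.
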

	\begin{wrapfigure}{r}{0.13\textwidth}
		\begin{center}
			\vspace{-20pt}
			\includegraphics[width=0.95\linewidth]{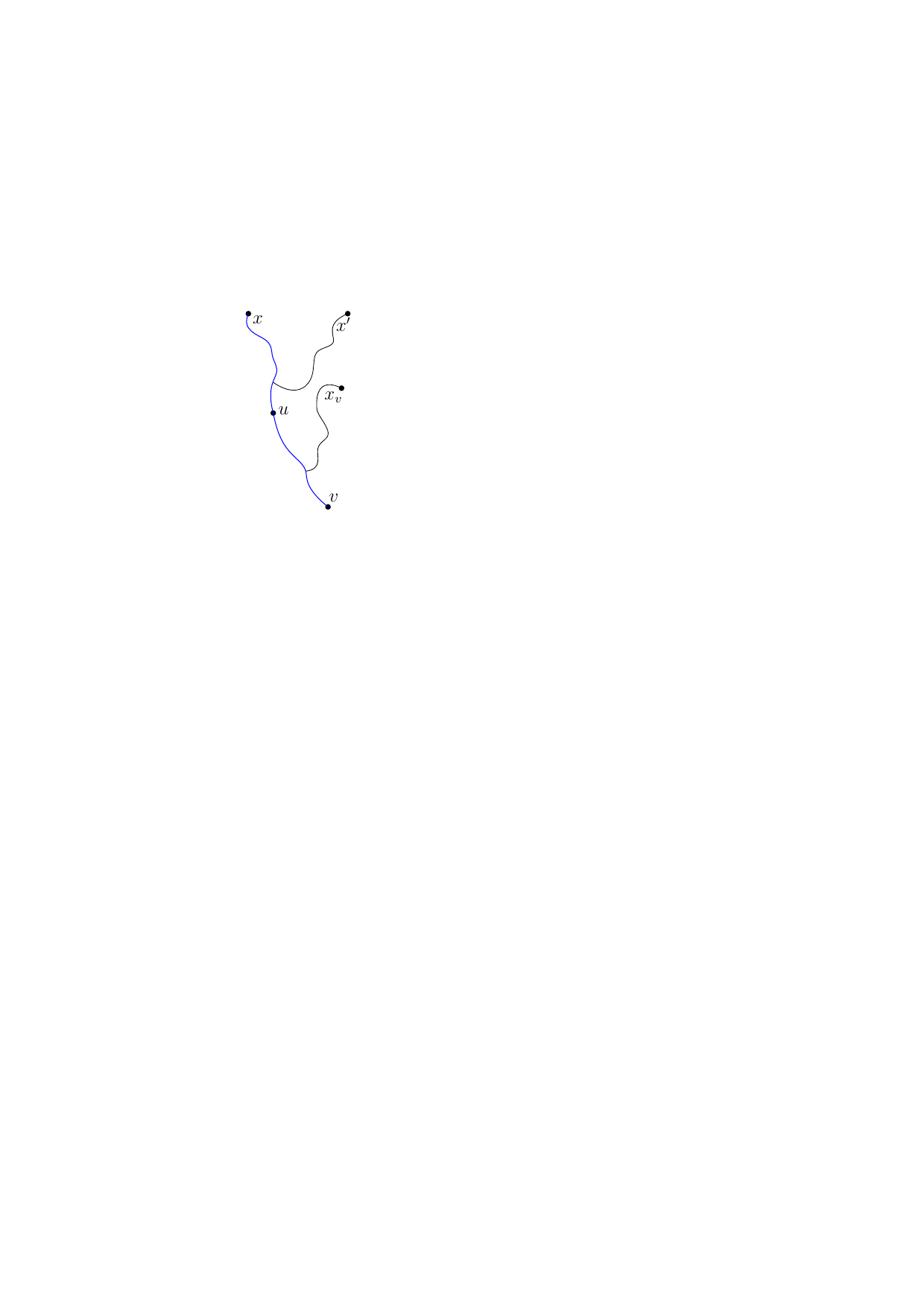}
			\vspace{-17pt}
		\end{center}
		\vspace{-20pt}
	\end{wrapfigure}

	\hspace{-18pt}\emph{Proof.}
	Consider a vertex $v\in C_x$. We argue that $d_G(v,x)\le \Delta+\Gamma$. This will  imply that $C_x$ has weak diameter $2\cdot(\Delta+\Gamma)$. 
	Let $x_v$ be the closest center to $v$, then $d_G(v,x_v)\le \Delta$. As $v$ joined the cluster of $x$, it holds that
	$\delta_{x}-d_{G}(v,x)=f_{v}(x)\ge f_{v}(x_{v})=\delta_{x_{v}}-d_{G}(v,x_{v})$, and in particular $d_{G}(v,x)\le\delta_{x}+d_{G}(v,x_{v})\le\Delta+\Gamma$. 
	
	Let $\pi$ be the shortest path in $G$ from $v$ to $x$ (the blue path on the illustration on the right). 
	For every vertex $u\in\pi$ and center $x'\in N$, it holds that
	\begin{align*}
		f_{u}(x)=\delta(x)-d_{G}(u,x) & =\delta(x)-\left(d_{G}(v,x)-d_{G}(v,u)\right)\\
		& \stackrel{(*)}{\ge}\delta(x')-d_{G}(v,x')+d_{G}(v,u)\ge\delta(x')-d_{G}(u,x')=f_{u}(x')~.
	\end{align*}
	Where in the inequality $^{(*)}$ we used the fact that $v$ chose $x$ over $x'$ and thus $f_v(x)\ge f_v(x')$.
	If the inequality $^{(*)}$ is strict, then $f_u(x)> f_u(x')$ and $u$ will prefer $x$ over $x'$. Otherwise, necessarily $f_v(x)= f_v(x')$ and thus $x$ has smaller index then $x'$ (as it was preferred by $v$). In particular
	$u$ will prefer $x$ over $x'$ (in case $f_u(x)=f_u(x')$). 	
	We conclude that $u$ will prefer the center $x$ over any other center. It follows that $\pi\subseteq C_x$. In particular, $d_{G[C_x]}(v,x)\le\Delta+\Gamma$. 
	The claim now follows.
	\qed
	
	\begin{claim}\label{claim:PadProperty}
		Consider a vertex $v$, and let $x_{(1)},x_{(2)},\dots$ be an ordering of the centers w.r.t.~$f_v$. That is $f_v(x_{(1)})\ge f_v(x_{(2)})\ge\dots$ .  
		Set $\Upsilon=f_v(x_{(1)})- f_v(x_{(2)})$. Then for every vertex $u$ such that $d_G(v,u)<\frac{\Upsilon}{2}$ it holds that $u\in C_{x_{(1)}}$.
	\end{claim}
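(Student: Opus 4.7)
The plan is to compare $f_u(x_{(1)})$ against $f_u(x)$ for every other center $x$, and to show that for any $u$ within distance strictly less than $\Upsilon/2$ of $v$, the first-place center $x_{(1)}$ remains strictly preferred by $u$. Since strict preference suffices regardless of the tie-breaking rule, the conclusion $u\in C_{x_{(1)}}$ will follow immediately from the definition of the clustering procedure.

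The core estimate is a two-sided triangle-inequality sandwich. First I would observe that $d_G(u,x_{(1)}) \le d_G(v,x_{(1)}) + d_G(v,u)$, which yields the lower bound
\[
f_u(x_{(1)}) \;=\; \delta_{x_{(1)}} - d_G(u,x_{(1)}) \;\ge\; f_v(x_{(1)}) - d_G(v,u).
\]
Symmetrically, for any other center $x$, $d_G(u,x) \ge d_G(v,x) - d_G(v,u)$, giving the upper bound
\[
f_u(x) \;=\; \delta_x - d_G(u,x) \;\le\; f_v(x) + d_G(v,u).
\]
Subtracting these two inequalities and using the definition of $\Upsilon$ (so that $f_v(x_{(1)}) - f_v(x) \ge \Upsilon$ for every $x\neq x_{(1)}$) yields
\[
f_u(x_{(1)}) - f_u(x) \;\ge\; \Upsilon - 2\,d_G(v,u).
\]

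Since by hypothesis $d_G(v,u) < \Upsilon/2$, the right-hand side is strictly positive, so $f_u(x_{(1)}) > f_u(x)$ for every $x\neq x_{(1)}$. Hence $x_{(1)}$ uniquely maximizes $f_u$, and by the clustering rule $u$ joins $C_{x_{(1)}}$. There is no real obstacle here; the only subtlety worth noting is that because the inequality is strict, the consistent tie-breaking order among centers plays no role in this argument.
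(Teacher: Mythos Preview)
Your proof is correct and essentially identical to the paper's own argument: both apply the triangle inequality twice (once to bound $f_u(x_{(1)})$ from below by $f_v(x_{(1)}) - d_G(v,u)$, once to bound $f_u(x)$ from above by $f_v(x) + d_G(v,u)$) and combine these with $d_G(v,u) < \Upsilon/2$ to conclude strict preference. The paper just writes it as a single chain of inequalities rather than separating the two bounds.
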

	\begin{proof}
		For every center $x_{(i)}\ne x_{(1)}$, using the triangle inequality it holds that,
		\begin{align*}
			f_{u}(x_{(1)}) & =\delta_{x_{(1)}}-d_{G}(u,x_{(1)})>\delta_{x_{(1)}}-d_{G}(v,x_{(1)})-\frac{\Upsilon}{2}\\
			& \ge\delta_{x_{(i)}}-d_{G}(v,x_{(i)})+\frac{\Upsilon}{2}>\delta_{x_{(i)}}-d_{G}(u,x_{(i)})=f_{u}(x_{(i)})~.
		\end{align*}
		In particular, $u\in C_{x_1}$. 
	\end{proof}

	\paragraph*{Truncated Exponential Distributions.} Similarly to previous works, 
	we will use a truncated exponential distribution. That is, a exponential distribution conditioned on the event that the outcome lays in a certain interval. 
	The \emph{$[\theta_1,\theta_2]$-truncated exponential distribution} with
	parameter~$\lambda$ is denoted by $\Texp_{[\theta_1, \theta_2]}(\lambda)$, and
	the density function is:
	$f(y)= \frac{ \lambda\, e^{-\lambda\cdot y} }{e^{-\lambda \cdot \theta_1} - e^{-\lambda
			\cdot \theta_2}}$, for $y \in [\theta_1, \theta_2]$.
	
	\subsubsection{Clustering Algorithm for Graphs with Bounded Highway Dimension}
	Consider a weighted graph $G=(V,E,w)$ with highway dimension $h:\mathbb{R}_{\geq 0}\to\mathbb{N}\cup\{\infty\}$. Fix $r\in\mathbb{R}_{\geq 0}$ and $\eps\in[0,\frac{1}{4}]$.
	Let $\cT$ be the towns (from \Cref{def:towns}) of an $(r,\eps)$-shortest path cover. For every town $T\in\cT$ let $v_T\in T$ be the vertex such that $d_G(v_T,\SPC)>(2+\eps)\cdot r$ and $T=\left\{u\in V\mid d_G(u,v_T)\le r\right\}$. We will call $v_T$ the \emph{town center} of $T$, and let $N_T=\{v_T\}_{T\in\cT}$ be the set of town centers. While all the vertices in $V\setminus T$ are at distance greater than $r$ from $T$ we can say something stronger for the town center:
	\begin{claim}\label{clm:SeprationTownCenter}
		For every town $T\in \cT$ it holds that $d_G(v_T,V\setminus T)>(2-\eps)\cdot r$.
	\end{claim}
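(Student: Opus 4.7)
The plan is to proceed exactly as in the separation argument of \Cref{lem:townproperties}, but this time exploiting the strictly stronger property $d_G(v_T,\SPC)>(2+\eps)\cdot r$ that defines the town \emph{center} (rather than mere town membership). Assume, for contradiction, that there is some $z\in V\setminus T$ with $d_G(v_T,z)\le (2-\eps)\cdot r$. By definition of $T$ we have $d_G(v_T,z)>r$, so
$$
d_G(v_T,z)\in\bigl(r,\,(2-\eps)\cdot r\bigr]\subseteq\bigl(r,\,(2+\eps)\cdot r\bigr],
$$
placing the pair $(v_T,z)$ in the distance window that the underlying $(r,\eps)$-shortest path cover \SPC\ is required to handle.

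Applying \Cref{def:SPC} to this pair yields a hub $x\in\SPC$ with
$d_G(v_T,x)+d_G(x,z)\le(1+\eps)\,d_G(v_T,z)$. Dropping the nonnegative term $d_G(x,z)$ and using the assumed upper bound on $d_G(v_T,z)$ gives
$$
d_G(v_T,x)\;\le\;(1+\eps)(2-\eps)\cdot r\;=\;(2+\eps-\eps^2)\cdot r\;\le\;(2+\eps)\cdot r,
$$
which contradicts the defining property $d_G(v_T,\SPC)>(2+\eps)\cdot r$ of the town center~$v_T$. For $\eps>0$ the inequality is already strict via $-\eps^2<0$; for the edge case $\eps=0$ the derived bound is $d_G(v_T,x)\le 2r$, which still contradicts the strict hypothesis $d_G(v_T,x)>2r$.

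No real obstacle is expected: the whole argument is a two-line calculation. The one thing to watch is that one must invoke the town-center separation $d_G(v_T,\SPC)>(2+\eps)\cdot r$ (not just the weaker $d_G(v_T,\SPC)>r$ that any vertex of $T$ satisfies), and verify that $(1+\eps)(2-\eps)\le 2+\eps$ closes the gap tightly; the residual slack $\eps^{2}\cdot r$ is exactly what turns the hypothesized counterexample into a contradiction.
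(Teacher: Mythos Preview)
Your argument is correct and essentially identical to the paper's: both assume a nearby point outside the town, observe its distance to $v_T$ lies in $(r,(2+\eps)r]$, invoke the shortest path cover to find a hub $x$ with $d_G(v_T,x)\le(1+\eps)(2-\eps)r\le(2+\eps)r$, and contradict the defining property of the town center. Your explicit treatment of the $\eps=0$ edge case is a nice touch the paper omits.
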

	\begin{proof}
		For the sake of contradiction, suppose that there is a vertex $u\in V\setminus T$ such that $d_G(v_T,u)\le (2-\eps)\cdot r$. As $u\notin T$, it holds that $d_G(v_T,u)>r$.
		It follows that there is a hub $x\in\SPC$ such that 
		\begin{align*}
			d_{G}(v_{T},x) & \le d_{G}(v_{T},x)+d_{G}(x,u)\le(1+\eps)\cdot d_{G}(v_{T},u)\\
			& \le(1+\eps)(2-\eps)\cdot r\le (2+\eps)\cdot r~
		\end{align*}
		a contradiction.
	\end{proof}
	
	As $r>0$ is arbitrary, for the sake of simplicity, we will assume from now on that $r=1$. Afterwards, everything could be scaled accordingly.
	We will run the~\cite{MPX13} algorithm described above w.r.t.\ the set of centers $N=\SPC\cup N_T$. Set $t=\frac{1}{2}$, $a=\frac{1}{2}+2\eps$, $b=1+2\eps$,
	 and $\lambda=4\cdot(\ln\left(2\cdot h(\eps)^{2}+1\right)+1)=O(\log(h(\eps))$.
	For each center $v_T\in N_T$ we will sample its shift $\delta_{v_T}$ from $\Texp_{[0,t]}(\lambda)$, while for $x\in\SPC$ we will sample its shifts $\delta_x$ from $\Texp_{[a,b]}(\lambda)$. 
	Then we run the~\cite{MPX13} clustering algorithm w.r.t.\ the graph $G$, and the sampled shifts. 
	
	\subsubsection{Proof of \Cref{thm:padded}}\label{subsec:ThmPadProof}
	Clearly, by \Cref{claim:StrongDiam}, as all the shifts are chosen in $[0,b]$, the created partition has strong diameter $2\cdot(2+\eps+b)=6\cdot(1+\eps)$. It remains to prove the padding guarantee.
	Let $\mu=\frac18$, and fix $\gamma\in[0,\mu]$. Consider the ball $B_\gamma=B_{G}(u,\gamma)$. 
	We begin by bounding the possible centers of clusters to which the vertices of $B_\gamma$ can join.
 
	\begin{claim}\label{clm:FarAreIrrelevant}
		Let $x\in N$ be a center such that $d_{G}(x,u)> 2.8+4\eps$. Then no vertex in $B_\gamma$ will join the cluster of $x$.
	\end{claim}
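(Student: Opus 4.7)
The strategy is to exhibit, for every vertex $v \in B_\gamma$, some center $y \in N$ close enough to $u$ that dominates the distant $x$, i.e., $f_v(y) > f_v(x)$, which by the definition of the clustering will imply $v \notin C_x$. The plan splits naturally into an upper bound on $f_v(x)$ that uses only the fact that $x$ is far from $u$, and a case analysis on whether $u$ lies in a town or in the sprawl to produce a concrete near-by center $y$.

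First, I would upper-bound $f_v(x)$. By the triangle inequality and $d_G(v,u) \le \gamma \le \mu = \tfrac{1}{8}$, any $v \in B_\gamma$ satisfies $d_G(v,x) \ge d_G(u,x) - \gamma > 2.8 + 4\eps - \tfrac{1}{8} = 2.675 + 4\eps$. Since every shift in the algorithm is bounded above by $b = 1 + 2\eps$, this gives $f_v(x) = \delta_x - d_G(v,x) < (1+2\eps) - (2.675 + 4\eps) = -1.675 - 2\eps$, regardless of whether $x$ is a hub or a town center.

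Next, I would produce a dominating center $y$ in each case. If $u$ belongs to the sprawl, then by \Cref{lem:townproperties} there is a hub $y \in \SPC$ with $d_G(u,y) \le (2+\eps) \cdot r = 2+\eps$, hence $d_G(v,y) \le 2 + \eps + \gamma \le 2.125 + \eps$. Since $\delta_y$ is drawn from $\Texp_{[a,b]}(\lambda)$ we have $\delta_y \ge a = \tfrac12 + 2\eps$, yielding $f_v(y) \ge (\tfrac12 + 2\eps) - (2.125 + \eps) = -1.625 + \eps$. If instead $u$ belongs to a town $T$, the town center $y = v_T$ satisfies $d_G(u,v_T) \le r = 1$, so $d_G(v,v_T) \le 1 + \gamma \le 1.125$. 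With $\delta_{v_T} \ge 0$ (since it is drawn from $\Texp_{[0,t]}(\lambda)$) we get $f_v(v_T) \ge -1.125$. In both cases $f_v(y) > -1.675 - 2\eps > f_v(x)$, so $v$ strictly prefers $y$ over $x$ and cannot join $C_x$.

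The only subtlety worth watching is that the numerical constants are deliberately tuned so that the gap $(2.8 + 4\eps) - (2 + \eps + \mu)$ is strictly larger than $b - a = \tfrac12$ (the maximum possible shift advantage of a far hub over a near hub) and similarly strictly larger than $b - 0 = 1 + 2\eps$ minus the town-center distance bound $1 + \mu$. No other technical obstacle arises, since both subcases of the case split are already handled by \Cref{lem:townproperties}, and the argument is independent of the random draw of shifts beyond the range constraints.
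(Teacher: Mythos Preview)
Your proof is correct and follows essentially the same approach as the paper: bound $f_v(x)$ from above using the distance hypothesis and the global shift bound $b$, then exhibit a nearby center (a hub if $u$ is in the sprawl, the town center $v_T$ if $u$ lies in a town) whose $f$-value strictly beats that bound. The only cosmetic difference is that the paper first argues $B_\gamma$ lies entirely in the same town or entirely in the sprawl and then picks a center close to each $z\in B_\gamma$, whereas you pick a center close to $u$ and pay an extra $\gamma$ via the triangle inequality; the numerical slack absorbs this either way.
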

	\begin{proof}
		Suppose first that $u\in T\in\cT$ belongs to a town with center $v_T$. 
		Consider a vertex $z\in B_\gamma$. As the distance from $T$ to $V\setminus T$ is greater than $1$ (\Cref{lem:townproperties}),
        and $d_{G}(u,z)\le \gamma\le\mu<1$, it follows that $z\in T$ and thus $d_{G}(z,v_T)\le 1$.
		That is, the entire ball $B_\gamma$ is contained in the town $T$.
		As the maximum shift of $x$ is $b$ we have
		\begin{align*}
			&f_{z}(v_{T}) =\delta_{v_{T}}-d_{G}(v_{T},z)\ge-1\\
			&f_{z}(x)  =\delta_{x}-d_{G}(x,z)\le\delta_{x}-\left(d_{G}(x,u)-d_{G}(u,z)\right)< b-\left(2.8+4\eps-\mu\right)~.
		\end{align*}
		As $b+\mu\le 1.8+4\eps$, 
  it follows that $f_{z}(x)< f_{z}(v_{T})$ and thus indeed no vertex $z\in B_\gamma$ will  join the cluster centered at $x$.
		
		Suppose next that $u\in \cS$ is in the sprawl. 
		As the distance from every town to the sprawl is greater than $1$ by \Cref{lem:townproperties}, it follows that the entire ball $B_\gamma$ is in the sprawl. Consider a vertex $z\in B_\gamma$. There is a hub $x_z$ at distance at most $d_{G}(z,x_z)\le 2+\eps$ from $z$ by \Cref{lem:townproperties}. As the minimum shift of $x_z$ is $a$, while the maximum shift of $x$ is $b$, we have that
		\begin{align*}
			& f_{z}(x_{z})=\delta_{x_{z}}-d_{G}(x_{z},z)\ge a-\left(2+\eps\right)\\
			& f_{z}(x)=\delta_{x}-d_{G}(x,z)\le\delta_{x}-\left(d_{G}(x,u)-d_{G}(u,z)\right)< b-\left(2.8+4\eps-\mu\right)~.
		\end{align*}
		As $b-a+\mu\le 0.8+3\eps$, 
  it follows that $f_{z}(x)< f_{z}(x_u)$ and thus indeed no vertex $z\in B_\gamma$ will not join the clustered centered at $x$.		
	\end{proof}

	\begin{claim}\label{clm:TownCenterOnlyClustersItsTown}
		Consider a town $T\in \cT$ with center $v_T$, and suppose that $u\notin T$. Then no vertex in $B_\gamma$ will join the cluster centered at $v_T$.
	\end{claim}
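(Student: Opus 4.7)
The goal is to show that for every $z \in B_\gamma = B_G(u,\gamma)$, the center $v_T$ is \emph{not} the maximizer of $f_z$, i.e., some other center $x' \in N$ satisfies $f_z(x') > f_z(v_T)$, in which case $z \notin C_{v_T}$. The plan is to case split on where $u$ lies and, in each case, produce a single alternative center $x'$ that uniformly beats $v_T$ throughout $B_\gamma$. The key structural ingredients are: (i) the shift-range separation $\delta_{v_T} \le t$ versus $\delta_x \ge a$ for any hub $x\in\SPC$, which gives hubs a $2\eps$ ``head start''; (ii) the strict separation $d_G(u,v_T) > 2-\eps$ from \Cref{clm:SeprationTownCenter}, which follows because $u \notin T$; and (iii) the $r$-separation between towns from \Cref{lem:townproperties}.

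\textbf{Case A: $u$ lies in another town $T' \ne T$.} Take $x' = v_{T'}$. Since $u \in T'$, \Cref{lem:townproperties} gives $d_G(u,v_{T'}) \le 1$, so $d_G(z,v_{T'}) \le 1+\gamma$ by the triangle inequality. Meanwhile, \Cref{clm:SeprationTownCenter} gives $d_G(u,v_T) > 2-\eps$, hence $d_G(z,v_T) > 2-\eps-\gamma$. Using $\delta_{v_{T'}} \ge 0$ and $\delta_{v_T} \le t = \tfrac12$, a direct computation yields
\[
f_z(v_{T'}) - f_z(v_T) \;>\; -(1+\gamma) - \tfrac12 + (2-\eps-\gamma) \;=\; \tfrac12 - \eps - 2\gamma \;\ge\; 0,
\]
where the final $\ge 0$ uses $\eps \le \tfrac14$ and $\gamma \le \mu = \tfrac18$, and the inequality is strict thanks to the strict bound in \Cref{clm:SeprationTownCenter}.

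\textbf{Case B: $u$ lies in the sprawl $\cS$.} First observe that every $z \in B_\gamma$ also lies in the sprawl, since $\gamma \le \mu < 1$ and towns are $1$-separated from the rest of $V$ by \Cref{lem:townproperties}. Hence by \Cref{lem:townproperties} there exists a hub $x_z \in \SPC$ with $d_G(z,x_z) \le 2+\eps$. Take $x' = x_z$. The shift gap is $\delta_{x_z} - \delta_{v_T} \ge a - t = 2\eps$, and combined with the distance bounds $d_G(z,v_T) > 2-\eps-\gamma$ (strict, from \Cref{clm:SeprationTownCenter}) and $d_G(z,x_z) \le 2+\eps$, one obtains
\[
f_z(x_z) - f_z(v_T) \;=\; (\delta_{x_z}-\delta_{v_T}) + (d_G(z,v_T) - d_G(z,x_z)) \;>\; 2\eps - 2\eps - \gamma \;=\; -\gamma,
\]
which is leveraged into a strict positive gap using the strictness of \Cref{clm:SeprationTownCenter} (and, if needed, by picking $x_z$ to be the closest hub to $u$ rather than to $z$, since $u$'s position in the sprawl is fixed and the cover guarantees a hub within $(2+\eps)$ of $u$ itself).

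\textbf{Main obstacle.} Case A is essentially tight but goes through cleanly because $v_{T'}$ is very close to $z$. The delicate case is B: the shift advantage $2\eps$ of hubs must absorb the full distance advantage $v_T$ could acquire from being on the boundary of the $(2-\eps)$-separation region around $T$. Closing this case requires using the strict separation in \Cref{clm:SeprationTownCenter} together with the specific choice of parameters $t,a,b,\mu,\eps$; this is exactly the reason the shift intervals for town centers and hubs were designed to be disjoint with the precise gap $a - t = 2\eps$. Once Case B is settled, the claim follows immediately since the chosen $x'$ never equals $v_T$ and is strictly preferred by every $z \in B_\gamma$, respecting the consistent tie-breaking rule.
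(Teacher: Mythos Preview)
Your Case A is fine, though you lose $\gamma$ twice by routing bounds through $u$ rather than through $z$ directly; with $\eps\le\tfrac14$ and $\gamma\le\tfrac18$ the slack is just enough and the strict inequality from \Cref{clm:SeprationTownCenter} closes it.

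Case B, however, has a genuine gap. Your computation yields only $f_z(x_z)-f_z(v_T)>-\gamma$, which is not positive, and neither of your proposed rescues works: the strictness of \Cref{clm:SeprationTownCenter} is already baked into your ``$>$'', and replacing $x_z$ by a hub close to $u$ makes the bound on $d_G(z,x_z)$ \emph{worse} (it becomes $\le 2+\eps+\gamma$), giving $>-2\gamma$. The problem is that you bound $d_G(z,v_T)$ via the triangle inequality through $u$, losing a $\gamma$ you cannot afford. But you have already observed that every $z\in B_\gamma$ lies in the sprawl, so in particular $z\notin T$; hence \Cref{clm:SeprationTownCenter} applies \emph{directly to $z$} and gives $d_G(z,v_T)>2-\eps$ with no $\gamma$ loss. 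With this,
\[
f_z(x_z)-f_z(v_T)\;\ge\;(a-t)+\bigl(d_G(z,v_T)-d_G(z,x_z)\bigr)\;>\;2\eps+(2-\eps)-(2+\eps)\;=\;0,
\]
exactly as in the paper. So the missing idea is simply to apply the separation claim to $z$ itself rather than to $u$.
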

	\begin{proof}
		As the distance from $T$ to $V\setminus T$ is greater than $1$ by \Cref{lem:townproperties}, it follows that the entire ball $B_\gamma$ is disjoint from $T$. In particular, by \Cref{clm:SeprationTownCenter}, the distance from $v_T$ to every vertex in $B_\gamma$ is greater than $2-\eps$.
		
		Suppose first that $u\in \cS$ is in the sprawl, and fix a vertex $z\in B_\gamma$. There is a hub $x_z\in\SPC$ at distance at most $2+\eps$ from $z$ by \Cref{lem:townproperties}. As the maximum shift of $v_T$ is $t$, while the minimum shift of $x_z$ is $a$, we have that
		\begin{align*}
			f_{z}(x_{z}) & =\delta_{x_{z}}-d_{G}(x_{z},z)\ge a-\left(2+\eps\right)\\
			f_{z}(v_{T}) & =\delta_{v_{T}}-d_{G}(v_{T},z)< t-\left(2-\eps\right)~.
		\end{align*}
		As $a-t\ge2\eps$, 
  it follows that $f_{z}(v_{T})< f_{z}(x_{z})$ and thus indeed no vertex $z\in B_\gamma$ will join the cluster centered at $v_T$.
		
		Next suppose that $u$ belongs to some town $T'\in\cT$ where $T'\ne T$. Let $v_{T'}$ be the town center of~$T'$. Following the exact same argument as in the first part of \Cref{clm:FarAreIrrelevant}, the entire ball $B_\gamma$ is contained in $T'$. Consider a vertex $z\in B_\gamma$. As the distance from $z$ to $v_{T'}$ is at most $1$, and the maximum shift of $v_T$ is $t$, we have that
		\begin{align*}
			f_{z}(v_{T'}) & =\delta_{v_{T'}}-d_{G}(v_{T'},z)\ge-1\\
			f_{z}(v_{T}) & =\delta_{v_{T}}-d_{G}(v_{T},z)< t-\left(2-\eps\right)~,
		\end{align*}		
		As $t\le 1-\eps$ (recall that $\eps\in[0,\frac14]$),
  it follows that $f_{z}(v_{T})< f_{z}(v_{T'})$ and thus indeed no vertex $z\in B_\gamma$ will not join the clustered centered at $v_T$.
	\end{proof}
	
	Let 
    $N_u=\SPC\cap B_{G}(u,2.8+4\eps)$ be all the hubs at distance at most $2.8+4\eps$ from $u$.
	In addition, if $u\in T$, add $v_T$ to~$N_u$.
	By \Cref{lem:hub_bound_2.8Ball}, $|N_u|\le 2\cdot h(\eps)^2+1$. From  \Cref{clm:FarAreIrrelevant} and \Cref{clm:TownCenterOnlyClustersItsTown} it follows that vertices in $B_\gamma$ can only join clusters centered in vertices from $N_u$.
		
	Fix an arbitrary order $N_u=\{x_1,x_2,\dots\}$.
	In the created partition, denote by $C_{x_i}$ the cluster centered in $x_i$.
	Denote by $\mathcal{F}_i$ the event that $u$ joins the cluster of $x_i$, i.e., $u\in C_{x_i}$. 
	Denote by $\mathcal{C}_i$ the event that $u$ joins the cluster of $x_i$, but not all of the vertices in $B_\gamma$ joined that cluster, that is $u\in C_{x_i}\cap B_\gamma\neq B_\gamma$. 
	\Cref{thm:padded} will follow once we show that $\Pr\left[\cup_{i}\mathcal{C}_{i}\right]\le 1-e^{-O(\gamma\cdot\lambda)}$.
	Set $\alpha=e^{-2\gamma\cdot\lambda}$.
	
	\begin{claim}
		For every $i$,
		$\Pr\left[\mathcal{C}_{i}\right]\le\left(1-\alpha\right)\left(\Pr\left[\mathcal{F}_{i}\right]+\frac{1}{e^{\lambda\cdot(1-\eps)}-1}\right)$.
	\end{claim}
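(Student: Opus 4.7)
The plan is to condition on the realizations $\{\delta_{x_j}\}_{j \neq i}$ and reduce the inequality to a pointwise statement about the distribution of the single variable $\delta_{x_i}$. After fixing such a realization, set
\[
L_i \;=\; \max_{j \neq i}\bigl(\delta_{x_j} + d_G(u, x_i) - d_G(u, x_j)\bigr),
\]
with the usual lexicographic adjustment for ties. By the MPX clustering rule together with Claims~\ref{clm:FarAreIrrelevant} and~\ref{clm:TownCenterOnlyClustersItsTown} (which let us restrict attention to centers in $N_u$), the conditional event $\mathcal{F}_i$ equals $\{\delta_{x_i} > L_i\}$.

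Next, I would mimic the triangle-inequality computation used in Claim~\ref{claim:PadProperty} to show that whenever $\delta_{x_i} > L_i + 2\gamma$ the gap $\Upsilon = f_u(x_i) - \max_{j \neq i} f_u(x_j)$ strictly exceeds $2\gamma$. By Claim~\ref{claim:PadProperty} this forces the entire ball $B_\gamma$ into $C_{x_i}$ and hence precludes $\mathcal{C}_i$. So conditionally $\mathcal{C}_i \subseteq \{L_i < \delta_{x_i} \leq L_i + 2\gamma\}$.

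The inequality then reduces to an elementary property of the truncated exponential distribution. Writing $[a',b']$ for the support of $\delta_{x_i}$ (either $[0,t]$ or $[a,b]$), a direct evaluation of the CDF in the interior case $[L_i, L_i+2\gamma] \subseteq [a',b']$ yields
\[
\Pr\bigl[L_i < \delta_{x_i} \leq L_i+2\gamma \bigm| \cdot\bigr] \;=\; \frac{(1-\alpha)\, e^{-\lambda L_i}}{e^{-\lambda a'}-e^{-\lambda b'}},\qquad \Pr[\mathcal{F}_i\mid\cdot] \;=\; \frac{e^{-\lambda L_i} - e^{-\lambda b'}}{e^{-\lambda a'}-e^{-\lambda b'}},
\]
so subtracting gives the exact identity $\Pr[\mathcal{C}_i\mid\cdot] = (1-\alpha)\bigl(\Pr[\mathcal{F}_i\mid\cdot] + \tfrac{1}{e^{\lambda(b'-a')}-1}\bigr)$. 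The boundary regimes $L_i < a'$, $L_i > b'$, and $L_i + 2\gamma > b'$ only shrink the left-hand integral, so the inequality is preserved. Taking expectation over the conditioned shifts and invoking the specific parameter choices to bound the denominator produces the stated additive term $\tfrac{1}{e^{\lambda(1-\eps)}-1}$.

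The main technical obstacle is the careful boundary-case analysis when $[L_i, L_i+2\gamma]$ is only partially inside the support $[a',b']$; each of the three boundary subcases needs to be checked separately against the claimed bound. A secondary subtlety is the tie-breaking convention in the MPX clustering rule, which only affects measure-zero events under the continuous shift distributions and hence does not change the probability bound.
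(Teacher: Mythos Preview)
Your approach is essentially identical to the paper's: condition on the remaining shifts, define the threshold (your $L_i$ is the paper's $\rho_X$), invoke Claim~\ref{claim:PadProperty} to trap $\mathcal{C}_i$ in an interval of length $2\gamma$, compute the truncated-exponential integrals, and integrate out the conditioning. Your uniform treatment of the two supports $[a',b']$ is, if anything, a bit cleaner than the paper's two separate cases.

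One minor correction to your last sentence: with the paper's parameters one has $t=b-a=\tfrac12$, so the computation actually yields the additive term $\tfrac{1}{e^{\lambda/2}-1}$, and this is the quantity the paper carries into the subsequent union-bound. Since $\tfrac12<1-\eps$ for $\eps\in[0,\tfrac14]$, the term $\tfrac{1}{e^{\lambda(1-\eps)}-1}$ written in the claim is \emph{smaller}, so ``bounding the denominator'' does not literally produce it; this appears to be a typo in the claim statement rather than a gap in your argument.
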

	\begin{proof}
		Fix $i$ and denote $x=x_{i}$, $\mathcal{C}=\mathcal{C}_i$, $\mathcal{F}=\mathcal{F}_i$, and $\delta=\delta_{x_i}$.
		Let $X\in [0,b]^{|N_u|-1}$ be the vector where the $j$'th coordinate equals $\delta_{x_j}$ (skipping over $x_i$, that is the $j$'th coordinate equals $\delta_{x_{j+1}}$ for $j>i$).
        Set $\rho_{X}=d_{G}(x,v)+\max_{j<|N_u|}\left\{ \delta_{x_{j}}-d_{G}(x_{j},v)\right\}$. Note that~$\rho_X$ is the minimal value of $\delta$ such that if $\delta>\rho_X$,
        then $x$ has the maximal value $f_u(x)$, and therefore~$v$ will join the cluster of $x$. 
		
		Assume first that $x\in\SPC$.
		Note that it is possible that $\rho_X>b$ (in which case $\Pr[\cF\mid X]=0$). 
		It is also possible that $\rho_X\le a$ (in which case $\Pr[\cF\mid X]=1$). 
		Denote $\tilde{\rho}_{X}=\max\{\rho_X,a\}$.
		Conditioning on all the other shift samples having values $X$, and assuming that $\rho_{X}\le b$ it holds that
		\[
		\Pr\left[\mathcal{F}\mid X\right]=\Pr\left[\delta>\rho_{X}\right]=\int_{\tilde{\rho}_{X}}^{b}\frac{\lambda\cdot e^{-\lambda y}}{e^{-\lambda\cdot a}-e^{-\lambda\cdot b}}dy=\frac{e^{-\lambda\cdot\tilde{\rho}_{X}}-e^{-\lambda\cdot b}}{e^{-\lambda\cdot a}-e^{-\lambda\cdot b}}~.
		\]
		If $\delta>\rho_{X}+2\gamma$ then $f_u(x)=\delta-d_{G}(x,u)>\max_{j\ne i}\left\{ \delta_{x_{j}}-d_{G}(x_{j},u)\right\} +2\gamma=\max_{j\ne i}f_u(x_j) +2\gamma$. In particular, by \Cref{claim:PadProperty} the ball $B_\gamma$ will be contained in $C_{x}$. If $\rho_X+2\gamma<a$ then $\Pr[\cC\mid X]=0$ and we are done. We will thus assume $\rho_X+2\gamma\ge a$.
		We conclude
		\begin{align*}
			\Pr\left[\mathcal{C}\mid X\right] & \le\Pr\left[\rho_{X}\le\delta\le\rho_{X}+2\gamma\right]\\
			& \le\int_{\tilde{\rho}_{X}}^{\rho+2\gamma}\frac{\lambda\cdot e^{-\lambda y}}{e^{-\lambda\cdot a}-e^{-\lambda\cdot b}}dy\\
			& \le\frac{e^{-\lambda\cdot\tilde{\rho}_{X}}-e^{-\lambda\cdot(\tilde{\rho}_{X}+2\gamma)}}{e^{-\lambda\cdot a}-e^{-\lambda\cdot b}}\\
			& =\left(1-e^{-2\gamma\cdot\lambda}\right)\cdot\frac{e^{-\lambda\cdot\tilde{\rho}_{X}}}{e^{-\lambda\cdot a}-e^{-\lambda\cdot b}}\\
			& =\left(1-\alpha\right)\cdot\left(\Pr\left[\mathcal{F}\mid X\right]+\frac{e^{-\lambda\cdot b}}{e^{-\lambda\cdot a}-e^{-\lambda\cdot b}}\right)\\
			& =\left(1-\alpha\right)\cdot\left(\Pr\left[\mathcal{F}\mid X\right]+\frac{1}{e^{\lambda\cdot(b-a)}-1}\right)~.
		\end{align*}
		Denote by $f$ the density function of the distribution over all possible values of $X$. Using the law of total probability, we can bound the probability that the cluster of $x$ cuts $B_\gamma$
		\begin{align*}
			\Pr\left[\mathcal{C}\right] & =\int_{X}\Pr\left[\mathcal{C}\mid X\right]\cdot f(X)~dX\\
			& \le\left(1-\alpha\right)\cdot\int_{X}\left(\Pr\left[\mathcal{F}\mid X\right]+\frac{1}{e^{\lambda\cdot(b-a)}-1}\right)\cdot f(X)~dX\\
			& =\left(1-\alpha\right)\cdot\left(\Pr\left[\mathcal{F}\right]+\frac{1}{e^{\lambda\cdot(b-a)}-1}\right)\\
			& =\left(1-\alpha\right)\cdot\left(\Pr\left[\mathcal{F}\right]+\frac{1}{e^{\frac\lambda2}-1}\right)~.
		\end{align*}
		
		Next we consider the case where $x\notin\SPC$. Here $u$ belongs to a town $T$ and $x=v_T$. The analysis follows the exact same lines with the only difference being that $\delta$ is sampled from $[0,t]$ instead of~$[a,b]$. In particular, it holds that $\Pr\left[\mathcal{F}\mid X\right]=\Pr\left[\delta>\rho_{X}\right]=\frac{e^{-\lambda\cdot\tilde{\rho}_{X}}-e^{-\lambda\cdot t}}{1-e^{-\lambda\cdot t}}$, and 
		\begin{align*}
			\Pr\left[\mathcal{C}\mid X\right] & \le\Pr\left[\rho_{X}\le\delta\le\rho_{X}+2\gamma\right]\le\frac{e^{-\lambda\cdot\tilde{\rho}_{X}}-e^{-\lambda\cdot(\tilde{\rho}_{X}+2\gamma)}}{1-e^{-\lambda\cdot t}}\\
			& =\left(1-e^{-2\gamma\cdot\lambda}\right)\cdot\frac{e^{-\lambda\cdot\tilde{\rho}_{X}}}{1-e^{-\lambda\cdot t}}=\left(1-\alpha\right)\cdot\left(\Pr\left[\mathcal{F}\mid X\right]+\frac{1}{e^{\lambda\cdot t}-1}\right)~.
		\end{align*}
		Hence using the law of total probability it follows that
		\begin{align*}
			\Pr\left[\mathcal{C}\right] & =\int_{X}\Pr\left[\mathcal{C}\mid X\right]\cdot f(X)~dX\\
			& \le\left(1-\alpha\right)\cdot\int_{X}\left(\Pr\left[\mathcal{F}\mid X\right]+\frac{1}{e^{\lambda\cdot t}-1}\right)\cdot f(X)~dX=\left(1-\alpha\right)\cdot\left(\Pr\left[\mathcal{F}\right]+\frac{1}{e^{\frac\lambda2}-1}\right)~.
		\end{align*}
	\end{proof}
	
    To finalize the proof of \Cref{thm:padded}, we bound the probability that the ball $B_\gamma$ is cut. 
	\begin{align*}
		\Pr\left[\cup_{i}\mathcal{C}_{i}\right]\leq\sum_{i=1}^{|N_{u}|}\Pr\left[\mathcal{C}_{i}\right] & \le\left(1-\alpha\right)\cdot\sum_{i=1}^{|N_{u}|}\left(\Pr\left[\mathcal{F}_{i}\right]+\frac{1}{e^{\frac{\lambda}{2}}-1}\right)\\
		& \le\left(1-e^{-2\gamma\cdot\lambda}\right)\cdot\left(1+\frac{|N_{u}|}{e^{\frac{\lambda}{2}}-1}\right)\\
		& \le\left(1-e^{-2\gamma\cdot\lambda}\right)\cdot\left(1+e^{-2\gamma\cdot\lambda}\right)=1-e^{-4\gamma\cdot\lambda}~,
	\end{align*}
	where the last inequality holds as
	\[
	e^{-2\gamma\lambda}=\frac{e^{-2\gamma\lambda}\left(e^{\frac{\lambda}{2}}-1\right)}{e^{\frac{\lambda}{2}}-1}\ge\frac{e^{-2\gamma\lambda}\cdot e^{\frac{\lambda}{2}-1}}{e^{\frac{\lambda}{2}}-1}\ge\frac{e^{\frac{\lambda}{4}-1}}{e^{\frac{\lambda}{2}}-1}\ge\frac{|N_{u}|}{e^{\frac{\lambda}{2}}-1}~.
	\]
	Here the second inequality holds as $\gamma\le\mu\le\frac{1}{8}$, and the last inequality holds by the definition
	of $\lambda=4\cdot(\ln\left(2\cdot h(\eps)^{2}+1\right)+1)\ge 4\cdot(\ln\left(|N_u|\right)+1)$.
	To conclude, we obtain a strongly $6\cdot(1+\eps)$-bounded partition, such that for every $\gamma\le\mu=\frac18$
	 and $v\in V$, the ball $B_G(v,\gamma)=B_G\left(v,\frac{\gamma}{6\cdot(1+\eps)}\cdot6\cdot(1+\eps)\right)$ is fully contained in a single cluster with probability at least 
	\[
	\Pr\left[B_{G}\left(v,\frac{\gamma}{6\cdot(1+\eps)}\cdot6\cdot(1+\eps)\right)\subseteq P(v)\right]\ge e^{-4\cdot\gamma\cdot\lambda}=e^{-\frac{\gamma}{6\cdot(1+\eps)}\cdot24\cdot(1+\eps)\cdot\lambda}~.
	\]
	Thus we constructed a strong $\left(24\cdot(1+\eps)\cdot\lambda,24\cdot(1+\eps)\right)=\left(O(\log h(\eps)),\Omega(1)\right)$
	padded decomposition scheme, as required.

\subsection{Tree Cover}
\TreeCover*
\begin{proof}
	In contrast to the previous constructions in this section, here we will be using the bounds on the shortest path covers in \Cref{lem:sparse-SPC} instead of \Cref{lem:hub_bound}.
	For the sake of simplicity, we will assume that the minimum pairwise distance in $G$ is strictly larger than $1$ (otherwise, one can scale accordingly). Let $\Phi=\max_{x,y\in V}d_G(x,y)$ be the maximal pairwise distance (i.e., the diameter).
	Fix~$\delta=\frac{\eps}{3}$. For every $i\in\{0,1,\dots,\left\lfloor\log_{1+\delta}\Phi\right\rfloor\}$, let $r_i=(1+\delta)^i$, and let $\SPC_i$ be the $(r_i,\eps)$-shortest path cover from \Cref{lem:sparse-SPC}. 
    That is, for every pair $u,z\in V$ such that $d_X(u,z)\in (r_i,(2+\eps)r_i]$, there is a hub $x\in\SPC_i$ 
	for which $d_X(u,x)+d_X(x,z)\leq(1+\eps)d_X(u,z)$. Furthermore, $\SPC_i$ is locally $h(\eps)$-sparse: every ball of radius $(2+4\eps)\cdot r_i$ contains at most $h(\eps)$ vertices from $\SPC_i$.
    
	Next, we partition each set $\SPC_i$ into disjoint subsets $\SPC_{i,1},\SPC_{i,2},\dots,\SPC_{i,h(\eps)}$ such that for every $j$ and $x,y\in \SPC_{i,j}$, $d_G(x,y)>(2+4\eps)r_i$. Such a partition can be constructed in a greedy manner: construct the sets one by one, where we add each hub to the first subset where there is no other hub at distance at most $(2+4\eps)\cdot r_i$, as each hub has only $h(\eps)-1$ other hubs at distance at most $(2+4\eps)r_i$, it will surely join $\SPC_{i,j}$ for $j\le h(\eps)$.
    
    Fix $K=\lceil\log_{1+\delta}\frac{32}{\eps}\rceil=O(\frac{1}{\eps}\cdot\log\frac{1}{\eps})$. For every $j\in \{1,\dots,h(\eps)\}$, and $q\in\{0,\dots,K-1\}$ we will construct a dominating tree $T_{q,j}$. 
    One can think about $K$ as the gap between two consecutive levels in the same tree, while $q$ is an initial shift.
    In fact, in our construction, $T_{q,j}$ will be a forest (that is, it might not be connected). But we can easily make it a tree by adding edges between the connected components (with large enough weights so that the distances will still be dominating).
	As a result, we will obtain a collection of $h(\eps)\cdot K=h(\eps)\cdot O(\frac{1}{\eps}\cdot\log\frac{1}{\eps})$ trees as required. 
    The tree $T_{q,j}$ will take care of all the hub sets $\{\SPC_{q+K\cdot k,j}\}_{k\ge 0}$. 
    In more detail, consider $u,z\in V$ such that $d_X(u,z)\in (r_i,r_{i+1}]$, and $x\in\SPC_i$ such that $d_X(u,x)+d_X(x,z)\leq(1+\eps)d_X(u,z)$. Let $q\in\{0,\dots,K-1\}$ and $k\ge 0$ such that $i=q+k\cdot K$, and $j\in \{1,\dots,h(\eps)\}$ such that $x\in \SPC_{i,j}$. It will hold that $d_{T_{q,j}}(u,z)\le (1+2\eps)\cdot d_G(u,z)$.
    
    On an intuitive level, the construction of the tree $T_{q,j}$ is very simple. 
    Every vertex will be connected to its closest hub in $\SPC_{q,j}$, and in general, every hub in $\SPC_{q+k\cdot K,j}$ will be connected to its closest hub in $\SPC_{q+(k+1)\cdot K,j}$. 
    We will obtain the distortion guarantee due to the big gap of $(1+\delta)^K\ge\frac{32}{\eps}$ between two consecutive hub-sets.
    However, as the hub-sets might not be hierarchical, the resulting graph will not be a tree. 
    To overcome this issue, for every $k$ and hub $x\in \SPC_{q+k\cdot K,j}$, we will create a copy $x_k$, and will connect to the copy instead of connecting to the actual vertex. As a result, we will obtain a tree.
        
    \paragraph*{Construction of $T_{q,j}$.} Fix $j\in \{1,\dots,h(\eps)\}$ and $q\in\{0,\dots,K-1\}$. For every $k\ge0$, denote $q(k)=q+k\cdot K$.
    The construction of $T_{q,j}$ is bottom-up. All vertices in $V$ will be leaves. 
    For every~$k\ge 0$, and hub $x\in \SPC_{q(k),j}$, we add an auxiliary copy of vertex $x_k$.
    Every vertex $v\in V$ is called a $(-1)$-level representative. 
    For every $(-1)$-level representative $v$ such that there is $x\in \SPC_{q,j}=\SPC_{q(0),j}$ at distance at most $d_G(v,x)\le (1+2\eps)\cdot r_{q(0)}$, add an edge from $v$ to $x_0$ (of weight $d_G(v,x)$). 
    Note that as the pairwise distance between any two vertices in $\SPC_{q(0),j}$ is greater than $(2+4\eps)\cdot r_{q(0)}$, there is at most a single such vertex. 
    If there is no such vertex, we don't add an edge from $v$ to any vertex. In this case, $v$ will also be called a $0$-level representative.
    In addition, all the copies of the vertices in $\SPC_{q(0),j}$ will be $0$-level representatives.
    We think of the current graph as a forest rooted at the $0$-level representatives.

 
    In general, after $k$ such steps, we have a forest with the vertices $V$ and copies of the vertices $\SPC_{q(0),j},\SPC_{q(1),j},\dots,\SPC_{q(k),j}$.
    Note that if a vertex $x$ belongs to $\SPC_{q(p),j}$ and $\SPC_{q(p'),j}$, then we will have different copies $x_p,x_{p'}$.
    In addition, after $k$ steps we have the set of $k$-level representatives which are the auxiliary vertices added for $\SPC_{q(k),j}$, and in addition all the $k-1$-level representatives with no vertex from $\SPC_{q(k),j}$ at distance at most $(1+2\eps)\cdot r_{q(k)}$. 
    In particular, every leaf $v\in V$ has a corresponding  $k$-level representative.
    In step $k+1$, for every vertex $x\in \SPC_{q(k+1),j}$ we add an auxiliary copy of vertex $x_{k+1}$. 
    Then, for every $k$-level representative $v'$ (that is, $v'$  is a copy of $v$, or $v$ itself) such that there is $x\in \SPC_{q(k+1),j}$ at distance at most $d_G(v,x)\le(1+2\eps)\cdot r_{q(k+1)}$, add an edge from $v'$ to $x_{k+1}$ (of weight $d_G(v,x)$). 
    Note that there is at most a single such vertex. If there is no such vertex, we do not add an edge from $v'$ to any vertex. In this case, $v'$ will also be a $k+1$-level representative. 
    We finish the construction once $q+k\cdot K>\log_{1+\delta}\Phi$.

    \paragraph*{Analysis.} $T_{q,j}$ is a tree - indeed, we can direct all the edges from bottom to top (w.r.t. the layers), and each vertex will have at most a single outgoing edge, implying that $T_{q,j}$ is a forest.
    The following claim will be useful.

	\begin{claim}\label{clm:TreeCoverDistRepresentative}
        Consider a vertex $v\in V$ in the tree $T_{q,j}$, and let $x'$ be the $k$-level representative of $v$. 
        Then $d_{T_{q,j}}(v,x')\le4\cdot r_{q(k)}$.
	\end{claim}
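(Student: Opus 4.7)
The plan is a direct induction on $k$ (or equivalently, a telescoping sum along the unique tree-path from $v$ to $x'$). The key observation is that by construction, whenever the representative changes between level $\ell-1$ and level $\ell$, a single edge is added of weight at most $(1+2\eps)\cdot r_{q(\ell)}$; when the representative does not change, no edge is added. Hence the path from $v$ up to its $k$-level representative $x'$ passes through at most $k+1$ edges (one per level $\ell \in \{0,1,\dots,k\}$), each of weight bounded by $(1+2\eps)\cdot r_{q(\ell)}$.

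First I would formalize the telescoping: following the parent pointers from $v$ upward yields a path whose total length is at most
\[
d_{T_{q,j}}(v, x') \;\le\; \sum_{\ell=0}^{k}(1+2\eps)\cdot r_{q(\ell)}.
\]
Then I would use the big gap between consecutive hub levels inside a single tree. Since $q(\ell) = q + \ell K$ and $r_{q(\ell)} = (1+\delta)^{q+\ell K}$, the ratio $r_{q(\ell-1)}/r_{q(\ell)} = (1+\delta)^{-K} \le \eps/32$ by the choice $K = \lceil \log_{1+\delta}(32/\eps)\rceil$. Thus the sum is geometric dominated by its largest term and bounded by
\[
(1+2\eps)\cdot r_{q(k)} \cdot \sum_{i=0}^{\infty} (\eps/32)^i \;=\; (1+2\eps)\cdot r_{q(k)} \cdot \frac{1}{1 - \eps/32}.
\]
Finally, for $\eps \in (0,1]$, a routine check gives $(1+2\eps)/(1-\eps/32) \le 3 \cdot \tfrac{32}{31} < 4$, establishing the claimed bound $d_{T_{q,j}}(v,x')\le 4\cdot r_{q(k)}$.

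There is no real obstacle here; the only detail to be careful about is verifying that the edge-weight bound $(1+2\eps)\cdot r_{q(\ell)}$ really holds uniformly across levels $\ell=0,1,\dots,k$ (including the base case $\ell=0$ where the $(-1)$-level representative is the original vertex $v$ rather than a copy). This is immediate from the construction, which only adds an edge when a hub lies within distance $(1+2\eps)\cdot r_{q(\ell)}$ of the current representative.
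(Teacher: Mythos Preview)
Your proposal is correct and takes essentially the same approach as the paper: the paper phrases it as an induction on $k$ (bounding $d_{T_{q,j}}(v,y_{k+1}) \le 4r_{q(k)} + (1+2\eps)r_{q(k+1)} \le 4r_{q(k+1)}$), while you unroll this into a geometric sum, but the underlying mechanism and arithmetic are the same.
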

	\begin{proof}
	The proof is by induction on the level $k$. Note that $v$ is the $(-1)$-level representative of itself, and thus the claim trivially holds for $k=-1$.
    In general, let $y_k$ and $y_{k+1}$ be the $k$, and $(k+1)$-level representatives of $v$ (respectively).
    Suppose that $d_{T_{q,j}}(v,y_k)\le4\cdot r_{q(k)}$, and we will show that 
    $d_{T_{q,j}}(v,y_{k+1})\le4\cdot r_{q(k+1)}$.
    
    If $y_k$ is also the $(k+1)$-level representative of $v$ then clearly $d_{T_{q,j}}(v,y_{k+1})\le4\cdot r_{q(k)}\le4\cdot r_{q(k+1)}$.
    Otherwise, $T_{q,j}$ contains an edge of weight at most $(1+2\eps)\cdot r_{q(k+1)}$ from $y_k$ to $y_{k+1}$. By the triangle inequality, it holds that
    \begin{align*}
    d_{T_{q,j}}(v,y_{k+1}) & =d_{T_{q,j}}(v,y_{k})+d_{T_{q,j}}(y_{k},y_{k+1})\le4\cdot r_{q(k)}+(1+2\eps)\cdot r_{q(k+1)}\\
     & =(1+2\eps+\frac{4}{(1+\delta)^{K}})\cdot r_{q(k+1)}\le4\cdot r_{q(k+1)}~,
    \end{align*}
    where in the last inequality we used that $\eps\le 1$, and $K=\lceil\log_{1+\delta}\frac{32}{\eps}\rceil$.
\end{proof}
	
    Now we are ready to prove the theorem. Consider a pair of vertices $u,v\in V$, and let $q\in \{0,\dots,K-1\}$, $k\in\N$ be the unique indices such that such that
	$$(1+\delta)^{q+K\cdot k}< d_G(u,v)\le (1+\delta)^{q+K\cdot k+1}~.$$
    There is a hub $x\in\SPC_{q(k)}$ such that $d_G(u,x)+d_G(x,v)\le(1+\eps)\cdot d_G(u,v)$. Let $j\in\{1,\dots,h(\eps)\}$ be the unique index such that $x\in\SPC_{q(k),j}$.
	Let $y'_u$ and $y'_v$ be the $(k-1)$-level representatives in $T_{q(k),j}$ of $u$ and $v$ respectively. 
    Suppose also that $y'_u$ and $y'_v$ are copies of $y_u$ and $y_v$ respectively (or the original vertices themselves).
    Using     \Cref{clm:TreeCoverDistRepresentative}, 
    \begin{align*}
\max\left\{ d_{T_{q,j}}(v,y'_{v}),d_{T_{q,j}}(u,y'_{u})\right\}  & \le4\cdot r_{q(k-1)}=\frac{4\cdot r_{q(k)}}{(1+\delta)^{K}}\le\frac{4\cdot(1+\delta)}{(1+\delta)^{K}}\cdot d_{G}(u,v)\\
 & \le\frac{4\cdot(1+\delta)\cdot\eps}{32}\cdot d_{G}(u,v)\le\frac{\eps}{4}\cdot d_{G}(u,v)~.
\end{align*}
    Using the triangle inequality, it follows that 
    \begin{align*}
d_{G}(y_{v},x) & \le d_{G}(y_{v},v)+d_{G}(v,x)\le\frac{\eps}{4}\cdot d_{G}(u,v)+(1+\eps)\cdot d_{G}(u,v)\\
 & \le(1+\frac{5}{4}\eps)\cdot(1+\delta)\cdot r_{q(k)}=(1+\frac{5}{4}\eps)\cdot(1+\frac{\eps}{3})\cdot r_{q(k)}\\
 & =\left(1+\frac{19}{12}\eps+\frac{5}{12}\cdot\eps^{2}\right)\cdot r_{q(k)}\le\left(1+2\eps\right)\cdot r_{q(k)}.
\end{align*}
    Similarly, $d_{G}(y_{u},x)\le\left(1+2\eps\right)\cdot r_{q(k)}$.
    It follows that we added to $T_{q,j}$ edges from $x$ to $y'_u$ and $y'_v$.
    Using the triangle inequality again, we conclude:
    \begin{align*}
    d_{T_{q,j}}(u,v) & \le d_{T_{q,j}}(u,y'_{u})+d_{G}(y{}_{u},x)+d_{G}(x,y{}_{v})+d_{T_{q,j}}(y'_{v},v)\\
 & \le2\cdot d_{T_{q,j}}(u,y'_{u})+d_{G}(u,x)+d_{G}(x,v)+2\cdot d_{T_{q,j}}(y'_{v},v)\\
 & \le(1+\eps)\cdot d_{G}(u,v)+4\cdot\frac{\eps}{4}\cdot d_{G}(u,v)=(1+2\eps)\cdot d_{G}(u,v)~.\qedhere
    \end{align*}
    
    
\end{proof}

\section{Applications}
\subsection{Embedding into $\ell_p$ spaces}
A metric embedding is a map between two metric spaces that preserves all pairwise distances up to a small stretch. We say that embedding $f:X\rightarrow Y$ between $(X,d_X)$ and $(Y,d_Y)$ has distortion~$t$ if for every $x,y\in X$ it holds that $d_X(x,y)\le d_Y(f(x),f(y))\le t\cdot d_X(x,y)$. Krauthgamer, Lee, Mendel and Naor~\cite{KLMN04} (improving over Rao~\cite{Rao99}) showed that every $n$-point metric space that admits a $(\beta,\frac1\beta)$-padded decomposition scheme can be embedded into an $\ell_p$ space with distortion $O(\beta^{1-\frac1p}\cdot(\log n)^{\frac1p})$.
By \Cref{thm:padded}, we conclude:

\begin{corollary}\label{cor:EmbeddingLp}
Let $G=(V,E,w)$ be a weighted graph with highway dimension $h:\mathbb{R}_{\geq 0}\to\mathbb{N}\cup\{\infty\}$, and fix $\eps\in[0,\frac{1}{2}]$. Then there exists an embedding of $G$ into $\ell_p$ space ($1\le p\le\infty$) with distortion $O((\log h(\eps))^{1-\frac1p}\cdot(\log n)^{\frac1p})$.
\end{corollary}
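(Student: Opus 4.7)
The plan is to simply invoke the Krauthgamer--Lee--Mendel--Naor result \cite{KLMN04} as a black box, using \Cref{thm:padded} to supply the required padded decomposition scheme. Concretely, I would proceed as follows.

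First, I would apply \Cref{thm:padded} with the given $\eps \in [0,\tfrac12]$ to obtain a strong $(\beta,\delta)$-padded decomposition scheme for $G$ with $\beta = O(\log h(\eps))$ and $\delta = \Omega(1)$. The key observation is that the KLMN embedding theorem only requires the padding parameter $\delta$ to satisfy $\delta \geq c/\beta$ for some absolute constant $c$; since our $\delta$ is a universal constant and $\beta$ grows (at least) like a logarithm, this condition holds for all nontrivial values of $h(\eps)$ (and the trivial case $h(\eps) = O(1)$ can be absorbed into the constant in the $O(\cdot)$ by inflating $\beta$ to $\max\{\beta, 1/\delta\}$, which does not change the asymptotic distortion bound).

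Next, I would feed this padded decomposition scheme into the KLMN theorem \cite{KLMN04}, which states that any $n$-point metric admitting a $(\beta, \Omega(1/\beta))$-padded decomposition scheme embeds into $\ell_p$ with distortion $O\bigl(\beta^{1-1/p} \cdot (\log n)^{1/p}\bigr)$. Substituting $\beta = O(\log h(\eps))$ yields the desired distortion
\[
O\bigl((\log h(\eps))^{1-\frac1p}\cdot (\log n)^{\frac1p}\bigr).
\]

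There is no real obstacle here: the work has been done in establishing \Cref{thm:padded}, and the corollary is a direct application of an off-the-shelf theorem. The only minor bookkeeping point is ensuring the KLMN hypothesis on the relationship between $\beta$ and $\delta$ is met, which is immediate as discussed above.
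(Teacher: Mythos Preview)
Your proposal is correct and matches the paper's approach exactly: the paper simply cites the KLMN theorem \cite{KLMN04} (which requires a $(\beta,\tfrac1\beta)$-padded decomposition scheme) and invokes \Cref{thm:padded} to supply it. Your additional bookkeeping about ensuring $\delta \ge c/\beta$ is a nice touch of care that the paper leaves implicit.
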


Since every finite subset of the Euclidean space $\ell_2$ embeds isometrically into $\ell_p$ (for $1\le p\le\infty$), it follows that for $1\le p\le\infty$ a graph $G$ with highway dimension $h:\mathbb{R}_{\geq 0}\to\mathbb{N}\cup\{\infty\}$ can be embedded into $\ell_p$ space (in particular $\ell_1$) with distortion $O(\sqrt{\log h(\eps)\cdot \log n})$.

A norm space of special interest is $\ell_\infty$, as every finite metric space embeds into $\ell_\infty$ isometrically (i.e., with distortion $1$; this is the so-called Fr\'echet embedding). However the dimension of such an embedding is very large: $\Omega(n)$.
Recently Filtser~\cite{Fil24} (generalizing and improving over~\cite{KLMN04}) showed  that if a graph $G$ admits a $(\beta,s)$-padded partition cover scheme, then $G$ embeds into $\ell^d_\infty$ with distortion 
$(1+\eps)\cdot2\beta$ and dimension $d=O\left(\frac{\tau}{\eps}\cdot\log\frac{\beta}{\eps}\cdot\log(\frac{n\cdot\beta}{\eps})\right)$. Using our \Cref{thm:SparsePartitionCover}, we conclude:
\begin{corollary}\label{cor:EmbeddingLinfty}
Let $G=(V,E,w)$ be a weighted graph with highway dimension $h:\mathbb{R}_{\geq 0}\to\mathbb{N}\cup\{\infty\}$, and fix $\eps\in(0,\frac{1}{2}]$. Then there exists an embedding of $G$ into $\ell^d_\infty$ space with distortion $O(\frac1\eps)$ and dimension $d=O(h(\eps)^2\cdot \log\frac n\eps \cdot\log\frac1\eps)$.
\end{corollary}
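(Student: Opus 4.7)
The plan is to combine Theorem~\ref{thm:SparsePartitionCover} with the Filtser embedding result cited in the preceding paragraph; this is a direct plug-in, and the only small subtlety is the right choice of the auxiliary slack parameter inside the black-box embedding so that the bounds match the target.

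First, I would apply Theorem~\ref{thm:SparsePartitionCover} to the given graph $G$, which yields a strong $(\beta,s)$-sparse partition cover scheme with
\[
\beta \;=\; \tfrac{4(1+\eps)}{\eps} \;=\; O(1/\eps) \qquad\text{and}\qquad s \;=\; 3\,h(\eps)^2 + 1 \;=\; O(h(\eps)^2).
\]
Note that the sparse partition cover provided by Theorem~\ref{thm:SparsePartitionCover} is strong (not weak), so the hypothesis of Filtser's embedding theorem is satisfied. Since $\beta$ is already $\Theta(1/\eps)$, there is no benefit to using a small slack inside the embedding black-box---on the contrary, a small internal slack would only inflate the dimension by a spurious $1/\eps$ factor.

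Second, I would invoke the cited result of Filtser~\cite{Fil24} with a \emph{constant} auxiliary parameter, say $\eps_0 = 1$. This produces an embedding of $G$ into $\ell_\infty^d$ of distortion $(1+\eps_0)\cdot 2\beta = 4\beta = O(1/\eps)$ and dimension
\[
d \;=\; O\!\left(\tfrac{s}{\eps_0}\cdot \log\tfrac{\beta}{\eps_0}\cdot \log\tfrac{n\beta}{\eps_0}\right)
\;=\; O\!\left(h(\eps)^2 \cdot \log(1/\eps) \cdot \log(n/\eps)\right),
\]
where we absorbed the constants from $\beta = \Theta(1/\eps)$ into the big-$O$ (using $\log(n\beta) = \log(n/\eps) + O(1) = O(\log(n/\eps))$ for $\eps \le \tfrac12$). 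This matches the bounds stated in Corollary~\ref{cor:EmbeddingLinfty}.

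There is no genuine obstacle: once Theorem~\ref{thm:SparsePartitionCover} is in hand, the corollary is a one-line composition. The only thing to be careful about is not conflating the $\eps$ that governs the highway-dimension parameter $h(\eps)$ with the $\eps$ appearing inside Filtser's embedding theorem; keeping these two separate and fixing the latter to a constant is what yields the claimed dimension bound without an extra $1/\eps$ factor.
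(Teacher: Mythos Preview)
Your proposal is correct and follows exactly the paper's approach: plug Theorem~\ref{thm:SparsePartitionCover} into Filtser's $\ell_\infty$ embedding result. Your explicit choice of the internal slack $\eps_0=1$ and your remark about not conflating it with the highway-dimension parameter $\eps$ make precise what the paper leaves implicit, but the argument is the same.
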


\subsection{Zero Extension}
In the \emph{\ZEX} problem, the input is a set $X$, a terminal set $K\subseteq X$, a metric $d_K$ over the terminals, and an arbitrary cost function $c: {X\choose 2}\rightarrow \mathbb{R}_+$.
The goal is to find a \emph{retraction} $f:X\rightarrow K$ 
that minimizes $\sum_{\{x,y\}\in {X\choose 2}} c(x,y)\cdot d_K(f(x),f(y))$.
A retraction is a surjective function $f:X\rightarrow K$ that satisfies $f(x)=x$ for all $x\in K$. The \ZEX problem, first proposed by Karzanov~\cite{Kar98}, 
generalizes the \ProblemName{Multiway Cut} problem~\cite{DJPSY92} 
by allowing $d_K$ to be any discrete metric (instead of a uniform metric).

For the case where the metric $(K,d_K)$ on the terminals admits a $(\beta,\frac1\beta)$-padded-decomposition scheme, Lee and Naor~\cite{LN05} (see also~\cite{AFHKTT04,CKR04}) showed  
an $O(\beta)$ upper bound.
Given a $(\beta,\frac1\beta,\Delta)$-padded decomposition for a metric space $(X,d_X)$, note that the induced decomposition on every sub-metric $(K,d_K)$, is also a $(\beta,\frac1\beta,\Delta)$-padded decomposition. 
By \Cref{thm:padded}, we get:


\begin{corollary}\label{cor:0Extension}
Consider an instance of the \ZEX problem $\left(K\subseteq X,d_K,c:{X\choose 2}\rightarrow \mathbb{R}_+\right)$, where the metric $(K,d_K)$ is a sub-metric of a metric with highway dimension $h:\mathbb{R}_{\geq 0}\to\mathbb{N}\cup\{\infty\}$, and fix $\eps\in(0,\frac{1}{2}]$.
Then, one can efficiently find a solution with cost at most $O(\log h(\eps))$ times the cost of the optimal. 
	In particular, there is an $O(\log h(\eps))$-approximation algorithm for the \ProblemName{Multiway Cut} problem for graphs of highway dimension  $h:\mathbb{R}_{\geq 0}\to\mathbb{N}\cup\{\infty\}$.	
\end{corollary}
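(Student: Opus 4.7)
The plan is to combine \Cref{thm:padded} with the Lee--Naor rounding framework~\cite{LN05} via the standard observation that sub-metrics inherit padded decompositions. Concretely, I would proceed as follows.

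First, I would apply \Cref{thm:padded} to the ambient graph/metric~$X$ (of highway dimension $h$) to obtain, for every scale $\Delta > 0$, a distribution over strongly $\Delta$-bounded partitions with padding parameter $\beta = O(\log h(\eps))$, for some universal $\delta = \Omega(1)$. Since $\beta = \Omega(1)$ (otherwise there is nothing to prove), we may assume $1/\beta \le \delta$, so the scheme yields a $(\beta, 1/\beta, \Delta)$-padded decomposition for every $\Delta > 0$.

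Next, I would verify the routine fact that such a scheme restricts to the terminal set. Given a partition $\mathcal{P}$ of $X$ sampled from the scheme, define the induced partition $\mathcal{P}_K := \{\, P \cap K : P \in \mathcal{P}\,\}$ of $K$. Because $(K, d_K)$ is a sub-metric of $(X, d_X)$, the diameter (weak or strong, which coincide on the complete-graph metric used in \ZEX) of each cluster $P \cap K$, measured under~$d_K$, is at most~$\Delta$. Moreover, for any terminal $z \in K$ and any $\gamma \le 1/\beta$, one has $B_K(z,\gamma\Delta) = B_X(z,\gamma\Delta) \cap K$, so the event $\{B_X(z,\gamma\Delta) \subseteq P(z)\}$ implies $\{B_K(z,\gamma\Delta) \subseteq P_K(z)\}$. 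Consequently $\Pr[B_K(z,\gamma\Delta) \subseteq P_K(z)] \ge e^{-\beta\gamma}$, and $\mathcal{P}_K$ gives a $(\beta,1/\beta)$-padded decomposition scheme on $(K,d_K)$ with the same parameter~$\beta = O(\log h(\eps))$.

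Finally, I would plug this scheme into the Lee--Naor LP-rounding algorithm, which takes the standard earthmover/metric LP relaxation of \ZEX and, using a $(\beta,1/\beta)$-padded decomposition scheme of the terminal metric, produces a retraction whose expected cost is within an $O(\beta)$ factor of the LP value (hence of the optimum). This yields an $O(\log h(\eps))$-approximation for \ZEX, and the \textsc{Multiway Cut} bound is immediate since it is the special case where $d_K$ is the uniform metric on~$K$. There is no real obstacle: the only thing to check is the inheritance statement for sub-metrics, and that is a direct consequence of the fact that $d_K$ is the restriction of $d_X$, so balls and diameters relevant to the padding definition are preserved.
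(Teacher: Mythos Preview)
Your proposal is correct and follows essentially the same route as the paper: apply \Cref{thm:padded} to the ambient metric, restrict the resulting padded decomposition to the terminal sub-metric $(K,d_K)$, and then invoke the Lee--Naor~\cite{LN05} result that a $(\beta,1/\beta)$-padded decomposition scheme on the terminal metric yields an $O(\beta)$-approximation. Your explicit verification that balls and diameters survive the restriction to $K$ is exactly the ``note that the induced decomposition on every sub-metric is also a $(\beta,1/\beta,\Delta)$-padded decomposition'' step that the paper states without elaboration.
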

For further reading on the \ZEX problem, see~\cite{CKR04,FHRT03,AFHKTT04,LN05,EGKRTT14,FKT19}.

\subsection{Lipschitz Extension}

For a function $f:X\rightarrow Y$ between two metric spaces $(X,d_X),(Y,d_Y)$, set $\|f\|_{\Lip}=\sup_{x,y\in X}\frac{d_Y(f(x),f(y))}{d_X(x,y)}$ to be the Lipschitz parameter of the function. In the Lipschitz extension problem, we are given a map $f:Z\rightarrow Y$ from a subset $Z$ of $X$. The goal is to extend $f$ to a function $\tilde{f}$ over the entire space $X$, while minimizing $\|\tilde{f}\|_{\Lip}$ as a function of $\|f\|_{\Lip}$.
Lee and Naor~\cite{LN05}, proved that if a space admits a $(\beta,\frac1\beta)$-padded decomposition scheme, then given a function $f$ from a subset $Z\subseteq X$ into a closed convex set in some Banach space, then one can extend $f$ into $\tilde{f}:X\rightarrow C$
such that $\|\tilde{f}\|_{\Lip}\le O(\beta)\cdot \|f\|_{\Lip}$.
By \Cref{thm:padded}, we obtain:

\begin{corollary}[Lipschitz Extension]\label{cor:LipschitzExtension}
	Consider a graph $G=(V,E,w)$ with  highway dimension $h:\mathbb{R}_{\geq 0}\to\mathbb{N}\cup\{\infty\}$, and let $f:V'\rightarrow C$ be a map from a subset $V'\subseteq V$ into $C$, where $C$ is a convex closed set of some Banach space.
	Then for every $\eps\in[0,\frac12]$, there is an extension $\tilde{f}:V\rightarrow C$ such that $\|\tilde{f}\|_{\Lip}\le O(\log h(\eps))\cdot \|f\|_{\Lip}$.
\end{corollary}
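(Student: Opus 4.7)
The plan is simply to combine our padded decomposition result with the black-box Lipschitz extension theorem of Lee and Naor~\cite{LN05}. First I would invoke \Cref{thm:padded} on the shortest-path metric of $G$: for any fixed $\eps\in[0,\tfrac14]$ (the range $[0,\tfrac12]$ claimed in the corollary can be reduced to this by monotonicity of $h$ or by rescaling $\eps$), the graph $G$ admits a strong $(\beta,\delta)$-padded decomposition scheme with $\beta=O(\log h(\eps))$ and $\delta=\Omega(1)$. In particular, up to rescaling the constant padding parameter $\delta$, we obtain a $(\beta,\tfrac{1}{\beta})$-padded decomposition scheme in the sense required by~\cite{LN05}.

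Next I would quote Lee and Naor's Lipschitz extension theorem: if a metric space $(X,d_X)$ admits a $(\beta,\tfrac{1}{\beta})$-padded decomposition scheme, then for any closed convex subset $C$ of a Banach space and any subset $V'\subseteq X$, every map $f\colon V'\to C$ can be extended to a map $\tilde f\colon X\to C$ with $\|\tilde f\|_{\Lip}\le O(\beta)\cdot \|f\|_{\Lip}$. Applying this to $X=V$ with the shortest-path metric $d_G$ and $\beta=O(\log h(\eps))$ yields an extension $\tilde f$ with $\|\tilde f\|_{\Lip}\le O(\log h(\eps))\cdot \|f\|_{\Lip}$, which is exactly the claim of \Cref{cor:LipschitzExtension}.

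Since all of the work is done by \Cref{thm:padded} and the cited extension theorem, there is no real obstacle here — the only thing to be a little careful about is matching the normalization: Lee–Naor phrase their hypothesis as a $(\beta,\tfrac{1}{\beta})$-padded decomposition, whereas \Cref{thm:padded} produces a $(\beta,\delta)$ scheme with $\delta=\Omega(1)$ independent of $\beta$. This mismatch only affects hidden constants, since one may always replace $\delta$ by $\min\{\delta,1/\beta\}$ in \Cref{def:PadDecompostion} (the padding inequality $\Pr[B_G(z,\gamma\Delta)\subseteq P(z)]\ge e^{-\beta\gamma}$ is preserved on the smaller range of $\gamma$), and this degradation is absorbed into the $O(\log h(\eps))$ factor.
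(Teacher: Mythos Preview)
Your proposal is correct and matches the paper's approach exactly: the corollary is obtained by combining \Cref{thm:padded} with the Lee--Naor Lipschitz extension theorem, and the paper gives no further argument beyond that. Your remark about reconciling the $(\beta,\delta)$ normalization with the $(\beta,\tfrac{1}{\beta})$ hypothesis is a nice addition that the paper leaves implicit.
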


\subsection{Distance Oracle}
A distance oracle is a succinct data structure that efficiently answers distance queries (with some small multiplicative error).
It is known that for every integer $k\ge1$, every $n$ point metric space $(X,d_X)$ admits a distance oracle of size $O(n^{1+\frac1k})$ that, given a query $u,v\in X$, in $O(1)$ time returns an estimate ${\rm DO}(u,v)$ such that $d_X(u,v)\le{\rm DO}(u,v)\le(2k-1)\cdot d_X(u,v)$~\cite{Chechik15} (improving over~\cite{TZ05,C14}).

For the case of metric spaces with highway dimension $h:\mathbb{R}_{\geq 0}\to\mathbb{N}\cup\{\infty\}$ one can simply construct a distance oracle as follows: for every $i\in\mathbb{N}$, construct an $(\eps,2^i)$-shortest path cover $\SPC_i$ for scale~$2^i$. For every point~$v\in X$, store all the hubs $H_{v,i}$ at distance at most $(2+\eps)\cdot 2^i$ from $v$.
Then given a query $u,v$, simply return $\min_i\min_{x\in H_{u,i}\cap H_{v,i}}\left(d_X(u,x)+d_X(x,v)\right)$. 
From \Cref{def:SPC}, one can easily prove that the returned answer is indeed a $1+\eps$ estimate. The disadvantage of this approach is that the query time, as well as the size of the distance oracle, both depend on the aspect ratio $\Phi=\frac{\max_{u,v}d_X(u,v)}{\min_{u,v}d_X(u,v)}$
of $X$, i.e., the ratio between maximum and minimum distances. This is an undesirable property, as the aspect ratio is a priori unbounded.

Our approach here is to construct a distance oracle using a tree cover (in similar fashion to previous work~\cite{MN07,ACEFN20,FL21,Fil21,CCLMST23Planar}). 
Specifically, given a metric space with highway dimension $h:\mathbb{R}_{\geq 0}\to\mathbb{N}\cup\{\infty\}$, \Cref{thm:HighwayTreeCover} gives us  a $\left(1+\eps,h(\frac\eps2)\cdot O(\eps^{-1}\log\frac1\eps)\right)$-tree cover.
Recall that in every tree $T$ in the cover, the metric points $X$ correspond to the leafs of $T$, where the ancestors of a point $u$ are its representatives in different levels. In particular, for every pair~$u,v$, there is a tree $T_{q,j}$ in the cover where at some level $q'$, $u$ and $v$ will have the same $q'$-level representative~$x$ and it will hold that $d_{T_{q,j}}(u,v)\le(1+\eps)\cdot d_X(u,v)$. Furthermore, the distance from $u,v$ to their lower level representatives are much smaller than $\frac12 \cdot d_X(u,v)$, and thus $x$ is the LCA (least common ancestor) of $u,v$ in $T_{q,j}$. 
There is linear size data structure that given two leaves can find the LCA in constant time~\cite{HT84,BF00} (and the distance to the LCA).
To conclude, our distance oracle will simply store all the trees in the tree cover, and for each tree it will also store the LCA data structure. Given a query $u,v$, it will go over each tree in the cover $T$, find the LCA $x_T$ and will return the minimum observed value $d_T(v,x_T)+d_T(x_T,u)$. 

\begin{corollary}[Distance Oracle]\label{cor:DistanceOracle}
	Consider an $n$-point  graph $G=(V,E,w)$ with  highway dimension $h:\mathbb{R}_{\geq 0}\to\mathbb{N}\cup\{\infty\}$.
	Then for every $\eps\in(0,1]$, there is a $1+\eps$-distance oracle of size $n\cdot h(\frac\eps2)\cdot O(\eps^{-1}\log\frac1\eps)$ and $h(\frac\eps2)\cdot O(\eps^{-1}\log\frac1\eps)$ query time.
\end{corollary}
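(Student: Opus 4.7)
The plan is to build the oracle directly on top of the tree cover from \Cref{thm:HighwayTreeCover}. I would first instantiate that theorem with parameter $\eps/2$, obtaining a collection $\cT=\{T_1,\dots,T_\tau\}$ of $\tau=h(\tfrac\eps2)\cdot O(\eps^{-1}\log\tfrac1\eps)$ dominating trees in which every pair $u,v\in V$ is $(1+\eps)$-approximated by at least one $T_i$. The remainder of the argument is a standard reduction from pairwise approximate distance to LCA queries, carried out independently on each tree.

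For each $T_i$ I would first contract all degree-two Steiner nodes, summing the edge weights along each contracted chain. Inspection of the inductive ``$k$-level representative'' construction in \Cref{thm:HighwayTreeCover} shows that the leaves of $T_i$ are exactly $V$, and the internal copies $x_k$ of hubs merely route distances upward and are never the subject of a query; hence this contraction preserves every leaf-to-leaf distance and leaves a tree on at most $2n-1$ nodes. Next, root each $T_i$ arbitrarily, equip it with a linear-space constant-time LCA data structure~\cite{HT84,BF00}, and precompute the root distance $d_{T_i}(v,r_i)$ of every node $v$. Any query then reduces in $O(1)$ time to $d_{T_i}(u,v)=d_{T_i}(u,r_i)+d_{T_i}(v,r_i)-2\,d_{T_i}(\lca(u,v),r_i)$.

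On a query $(u,v)$ the oracle iterates through the $\tau$ trees, evaluates $d_{T_i}(u,v)$ in $O(1)$ each via the identity above, and returns the minimum. Correctness is immediate: dominance of every $T_i$ gives the lower bound $d_X(u,v)$, and the tree cover guarantee produces some $T_i$ with $d_{T_i}(u,v)\le(1+\eps)d_X(u,v)$. Total storage is $O(n)$ per tree times $\tau$ trees, matching $n\cdot h(\tfrac\eps2)\cdot O(\eps^{-1}\log\tfrac1\eps)$, and total query time is $O(\tau)$.

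The only point requiring more than a black-box use of \Cref{thm:HighwayTreeCover} is the contraction step; it is what allows the size bound to be independent of the aspect ratio of $G$ (and hence of the number of hub levels encoded in the raw tree), by relying on the classical fact that a tree with $n$ leaves and no degree-two internal node has at most $2n-1$ vertices. I do not foresee any further obstacle beyond this routine check.
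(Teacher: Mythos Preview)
Your proposal is correct and follows the same route as the paper: build the tree cover of \Cref{thm:HighwayTreeCover} with parameter $\eps/2$, equip each tree with an $O(1)$-time LCA structure, and answer a query by taking the minimum tree distance over all trees. Your explicit contraction of degree-two Steiner nodes (to force each tree to have $O(n)$ size independent of the aspect ratio) is a detail the paper glosses over but which is indeed needed for the stated space bound; otherwise the two arguments are the same.
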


\subsection{Oblivious Buy-at-Bulk}\label{subsec:BuyAtBulk}
In the \emph{buy-at-bulk} problem we are given a weighted graph $G=(V,E,w)$, the goal is to satisfy a set of demands $A\subseteq{V\choose 2}$ by routing these demands over the graph while minimizing the cost. 
In more detail, $\delta_i=(s_i,t_i)$ is a unit of demand that induces an unsplittable unit of flow from source node $s_i\in V$ to a destination $t_i\in V$. Given a set of demands $A=\{\delta_1,\delta_2,\dots,\delta_k\}$, a valid solution is a set of paths $\cP=\{P_1,\dots,P_k\}$, where $P_i$ is a path from $s_i$ to $t_i$. The paths can overlap. For an edge $e\in E$, denote by $\varphi_e$ the number of paths in $\cP$ that use $e$.
A function $f:\N\rightarrow\R_{\ge0}$ is called a \emph{canonical fusion function} if it is (1) concave, (2) non-decreasing, (3) $f(0)=0$, and (4) sub-additive (that is $\forall x,y\in \N$, $f(x+y)\le f(x)+f(y)$).
In the buy-at-bulk
problem we are given a canonical fusion function $f$. The cost of a solution $\cP$ is $\cost(\cP)=\sum_{e\in E}f(\varphi_e)\cdot w(e)$. The goal is to find a valid solution of minimum cost. 
The canonical fusion function provides the following intuition: there is a discount as you use more and more of the same edge. So it is generally beneficial for the paths in $\cP$ to overlap. 
Note that the buy-at-bulk problem is NP-hard, as the Steiner tree problem is a special case (where the canocical fusion function is $f(0)=0$ and $f(i)=1$ for $i\ge1$).

In the \emph{oblivious buy-at-bulk} problem we have to choose a collection of paths $\cP$ without knowing the demands. That is, for every possible demand $\delta_i\in {V\choose 2}$, we have to add to $\cP$ a path $P_i$ 
between its endpoints. Then given a set of demands $A\subseteq{V\choose 2}$, 
 $\cost(\cP,A)=\sum_{e\in E}f(\varphi_e(\cP,A))\cdot w(e)$, where $\varphi_e(\cP,A)=\left|\{P(\delta_i)\mid e\in P(\delta_i)~\land~\delta_i\in A\}\right|$ is the number of paths associated with the demands in $A$ that use $e$.
 The approximation ratio of $\cP$, is the ratio between the induced cost, to the optimal cost for the worst possible set of demands $A$:
 \[{\rm Approximatio~ ratio}(\cP)=\max_{A\subseteq{V\choose 2}}\frac{\cost(\cP,A)}{\opt(A)}~.\]
 Gupta \etal~\cite{GHR06} provided an algorithm achieving approximation ratio $O(\log^2n)$. Interestingly, their algorithm is oblivious not only to the demand pairs $A$, but also to the canonical fusion function $f$.
 
 Srinivasagopalan \etal~\cite{SBI11} implicitly showed that if a graph $G$ admits a strong $(\beta,s)$-sparse partition cover scheme, then one can efficiently compute a solution for the oblivious buy-at-bulk problem with approximation ratio $O(s^2\cdot \beta^2\cdot\log n)$.\footnote{In fact~\cite{SBI11} required a ``colorable'' strong sparse cover scheme. However, a sparse partition cover scheme is in particular a ``colorable'' strong sparse cover scheme. See also~\cite{Fil24}.}
 Using our \Cref{thm:SparsePartitionCover} we conclude:
 \begin{corollary}\label{cor:BuyAtBulk}
Consider an $n$-point  graph $G=(V,E,w)$ with  highway dimension $h:\mathbb{R}_{\geq 0}\to\mathbb{N}\cup\{\infty\}$. Then for every $\eps\in(0,1]$, $G$ admits a solution to the oblivious buy-at-bulk problem with approximation ratio $O(\eps^{-2}\cdot h(\eps)^4\cdot\log n)$. Furthermore, the solution is also oblivious to the concave function $f$.
\end{corollary}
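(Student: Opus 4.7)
The plan is to simply combine Theorem \ref{thm:SparsePartitionCover} with the black-box reduction of Srinivasagopalan \etal~\cite{SBI11}, which is already stated in the paragraph preceding the corollary: any graph admitting a strong $(\beta,s)$-sparse partition cover scheme efficiently admits an oblivious buy-at-bulk solution of approximation ratio $O(s^2\cdot\beta^2\cdot\log n)$, with a solution oblivious to the canonical fusion function $f$. So the entire content of the proof is parameter bookkeeping.

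First I would invoke Theorem \ref{thm:SparsePartitionCover} on the input graph $G$ with the given $\eps\in(0,1]$. This provides a strong $(\beta,s)$-sparse partition cover scheme with $\beta=\frac{4(1+\eps)}{\eps}=O(\eps^{-1})$ and $s=3h(\eps)^2+1=O(h(\eps)^2)$. Second, I would feed this sparse partition cover scheme into the Srinivasagopalan \etal~\cite{SBI11} construction, which produces a collection of paths $\cP$ (one for every pair in $\binom{V}{2}$) and guarantees the $O(s^2\beta^2\log n)$ bound on the approximation ratio against the worst-case demand set $A$, uniformly over all canonical fusion functions $f$.

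Finally, plugging in the two parameter values yields approximation ratio
\[
O(s^2\cdot\beta^2\cdot\log n)\;=\;O\!\left(h(\eps)^4\cdot\eps^{-2}\cdot\log n\right),
\]
which matches the claim. The obliviousness to $f$ is inherited directly from~\cite{SBI11}, since the construction of $\cP$ there does not depend on $f$.

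There is really no obstacle here: the two ingredients have been set up precisely so that this plugging-in works. The only minor thing to verify is that Theorem \ref{thm:SparsePartitionCover} produces a \emph{strong}-diameter sparse partition cover (as opposed to weak), which is required by the \cite{SBI11} framework because routing is performed inside the induced subgraphs of clusters; this is indeed stated in Theorem \ref{thm:SparsePartitionCover}. Thus no additional argument beyond the parameter substitution is needed.
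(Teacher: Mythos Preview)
Your proposal is correct and matches the paper's approach exactly: the corollary is stated immediately after the sentence invoking \cite{SBI11} and \Cref{thm:SparsePartitionCover}, with no further proof given, so the intended argument is precisely the parameter substitution you describe. Your observation that the strong-diameter guarantee is what is needed is also in line with the paper's footnote about ``colorable'' strong sparse covers.
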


\subsection{Flow Sparsifier}
Given an edge-capacitated graph $G = (V,E,c)$ and a set $K\subseteq V$ of terminals, a $K$-flow is a flow where all the endpoints are terminals. A flow-sparsifier with quality $\rho\ge1$ is another capacitated graph $H = (K,E_H, c_H)$ such that (a) any feasible $K$-flow in $G$ can be feasibly routed in $H$, and (b)~any feasible $K$-flow in $H$ can be routed in $G$ with congestion $\rho$ (see~\cite{EGKRTT14} for formal definitions).

Englert \etal~\cite{EGKRTT14} showed that given a graph $G$ which admits a $(\beta,\frac1\beta)$-padded decomposition scheme, for any subset $K$, one can efficiently compute a flow-sparsifier with
quality 
$O(\beta)$. Using their result, the following is a corollary of \Cref{thm:padded}. 

\begin{corollary}\label{cor:flowSpar}
    Consider an $n$-point weighted graph $G=(V,E,w)$ with highway dimension $h:\mathbb{R}_{\geq 0}\to\mathbb{N}\cup\{\infty\}$. Then for every $\eps\in(0,1]$, $G$ admits a  flow sparsifier with quality 
    $O(\log h(\eps))$. 
\end{corollary}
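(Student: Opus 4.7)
The plan is to directly chain \Cref{thm:padded} with the flow-sparsifier construction of Englert et~al.~\cite{EGKRTT14}. The Englert et~al.\ result takes as a black-box hypothesis that the input graph admits a $(\beta,\frac{1}{\beta})$-padded decomposition scheme and produces, for any terminal set $K$, a flow sparsifier of quality $O(\beta)$. So the only thing I need to supply is such a padded decomposition scheme with $\beta=O(\log h(\eps))$.

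First I would fix $\eps' := \min\{\eps,\tfrac14\}$ so that $\eps'$ lies in the admissible range of \Cref{thm:padded}, and invoke that theorem to obtain a strong $(\beta,\delta)$-padded decomposition scheme with $\beta=O(\ln h(\eps'))$ and $\delta=\Omega(1)$. Since $h$ is non-increasing in its argument (a hub set that works for a smaller accuracy parameter also works for a larger one), $h(\eps')\le h(\eps')$ trivially and in particular $\log h(\eps')=O(\log h(\eps))$ whenever $\eps\le\tfrac14$; for $\eps>\tfrac14$ the bound $O(\log h(\tfrac14))$ is an absolute constant w.r.t.\ $\eps$ and is subsumed. Next I would observe that because $\delta=\Omega(1)$, for $\beta$ larger than some absolute constant we have $\tfrac{1}{\beta}\le\delta$, so by definition of a padded decomposition scheme (\Cref{def:PadDecompostion}) the same scheme is also a strong $(\beta,\tfrac{1}{\beta})$-padded decomposition scheme. (If $\beta$ is already $O(1)$ there is nothing to prove, since a $(O(1),\Omega(1))$ scheme is trivially a $(O(1),\Omega(1))$ scheme and the claimed quality bound holds.)

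Finally I would apply the Englert et~al.~\cite{EGKRTT14} black box to $G$, $K$, and the padded decomposition scheme just produced, obtaining an efficiently constructible flow sparsifier of quality $O(\beta)=O(\log h(\eps))$, as required. There is no real obstacle here beyond the small bookkeeping step of massaging the padding parameter of \Cref{thm:padded} into the $(\beta,\tfrac{1}{\beta})$ form expected by \cite{EGKRTT14}; the substantive content of the corollary is entirely in \Cref{thm:padded} itself.
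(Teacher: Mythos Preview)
Your approach is exactly the paper's: invoke \Cref{thm:padded} to get a padded decomposition with $\beta=O(\log h(\eps))$ and then plug into the black-box result of Englert et~al.~\cite{EGKRTT14}. The paper's ``proof'' is literally the sentence preceding the corollary, so the substantive content is identical.

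One small caveat on your extra bookkeeping: the claim that $h$ is non-increasing (``a hub set that works for a smaller accuracy parameter also works for a larger one'') is not justified by \Cref{def:HD}, since the ball radius $(4+8\eps)r$ grows with~$\eps$ and hence the set of pairs that must be hit changes. In particular, for $\eps>\tfrac14$ your argument that $O(\log h(\tfrac14))$ is ``subsumed'' by $O(\log h(\eps))$ would require $h(\eps)\ge h(\tfrac14)$, which is the opposite of non-increasing. The paper itself glosses over this range mismatch between the corollary ($\eps\in(0,1]$) and \Cref{thm:padded} ($\eps\in[0,\tfrac14]$), so this is not a defect of your approach relative to the paper, but you should not claim monotonicity without proof.
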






\subsection{Reliable Spanner}
Given a metric space $(X,d_X)$, a $t$-{\em spanner} is a graph $H$ over $X$ such that for every $x,y\in X$, $d_X(x,y)\le d_H(x,y)\le t\cdot d_X(x,y)$, where $d_H$ is the shortest path metric in $H$. 
A highly desirable property of a spanner is the ability to withstand massive node-failures. 
To this end, Bose \etal~\cite{BDMS13} introduced the notion of a \emph{reliable spanner}.
A weighted graph $H$ over point set $X$ is a deterministic $\nu$-reliable $t$-spanner
of a metric space $(X,d_{X})$ if $d_{H}$ dominates
$d_{X}$, and for every
set $B\subseteq X$ of points, called an \emph{attack set}, there is a set $B^{+}\supseteq B$, called a \emph{faulty extension} of $B$, such that
(1) $|B^{+}|\le(1+\nu)|B|$, and
(2) for every $x,y\notin B^{+}$, $d_{H[X\setminus B]}(x,y)\le t\cdot d_{X}(x,y)$.
It is known that every $n$ point metric space with doubling dimension $d$ admits a  $\nu$-reliable $(1+\eps)$-spanner with $n\cdot\eps^{-O(d)}\cdot\nu^{-6}\cdot\tilde{O}(\log n)$
edges~\cite{BHO19,FL22}.
On the other hand, there is no sparse reliable spanner even for a simple metric like the star graph (which has highway dimension $1$). Indeed, for every $t<2$ and $\nu$, every  $\nu$-reliable $t$-spanner for the star graph must have $\Omega(n^2)$ edges~\cite{FL22}.
 
Nevertheless, it is possible to construct sparse oblivious reliable spanners~\cite{BHO20} for the star graph, where the bound on the size of $B^+$ is only in expectation. Formally, an oblivious $\nu$-reliable $t$-spanner is a distribution $\mathcal{D}$ over dominating graphs $H$, such that for every attack set $B\subseteq X$ and $H\in\supp(\mathcal{D})$, there exists a superset $B_H^{+}\supseteq B$ such that, for every $x,y\notin B_H^{+}$, $d_{H[X\setminus B]}(x,y)\le t\cdot d_{X}(x,y)$,
and $\mathbb{E}_{H\sim\mathcal{D}}\left[|B_H^{+}|\right]\le(1+\nu)|B|$. We say that the oblivious spanner $\mathcal{D}$ has $m$ edges if every $H\in\supp(\mathcal{D})$ has at most $m$ edges. 
In is known that general $n$ point metric spaces admit $\nu$-reliable
$O(k)$-spanners with $\tilde{O}(n^{1+\frac1k})\cdot\nu^{-1}$ edges~\cite{HMO23,FL22}, and fixed minor free graphs admit $\nu$-reliable $(2+\eps)$-spanners with $n\cdot O(\nu^{-1}\cdot\eps^{-2}\cdot\log^5 n)$ edges~\cite{FL22}. See~\cite{FGN24} for results on light reliable spanners, and~\cite{Fil23} for reliable spanners for high dimensional doubling metrics.

Consider an $n$-point metric space $(X,d_X)$ with highway dimension $h:\mathbb{R}_{\geq 0}\to\mathbb{N}\cup\{\infty\}$.
From the fact that for every $\eps\in(0,1]$, $X$ admits a $(1+2\eps,h(\eps)\cdot O(\eps^{-1}\log\frac1\eps))$-tree cover (\Cref{thm:HighwayTreeCover}), it follows that $X$ admits a $(h(\eps)\cdot O(\eps^{-1}\log\frac1\eps\cdot \log n),1+2\eps)$-left sided LSO (we refer to~\cite{FL22} for definition and details). In~\cite{FL22} it was shown that if a metric admits a $(\tau,\rho)$-left sided LSO then for every $\nu\in(0,1)$ it admits a $\nu$-reliable $2\rho$-spanner with $n \cdot O(\nu^{-1}\cdot\tau^2\cdot\log n)$ edges. We conclude:
\begin{corollary}\label{cor:ReliableSpanner}
    Consider an $n$-point metric space $(X,d_X)$ with highway dimension $h:\mathbb{R}_{\geq 0}\to\mathbb{N}\cup\{\infty\}$. Then for every $\eps\in(0,1]$, and $\nu\in(0,1)$, $X$ admits a  $\nu$-reliable $(2+4\eps)$-spanner with $n \cdot h(\eps)^{2}\cdot O(\nu^{-1}\cdot\eps^{-2}\cdot \log^{2}\frac{1}{\eps}\cdot\log^{3}n)$ edges.
\end{corollary}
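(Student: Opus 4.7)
The plan is to derive the reliable spanner essentially by composing two existing results. The proof will not require any new combinatorial analysis; instead, it chains together the tree cover from \Cref{thm:HighwayTreeCover} with the reliable spanner construction of Filtser and Le~\cite{FL22} via the intermediate notion of a left-sided locality-sensitive ordering (LSO).

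First I would invoke \Cref{thm:HighwayTreeCover}: for the chosen $\eps\in(0,1]$ the metric $X$ admits a $(1+2\eps,\,\tau_0)$-tree cover with $\tau_0=h(\eps)\cdot O(\eps^{-1}\log\tfrac{1}{\eps})$ dominating trees. The next step is to promote this tree cover to a left-sided LSO. As noted in the text preceding the corollary (and following the standard reduction used in~\cite{FL22}), a $(1+2\eps,\tau_0)$-tree cover on an $n$-point metric yields a $(\tau,\rho)$-left-sided LSO with $\rho=1+2\eps$ and $\tau=\tau_0\cdot O(\log n)=h(\eps)\cdot O(\eps^{-1}\log\tfrac{1}{\eps}\cdot \log n)$, where the extra $\log n$ factor comes from applying the standard labelling inside each tree (e.g.\ a heavy-path decomposition).

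Second, I would apply the black-box construction of~\cite{FL22}: any metric with a $(\tau,\rho)$-left-sided LSO admits, for every $\nu\in(0,1)$, an (oblivious) $\nu$-reliable $2\rho$-spanner with at most $n\cdot O(\nu^{-1}\cdot\tau^2\cdot\log n)$ edges. Plugging in $\rho=1+2\eps$ gives stretch $2\rho=2+4\eps$, and substituting $\tau^2=h(\eps)^2\cdot O(\eps^{-2}\log^2\tfrac{1}{\eps}\cdot\log^2 n)$ yields the claimed edge count $n\cdot h(\eps)^2\cdot O(\nu^{-1}\cdot\eps^{-2}\cdot\log^2\tfrac{1}{\eps}\cdot\log^3 n)$.

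There is essentially no obstacle to the proof beyond correctly citing the tree-cover-to-LSO reduction and the LSO-to-reliable-spanner reduction; the one place that deserves a line of care is checking that the $O(\log n)$ overhead incurred when turning a tree cover into a left-sided LSO is indeed only multiplicative in $\tau_0$ (rather than, say, entering the stretch), so that the stretch remains $1+2\eps$ before doubling. Once that is in place the stated parameters follow by direct arithmetic.
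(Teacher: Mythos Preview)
Your proposal is correct and follows essentially the same route as the paper: invoke \Cref{thm:HighwayTreeCover} to obtain the tree cover, convert it (via~\cite{FL22}) into a $(\tau,1+2\eps)$-left-sided LSO with $\tau=h(\eps)\cdot O(\eps^{-1}\log\tfrac1\eps\cdot\log n)$, and then apply the $(\tau,\rho)$-LSO $\Rightarrow$ $\nu$-reliable $2\rho$-spanner reduction of~\cite{FL22}. The arithmetic and the observation that the $\log n$ overhead affects only $\tau$ (not the stretch) match the paper exactly.
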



\section*{Acknowledgments}
The authors would like to thank Tung Anh Vu for helpful discussions.
	
	{\small 
   \bibliographystyle{alpha}
		\bibliography{LSObib}}

@article{ABN11,
title = {Advances in metric embedding theory},
journal = {Advances in Mathematics},
volume = {228},
number = {6},
pages = {3026-3126},
year = {2011},
issn = {0001-8708},
doi = {https://doi.org/10.1016/j.aim.2011.08.003},
url = {https://www.sciencedirect.com/science/article/pii/S000187081100288X},
author = {Ittai Abraham and Yair Bartal and Ofer Neiman},
keywords = {Metric spaces, Embedding, Dimension reduction, Coarse geometry, Algorithms},
}

@article{Bar21,
  author    = {Yair Bartal},
  title     = {Advances in Metric Ramsey Theory and its Applications},
  journal   = {CoRR},
  volume    = {abs/2104.03484},
  year      = {2021},
  url       = {https://arxiv.org/abs/2104.03484},
  archivePrefix = {arXiv},
  eprint    = {2104.03484},
  timestamp = {Tue, 13 Apr 2021 16:46:17 +0200},
  biburl    = {https://dblp.org/rec/journals/corr/abs-2104-03484.bib},
  bibsource = {dblp computer science bibliography, https://dblp.org}
}

@inproceedings{Chechik15,
  author    = {Shiri Chechik},
  editor    = {Rocco A. Servedio and
               Ronitt Rubinfeld},
  title     = {Approximate Distance Oracles with Improved Bounds},
  booktitle = {Proceedings of the Forty-Seventh Annual {ACM} on Symposium on Theory
               of Computing, {STOC} 2015, Portland, OR, USA, June 14-17, 2015},
  pages     = {1--10},
  publisher = {{ACM}},
  year      = {2015},
  url       = {https://doi.org/10.1145/2746539.2746562},
  doi       = {10.1145/2746539.2746562},
  timestamp = {Tue, 06 Nov 2018 11:07:05 +0100},
  biburl    = {https://dblp.org/rec/conf/stoc/Chechik15.bib},
  bibsource = {dblp computer science bibliography, https://dblp.org}
}

@inproceedings{Fil21,
  author    = {Arnold Filtser},
  title     = {Hop-Constrained Metric Embeddings and their Applications},
  booktitle = {62nd {IEEE} Annual Symposium on Foundations of Computer Science, {FOCS}
               2021, Denver, CO, USA, February 7-10, 2022},
  pages     = {492--503},
  publisher = {{IEEE}},
  year      = {2021},
  url       = {https://doi.org/10.1109/FOCS52979.2021.00056},
  doi       = {10.1109/FOCS52979.2021.00056},
  timestamp = {Wed, 09 Mar 2022 12:12:54 +0100},
  biburl    = {https://dblp.org/rec/conf/focs/Filtser21.bib},
  bibsource = {dblp computer science bibliography, https://dblp.org}
}

@inproceedings{FL22,
  author    = {Arnold Filtser and
Hung Le},
editor    = {Stefano Leonardi and
Anupam Gupta},
title     = {Locality-sensitive orderings and applications to reliable spanners},
booktitle = {{STOC} '22: 54th Annual {ACM} {SIGACT} Symposium on Theory of Computing,
Rome, Italy, June 20 - 24, 2022},
pages     = {1066--1079},
publisher = {{ACM}},
year      = {2022},
url       = {https://doi.org/10.1145/3519935.3520042},
doi       = {10.1145/3519935.3520042},
timestamp = {Tue, 14 Jun 2022 17:04:49 +0200},
biburl    = {https://dblp.org/rec/conf/stoc/FiltserL22.bib},
bibsource = {dblp computer science bibliography, https://dblp.org}
}

@article {LN05,
	AUTHOR = {James R Lee and Assaf  Naor},
	TITLE = {Extending Lipschitz functions via random metric partitions},
	JOURNAL = {Inventiones Mathematicae},
	VOLUME = {160},
	YEAR = {2005},
	NUMBER = {1},
	DOI	   = {10.1007/s00222-004-0400-5},
	PAGES = {59--95},
}

@article{Gon85,
  author    = {Teofilo F. Gonzalez},
  title     = {Clustering to Minimize the Maximum Intercluster Distance},
  journal   = {Theor. Comput. Sci.},
  volume    = {38},
  pages     = {293--306},
  year      = {1985},
  url       = {https://doi.org/10.1016/0304-3975(85)90224-5},
  doi       = {10.1016/0304-3975(85)90224-5},
  timestamp = {Wed, 17 Feb 2021 21:59:18 +0100},
  biburl    = {https://dblp.org/rec/journals/tcs/Gonzalez85.bib},
  bibsource = {dblp computer science bibliography, https://dblp.org}
}

@string{FOCS = " Annual Symposium on Foundations of Computer Science"}

@string{STOC = " Annual ACM Symposium on Theory of Computing"}

@InProceedings{AP90,
  author    = {Baruch Awerbuch and David Peleg},
  title     = {Sparse partitions},
  booktitle = {Proceedings of the 31st IEEE Symposium on Foundations of Computer Science (FOCS)},
  year      = {1990},
  pages     = {503--513},
  bibsource = {dblp computer science bibliography, http://dblp.org},
  biburl    = {http://dblp.uni-trier.de/rec/bib/conf/focs/AwerbuchP90a},
  
  doi       = {10.1109/FSCS.1990.89571},
  timestamp = {Mon, 15 Jun 2015 19:00:07 +0200},
  url       = {http://dx.doi.org/10.1109/FSCS.1990.89571},
}

@inproceedings{GKL03,
  author    = {Anupam Gupta and
               Robert Krauthgamer and
               James R. Lee},
  title     = {Bounded Geometries, Fractals, and Low-Distortion Embeddings},
  booktitle = {44th Symposium on Foundations of Computer Science {(FOCS} 2003), 11-14
               October 2003, Cambridge, MA, USA, Proceedings},
  pages     = {534--543},
  year      = {2003},
  
  url       = {https://doi.org/10.1109/SFCS.2003.1238226},
  doi       = {10.1109/SFCS.2003.1238226},
  timestamp = {Wed, 16 Oct 2019 14:14:54 +0200},
  biburl    = {https://dblp.org/rec/conf/focs/GuptaKL03.bib},
  bibsource = {dblp computer science bibliography, https://dblp.org}
}

@InProceedings{AGKKW98,
  Title                    = {A Polynomial-Time Approximation Scheme for Weighted Planar Graph {TSP}},
  Author                   = {Sanjeev Arora and
 Michelangelo Grigni and
 David R. Karger and
 Philip N. Klein and
 Andrzej Woloszyn},
  Booktitle                = {Proceedings of the Ninth Annual {ACM-SIAM} Symposium on Discrete Algorithms,
 25-27 January 1998, San Francisco, California, {USA}},
  Year                     = {1998},
  Pages                    = {33--41},

  Bibsource                = {dblp computer science bibliography, https://dblp.org},
  Biburl                   = {https://dblp.org/rec/conf/soda/AroraGKKW98.bib},
  
  Timestamp                = {Thu, 05 Jul 2018 07:29:49 +0200},
  Url                      = {http://dl.acm.org/citation.cfm?id=314613.314632}
}

@InProceedings{Bar96,
  Title                    = {Probabilistic Approximations of Metric Spaces and Its Algorithmic
 Applications},
  Author                   = {Yair Bartal},
  Booktitle                = {37th Annual Symposium on Foundations of Computer Science, {FOCS} '96,
 Burlington, Vermont, USA, 14-16 October, 1996},
  Year                     = {1996},
  Pages                    = {184--193},

  Bibsource                = {dblp computer science bibliography, https://dblp.org},
  Biburl                   = {https://dblp.org/rec/conf/focs/Bartal96.bib},
  Doi                      = {10.1109/SFCS.1996.548477},
  Timestamp                = {Wed, 16 Oct 2019 14:14:54 +0200},
  Url                      = {https://doi.org/10.1109/SFCS.1996.548477}
}

@InProceedings{BHO20,
  Title                    = {Sometimes Reliable Spanners of Almost Linear Size},
  Author                   = {Kevin Buchin and
 Sariel Har{-}Peled and
 D{\'{a}}niel Ol{\'{a}}h},
  Booktitle                = {28th Annual European Symposium on Algorithms, {ESA} 2020, September
 7-9, 2020, Pisa, Italy (Virtual Conference)},
  Year                     = {2020},
  Pages                    = {27:1--27:15},

  Bibsource                = {dblp computer science bibliography, https://dblp.org},
  Biburl                   = {https://dblp.org/rec/conf/esa/BuchinHO20.bib},
  
  Doi                      = {10.4230/LIPIcs.ESA.2020.27},
  Timestamp                = {Wed, 26 Aug 2020 16:56:07 +0200},
  Url                      = {https://doi.org/10.4230/LIPIcs.ESA.2020.27}
}

@InProceedings{BHO19,
  Title                    = {A Spanner for the Day After},
  Author                   = {Kevin Buchin and
 Sariel Har{-}Peled and
 D{\'{a}}niel Ol{\'{a}}h},
  Booktitle                = {35th International Symposium on Computational Geometry, SoCG 2019,
 June 18-21, 2019, Portland, Oregon, {USA}},
  Year                     = {2019},
  Pages                    = {19:1--19:15},

  Bibsource                = {dblp computer science bibliography, https://dblp.org},
  Biburl                   = {https://dblp.org/rec/conf/compgeom/BuchinHO19.bib},
  
  Doi                      = {10.4230/LIPIcs.SoCG.2019.19},
  Timestamp                = {Fri, 12 Jun 2020 15:04:05 +0200},
  Url                      = {https://doi.org/10.4230/LIPIcs.SoCG.2019.19}
}

@InProceedings{Fil19padded,
  Title                    = {On Strong Diameter Padded Decompositions},
  Author                   = {Arnold Filtser},
  Booktitle                = {Approximation, Randomization, and Combinatorial Optimization. Algorithms
 and Techniques, {APPROX/RANDOM} 2019, September 20-22, 2019, Massachusetts
 Institute of Technology, Cambridge, MA, {USA}},
  Year                     = {2019},
  Pages                    = {6:1--6:21},

  Bibsource                = {dblp computer science bibliography, https://dblp.org},
  Biburl                   = {https://dblp.org/rec/conf/approx/Filtser19.bib},
  
  Doi                      = {10.4230/LIPIcs.APPROX-RANDOM.2019.6},
  Timestamp                = {Tue, 17 Sep 2019 17:11:58 +0200},
  Url                      = {https://doi.org/10.4230/LIPIcs.APPROX-RANDOM.2019.6}
}

@InProceedings{Gottlieb15,
  Title                    = {A Light Metric Spanner},
  Author                   = {Lee{-}Ad Gottlieb},
  Booktitle                = {{IEEE} 56th Annual Symposium on Foundations of Computer Science, {FOCS}
 2015, Berkeley, CA, USA, 17-20 October, 2015},
  Year                     = {2015},
  Pages                    = {759--772},

  Bibsource                = {dblp computer science bibliography, https://dblp.org},
  Biburl                   = {https://dblp.org/rec/conf/focs/Gottlieb15.bib},
  
  Doi                      = {10.1109/FOCS.2015.52},
  Timestamp                = {Wed, 16 Oct 2019 14:14:54 +0200},
  Url                      = {https://doi.org/10.1109/FOCS.2015.52}
}

@InProceedings{Grigni00,
  Title                    = {Approximate {TSP} in Graphs with Forbidden Minors},
  Author                   = {Michelangelo Grigni},
  Booktitle                = {Automata, Languages and Programming, 27th International Colloquium,
 {ICALP} 2000, Geneva, Switzerland, July 9-15, 2000, Proceedings},
  Year                     = {2000},
  Pages                    = {869--877},

  Bibsource                = {dblp computer science bibliography, https://dblp.org},
  Biburl                   = {https://dblp.org/rec/conf/icalp/Grigni00.bib},
  
  Doi                      = {10.1007/3-540-45022-X\_73},
  Timestamp                = {Tue, 14 May 2019 10:00:44 +0200},
  Url                      = {https://doi.org/10.1007/3-540-45022-X\_73}
}

@InProceedings{GKP95,
  Title                    = {An Approximation Scheme for Planar Graph {TSP}},
  Author                   = {Michelangelo Grigni and
 Elias Koutsoupias and
 Christos H. Papadimitriou},
  Booktitle                = {36th Annual Symposium on Foundations of Computer Science, Milwaukee,
 Wisconsin, USA, 23-25 October 1995},
  Year                     = {1995},
  Pages                    = {640--645},

  Bibsource                = {dblp computer science bibliography, https://dblp.org},
  Biburl                   = {https://dblp.org/rec/conf/focs/GrigniKP95.bib},
  
  Doi                      = {10.1109/SFCS.1995.492665},
  Timestamp                = {Wed, 16 Oct 2019 14:14:54 +0200},
  Url                      = {https://doi.org/10.1109/SFCS.1995.492665}
}

@InProceedings{Le20,
  Title                    = {A {PTAS} for subset {TSP} in minor-free graphs},
  Author                   = {Hung Le},
  Booktitle                = {Proceedings of the 2020 {ACM-SIAM} Symposium on Discrete Algorithms,
 {SODA} 2020, Salt Lake City, UT, USA, January 5-8, 2020},
  Year                     = {2020},
  Pages                    = {2279--2298},

  Bibsource                = {dblp computer science bibliography, https://dblp.org},
  Biburl                   = {https://dblp.org/rec/conf/soda/Le20.bib},
  
  Doi                      = {10.1137/1.9781611975994.140},
  Timestamp                = {Fri, 31 Jan 2020 13:53:57 +0100},
  Url                      = {https://doi.org/10.1137/1.9781611975994.140}
}

@InProceedings{Rao99,
  Title                    = {Small Distortion and Volume Preserving Embeddings for Planar and {E}uclidean Metrics},
  Author                   = {Satish Rao},
  Booktitle                = {Proceedings of the Fifteenth Annual Symposium on Computational Geometry, Miami Beach, Florida, USA, June 13-16, 1999},
  Year                     = {1999},
  Pages                    = {300--306},

  Bibsource                = {dblp computer science bibliography, http://dblp.org},
  Biburl                   = {http://dblp.uni-trier.de/rec/bib/conf/compgeom/Rao99},
  
  Doi                      = {10.1145/304893.304983},
  Timestamp                = {Fri, 30 Jan 2015 14:53:40 +0100},
  Url                      = {http://doi.acm.org/10.1145/304893.304983}
}

@Article{BBM06,
  author    = {Yair Bartal and
               B{\'{e}}la Bollob{\'{a}}s and
               Manor Mendel},
  title     = {Ramsey-type theorems for metric spaces with applications to online
               problems},
  journal   = {J. Comput. Syst. Sci.},
  volume    = {72},
  number    = {5},
  pages     = {890--921},
  year      = {2006},
  url       = {https://doi.org/10.1016/j.jcss.2005.05.008},
  doi       = {10.1016/j.jcss.2005.05.008},
  timestamp = {Fri, 30 Nov 2018 13:25:40 +0100},
  biburl    = {https://dblp.org/rec/journals/jcss/BartalBM06.bib},
  bibsource = {dblp computer science bibliography, https://dblp.org},
  note = "Special Issue on FOCS 2001",
}

@Article{NT12,
author="Naor, Assaf
and Tao, Terence",
title="Scale-oblivious metric fragmentation and the nonlinear Dvoretzky theorem",
journal="Israel Journal of Mathematics",
year="2012",
volume="192",
number="1",
pages="489--504",
abstract="We introduce a randomized iterative fragmentation procedure for finite metric spaces, which is guaranteed to result in a polynomially large subset that is D-equivalent to an ultrametric, where D ? (2,?) is a prescribed target distortion. Since this procedure works for D arbitrarily close to the nonlinear Dvoretzky phase transition at distortion 2, we thus obtain a much simpler probabilistic proof of the main result of [3], answering a question from [12], and yielding the best known bounds in the nonlinear Dvoretzky theorem.",
issn="1565-8511",
doi="10.1007/s11856-012-0039-7",
url="http://dx.doi.org/10.1007/s11856-012-0039-7"
}

@inproceedings{Talwar04,
author = {Kunal Talwar},
title = {Bypassing the embedding: algorithms for low dimensional
metrics},
booktitle = {STOC '04: Proceedings of the thirty-sixth annual ACM
symposium
on Theory of computing},
year = {2004},
isbn = {1-58113-852-0},
pages = {281--290},
location = {Chicago, IL, USA},
doi = {http://doi.acm.org/10.1145/1007352.1007399},
publisher = {ACM Press},
}

@inproceedings{FHRT03,
author = {Jittat Fakcharoenphol and Chris Harrelson and Satish Rao and
Kunal
Talwar},
title = {An improved approximation algorithm for the 0-extension
problem},
booktitle = {SODA '03: Proceedings of the fourteenth annual ACM-SIAM
symposium on Discrete algorithms},
year = {2003},
isbn = {0-89871-538-5},
pages = {257--265},
location = {Baltimore, Maryland},
publisher = {Society for Industrial and Applied Mathematics},
address = {Philadelphia, PA, USA},
}

@article{BLMN05,
  author    = {Yair Bartal and
               Nathan Linial and
               Manor Mendel and
               Assaf Naor},
  title     = {Some Low Distortion Metric Ramsey Problems},
  journal   = {Discret. Comput. Geom.},
  volume    = {33},
  number    = {1},
  pages     = {27--41},
  doi       = {10.1007/s00454-004-1100-z},
  year      = {2005},
  url       = {http://www.springerlink.com/index/10.1007/s00454-004-1100-z},
  timestamp = {Thu, 12 Mar 2020 17:21:08 +0100},
  biburl    = {https://dblp.org/rec/journals/dcg/BartalLMN05.bib},
}

@article{MN07,
 author = {Mendel, Manor and Naor, Assaf},
 title = {Ramsey partitions and proximity data structures},
 journal = {Journal of the European Mathematical Society},
   volume    = {9},
  number    = {2},
year = {2007},
  pages = {253--275}
}

@Article{BFM86,
  author     = {Bourgain, J. and Figiel, T. and Milman, V.},
  title      = {On {H}ilbertian subsets of finite metric spaces},
  journal    = {Israel J. Math.},
  year       = {1986},
  volume     = {55},
  number     = {2},
  pages      = {147--152},
  issn       = {0021-2172},
  abstract   = {The following result is proved: For every$\epsilon$>0 there is aC($\epsilon$)>0 such that every finite metric space (X, d) contains a subsetY such that |Y|?C($\epsilon$)log|X| and (Y, d                Y) embeds (1 +$\epsilon$)-isomorphically into the Hilbert spacel                2.},
  coden      = {ISJMAP},
  doi        = {10.1007/BF02801990},
  fjournal   = {Israel Journal of Mathematics},
  mrclass    = {46C05 (46B20 54E35)},
  mrnumber   = {87m:46051},
  mrreviewer = {M. I. Kadets},
  url        = {http://dx.doi.org/10.1007/BF02801990},
}

@Article{HT84,
author    = {Dov Harel and
Robert Endre Tarjan},
title     = {Fast Algorithms for Finding Nearest Common Ancestors},
journal   = {{SIAM} J. Comput.},
volume    = {13},
number    = {2},
pages     = {338--355},
year      = {1984},
url       = {https://doi.org/10.1137/0213024},
doi       = {10.1137/0213024},
timestamp = {Wed, 14 Nov 2018 10:45:06 +0100},
biburl    = {https://dblp.org/rec/journals/siamcomp/HarelT84.bib},
bibsource = {dblp computer science bibliography, https://dblp.org}
}

@article{BGS16,
  author    = {Guy E. Blelloch and
               Yan Gu and
               Yihan Sun},
  title     = {A New Efficient Construction on Probabilistic Tree Embeddings},
  journal   = {CoRR},
  volume    = {abs/1605.04651},
  year      = {2016},
  url       = {http://arxiv.org/abs/1605.04651},
  archivePrefix = {arXiv},
  eprint    = {1605.04651},
  timestamp = {Fri, 14 Sep 2018 07:01:58 +0200},
  biburl    = {https://dblp.org/rec/journals/corr/Blelloch0S16.bib},
  bibsource = {dblp computer science bibliography, https://dblp.org},
  Note = {\url{https://arxiv.org/abs/1605.04651}},
}

@article{KLMN04,
year={2005},
issn={1016-443X},
journal={Geometric and Functional Analysis},
volume={15},
number={4},
doi={10.1007/s00039-005-0527-6},
title={Measured descent: a new embedding method for finite metrics},
url={http://dx.doi.org/10.1007/s00039-005-0527-6},
publisher={Birkhäuser-Verlag},
author={Krauthgamer, Robert and Lee, James R. and Mendel, Manor and Naor, Assaf},
pages={839-858},
language={English},
note={preliminary version published in FOCS 2004},
}

@inproceedings{C14,
 author = {Chechik, Shiri},
 title = {Approximate Distance Oracles with Constant Query Time},
 booktitle = {Proceedings of the 46th Annual ACM Symposium on Theory of Computing},
 series = {STOC '14},
 year = {2014},
 isbn = {978-1-4503-2710-7},
 location = {New York, New York},
 pages = {654--663},
 numpages = {10},
 url = {http://doi.acm.org/10.1145/2591796.2591801},
 doi = {10.1145/2591796.2591801},
 acmid = {2591801},
 publisher = {ACM},
 address = {New York, NY, USA},
 keywords = {constant querytime, distance oracles, shortest paths},
}

@article{EGKRTT14,
  author    = {Matthias Englert and
               Anupam Gupta and
               Robert Krauthgamer and
               Harald R{\"{a}}cke and
               Inbal Talgam{-}Cohen and
               Kunal Talwar},
  title     = {Vertex Sparsifiers: New Results from Old Techniques},
  journal   = {{SIAM} J. Comput.},
  volume    = {43},
  number    = {4},
  pages     = {1239--1262},
  year      = {2014},
  url       = {http://dx.doi.org/10.1137/130908440},
  doi       = {10.1137/130908440},
  timestamp = {Mon, 15 Sep 2014 14:50:20 +0200},
  biburl    = {http://dblp.uni-trier.de/rec/bib/journals/siamcomp/EnglertGKRTT14},
  bibsource = {dblp computer science bibliography, http://dblp.org}
}

@Article{TZ05,
  author    = {Mikkel Thorup and Uri Zwick},
  title     = {Approximate distance oracles},
  journal   = {J. ACM},
  year      = {2005},
  volume    = {52},
  number    = {1},
  pages     = {1--24},
  issn      = {0004-5411},
  bibsource = {DBLP, http://dblp.uni-trier.de},
  doi       = {http://doi.acm.org/10.1145/1044731.1044732},
  ee        = {http://doi.acm.org/10.1145/1044731.1044732},
  publisher = {ACM Press},  
  Note      = {preliminary version published in STOC 2001},
}

@inproceedings{FL21,
  author    = {Arnold Filtser and
               Hung Le},
  editor    = {Samir Khuller and
               Virginia Vassilevska Williams},
  title     = {Clan embeddings into trees, and low treewidth graphs},
  booktitle = {{STOC} '21: 53rd Annual {ACM} {SIGACT} Symposium on Theory of Computing,
               Virtual Event, Italy, June 21-25, 2021},
  pages     = {342--355},
  publisher = {{ACM}},
  year      = {2021},
  url       = {https://doi.org/10.1145/3406325.3451043},
  doi       = {10.1145/3406325.3451043},
  timestamp = {Tue, 22 Jun 2021 20:03:56 +0200},
  biburl    = {https://dblp.org/rec/conf/stoc/FiltserL21.bib},
  bibsource = {dblp computer science bibliography, https://dblp.org}
}

@Article{ACEFN20,
  Title                    = {Ramsey Spanning Trees and Their Applications},
  Author                   = {Ittai Abraham and
 Shiri Chechik and
 Michael Elkin and
 Arnold Filtser and
 Ofer Neiman},
  Journal                  = {{ACM} Trans. Algorithms},
  Year                     = {2020},
  Note                     = {preliminary version published in SODA 2018},
  Number                   = {2},
  Pages                    = {19:1--19:21},
  Volume                   = {16},

  Bibsource                = {dblp computer science bibliography, https://dblp.org},
  Biburl                   = {https://dblp.org/rec/journals/talg/AbrahamCEFN20.bib},
  Doi                      = {10.1145/3371039},
  Timestamp                = {Tue, 26 May 2020 11:19:57 +0200},
  Url                      = {https://doi.org/10.1145/3371039}
}

@Article{AGGNT19,
  Title                    = {Cops, Robbers, and Threatening Skeletons: Padded Decomposition for
 Minor-Free Graphs},
  Author                   = {Ittai Abraham and
 Cyril Gavoille and
 Anupam Gupta and
 Ofer Neiman and
 Kunal Talwar},
  Journal                  = {{SIAM} J. Comput.},
  Year                     = {2019},
  Note                     = {preliminary version published in STOC 2014},
  Number                   = {3},
  Pages                    = {1120--1145},
  Volume                   = {48},

  Bibsource                = {dblp computer science bibliography, https://dblp.org},
  Biburl                   = {https://dblp.org/rec/journals/siamcomp/AbrahamGGNT19.bib},
  Doi                      = {10.1137/17M1112406},
  Timestamp                = {Wed, 24 Jul 2019 16:36:59 +0200},
  Url                      = {https://doi.org/10.1137/17M1112406}
}

@InProceedings{ADMSS95,
  author    = {Sunil Arya and
               Gautam Das and
               David M. Mount and
               Jeffrey S. Salowe and
               Michiel H. M. Smid},
  editor    = {Frank Thomson Leighton and
               Allan Borodin},
  title     = {Euclidean spanners: short, thin, and lanky},
  booktitle = {Proceedings of the Twenty-Seventh Annual {ACM} Symposium on Theory
               of Computing, 29 May-1 June 1995, Las Vegas, Nevada, {USA}},
  pages     = {489--498},
  publisher = {{ACM}},
  year      = {1995},
  url       = {https://doi.org/10.1145/225058.225191},
  doi       = {10.1145/225058.225191},
  timestamp = {Tue, 06 Nov 2018 11:07:05 +0100},
  biburl    = {https://dblp.org/rec/conf/stoc/AryaDMSS95.bib},
  bibsource = {dblp computer science bibliography, https://dblp.org}
}

@inproceedings{banerjee2024novel,
  title={Novel properties of hierarchical probabilistic partitions and their algorithmic applications},
  author={Banerjee, Sandip and Bartal, Yair and Gottlieb, Lee-Ad and Hovav, Alon},
  booktitle={2024 IEEE 65th Annual Symposium on Foundations of Computer Science (FOCS)},
  pages={1724--1767},
  year={2024},
  organization={IEEE}
}

@Article{BGK16,
  Title                    = {The Traveling Salesman Problem: Low-Dimensionality Implies a Polynomial
 Time Approximation Scheme},
  Author                   = {Yair Bartal and
 Lee{-}Ad Gottlieb and
 Robert Krauthgamer},
  Journal                  = {{SIAM} J. Comput.},
  Year                     = {2016},
  Number                   = {4},
  Pages                    = {1563--1581},
  Volume                   = {45},

  Bibsource                = {dblp computer science bibliography, https://dblp.org},
  Biburl                   = {https://dblp.org/rec/journals/siamcomp/BartalGK16.bib},
  Doi                      = {10.1137/130913328},
  Timestamp                = {Sat, 27 May 2017 14:22:59 +0200},
  Url                      = {https://doi.org/10.1137/130913328}
}

@InProceedings{BKS18,
  author       = {Amariah Becker and
                  Philip N. Klein and
                  David Saulpic},
  editor       = {Yossi Azar and
                  Hannah Bast and
                  Grzegorz Herman},
  title        = {Polynomial-Time Approximation Schemes for k-center, k-median, and
                  Capacitated Vehicle Routing in Bounded Highway Dimension},
  booktitle    = {26th Annual European Symposium on Algorithms, {ESA} 2018, August 20-22,
                  2018, Helsinki, Finland},
  series       = {LIPIcs},
  volume       = {112},
  pages        = {8:1--8:15},
  publisher    = {Schloss Dagstuhl - Leibniz-Zentrum f{\"{u}}r Informatik},
  year         = {2018},
  url          = {https://doi.org/10.4230/LIPIcs.ESA.2018.8},
  doi          = {10.4230/LIPICS.ESA.2018.8},
  timestamp    = {Tue, 11 Feb 2020 15:52:14 +0100},
  biburl       = {https://dblp.org/rec/conf/esa/BeckerKS18.bib},
  bibsource    = {dblp computer science bibliography, https://dblp.org}
}

@InProceedings{BLW19,
  author       = {Glencora Borradaile and
                  Hung Le and
                  Christian Wulff{-}Nilsen},
  editor       = {Timothy M. Chan},
  title        = {Greedy spanners are optimal in doubling metrics},
  booktitle    = {Proceedings of the Thirtieth Annual {ACM-SIAM} Symposium on Discrete
                  Algorithms, {SODA} 2019, San Diego, California, USA, January 6-9,
                  2019},
  pages        = {2371--2379},
  publisher    = {{SIAM}},
  year         = {2019},
  url          = {https://doi.org/10.1137/1.9781611975482.145},
  doi          = {10.1137/1.9781611975482.145},
  timestamp    = {Sun, 05 May 2024 12:43:25 +0200},
  biburl       = {https://dblp.org/rec/conf/soda/BorradaileLW19.bib},
  bibsource    = {dblp computer science bibliography, https://dblp.org}
}

@InProceedings{BLW17,
  author       = {Glencora Borradaile and
                  Hung Le and
                  Christian Wulff{-}Nilsen},
  editor       = {Chris Umans},
  title        = {Minor-Free Graphs Have Light Spanners},
  booktitle    = {58th {IEEE} Annual Symposium on Foundations of Computer Science, {FOCS}
                  2017, Berkeley, CA, USA, October 15-17, 2017},
  pages        = {767--778},
  publisher    = {{IEEE} Computer Society},
  year         = {2017},
  url          = {https://doi.org/10.1109/FOCS.2017.76},
  doi          = {10.1109/FOCS.2017.76},
  timestamp    = {Sun, 05 May 2024 12:43:25 +0200},
  biburl       = {https://dblp.org/rec/conf/focs/BorradaileLW17.bib},
  bibsource    = {dblp computer science bibliography, https://dblp.org}
}

@Article{BDMS13,
  Title                    = {Robust Geometric Spanners},
  Author                   = {Prosenjit Bose and
 Vida Dujmovic and
 Pat Morin and
 Michiel H. M. Smid},
  Journal                  = {{SIAM} J. Comput.},
  Year                     = {2013},
  Note                     = {preliminary version published in SOCG 2013},
  Number                   = {4},
  Pages                    = {1720--1736},
  Volume                   = {42},

  Bibsource                = {dblp computer science bibliography, https://dblp.org},
  Biburl                   = {https://dblp.org/rec/journals/siamcomp/BoseDMS13.bib},
  Doi                      = {10.1137/120874473},
  Timestamp                = {Fri, 02 Nov 2018 09:32:24 +0100},
  Url                      = {https://doi.org/10.1137/120874473}
}

@TechReport{Chr76,
  Title                    = {Worst-case analysis of a new heuristic for the travelling salesman problem},
  Author                   = {Christofides, Nicos},
  Institution              = {Graduate School of Industrial Administration, Carnegie Mellon University},
  Year                     = {1976},
  Number                   = {388},
  Type                     = {Technical Report},

  Added-at                 = {2008-01-10T17:04:02.000+0100},
  Biburl                   = {https://www.bibsonomy.org/bibtex/2107ce5afe4cdd886e5cf863b16533205/jvrb_regulski},
  Interhash                = {17c4f6224da4a8ead47972fd788d7556},
  Intrahash                = {107ce5afe4cdd886e5cf863b16533205},
  Keywords                 = {3(2006)12},
  Timestamp                = {2008-01-10T17:06:15.000+0100}
}

@inproceedings{CFKL20,
  author       = {Vincent Cohen{-}Addad and
                  Arnold Filtser and
                  Philip N. Klein and
                  Hung Le},
  editor       = {Sandy Irani},
  title        = {On Light Spanners, Low-treewidth Embeddings and Efficient Traversing
                  in Minor-free Graphs},
  booktitle    = {61st {IEEE} Annual Symposium on Foundations of Computer Science, {FOCS}
                  2020, Durham, NC, USA, November 16-19, 2020},
  pages        = {589--600},
  publisher    = {{IEEE}},
  year         = {2020},
  url          = {https://doi.org/10.1109/FOCS46700.2020.00061},
  doi          = {10.1109/FOCS46700.2020.00061},
  timestamp    = {Sun, 05 May 2024 12:43:25 +0200},
  biburl       = {https://dblp.org/rec/conf/focs/Cohen-AddadFKL20.bib},
  bibsource    = {dblp computer science bibliography, https://dblp.org}
}

@InProceedings{FT03,
  Title                    = {An Improved Decomposition Theorem for Graphs Excluding a Fixed Minor},
  Author                   = {J.~Fakcharoenphol and K.~Talwar},
  Booktitle                = {International Workshop on Approximation Algorithms for Combinatorial Optimization},
  Year                     = {2003},
  Pages                    = {36--46},
  Series                   = {APPROX `03},

  Doi                      = {10.1007/978-3-540-45198-3_4},
  Owner                    = {hungle},
  Timestamp                = {2020.04.05}
}

@Article{FFKP18,
  Title                    = {A (1+{\(\epsilon\)})-Embedding of Low Highway Dimension Graphs into
 Bounded Treewidth Graphs},
  Author                   = {Andreas Emil Feldmann and
 Wai Shing Fung and
 Jochen K{\"{o}}nemann and
 Ian Post},
  Journal                  = {{SIAM} J. Comput.},
  Year                     = {2018},
  Number                   = {4},
  Pages                    = {1667--1704},
  Volume                   = {47},

  Bibsource                = {dblp computer science bibliography, https://dblp.org},
  Biburl                   = {https://dblp.org/rec/journals/siamcomp/FeldmannFKP18.bib},
  Doi                      = {10.1137/16M1067196},
  Timestamp                = {Fri, 05 Apr 2019 13:05:16 +0200},
  Url                      = {https://doi.org/10.1137/16M1067196}
}

@Article{FS20,
  Title                    = {The Greedy Spanner Is Existentially Optimal},
  Author                   = {Arnold Filtser and
 Shay Solomon},
  Journal                  = {{SIAM} J. Comput.},
  Year                     = {2020},
  Note                     = {preliminary version published in PODC 2016},
  Number                   = {2},
  Pages                    = {429--447},
  Volume                   = {49},

  Bibsource                = {dblp computer science bibliography, https://dblp.org},
  Biburl                   = {https://dblp.org/rec/journals/siamcomp/FiltserS20.bib},
  Doi                      = {10.1137/18M1210678},
  Timestamp                = {Fri, 22 May 2020 21:56:09 +0200},
  Url                      = {https://doi.org/10.1137/18M1210678}
}

@Article{Klein08,
  Title                    = {A Linear-Time Approximation Scheme for {TSP} in Undirected Planar
 Graphs with Edge-Weights},
  Author                   = {Philip N. Klein},
  Journal                  = {{SIAM} J. Comput.},
  Year                     = {2008},
  Number                   = {6},
  Pages                    = {1926--1952},
  Volume                   = {37},

  Bibsource                = {dblp computer science bibliography, https://dblp.org},
  Biburl                   = {https://dblp.org/rec/journals/siamcomp/Klein08.bib},
  Doi                      = {10.1137/060649562},
  Timestamp                = {Sat, 27 May 2017 14:22:58 +0200},
  Url                      = {https://doi.org/10.1137/060649562}
}

@InProceedings{Klein05,
  author       = {Philip N. Klein},
  title        = {A linear-time approximation scheme for planar weighted {TSP}},
  booktitle    = {46th Annual {IEEE} Symposium on Foundations of Computer Science {(FOCS}
                  2005), 23-25 October 2005, Pittsburgh, PA, USA, Proceedings},
  pages        = {647--657},
  publisher    = {{IEEE} Computer Society},
  year         = {2005},
  url          = {https://doi.org/10.1109/SFCS.2005.7},
  doi          = {10.1109/SFCS.2005.7},
  timestamp    = {Thu, 23 Mar 2023 23:57:53 +0100},
  biburl       = {https://dblp.org/rec/conf/focs/Klein05.bib},
  bibsource    = {dblp computer science bibliography, https://dblp.org}
}

@InProceedings{KPR93,
  author       = {Philip N. Klein and
                  Serge A. Plotkin and
                  Satish Rao},
  editor       = {S. Rao Kosaraju and
                  David S. Johnson and
                  Alok Aggarwal},
  title        = {Excluded minors, network decomposition, and multicommodity flow},
  booktitle    = {Proceedings of the Twenty-Fifth Annual {ACM} Symposium on Theory of
                  Computing, May 16-18, 1993, San Diego, CA, {USA}},
  pages        = {682--690},
  publisher    = {{ACM}},
  year         = {1993},
  url          = {https://doi.org/10.1145/167088.167261},
  doi          = {10.1145/167088.167261},
  timestamp    = {Tue, 06 Nov 2018 11:07:06 +0100},
  biburl       = {https://dblp.org/rec/conf/stoc/KleinPR93.bib},
  bibsource    = {dblp computer science bibliography, https://dblp.org}
}

@Article{Mitchell99,
  Title                    = {Guillotine Subdivisions Approximate Polygonal Subdivisions: {A} Simple
 Polynomial-Time Approximation Scheme for Geometric TSP, k-MST, and
 Related Problems},
  Author                   = {Joseph S. B. Mitchell},
  Journal                  = {{SIAM} J. Comput.},
  Year                     = {1999},
  Number                   = {4},
  Pages                    = {1298--1309},
  Volume                   = {28},

  Bibsource                = {dblp computer science bibliography, https://dblp.org},
  Biburl                   = {https://dblp.org/rec/journals/siamcomp/Mitchell99.bib},
  Doi                      = {10.1137/S0097539796309764},
  Timestamp                = {Wed, 14 Nov 2018 10:45:07 +0100},
  Url                      = {https://doi.org/10.1137/S0097539796309764}
}

@Article{PY93,
  Title                    = {The {T}raveling {S}alesman {P}roblem with Distances One and Two},
  Author                   = {Christos H. Papadimitriou and Mihalis 
 Yannakakis},
  Journal                  = {Mathematics of Operations Research},
  Year                     = {1993},
  Number                   = {1},
  Pages                    = {1--11},
  Volume                   = {18},

  Doi                      = {10.2307/3690150},
  Owner                    = {hungle},
  Timestamp                = {2020.04.02}
}

@inproceedings{Fil23,
  author       = {Arnold Filtser},
  editor       = {Erin W. Chambers and
                  Joachim Gudmundsson},
  title        = {Labeled Nearest Neighbor Search and Metric Spanners via Locality Sensitive
                  Orderings},
  booktitle    = {39th International Symposium on Computational Geometry, SoCG 2023,
                  June 12-15, 2023, Dallas, Texas, {USA}},
  series       = {LIPIcs},
  volume       = {258},
  pages        = {33:1--33:18},
  publisher    = {Schloss Dagstuhl - Leibniz-Zentrum f{\"{u}}r Informatik},
  year         = {2023},
  url          = {https://doi.org/10.4230/LIPIcs.SoCG.2023.33},
  doi          = {10.4230/LIPIcs.SoCG.2023.33},
  timestamp    = {Tue, 13 Jun 2023 16:59:20 +0200},
  biburl       = {https://dblp.org/rec/conf/compgeom/Filtser23.bib},
  bibsource    = {dblp computer science bibliography, https://dblp.org}
}

@inproceedings{CCLMST24,
	author = {Hsien-Chih Chang and Jonathan Conroy and Hung Le and Lazar Milenković and Shay Solomon and Cuong Than},
	title = {Shortcut Partitions in Minor-Free Graphs: Steiner Point Removal, Distance Oracles, Tree Covers, and More},
	booktitle = {Proceedings of the 2024 Annual ACM-SIAM Symposium on Discrete Algorithms (SODA)},	
	year         = {2024},
	pages = {5300-5331},
	doi = {10.1137/1.9781611977912.191},
	URL = {https://epubs.siam.org/doi/abs/10.1137/1.9781611977912.191},
	eprint = {https://epubs.siam.org/doi/pdf/10.1137/1.9781611977912.191},
}

@INPROCEEDINGS{CCLMST23Planar,
	author={Chang, Hsien-Chih and Conroy, Jonathan and Le, Hung and Milenkovic, Lazar and Solomon, Shay and Than, Cuong},
	booktitle={2023 IEEE 64th Annual Symposium on Foundations of Computer Science (FOCS)}, 
	title={Covering Planar Metrics (and Beyond): O(1) Trees Suffice}, 
	year={2023},
	volume={},
	number={},
	pages={2231-2261},
	doi={10.1109/FOCS57990.2023.00139}}

@inproceedings{FGN24,
	author       = {Arnold Filtser and
	Yuval Gitlitz and
	Ofer Neiman},
	editor       = {Wolfgang Mulzer and
	Jeff M. Phillips},
	title        = {Light, Reliable Spanners},
	booktitle    = {40th International Symposium on Computational Geometry, SoCG 2024,
	June 11-14, 2024, Athens, Greece},
	series       = {LIPIcs},
	volume       = {293},
	pages        = {56:1--56:15},
	publisher    = {Schloss Dagstuhl - Leibniz-Zentrum f{\"{u}}r Informatik},
	year         = {2024},
	url          = {https://doi.org/10.4230/LIPIcs.SoCG.2024.56},
	doi          = {10.4230/LIPICS.SOCG.2024.56},
	timestamp    = {Thu, 06 Jun 2024 17:00:35 +0200},
	biburl       = {https://dblp.org/rec/conf/compgeom/FiltserGN24.bib},
	bibsource    = {dblp computer science bibliography, https://dblp.org}
}

@inproceedings{SBI11,
	author       = {Srinivasagopalan Srivathsan and
	Costas Busch and
	S. Sitharama Iyengar},
	editor       = {Naoki Katoh and
	Amit Kumar},
	title        = {Oblivious Buy-at-Bulk in Planar Graphs},
	booktitle    = {{WALCOM:} Algorithms and Computation - 5th International Workshop,
	{WALCOM} 2011, New Delhi, India, February 18-20, 2011. Proceedings},
	series       = {Lecture Notes in Computer Science},
	volume       = {6552},
	pages        = {33--44},
	publisher    = {Springer},
	year         = {2011},
	url          = {https://doi.org/10.1007/978-3-642-19094-0\_6},
	doi          = {10.1007/978-3-642-19094-0\_6},
	timestamp    = {Tue, 14 May 2019 10:00:40 +0200},
	biburl       = {https://dblp.org/rec/conf/walcom/SrivathsanBI11.bib},
	bibsource    = {dblp computer science bibliography, https://dblp.org}
}

@inproceedings{GHR06,
	author       = {Anupam Gupta and
	Mohammad Taghi Hajiaghayi and
	Harald R{\"{a}}cke},
	title        = {Oblivious network design},
	booktitle    = {Proceedings of the Seventeenth Annual {ACM-SIAM} Symposium on Discrete
	Algorithms, {SODA} 2006, Miami, Florida, USA, January 22-26, 2006},
	pages        = {970--979},
	publisher    = {{ACM} Press},
	year         = {2006},
	url          = {http://dl.acm.org/citation.cfm?id=1109557.1109665},
	timestamp    = {Wed, 18 Aug 2021 17:14:21 +0200},
	biburl       = {https://dblp.org/rec/conf/soda/GuptaHR06.bib},
	bibsource    = {dblp computer science bibliography, https://dblp.org}
}

@inproceedings{MPX13,
	author       = {Gary L. Miller and
	Richard Peng and
	Shen Chen Xu},
	editor       = {Guy E. Blelloch and
	Berthold V{\"{o}}cking},
	title        = {Parallel graph decompositions using random shifts},
	booktitle    = {25th {ACM} Symposium on Parallelism in Algorithms and Architectures,
	{SPAA} '13, Montreal, QC, Canada - July 23 - 25, 2013},
	pages        = {196--203},
	publisher    = {{ACM}},
	year         = {2013},
	url          = {https://doi.org/10.1145/2486159.2486180},
	doi          = {10.1145/2486159.2486180},
	timestamp    = {Wed, 21 Nov 2018 11:17:49 +0100},
	biburl       = {https://dblp.org/rec/conf/spaa/MillerPX13.bib},
	bibsource    = {dblp computer science bibliography, https://dblp.org}
}

@article{HMO23,
	author       = {Sariel Har{-}Peled and
	Manor Mendel and
	D{\'{a}}niel Ol{\'{a}}h},
	title        = {Reliable Spanners for Metric Spaces},
	journal      = {{ACM} Trans. Algorithms},
	volume       = {19},
	number       = {1},
	pages        = {7:1--7:27},
	year         = {2023},
	url          = {https://doi.org/10.1145/3563356},
	doi          = {10.1145/3563356},
	timestamp    = {Sun, 16 Apr 2023 20:31:22 +0200},
	biburl       = {https://dblp.org/rec/journals/talg/HarPeledMO23.bib},
	bibsource    = {dblp computer science bibliography, https://dblp.org}
}

@inproceedings{FKT19,
	author       = {Arnold Filtser and
	Robert Krauthgamer and
	Ohad Trabelsi},
	editor       = {Jeremy T. Fineman and
	Michael Mitzenmacher},
	title        = {Relaxed Voronoi: {A} Simple Framework for Terminal-Clustering Problems},
	booktitle    = {2nd Symposium on Simplicity in Algorithms, {SOSA} 2019, January 8-9,
	2019, San Diego, CA, {USA}},
	series       = {OASIcs},
	volume       = {69},
	pages        = {10:1--10:14},
	publisher    = {Schloss Dagstuhl - Leibniz-Zentrum f{\"{u}}r Informatik},
	year         = {2019},
	url          = {https://doi.org/10.4230/OASIcs.SOSA.2019.10},
	doi          = {10.4230/OASICS.SOSA.2019.10},
	timestamp    = {Tue, 15 Feb 2022 09:40:03 +0100},
	biburl       = {https://dblp.org/rec/conf/soda/FiltserKT19.bib},
	bibsource    = {dblp computer science bibliography, https://dblp.org}
}

@article{Fil24,
	author       = {Arnold Filtser},
	title        = {On Sparse Covers of Minor Free Graphs, Low Dimensional Metric Embeddings,
	and other applications},
	journal      = {CoRR},
	volume       = {abs/2401.14060},
	year         = {2024},
	url          = {https://doi.org/10.48550/arXiv.2401.14060},
	doi          = {10.48550/ARXIV.2401.14060},
	eprinttype    = {arXiv},
	eprint       = {2401.14060},
	timestamp    = {Tue, 06 Feb 2024 14:15:49 +0100},
	biburl       = {https://dblp.org/rec/journals/corr/abs-2401-14060.bib},
	bibsource    = {dblp computer science bibliography, https://dblp.org},	
	note		= {to appear in SOCG 25}
}

@inproceedings{CCLMST24Euclidean,
	author       = {Hsien{-}Chih Chang and
	Jonathan Conroy and
	Hung Le and
	Lazar Milenkovic and
	Shay Solomon and
	Cuong Than},
	editor       = {Wolfgang Mulzer and
	Jeff M. Phillips},
	title        = {Optimal Euclidean Tree Covers},
	booktitle    = {40th International Symposium on Computational Geometry, SoCG 2024,
	June 11-14, 2024, Athens, Greece},
	series       = {LIPIcs},
	volume       = {293},
	pages        = {37:1--37:15},
	publisher    = {Schloss Dagstuhl - Leibniz-Zentrum f{\"{u}}r Informatik},
	year         = {2024},
	url          = {https://doi.org/10.4230/LIPIcs.SoCG.2024.37},
	doi          = {10.4230/LIPICS.SOCG.2024.37},
	timestamp    = {Thu, 06 Jun 2024 17:00:35 +0200},
	biburl       = {https://dblp.org/rec/conf/compgeom/ChangC0MST24.bib},
	bibsource    = {dblp computer science bibliography, https://dblp.org}
}

@article{BFN22,
	author       = {Yair Bartal and
	Ora Nova Fandina and
	Ofer Neiman},
	title        = {Covering metric spaces by few trees},
	journal      = {J. Comput. Syst. Sci.},
	volume       = {130},
	pages        = {26--42},
	year         = {2022},
	url          = {https://doi.org/10.1016/j.jcss.2022.06.001},
	doi          = {10.1016/J.JCSS.2022.06.001},
	timestamp    = {Sat, 10 Sep 2022 21:00:15 +0200},
	biburl       = {https://dblp.org/rec/journals/jcss/BartalFN22.bib},
	bibsource    = {dblp computer science bibliography, https://dblp.org}
}

@inproceedings{BF00,
	author       = {Michael A. Bender and
	Martin Farach{-}Colton},
	editor       = {Gaston H. Gonnet and
	Daniel Panario and
	Alfredo Viola},
	title        = {The {LCA} Problem Revisited},
	booktitle    = {{LATIN} 2000: Theoretical Informatics, 4th Latin American Symposium,
	Punta del Este, Uruguay, April 10-14, 2000, Proceedings},
	series       = {Lecture Notes in Computer Science},
	volume       = {1776},
	pages        = {88--94},
	publisher    = {Springer},
	year         = {2000},
	url          = {https://doi.org/10.1007/10719839\_9},
	doi          = {10.1007/10719839\_9},
	timestamp    = {Fri, 09 Apr 2021 18:42:35 +0200},
	biburl       = {https://dblp.org/rec/conf/latin/BenderF00.bib},
	bibsource    = {dblp computer science bibliography, https://dblp.org}
}

@misc{BGK16arxivVersion1,
      title={The Traveling Salesman Problem: Low-Dimensionality Implies a Polynomial Time Approximation Scheme}, 
      author={Yair Bartal and Lee-Ad Gottlieb and Robert Krauthgamer},
      year={2011},
      eprint={1112.0699v1},
      archivePrefix={arXiv},
      primaryClass={cs.CC},
      url={https://arxiv.org/abs/1112.0699v1}, 
      note={version 1 in the arXiv}	
}

@article{Arora98,
  author       = {Sanjeev Arora},
  title        = {Polynomial Time Approximation Schemes for Euclidean Traveling Salesman
                  and other Geometric Problems},
  journal      = {J. {ACM}},
  volume       = {45},
  number       = {5},
  pages        = {753--782},
  year         = {1998},
  url          = {https://doi.org/10.1145/290179.290180},
  doi          = {10.1145/290179.290180},
  timestamp    = {Wed, 14 Nov 2018 10:35:23 +0100},
  biburl       = {https://dblp.org/rec/journals/jacm/Arora98.bib},
  bibsource    = {dblp computer science bibliography, https://dblp.org}
}

@article{ADFGW16,
  author       = {Ittai Abraham and
                  Daniel Delling and
                  Amos Fiat and
                  Andrew V. Goldberg and
                  Renato F. Werneck},
  title        = {Highway Dimension and Provably Efficient Shortest Path Algorithms},
  journal      = {J. {ACM}},
  volume       = {63},
  number       = {5},
  pages        = {41:1--41:26},
  year         = {2016},
  url          = {https://doi.org/10.1145/2985473},
  doi          = {10.1145/2985473},
  timestamp    = {Tue, 06 Nov 2018 12:51:46 +0100},
  biburl       = {https://dblp.org/rec/journals/jacm/AbrahamDFGW16.bib},
  bibsource    = {dblp computer science bibliography, https://dblp.org}
}

@inproceedings{AFGW10,
  author       = {Ittai Abraham and
                  Amos Fiat and
                  Andrew V. Goldberg and
                  Renato Fonseca F. Werneck},
  editor       = {Moses Charikar},
  title        = {Highway Dimension, Shortest Paths, and Provably Efficient Algorithms},
  booktitle    = {Proceedings of the Twenty-First Annual {ACM-SIAM} Symposium on Discrete
                  Algorithms, {SODA} 2010, Austin, Texas, USA, January 17-19, 2010},
  pages        = {782--793},
  publisher    = {{SIAM}},
  year         = {2010},
  url          = {https://doi.org/10.1137/1.9781611973075.64},
  doi          = {10.1137/1.9781611973075.64},
  timestamp    = {Tue, 02 Feb 2021 17:07:59 +0100},
  biburl       = {https://dblp.org/rec/conf/soda/AbrahamFGW10.bib},
  bibsource    = {dblp computer science bibliography, https://dblp.org}
}

@inproceedings{BFMSS07,
  author       = {Hannah Bast and
                  Stefan Funke and
                  Domagoj Matijevic and
                  Peter Sanders and
                  Dominik Schultes},
  title        = {In Transit to Constant Time Shortest-Path Queries in Road Networks},
  booktitle    = {Proceedings of the Nine Workshop on Algorithm Engineering and Experiments,
                  {ALENEX} 2007, New Orleans, Louisiana, USA, January 6, 2007},
  publisher    = {{SIAM}},
  year         = {2007},
  url          = {https://doi.org/10.1137/1.9781611972870.5},
  doi          = {10.1137/1.9781611972870.5},
  timestamp    = {Mon, 28 Aug 2023 21:17:50 +0200},
  biburl       = {https://dblp.org/rec/conf/alenex/BastFMSS07.bib},
  bibsource    = {dblp computer science bibliography, https://dblp.org}
}

@inproceedings{BFM06,
  author       = {Hannah Bast and
                  Stefan Funke and
                  Domagoj Matijevic},
  editor       = {Camil Demetrescu and
                  Andrew V. Goldberg and
                  David S. Johnson},
  title        = {Ultrafast Shortest-Path Queries via Transit Nodes},
  booktitle    = {The Shortest Path Problem, Proceedings of a {DIMACS} Workshop, Piscataway,
                  New Jersey, USA, November 13-14, 2006},
  series       = {{DIMACS} Series in Discrete Mathematics and Theoretical Computer Science},
  volume       = {74},
  pages        = {175--192},
  publisher    = {{DIMACS/AMS}},
  year         = {2006},
  url          = {https://doi.org/10.1090/dimacs/074/07},
  doi          = {10.1090/DIMACS/074/07},
  timestamp    = {Mon, 22 May 2023 16:07:34 +0200},
  biburl       = {https://dblp.org/rec/conf/dimacs/BastFM06.bib},
  bibsource    = {dblp computer science bibliography, https://dblp.org}
}

@article{FS21,
  author       = {Andreas Emil Feldmann and
                  David Saulpic},
  title        = {Polynomial time approximation schemes for clustering in low highway
                  dimension graphs},
  journal      = {J. Comput. Syst. Sci.},
  volume       = {122},
  pages        = {72--93},
  year         = {2021},
  url          = {https://doi.org/10.1016/j.jcss.2021.06.002},
  doi          = {10.1016/J.JCSS.2021.06.002},
  timestamp    = {Sat, 09 Apr 2022 12:28:03 +0200},
  biburl       = {https://dblp.org/rec/journals/jcss/FeldmannS21.bib},
  bibsource    = {dblp computer science bibliography, https://dblp.org}
}

@inproceedings{BJKW21,
  author       = {Vladimir Braverman and
                  Shaofeng H.{-}C. Jiang and
                  Robert Krauthgamer and
                  Xuan Wu},
  editor       = {D{\'{a}}niel Marx},
  title        = {Coresets for Clustering in Excluded-minor Graphs and Beyond},
  booktitle    = {Proceedings of the 2021 {ACM-SIAM} Symposium on Discrete Algorithms,
                  {SODA} 2021, Virtual Conference, January 10 - 13, 2021},
  pages        = {2679--2696},
  publisher    = {{SIAM}},
  year         = {2021},
  url          = {https://doi.org/10.1137/1.9781611976465.159},
  doi          = {10.1137/1.9781611976465.159},
  timestamp    = {Thu, 14 Oct 2021 09:46:25 +0200},
  biburl       = {https://dblp.org/rec/conf/soda/BravermanJKW21.bib},
  bibsource    = {dblp computer science bibliography, https://dblp.org}
}

@inproceedings{Blum19,
  author       = {Johannes Blum},
  editor       = {Bart M. P. Jansen and
                  Jan Arne Telle},
  title        = {Hierarchy of Transportation Network Parameters and Hardness Results},
  booktitle    = {14th International Symposium on Parameterized and Exact Computation,
                  {IPEC} 2019, September 11-13, 2019, Munich, Germany},
  series       = {LIPIcs},
  volume       = {148},
  pages        = {4:1--4:15},
  publisher    = {Schloss Dagstuhl - Leibniz-Zentrum f{\"{u}}r Informatik},
  year         = {2019},
  url          = {https://doi.org/10.4230/LIPIcs.IPEC.2019.4},
  doi          = {10.4230/LIPICS.IPEC.2019.4},
  timestamp    = {Thu, 19 Dec 2019 09:27:29 +0100},
  biburl       = {https://dblp.org/rec/conf/iwpec/000119.bib},
  bibsource    = {dblp computer science bibliography, https://dblp.org}
}

@inproceedings{KKG21,
  author       = {Anna R. Karlin and
                  Nathan Klein and
                  Shayan Oveis Gharan},
  editor       = {Samir Khuller and
                  Virginia Vassilevska Williams},
  title        = {A (slightly) improved approximation algorithm for metric {TSP}},
  booktitle    = {{STOC} '21: 53rd Annual {ACM} {SIGACT} Symposium on Theory of Computing,
                  Virtual Event, Italy, June 21-25, 2021},
  pages        = {32--45},
  publisher    = {{ACM}},
  year         = {2021},
  url          = {https://doi.org/10.1145/3406325.3451009},
  doi          = {10.1145/3406325.3451009},
  timestamp    = {Tue, 22 Jun 2021 20:03:56 +0200},
  biburl       = {https://dblp.org/rec/conf/stoc/KarlinKG21.bib},
  bibsource    = {dblp computer science bibliography, https://dblp.org}
}

@inproceedings{BG13,
  author       = {Yair Bartal and
                  Lee{-}Ad Gottlieb},
  title        = {A Linear Time Approximation Scheme for Euclidean {TSP}},
  booktitle    = {54th Annual {IEEE} Symposium on Foundations of Computer Science, {FOCS}
                  2013, 26-29 October, 2013, Berkeley, CA, {USA}},
  pages        = {698--706},
  publisher    = {{IEEE} Computer Society},
  year         = {2013},
  url          = {https://doi.org/10.1109/FOCS.2013.80},
  doi          = {10.1109/FOCS.2013.80},
  timestamp    = {Thu, 23 Mar 2023 23:57:54 +0100},
  biburl       = {https://dblp.org/rec/conf/focs/BartalG13.bib},
  bibsource    = {dblp computer science bibliography, https://dblp.org}
}

@inproceedings{ADFGW11,
  author       = {Ittai Abraham and
                  Daniel Delling and
                  Amos Fiat and
                  Andrew V. Goldberg and
                  Renato Fonseca F. Werneck},
  editor       = {Luca Aceto and
                  Monika Henzinger and
                  Jir{\'{\i}} Sgall},
  title        = {VC-Dimension and Shortest Path Algorithms},
  booktitle    = {Automata, Languages and Programming - 38th International Colloquium,
                  {ICALP} 2011, Zurich, Switzerland, July 4-8, 2011, Proceedings, Part
                  {I}},
  series       = {Lecture Notes in Computer Science},
  volume       = {6755},
  pages        = {690--699},
  publisher    = {Springer},
  year         = {2011},
  url          = {https://doi.org/10.1007/978-3-642-22006-7\_58},
  doi          = {10.1007/978-3-642-22006-7\_58},
  timestamp    = {Tue, 14 May 2019 10:00:44 +0200},
  biburl       = {https://dblp.org/rec/conf/icalp/AbrahamDFGW11.bib},
  bibsource    = {dblp computer science bibliography, https://dblp.org}
}

@article{Kar98,
	author    = {Alexander V. Karzanov},
	title     = {Minimum 0-Extensions of Graph Metrics},
	journal   = {Eur. J. Comb.},
	volume    = {19},
	number    = {1},
	pages     = {71--101},
	year      = {1998},
	url       = {http://dx.doi.org/10.1006/eujc.1997.0154},
	doi       = {10.1006/eujc.1997.0154},
	timestamp = {Mon, 06 Jun 2005 15:02:59 +0200},
	biburl    = {http://dblp.uni-trier.de/rec/bib/journals/ejc/Karzanov98},
	bibsource = {dblp computer science bibliography, http://dblp.org}
}

@inproceedings{DJPSY92,
	author    = {Elias Dahlhaus and
	David S. Johnson and
	Christos H. Papadimitriou and
	Paul D. Seymour and
	Mihalis Yannakakis},
	title     = {The Complexity of Multiway Cuts (Extended Abstract)},
	booktitle = {Proceedings of the 24th Annual {ACM} Symposium on Theory of Computing,
	May 4-6, 1992, Victoria, British Columbia, Canada},
	pages     = {241--251},
	year      = {1992},
	url       = {http://doi.acm.org/10.1145/129712.129736},
	doi       = {10.1145/129712.129736},
	timestamp = {Mon, 17 Oct 2011 17:25:06 +0200},
	biburl    = {http://dblp.uni-trier.de/rec/bib/conf/stoc/DahlhausJPSY92},
	bibsource = {dblp computer science bibliography, http://dblp.org}
}

@inproceedings{AFHKTT04,
  author       = {Aaron Archer and
                  Jittat Fakcharoenphol and
                  Chris Harrelson and
                  Robert Krauthgamer and
                  Kunal Talwar and
                  {\'{E}}va Tardos},
  editor       = {J. Ian Munro},
  title        = {Approximate classification via earthmover metrics},
  booktitle    = {Proceedings of the Fifteenth Annual {ACM-SIAM} Symposium on Discrete
                  Algorithms, {SODA} 2004, New Orleans, Louisiana, USA, January 11-14,
                  2004},
  pages        = {1079--1087},
  publisher    = {{SIAM}},
  year         = {2004},
  url          = {http://dl.acm.org/citation.cfm?id=982792.982952},
  timestamp    = {Fri, 07 Dec 2012 17:02:08 +0100},
  biburl       = {https://dblp.org/rec/conf/soda/ArcherFHKTT04.bib},
  bibsource    = {dblp computer science bibliography, https://dblp.org}
}

@article{CKR04,
	author    = {Gruia C{\u{a}}linescu and
	Howard J. Karloff and
	Yuval Rabani},
	title     = {Approximation Algorithms for the 0-Extension Problem},
	journal   = {{SIAM} J. Comput.},
	volume    = {34},
	number    = {2},
	pages     = {358--372},
	year      = {2004},
	url       = {https://doi.org/10.1137/S0097539701395978},
	doi       = {10.1137/S0097539701395978},
	timestamp = {Sat, 27 May 2017 14:22:58 +0200},
	biburl    = {https://dblp.org/rec/bib/journals/siamcomp/CalinescuKR04},
	bibsource = {dblp computer science bibliography, https://dblp.org}
}

@inproceedings{DBLP:conf/iwpec/0001DF0Z22,
  author       = {Johannes Blum and
                  Yann Disser and
                  Andreas Emil Feldmann and
                  Siddharth Gupta and
                  Anna Zych{-}Pawlewicz},
  editor       = {Holger Dell and
                  Jesper Nederlof},
  title        = {On Sparse Hitting Sets: From Fair Vertex Cover to Highway Dimension},
  booktitle    = {17th International Symposium on Parameterized and Exact Computation,
                  {IPEC} 2022, September 7-9, 2022, Potsdam, Germany},
  series       = {LIPIcs},
  volume       = {249},
  pages        = {5:1--5:23},
  publisher    = {Schloss Dagstuhl - Leibniz-Zentrum f{\"{u}}r Informatik},
  year         = {2022},
  url          = {https://doi.org/10.4230/LIPIcs.IPEC.2022.5},
  doi          = {10.4230/LIPICS.IPEC.2022.5},
  timestamp    = {Tue, 07 May 2024 20:08:24 +0200},
  biburl       = {https://dblp.org/rec/conf/iwpec/0001DF0Z22.bib},
  bibsource    = {dblp computer science bibliography, https://dblp.org}
}

@article{DBLP:journals/algorithmica/FeldmannM20,
  author       = {Andreas Emil Feldmann and
                  D{\'{a}}niel Marx},
  title        = {The Parameterized Hardness of the k-Center Problem in Transportation
                  Networks},
  journal      = {Algorithmica},
  volume       = {82},
  number       = {7},
  pages        = {1989--2005},
  year         = {2020},
  url          = {https://doi.org/10.1007/s00453-020-00683-w},
  doi          = {10.1007/S00453-020-00683-W},
  timestamp    = {Tue, 16 Jun 2020 17:17:31 +0200},
  biburl       = {https://dblp.org/rec/journals/algorithmica/FeldmannM20.bib},
  bibsource    = {dblp computer science bibliography, https://dblp.org}
}

@article{FFIKLMZ24,
	author       = {Arnold Filtser and Tobias Friedrich and Davis Issac and Nikhil Kumar and Hung Le and Nadym Mallek and Ziena Zeif},
	title        = {Optimal Padded Decomposition For Bounded Treewidth Graphs},
	journal      = {CoRR},
	volume       = {abs/2407.12230},
	year         = {2024},
	url          = {https://arxiv.org/abs/2407.12230},
	doi          = {10.48550/arXiv.2407.12230},
	eprinttype    = {arXiv},
}

@article{CF25,
  author    = {Jonathan Conroy and Arnold Filtser},
  title     = {How to Protect Yourself from Threatening Skeletons: Optimal Padded Decompositions for Minor-Free Graphs},
  journal   = {CoRR},
  volume    = {abs/2504.00278},
  year      = {2025},
  url       = {http://arxiv.org/abs/2504.00278},
  eprinttype = {arXiv},
  eprint    = {2504.00278},
  note		= {to appear in STOC 25}
}
\end{document}